\newcommand{\blind}{0}
\newcommand{\bm}[1]{\boldsymbol{#1}}
\newcommand{\E}{\mathbb{E}}
\renewcommand{\P}{\mathbb{P}}
\newcommand{\Y}{\bm{Y}}
\newcommand{\wt}[1]{\widetilde{#1}}
\newtheorem{prop}{Proposition}
\newcommand{\1}{\mathbb{I}}
\newcommand{\iid}{\stackrel{iid}{\sim}}
\newcommand{\wh}[1]{\smash{\widehat{#1}}}
\newcolumntype{C}{@{\extracolsep{0.5in}}c@{\extracolsep{0pt}}}
\def\C {\,|\:}
\def\C {\,|\:}
\def\ep{\bm{\epsilon}}
\def\b{\bm{\beta}}
\def\W{\bm{W}}
\def\m{\bm{\mu}}
\def\Th{\bm{\Theta}}
\def\Y{\bm{Y}}
\def\pen{\text{pen}}
\def\X{\bm{X}}
\def\x{\bm{x}}
\def\y{\bm{y}}
\def\bg{\bm{\gamma}}
\def\z{\bm{\zeta}}
\def\b{\bm{\beta}}
\newcommand{\e}{\mathrm{e}}
\newcommand{\R}{\mathbb{R}}
\newtheorem{definition}{Definition}[section]
\newtheorem{lemma}{Lemma}[section]
\newtheorem{theorem}{Theorem}[section]
\newtheorem{remark}{Remark}[section]
\newtheorem{corollary}{Corollary}[section]
 \theoremstyle{assumption}
\begin{document}

\def\spacingset#1{\renewcommand{\baselinestretch}%
{#1}\small\normalsize} \spacingset{1}


\if0\blind
{
	\title{\sf Bayesian Bootstrap Spike-and-Slab LASSO}
	\author{
		Lizhen Nie\footnote{
			4$^{th}$ year PhD Student at the  {\sl  \small Department of  Statistics, University of Chicago}}\,\, and    Veronika Ro\v{c}kov\'{a}\footnote{  Associate Professor in Econometrics and Statistics and James S. Kemper Faculty Scholar at the  {\sl \small Booth School of Business, University of Chicago}. \newline
			The authors gratefully acknowledge the support from the James S. Kemper Faculty Fund at the Booth School of Business and 
			the National Science Foundation (Grant No. NSF DMS-1944740).} 
	} 
	\date{\vspace{-5ex}}
	\maketitle
} \fi

\if1\blind
{
	\bigskip
	\bigskip
	\bigskip
	\begin{center}
		{\LARGE\sf Bayesian Bootstrap Spike-and-Slab LASSO}
	\end{center}
	\medskip
} \fi

\bigskip
\begin{abstract}

The impracticality of posterior sampling  has prevented the widespread adoption of spike-and-slab priors in high-dimensional applications.
To alleviate the computational burden, optimization strategies have been proposed that quickly find local posterior modes. Trading off uncertainty quantification for computational speed, these strategies have enabled spike-and-slab deployments at scales that would be previously unfeasible. We build on one recent development in this strand of work: the  Spike-and-Slab LASSO procedure of \cite{rovckova2018spike}. Instead of optimization, however, we explore  multiple avenues for posterior sampling, some traditional and some new. Intrigued by the speed of  Spike-and-Slab LASSO mode detection, we explore the possibility of  sampling from an approximate posterior by performing MAP optimization on many independently perturbed datasets. To this end, we explore Bayesian bootstrap ideas and introduce a new class of jittered Spike-and-Slab LASSO priors with random shrinkage targets. These priors are a key constituent of  the {\em Bayesian Bootstrap Spike-and-Slab LASSO} (BB-SSL) method proposed here. BB-SSL turns fast optimization into approximate posterior sampling. Beyond its scalability, we show that BB-SSL has a strong theoretical support. Indeed, we find that the induced pseudo-posteriors contract around the truth at a near-optimal rate   in sparse normal-means and in high-dimensional regression.
We compare our algorithm to the traditional Stochastic Search Variable Selection (under Laplace priors) as well as many state-of-the-art methods for shrinkage priors. We show, both in simulations and on real data, that our method fares very well  in these comparisons, often providing substantial computational gains. 


\end{abstract}

\noindent%
{\bf Keywords:} {\em Bayesian Bootstrap, Posterior Contraction, Spike-and-Slab LASSO, Weighted Likelihood Bootstrap.}
\vfill

\newpage
\spacingset{1.45} 



\section{Posterior Sampling under Shrinkage Priors}
Variable selection is arguably one of the most widely used dimension reduction techniques in modern statistics. 
The default Bayesian  approach to variable selection assigns a probabilistic blanket over models via spike-and-slab priors \cite{mitchell1988bayesian,george1993variable}. 
The major conceptual appeal of the spike-and-slab approach is the availability of {\em uncertainty quantification} for both model parameters {\sl as well as} models themselves (\cite{madigan1994model}). However,  practical costs of posterior sampling  can be formidable given the immense scope of modern analyses.  The main thrust of this work is to extend the reach of  existing posterior sampling algorithms in new faster directions.

This paper focuses on the canonical linear regression model, where a vector of responses $\Y=(Y_1,\dots,Y_n)^T$ is stochastically linked to fixed predictors $\x_i\in\R^p$  through 
\begin{equation}\label{eq:model}
Y_i=\x_i^T\bm \beta_0+\epsilon_i\quad\text{with}\quad \epsilon_i\overset{\text{i.i.d}}{\sim} \mathcal{N}(0, \sigma^2)\quad \text{for}\quad  1\leq i\leq n,
\end{equation}
where $\sigma^2>0$  and where $\b_0\in\R^p$ is a possibly sparse vector of regression coefficients. In this work, we assume that $\sigma^2$ is   known and we refer to  \cite{moran2018variance} for elaborations with an unknown variance. 
 We  assume that the vector $\Y$ and the regressors $\X=[\X_1,\dots,\X_p]$ have been centered and, thereby, we omit  the intercept.
 In the presence of uncertainty about which subset of $\b_0$ is in fact nonzero, one can assign a prior distribution over the regression coefficients $\b=(\beta_1,\dots,\beta_p)^T$ as well as the pattern of nonzeroes $\bg=(\gamma_1,\dots,\gamma_p)^T$ where $\gamma_j\in\{0,1\}$ for whether or not the effect $\beta_j$ is active. This formalism can be condensed into the usual spike-and-slab prior form
\begin{equation}\label{eq:ss}
\pi(\b\C\bg)=\prod_{j=1}^p[\gamma_j\psi_1(\beta_j)+(1-\gamma_j)\psi_0(\beta_j)],\quad \P(\gamma_j=1\C\theta)=\theta,\quad \theta\sim\mathrm{Beta}(a,b),
\end{equation}
where $a,b>0$ are scale parameters and where $\psi_0(\cdot)$ is a highly concentrated prior density around zero (the spike) and $\psi_1(\cdot)$ is a diffuse density (the slab).
The dual purpose of the spike-and-slab prior  is to (a) shrink  small signals towards zero and (b) keep large signals intact.  The most popular  incarnations of the
spike-and-slab prior  include:  the point-mass spike (\cite{mitchell1988bayesian}), the non-local slab priors (\cite{johnson2012bayesian}), the  Gaussian mixture (\cite{george1993variable}), the Student mixture (\cite{ishwaran2005spike}). 
More recently, \cite{rovckova2018bayesian} proposed the Spike-and-Slab LASSO  (SSL) prior, a mixture of two Laplace distributions 
$\psi_0(\beta)=\frac{\lambda_0}{2}\e^{-|\beta|\lambda_0}$ and  $\psi_1(\beta)=\frac{\lambda_1}{2}\e^{-|\beta|\lambda_1}$ where $\lambda_0\gg\lambda_1$,
which forms a continuum between the point-mass mixture prior and the LASSO prior (\cite{park2008bayesian}).

Posterior sampling under spike-and-slab priors is notoriously difficult. Dating back to at least 1993 \citep{george1993variable},
multiple advances have been made to speed up spike-and-slab  posterior simulation (\cite{george1997approaches}, \cite{bottolo2010evolutionary}, \cite{clyde2011bayesian}, \cite{hans2009bayesian}, \cite{johndrow2017bayes}, \cite{welling2011bayesian}, \cite{xu2014distributed}). 
More recently, several clever computational tricks have been suggested that avoid costly matrix inversions by using linear solvers \citep{bhattacharya2016fast} or by disregarding correlations between active and inactive coefficients  \citep{narisetty2019skinny}. 
Neuronized priors have been proposed (\cite{shin2018neuronized}) that offer computational benefits by using  close approximations to spike-and-slab priors without latent binary indicators.  Modern applications have nevertheless challenged MCMC algorithms and new computational strategies are desperately needed to keep pace with big data.

 Optimization strategies have shown great promise and enabled deployment of spike-and-slab priors at scales that would be previously unfeasible (\cite{rovckova2014emvs}, \cite{rovckova2018spike}, \cite{carbonetto2012scalable}). Fast posterior mode  detection is effective  in  structure discovery and data exploration, a little less so for inference. In this paper, we review and propose new strategies for posterior sampling under the Spike-and-Slab LASSO priors, filling the gap between  exploratory data analysis and proper statistical inference.

We capitalize on the latest MAP optimization and MCMC developments to provide several  posterior sampling implementations for the Spike-and-Slab LASSO method of \cite{rovckova2018spike}. The first one  (presented in Algorithm \ref{SSVS}) is  exact and conventional, following in the footsteps of Stochastic Search Variable Selection 
\citep{george1993variable}). 
The second one is approximate and new. The cornerstone of this strategy is the Weighted Likelihood Bootstrap (WLB) of \cite{newton1994approximate} which was recently resurrected  in the context of posterior sampling with sparsity priors by \cite{newton2020weighted}, \cite{fong2019scalable} {\color{black}and \cite{ng2020random}}.  The main idea behind WLB is to perform approximate sampling by independently  optimizing randomly perturbed likelihood functions. We extend the WLB framework by incorporating perturbations {\em both} in the likelihood and in the prior. The main contributions of this work are two-fold. First, we introduce BB-SSL ({\em Bayesian Bootstrap Spike-and-Slab LASSO}), a  novel algorithm for approximate posterior sampling in high-dimensional regression under Spike-and-Slab LASSO priors. Second, we show that suitable ``perturbations" lead to  approximate posteriors that contract around the truth {\em at the same speed} (rate) as the actual posterior. These theoretical results have nontrivial practical implications as they offer guidance on the choice of the distribution for perturbing weights. Up until now, theoretical properties of  WLB have  
{largely concentrated on} consistency statements in low dimensions for iid data  (\cite{newton1994approximate}). {More recently,   \cite{ng2020random} established  conditional consistency (asymptotic normality)  in the context of LASSO regression  for a fixed dimensionality and model selection consistency for a growing dimensionality. Our theoretical results  also allow the dimensionality to increase with the sample size and go beyond mere consistency by showing that BB-SSL  leads to rate-optimal estimation in sparse normal-means and  high-dimensional regression under standard assumptions.}  Last but not least, we make thorough comparisons with the gold standard  (i.e. exact MCMC sampling) on multiple simulated and real datasets, concluding that the proposed algorithm is scalable and reliable  in practice. {BB-SSL  is  (a)  unapologetically parallelisable,  and (b) it does not require costly matrix inversions (due to its coordinate-wise optimization nature), thereby having the potential to meet the demands of large datasets. }

The structure of this paper is as follows. Section \ref{sec:notation} introduces the notation. Section \ref{sec:SSL_revisit} revisits Spike-and-Slab LASSO and presents a traditional algorithm for posterior sampling. Section \ref{sec:BB} investigates performance of weighted Bayesian bootstrap, the building block of this work, in high dimensions. In Section \ref{sec:introducing_BB-SSL}, we introduce BB-SSL and present our theoretical study showing rate-optimality as well as its connection with other bootstrap methods. Section \ref{sec:simulations} shows simulated examples and Section \ref{sec:real_data} shows performance on real data. We conclude the paper with a discussion in Section \ref{sec:discussion}.



\subsection{Notation}\label{sec:notation}

With $\phi(y;\mu;\sigma^2)$ we denote the Gaussian density with a mean $\mu$ and a variance $\sigma^2$. 
We use $\xrightarrow{\text{d}}$ to denote convergence in distribution. We write $a_n=O_p(b_n)$ if  { for any $\epsilon>0$, there exist finite $M>0$ and $N>0$ such that $\P(|a_n/b_n|>M)<\epsilon$ for any $n>N$.} We write $a_n=o_p(b_n)$ if for any $\epsilon>0$, $\lim_{n\rightarrow \infty}\P(|a_n/b_n|>\epsilon)=0$.  We also write $a_n=O(b_n)$ as $a_n\lesssim b_n$.
We use $a\asymp b$ if $a\lesssim b$ and $b\lesssim a$. We   use $a_n\gg b_n$ to denote $b_n=o(a_n)$ and $a_n\ll b_n$ to denote $a_n=o(b_n)$.
We denote with $\X_{A}$ a sub-matrix consisting of   columns of $\X$'s indexed by a subset $A\subset\left\{1,\cdots,p\right\}$ and with 
$\bm P_{A}$ the orthogonal projection to the range of $\X_{A}$ \citep{zhang2012general}, i.e., $\bm P_A=\X_A\X_A^+$ where $\X_A^+$ is the Moore-Penrose  inverse of $\X_{A}$. We denote with $\|\X\|$  the matrix operator norm of $\X$.

 \section{Spike-and-Slab LASSO Revisited}\label{sec:SSL_revisit}
 The Spike-and-Slab LASSO (SSL) procedure of \cite{rovckova2018spike} recently emerged as one of the more successful non-convex penalized likelihood methods. Various SSL incarnations  have spawned since its introduction,  including a version for group shrinkage (\cite{bai2020spike}, \cite{tang2018group}), survival analysis (\cite{tang2017spike}), varying coefficient models (\cite{bai2020spike}) and/or Gaussian graphical models (\cite{deshpande2019simultaneous}, \cite{li2019bayesian}). The original procedure proposed for Gaussian regression targets a posterior mode
\begin{equation}\label{SSL}
\wh\b=\arg\max_{\b\in\R^p}\left\{\prod_{i=1}^n\phi(Y_i;\x_i^T\b;\sigma^2)\times \int_\theta \prod_{j=1}^p\pi(\beta_j\C\theta)d\pi(\theta)\right\},
\end{equation}
where $\pi(\beta_j\C\theta)=\theta\psi_1(\beta_j)+(1-\theta)\psi_0(\beta_j)$ is obtained from \eqref{eq:ss} by integrating out the missing indicator $\gamma_j$ and by deploying $\psi_1(\beta_j)=\lambda_1/2\e^{-|\beta_j|\lambda_1}$ and $\psi_0(\beta_j)=\lambda_0/2\e^{-|\beta_j|\lambda_0}$
with $\lambda_0\gg\lambda_1$.    \cite{rovckova2018spike} develop a coordinate-ascent strategy which targets $\wh\b$ and which quickly finds (at least a local) mode of the posterior landscape. 
This strategy (summarized in Theorem 3.1 of \cite{rovckova2018spike}) iteratively updates each $\wh\beta_j$ using an implicit equation\footnote{Here we are not necessarily assuming that $\|\X_j\|_2^2=n$ and the above formula is hence slightly different from Theorem 3.1 of \cite{rovckova2018spike}. 
}
\begin{align}\label{beta:implicit}
		\wh\beta_j=&
		\frac{1}{\|\X_j\|_2^2}\left(|z_j|-\sigma^2\lambda^*_{\wh\theta_j}(\wh\beta_j)\right)_+\mathrm{sign}(z_j)\times \1(|z_j|>\Delta_j)		
\end{align}
where   $\wh\theta_j=\E[\theta\C\wh\b_{\backslash j}]$, $z_j=\X_j^T(\Y-\X_{\backslash j}\wh\b_{\backslash j})$ and 
{
$\Delta_j=\inf_{t>0}\left(\|\X_j\|^2t/2-\sigma^2\rho(t\C \wh\theta_j)/t\right)$ with
$	\rho(t\C \theta)=-\lambda_1|t|+\log[p^\star_\theta(0)/p^\star_\theta(t)],$	
where
	\begin{equation}\label{eq: p_star}\centering
	p^\star_\theta(t)=\frac{\theta\psi_1(t)}{\theta\psi_1(t)+(1-\theta)\psi_0(t)}\quad\text{and}\quad \lambda^*_\theta(t)=\lambda_1p^\star_\theta(t)+\lambda_0(1-p^\star_\theta(t)).
	\end{equation}
}
\cite{rovckova2018spike} also provide fast updating schemes for $\Delta_j$ and $\wh\theta_j$. In this work, we are interested in {\em sampling from the posterior} as opposed to mode hunting.

One immediate strategy for sampling from the Spike-and-Slab LASSO posterior is the Stochastic Search Variable Selection (SSVS) algorithm  of \cite{george1993variable}. One can 
regard the Laplace distribution (with a penalty $\lambda>0$) as a scale mixture of Gaussians  with an exponential mixing distribution  (with a rate $\lambda^2/2$   as in \cite{park2008bayesian})  and rewrite the SSL prior using the following  hierarchical form:
\begin{align*}
\bm \beta \C \bm \tau&\sim \mathcal{N}(\textbf{0},D_{\bm\tau})\quad\text{with}\quad   D_{\bm\tau}=\mathrm{Diag}(1/\tau_1^2,1/\tau_2^2,\cdots,1/\tau_p^2),\\
\bm\tau^{-1} \C \bg &\sim \prod_{j=1}^{p}\frac{\lambda_{j}^2}{2}\e^{-\lambda_{j}^2/2\tau_j^2},\quad\text{where}\quad \lambda_j=\gamma_j\lambda_1+(1-\gamma_j)\lambda_0,\\
\gamma_j \C \theta &\sim \mathrm{Bernoulli}(\theta)\quad\text{with}\quad \theta \sim \mathrm{Beta}(a,b),
\end{align*}
where  $\bm\tau^{-1}=(1/\tau_1^2,\dots,1/\tau_p^2)^T$ is the vector of variances.
The conditional conjugacy of the SSL prior enables direct  Gibbs sampling for $\b$ (see {Algorithm }\ref{SSVS} below).
However, as with any other Gibbs sampler for Bayesian shrinkage models \citep{bhattacharya2015dirichlet}, this algorithm involves costly matrix inversions and can be quite slow  when both $n$ and $p$ are large. In order to improve the MCMC computational efficiency when $p>n$,  \cite{bhattacharya2016fast} proposed a clever trick. By recasting the sampling step as  a solution to a linear system, one  can circumvent a  Cholesky factorization which would otherwise have   a complexity $O(n^2p)$ per iteration.
 Building on  this development, \cite{johndrow2017bayes} developed a blocked Metropolis-within-Gibbs algorithm to sample from  horseshoe posteriors \citep{carvalho2010horseshoe} and designed an approximate algorithm which thresholds  small effects based on the sparse structure of the target. The exact method has a per-step complexity $O(n^2p)$ while the approximate one has only $O(np)$. In similar vein, the Skinny Gibbs MCMC method of \cite{narisetty2019skinny} also bypasses large matrix inversions by independently sampling from  active  and inactive $\beta_i$'s. While the method is only approximate, it has a rather favorable computational complexity $O(np)$.   
 {A referee suggested another Gibbs sampler implementation with a complexity $O(np)$ which can be obtained by updating $(\beta_j,\gamma_j)$ one at a time while conditioning on the remaining $(\beta_j,\gamma_j)$'s \citep{geweke1991efficient}. While this implementation is very fast for point-mass spikes,  the Spike-and-Slab LASSO prior requires sampling from a half-normal distribution which can be inefficient in practice. One-site Gibbs samplers also generally lead to slower mixing due to increased autocorrelation. In simulations, we find the performance of this method to be comparable with SSVS using  \cite{bhattacharya2016fast}'s  trick. The detailed description of this algorithm is included in the Appendix (Section \ref{sec_appendix:MCMC_np}).    }

The impressive speed of the Spike-and-Slab LASSO mode detection  makes one wonder whether performing many independent optimizations on randomly perturbed datasets will lead to posterior simulation that is more economical. Moreover, one may wonder whether the induced approximate posterior is sufficiently close to the actual posterior $\pi(\b\C\Y)$ and/or whether it can be used for meaningful estimation/uncertainty quantification. We attempt to address  these intriguing questions in the next sections.

{\scriptsize
	\begin{algorithm}[t]\small
		\spacingset{1.1}
		\textbf{Set}: $\lambda_0\gg\lambda_1,$  $a,b>0$, $T$ (number of MCMC iterations), $B$ (number of samples to discard as burn-in).
		
		\textbf{Initialize}: $\bm \beta^0$ (e.g. LASSO solution after 10-fold cross validation) and $\bm\tau^0$.
		\\
		\For{$t=1,2,\cdots, T$}{
			{\vspace{-0.3cm} \hspace{5cm}\color{white} \tiny ahoj}\\
			\textbf{(a)} Sample $\bm \beta^t$  $\sim \mathcal{N}\left(\mu_{\bg}, \Sigma_{\bg}\right)$, where $ \Sigma_{\bg}=(\X^T\X/\sigma^2+D_{\bm\tau^{t-1}}^{-1})^{-1}$ and $\mu_{\bg}= \Sigma_{\bg}\X^T{\Y}/\sigma^2$.\\
			\textbf{(b)} Sample $(\tau_j^{t})^2\sim \mathrm{Inv}$-$\mathrm{Gaus}(\mu'_j,(\lambda'_j)^2)$ for $j=1,2,\dots,p$ , where 
			$$\mu'_j = \frac{|\lambda'_j|}{|\beta_j^t|}\quad\text{and}\quad (\lambda'_j)^2=\gamma_j^{t-1}\lambda_{1}^2+(1-\gamma_j^{t-1})\lambda_0^2$$
			\textbf{(c)} Sample $\gamma_j^t \sim \mathrm{Bernoulli}\left(\frac{\pi_1}{\pi_1+\pi_0}\right)$, where 
			$$
\pi_1=\theta^{t-1}\lambda_1^2\e^{-\lambda_1^2/2(\tau_j^t)^2}/2\quad\text{and}\quad\pi_0=(1-\theta^{t-1})\lambda_0^2\e^{-\lambda_0^2/2(\tau_j^t)^2}/2.
$$ 
			\textbf{(d)} Sample $\theta^t\sim\mathrm{Beta}(\sum_{j=1}^{p} \gamma_j^t+a,p-\sum_{j=1}^{p} \gamma_j^t+b)$.\\		}
		\textbf{Return}: $\b^t,\bm\gamma^t,\theta^t$ where $t=B+1,B+2,\cdots,T$.
		\caption{\bf : SSVS}\label{SSVS}
\end{algorithm}}

 \section{Likelihood Reweighting and Bayesian Bootstrap}\label{sec:BB}
The jumping-off point of our methodology is the weighted likelihood bootstrap (WLB) method introduced by \cite{newton1994approximate}. The premise of  WLB  is to draw approximate samples from the posterior by independently  maximizing  randomly reweighted likelihood functions.   Such a sampling strategy is computationally beneficial  when, for instance, maximization is easier than Gibbs sampling from conditionals.   

In the context of linear regression \eqref{eq:model}, the WLB method of \cite{newton1994approximate} will produce a series of draws $\wt{\b}_t$ by first   sampling 
random weights $\bm{w}_t=(w_1^t,w_2^t,\cdots,w_n^t)^T$ from some weight distribution $\pi(\bm w)$ and then  maximizing a reweighted likelihood
\begin{equation}\label{eq:beta_tilde}
\wt{\bm \beta}_t=\arg\,\max_{\bm \beta} \wt L^{\bm w_t}(\b,\sigma^2; \X^{(n)},\Y^{(n)})
\end{equation}
where
$$
\wt L^{\bm w_t}(\b,\sigma^2; \X^{(n)},\Y^{(n)})=\prod_{i=1}^n\phi(Y_i;\x_i^T\b;\sigma^2)^{w_i^t}.
$$
\cite{newton1994approximate} argue that for certain weight distributions $\pi(\bm w)$, the conditional distribution of $\wt\b_t$'s given the data can provide a good approximation to the posterior distribution of $\b$. Moreover, WLB was shown to have nice theoretical guarantees when the number of parameters does not grow. Namely, under uniform Dirichlet weights (more below) and iid data samples, WLB is consistent (i.e. concentrating on any arbitrarily small neighborhood around an MLE estimator) and asymptotically first-order correct (normal with the same centering)  for almost every realization of the data. The WLB method, however, is only approximate and it does not naturally accommodate a prior.
Uniform Dirichlet weights 
 provide a higher-order asymptotic equivalence when one chooses the squared Jeffrey's prior. However, for more general prior distributions (such as shrinkage priors considered here), the correspondence between the prior $\pi(\b)$ and $\pi(\bm w)$ is  unknown. 
\cite{newton1994approximate} suggest post-processing  the posterior samples with importance sampling to leverage  prior information. 
This pertains to \cite{efron2012bayesian}, 
who proposes a posterior sampling method for  exponential family models with importance sampling on parametric bootstrap distributions.


Alternatively, \cite{newton2020weighted} suggested blending the prior directly  into WLB by including  a   weighted prior term, i.e. replacing \eqref{eq:beta_tilde} with
$$
\wt\b_t=\arg\,\max_{\bm \beta} \wt L^{\bm w_t}(\b,\sigma^2; \X^{(n)},\Y^{(n)})\pi(\bm \beta)^{\wt w^t},
$$
where $w_i^t\overset{\text{i.i.d}}{\sim} Exp(1)$\footnote{Note that if $w_1,w_2,\cdots,w_n\overset{\text{i.i.d}}{\sim} Exp(1)$, then $\frac{w_1}{\sum w_i},\frac{w_2}{\sum w_i },\cdots,\frac{w_n}{\sum w_i }\sim Dir(1,1,\cdots,1)$ which brings us back to the uniform Dirichlet distribution.}. 
This so called Weighted Bayesian Bootstrap (WBB) method treats the prior weight $\wt w^t$ as either fixed (and equal to one) or as one of the random data weights arising from the exponential distribution. We explore these two strategies in the next section within the context of the Spike-and-Slab LASSO where  $\pi(\bm\beta)$ is the SSL shrinkage prior implied by \eqref{eq:ss}.

\subsection{WBB meets  Spike-and-Slab LASSO}\label{sec:motivation_noncenter}

Since SSL is a thresholding procedure (see \eqref{beta:implicit}), WBB will ultimately create samples from pseudo-posteriors that have a point mass at zero. This is misleading since the posterior under the Gaussian likelihood  and a  single Laplace prior  is half-normal (\cite{hans2009bayesian}, \cite{park2008bayesian}). Deploying the WBB method thus does not guarantee that uncertainty  be propertly captured for the zero (negligible) effects since their posterior samples may  very often be exactly zero. 
We formalize this intuition below. We want to understand the extent to which the WBB (or WLB) pseudo-posteriors correspond to the actual posteriors. To this end, we focus on the 
canonical  Gaussian sequence model
\begin{equation}\label{eq:Gaussian_sequence}
y_{i}=\beta_{i}^{0}+\epsilon_{i}/\sqrt{n}\quad\text{for}\quad i=1,2,\cdots n.
\end{equation}
Under the separable SSL prior (i.e. $\theta$ fixed), the true posterior is a mixture
\begin{equation}\label{eq:true_posterior}
\begin{split}
\pi\left(\beta_{i}\C y_i\right) & =w_1\pi\left( \beta_i\C y_i,\gamma_i=1  \right) + w_0\pi\left( \beta_i\C y_i,\gamma_i=0  \right)
\end{split}
\end{equation}
where 
$w_1=\pi\left( \gamma_i=1\C y_i  \right)$ and $w_0=\pi\left( \gamma_i=0\C y_i  \right)$.
From \cite{hans2009bayesian}, we know that $\pi\left( \beta_i\C y_i,\gamma_i=1  \right)$ and $\pi\left( \beta_i\C y_i,\gamma_i=0  \right)$ are orthant truncated Gaussians and thus $\pi\left(\beta_{i}\C y_i\right)$ is a mixture of orthant truncated Gaussians.

We start by examining the posterior distribution of {\em active coordinates} such that $|y_i|>|\beta_i^0|/2>0$  (this event happens with  high probability when $n$ is sufficiently large).  For the true posterior, we show in the Appendix (Proposition \ref{prop:1}) that 
$w_0 \rightarrow 0$ and $w_1 \rightarrow 1$.
The true posterior $\pi\left(\beta_{i}\C y_i\right)$ is hence dominated by the component
$\pi\left( \beta_i\C y_i,\gamma_i=1  \right)$, which takes the following form
\begin{equation*}
\begin{split}
\pi\left( \beta_i\C y_i,\gamma_i=1  \right)&=
\frac{\1(\beta_{i}\geq0)c_1^{(-)}\phi_1^{(-)}(\beta_i) +\1(\beta_{i}<0)c_1^{(+)}\phi_1^{(+)}(\beta_i)}
{\int_{0}^{\infty}c_1^{(-)}\phi_1^{(-)}(\beta_i)d\beta_i+ \int_{-\infty}^{0}c_1^{(+)}\phi_1^{(+)}(\beta_i)d\beta_i}
\end{split}
\end{equation*}
where 
\begin{align}
c_1^{(-)}=\theta\lambda_1e^{-y_i\lambda_1+\lambda_1^2/2n}\quad&\text{and}\quad c_1^{(+)}=\theta\lambda_1e^{y_i\lambda_1+\lambda_1^2/2n},\\
\phi_1^{(-)}(x)=\phi\left(x;y_i-\frac{\lambda_1}{n},\frac{1}{n}\right)\quad&\text{and}\quad \phi_1^{(+)}(x)=\phi\left(x;y_i+\frac{\lambda_1}{n},\frac{1}{n}\right).
\end{align}

\begin{figure}
	\begin{subfigure}{0.33\textwidth}\centering
		\includegraphics[width=.9\linewidth]{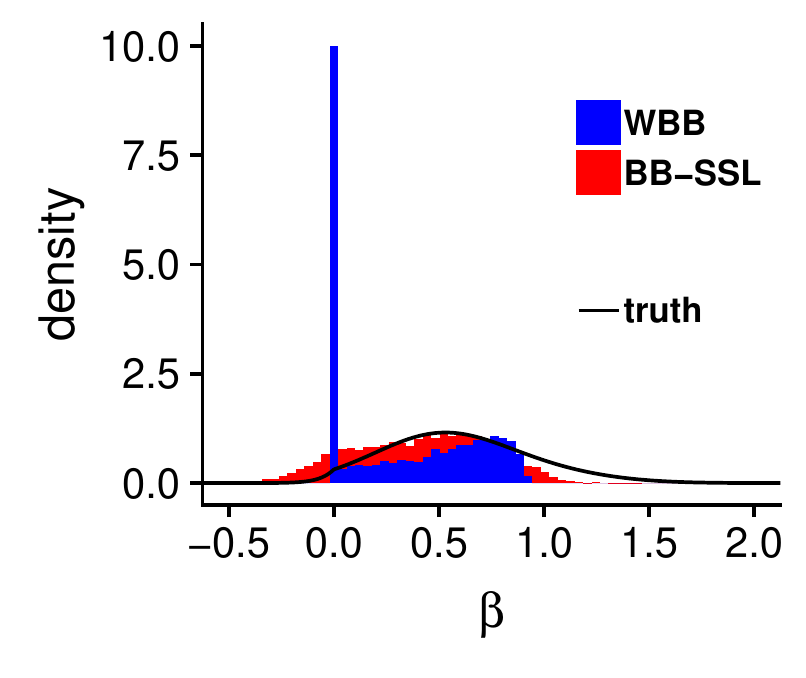}
		\caption{\small $y_i=1$}\label{fig1b}
	\end{subfigure}
	\begin{subfigure}{0.33\textwidth}\centering
		\includegraphics[width=.9\linewidth]{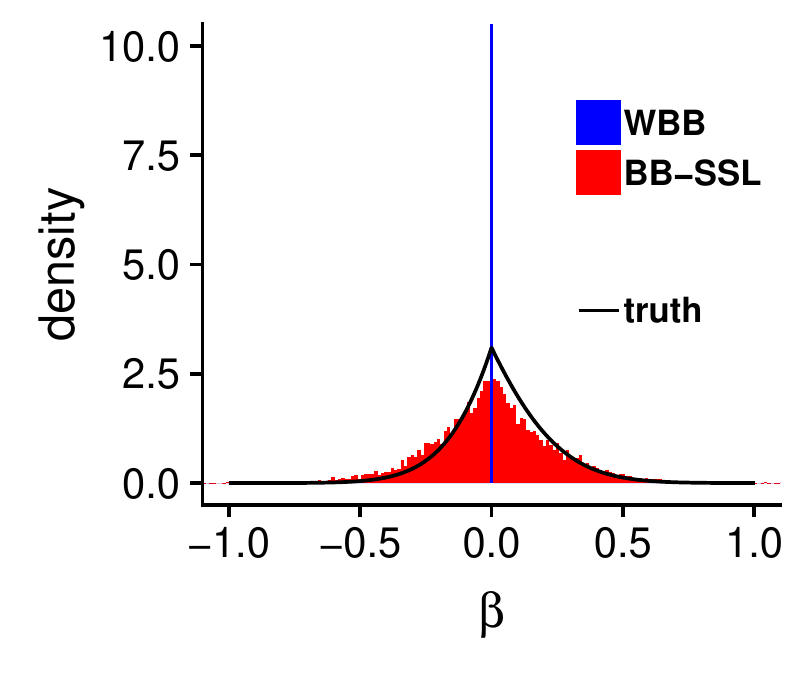}
		\caption{\small $y_i=0.1$}\label{fig1a}
	\end{subfigure}%
	\begin{subfigure}{.33\textwidth}\centering
		\includegraphics[width = .92\linewidth]{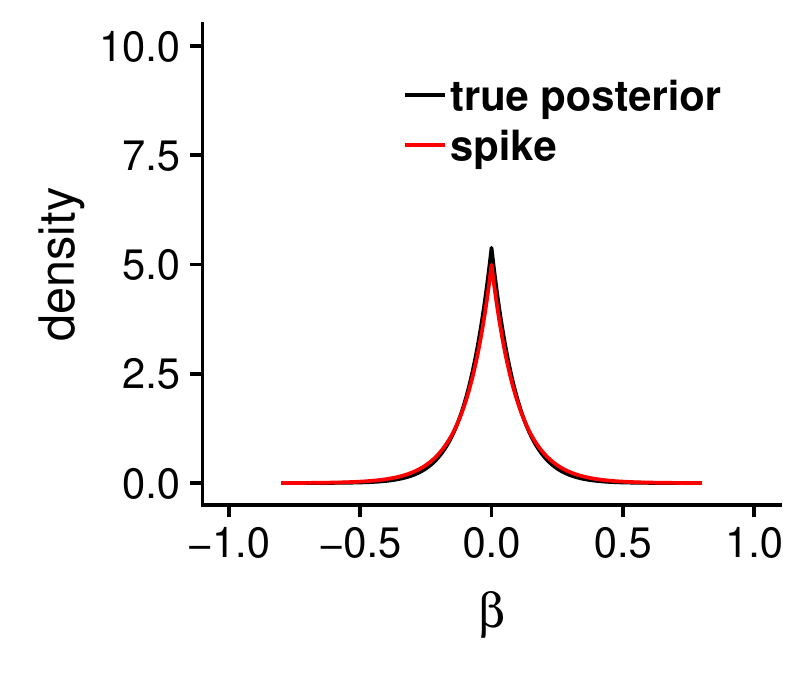}
		\caption{{\small True posterior versus spike. 
		}}
		\label{fig: mu_motivation}
	\end{subfigure}
	\caption{\small {True and WBB approximated posterior distribution $\pi(\beta_i\C y_i)$ under the separable SSL prior with $\lambda_0=5,\lambda_1=0.1$ and  
	$\theta=0.2$ and a Gaussian sequence model $y_i=x_i+\epsilon_i/\sqrt{n}$  with $n=10$. Red bins represent BB-SSL pseudo-posterior, blue bins represent WBB pseudo-posterior (with a random prior weight), black line is the true posterior and  $\alpha=2.5$. The plot  (c) is for the same setting except with $\lambda_0=10$.}}
	\label{fig: gaussian_seq_compare}
\end{figure}

Intuitively, $\lambda_1/n$ vanishes when $n$ is large, so both $\phi_1^{(-)}(\beta_i)$ and $\phi_1^{(+)}(\beta_i)$ will be close to $\phi(\beta_i;y_i,\frac{1}{n})$. This intuition is proved rigorously in the Appendix (Section \ref{sec:appendix_motivation_active}), where  we show  that  the density of the transformed variable $\sqrt{n}(\beta_i-y_i)$ converges pointwise to the standard normal density and thereby the posterior $\pi\left( \sqrt{n}(\beta_i-y_i)\C y_i,\gamma_i=1  \right)$ converges to $N(0,1)$ in total variation (\cite{scheffe1947useful}).

We now investigate the limiting shape of the pseudo-distribution obtained from WBB. For a given weight $w_i>0$, the WBB estimator $\wh\beta_i$ equals 
\begin{equation}\label{eq: WBB_solution}
\wh{\beta_i}=\begin{cases}
0, & \text{if $|y_i|\leq {\Delta_{w_i}}$}.\\
\left[|y_i|-\frac{1}{w_in}\lambda^*(\wh{\beta_i})\right]_+\mathrm{sign}(\sqrt{w_in}y_i), & \text{otherwise}.
\end{cases}
\end{equation}
{where 
	$\Delta_{w_i}=\inf_{t>0}[t/2-\rho(t\C\theta)/(nw_it)]$
is the analogue of $\Delta_j$ defined below \eqref{beta:implicit} for the regression model  and  where  $\rho(t\C\theta)$ was also defined below \eqref{beta:implicit}.}
When $\wh\beta_i\neq 0$, we show  in the Appendix (Section \ref{sec:appendix_motivation_active}) that
$
n\left(\wh{\beta_i}-y_i\right)
\rightarrow -\frac{1}{w_i}\lambda_1
$.
Under the condition $|y_i|>\frac{|\beta_i^0|}{2}>0$, it can be shown (Section \ref{sec:appendix_motivation_active}) that
$
\P_{w_i}\left( \wh\beta_i=0  \C y_i\right)\rightarrow 0
$.
For active coordinates, the distribution of the WBB samples $\wh\beta_i$ is thus purely determined by that of $-\frac{1}{w_i}$. The shape of this posterior can be very different from the standard normal one, as can be seen from Figure \ref{fig: gaussian_seq_compare}. In particular,  Figure \ref{fig1b} shows how WBB (a) assigns a non-negligible prior mass to zero (in spite of evidence of signal)   (b) incurs bias in estimation and  (c) underestimates variance with a skewed  misrepresentation of the posterior distribution. This last aspect is particularly pronounced when the signal is even stronger.\footnote{
In fact, under the  uniform Dirichlet distribution,  the marginal distribution becomes $w_i\sim n\times \text{Beta}(1,n-1)$. Since  $n\times \text{Beta}(1,n-1)\xrightarrow{\text{d}}\text{Gamma}(1,1)$, the distribution of $-\frac{1}{w_i}$ converges to Inverse-Gamma(1,1) which exhibits a skewed shape, which is in sharp contrast to the symmetric Gaussian distribution of the true posterior.}

The  approximability of WBB does not get any better for {\em  inactive coordinates} such that $\beta_i^0=0$ and thereby $|y_i|=O_p(\frac{1}{\sqrt{n}})$ from \eqref{eq:Gaussian_sequence}.  The following arguments will be under the assumption  $|y_i|\asymp\frac{1}{\sqrt{n}}$.  One can show (Section \ref{sec:appendix_motivation_inactive} in the Appendix) that the true posterior
$\pi\left(\beta_{i}\C y_i\right)$ is dominated by the component 
$\pi\left( \beta_i\C y_i,\gamma_i=0  \right)$ since
$w_0 \rightarrow 1$ and $w_1 \rightarrow 0$. When $n$ is sufficiently large, one can then approximate this distribution with the Laplace spike, indeed
$\pi\left(\lambda_0\beta_{i}\C y_i,\gamma_i=0\right) 
$
converges to
$
\frac{1}{2}e^{-|\lambda_0\beta_i|}
$
in total variation (Section \ref{sec:appendix_motivation_inactive} in the Appendix).
When the signal is weak, the posterior thus closely resembles the spike Laplace distribution, as can be seen from Figure \ref{fig: mu_motivation}.  For the fixed (and also random) WBB pseudo-posteriors, we show (in Section \ref{sec:appendix_motivation_inactive} in the Appendix) that 
 the posterior converges to a point mass at 0, i.e.
$
\P_{w_i}\left( \wh\beta_i=0  \C y_i\right)\rightarrow 1.
$
This is a misleading approximation of the actual posterior (Figure \ref{fig1a}). To conclude, since SSL is always shrinking the estimates towards $0$,  WBB samples will often be zero. The true posterior, however, follows roughly a spike  Laplace distribution when the signal is weak.  Motivated by \cite{papandreou2010gaussian}, one possible solution is to introduce randomness in the shrinkage target of the prior.

\section{Introducing  BB-SSL}\label{sec:introducing_BB-SSL}
Similarly as \cite{newton2020weighted}, we argue that the random perturbation should affect  both the prior and the data. 
Instead of inflating the  prior  contribution by a fixed or random weight, we {\sl perturb the prior mean for each coordinate}. This creates a random shift in the centering of the prior so that the posterior can shrink to a random location as opposed to zero. Instead of the prior $\pi(\b\C\bg)$ in \eqref{eq:ss} which is centered around zero, we consider a variant that uses hierarchical jittered Laplace distributions.

\begin{definition} 
For $\lambda_0\gg\lambda_1$,  a location shift vector $\bm \mu=(\mu_1,\mu_2,\cdots,\mu_p)^T\in\R^p$ and a prior inclusion weight $\theta\in(0,1)$,  the {\em jittered Spike-and-Slab LASSO prior} is defined  as
\begin{equation}\label{eq:bbl_prior}
\wt\pi(\b\C\bm\mu,\theta)=\prod_{i=1}^{p}[\theta \psi_1(\beta_i;\mu_i)+(1-\theta)\psi_0(\beta_i;\mu_i)],\quad\text{where} \quad \mu_i\iid  \psi_0(\mu)
\end{equation}
and where  $\psi_1(\beta;\mu)=\lambda_1/2\e^{-|\beta-\mu|\lambda_1}$, $\psi_0(\beta;\mu)=\lambda_0/2\e^{-|\beta-\mu|\lambda_0}$ and $\psi_0(\beta)=\psi_0(\beta;0)$.
\end{definition}

{\scriptsize
	\begin{algorithm}[t]\small
		\spacingset{1.1}
	\textbf{Set}: $\lambda_0\gg\lambda_1,$  $a,b>0$  and $T$ (number of  iterations).
			\\
		\For{$t=1,2,\cdots, T$}{
			{\vspace{-0.3cm} \hspace{5cm}\color{white} \tiny ahoj}\\
			\textbf{(a)} Sample $\bm w_t\sim\pi(\bm w)$.\\
			\textbf{(b)} Sample $\bm \mu_t$ from $\mu^t_j\iid \psi_0(\mu)$.\\
			\textbf{(c)} Calculate $\wt\b_t$ from  \eqref{BBL}.}
		\caption{\bf : BB-SSL Sampling}\label{BBLASSO}
\end{algorithm}}

The Bayesian Bootstrap Spike-and-Slab LASSO (which we abbreviate as BB-SSL) is obtained by maximizing a pseudo-posterior obtained by reweighting the likelihood and recentering the prior. Namely,
one first samples weights $w_i^t$, one for each observation, from $\bm w_t=(w_1^t,\dots,w_n^t)^T\sim n\,\mathrm{Dirichlet}(\alpha,\cdots,\alpha)$ for some $\alpha>0$ (we discuss choices in the next section). 
Second, one samples the location shifts $\bm \mu_t=(\mu_1^t,\dots,\mu_p^t)^T$ from the spike distribution as in \eqref{eq:bbl_prior}. Lastly, a draw $\wt\b_t$ from the BB-SSL posterior is obtained as a pseudo-MAP estimator 
\begin{equation}\label{BBL}
\wt\b_t=\arg\max_{\b\in\R^p}\left\{  \wt L^{\bm w_t}(\b,\sigma^2; \X^{(n)},\Y^{(n)})\times \int_\theta \wt\pi(\b\C\bm\mu_t,\theta)d\pi(\theta)\right\}.
\end{equation}
BB-SSL can be implemented by directly applying the SSL algorithm we described in the previous section on randomly perturbed data. 
In particular, denote with $\b^*=\b-\bm\mu_t,\, Y_i^*=\sqrt{w_i^t}\left(Y_i-\x_i^T\bm\mu_t\right), \, \x_i^*=\sqrt{w_i^t}\x_i$. We can first calculate
\begin{equation*}
\wh{\b_t^*}=\arg\max_{\b\in\R^p}\left\{  \wt L^{\bm w_t=(1,1,\cdots, 1)}(\b^*,\sigma^2; \X^*,\Y^*)\times \int_\theta \wt\pi(\b\C\bm 0,\theta)d\pi(\theta)\right\},
\end{equation*}
and then get  $\wt\b_t$ through a post-processing step 
$
\wt\b_t=\wh{\b^*_t}+\bm\mu_t.
$

\subsection{Theory for BB-SSL}
For the uniform Dirichlet weights, \cite{newton1994approximate} (Theorem 1 and 2) show first-order correctness, i.e. consistency and asymptotic normality,  of WLB in low dimensional settings (a fixed number of parameters) and iid observations. 
Their result can be generalized to WBB \citep{newton2020weighted, ng2020random}  as well as BB-SSL.
While the uniform Dirichlet weight distribution is a natural choice,  \cite{newton1994approximate}  point out that it is doubtful that  such weights would yield good higher-order approximation properties.
The authors  leave open the question of relating the weighting distribution to the model itself and  to a more general prior.  
{ A more recent theoretical development in this direction is the work of \cite{ng2020random} who find that WBB first-order correctness holds for a wide class of random weight distributions  in  low-dimensional LASSO regressions. They also theoretically assess the influence of assigning random weights to the penalty term. } 
Here, we address this question by looking into asymptotics for guidance about the weight distribution. We focus on high-dimensional scenarios where the number of parameters ultimately increases with the sample size.

In particular, we provide  sufficient conditions for the weight distribution $\pi(\bm w)$ so that the pseudo-posterior concentrates at the same rate as the actual posterior under the same prior settings.
After stating  the result for general weight distributions, we particularize our considerations to  Dirichlet and gamma distributions and  provide specific guidance for  implementation.
Our first result is obtained for the  canonical high-dimensional normal-means problem, where  $\Y^{(n)}=(Y_1,\dots, Y_n)^T$ is observed as a noisy version of a sparse mean vector $\b_0=(\beta_1^0,\dots,\beta_n^0)^T$, i.e. 
\begin{equation}\label{eq: normal_means_model}
Y_i=\beta^0_i+\epsilon_i,\quad \text{where}\quad \epsilon_i\iid \mathcal{N}(0,{\sigma^2})\quad \text{for $1\leq i\leq n$}.
\end{equation}

\begin{theorem}[Normal Means]
	\label{normal_means_weight}
	Consider the normal means model \eqref{eq: normal_means_model} with $q=\|\b_0\|_0$ such that   $q=o(n)$ as $n\rightarrow\infty$. Assume the SSL prior with 
	$0<\lambda_1<\frac{1}{e^2}$ and $\theta\asymp (\frac{q}{n})^\eta,\lambda_0\asymp(\frac{n}{q})^\gamma$ with $\eta,\gamma>0$ such that $\eta+\gamma>1$. 
	Assume that $\bm w=(w_1,\dots, w_n)^T$ are non-negative and arise from $\pi(\bm w)$ such that
	\begin{enumerate}
		\item[(1)] $\E\,  w_i =1$ for each $1\leq i\leq n$,
		\item[(2)] $\exists\, C_1, C_2>0$ such that $\E\left(\frac{1}{w_i}\right)\leq C_1$ and $\E\left(\frac{1}{w_i^2}\right)\leq C_2$ for each $1\leq i\leq n$,
		\item[(3)] $\exists\, C_3>0$ such that for each $1\leq i\leq n$ 
		$$\P(w_i>\eta+\gamma)
		\leq C_3\,\frac{q}{n}\,\sqrt{\log\left(\frac{n}{q}\right)}.
		$$
	\end{enumerate}
Then, for any $M_n\rightarrow\infty$, the BB-SSL posterior concentrates at the minimax rate, i.e.
\begin{equation}\label{eq:post_conc}
\lim_{n\rightarrow\infty}\E_{\b_0}\P_{\bm w,\bm\mu}\left[\|\wt\b_{\bm w}^{\bm\mu}-\b_0\|_2^2>M_n\,q\log\left(\frac{n}{q}\right)\C \Y^{(n)}\right]=0.
\end{equation}
\end{theorem}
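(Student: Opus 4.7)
The plan is to exploit coordinate separability of the normal-means model: the BB-SSL pseudo-estimator $\wt\b=\wt\b_{\bm w}^{\bm\mu}$ maximizes a separable objective, so $\|\wt\b-\b_0\|_2^2=\sum_{i=1}^n(\wt\beta_i-\beta_i^0)^2$ decouples into $n$ independent one-dimensional problems. For each $i$, the change of variables $\delta_i=\beta_i-\mu_i$ and $\wt Y_i=Y_i-\mu_i$ turns the coordinate optimization into a standard (un-jittered) SSL MAP for $\wt\delta_i:=\wt\beta_i-\mu_i$ based on the single observation $\wt Y_i$ with effective Gaussian noise variance $\sigma^2/w_i$. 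Hence $\wt\delta_i$ inherits the thresholding characterization from \eqref{beta:implicit}: it vanishes whenever $|\wt Y_i|\leq\Delta_{w_i}$, with threshold $\Delta_{w_i}\asymp\sqrt{(2\sigma^2/w_i)(\gamma+\eta)\log(n/q)}$ under $\lambda_0\asymp(n/q)^\gamma$ and $\theta\asymp(q/n)^\eta$, and otherwise equals a soft-thresholded $\wt Y_i$ with shrinkage bias of order $\sigma^2\lambda_1/w_i$.

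Next I would split the squared error along $S=\supp(\b_0)$ with $|S|=q$. For active $i\in S$, the signal $|\beta_i^0|$ dominates $\Delta_{w_i}$ with high probability, so $\wt\beta_i-\beta_i^0 = \epsilon_i-(\sigma^2\lambda_1/w_i)\,\mathrm{sign}(\wt\delta_i)$; summing gives a variance term $\sum_{i\in S}\epsilon_i^2=O_p(q)$ and a shrinkage term bounded in expectation by $q\sigma^4\lambda_1^2\E(1/w_i^2)=O(q)$ via Condition (2) and $\lambda_1<1/\e^2$. For inactive $i\notin S$, the aim is to establish $\wt\delta_i=0$, so that $\wt\beta_i=\mu_i$ and the coordinate error is $\mu_i^2$. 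Decompose over the good event $\{w_i\leq\eta+\gamma\}$ and its complement. On the good event $\Delta_{w_i}\geq\sqrt{2\sigma^2\log(n/q)}$, and a Gaussian tail bound on $\wt Y_i=\epsilon_i-\mu_i$ yields thresholding success with probability $1-O((q/n)^{\eta+\gamma})$; the condition $\eta+\gamma>1$ ensures the failure term summed over the $n-q$ inactive coordinates is $o(q\log(n/q))$. On the complement $\{w_i>\eta+\gamma\}$, Condition (3) bounds the per-coordinate probability by $C_3(q/n)\sqrt{\log(n/q)}$; combined with $\epsilon_i\perp w_i$ and $\mu_i\perp w_i$, the expected squared-error contribution per coordinate is $O(\sigma^2+\lambda_0^{-2})$, so the summed contribution is $O(q\sqrt{\log(n/q)})=o(q\log(n/q))$. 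The bulk jitter contribution $\sum_{i\notin S}\E\mu_i^2\asymp n/\lambda_0^2$ enters at the target scale under the stated parameter ranges.

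Assembling the three coordinate-wise expectations and invoking Markov's inequality yields $\P_{\bm w,\bm\mu}[\|\wt\b-\b_0\|_2^2>M_n\,q\log(n/q)\mid\Y^{(n)}]\to 0$ in $\P_{\b_0}$-probability for any $M_n\to\infty$, which is exactly \eqref{eq:post_conc}. The main obstacle will be the bookkeeping on inactive coordinates, where three independent randomness sources — the Gaussian noise $\epsilon_i$, the Laplace jitter $\mu_i$, and the bootstrap weight $w_i$ — must be tracked simultaneously. Condition (3) is precisely calibrated so that coordinates on which an inflated $w_i$ deflates the threshold below the universal level $\sqrt{2\sigma^2\log(n/q)}$ form a negligible set on the $q\log(n/q)$ scale; Conditions (1)--(2) keep normalization and shrinkage bias under control on the bulk event; and $\eta+\gamma>1$ is what makes the Gaussian tail sharp enough for the jitter contribution to be absorbed into the target rate.
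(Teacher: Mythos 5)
Your skeleton is the same as the paper's: exploit separability, absorb the weight and the jitter by the change of variables $\beta_i^*=w_i^{1/2}(\beta_i-\mu_i)$, $Y_i^*=w_i^{1/2}(Y_i-\mu_i)$ so each coordinate is a standard SSL thresholding problem with rescaled penalties $\lambda_0 w_i^{-1/2},\lambda_1 w_i^{-1/2}$, split active/inactive, use Conditions (1)--(2) on the active side and the event split $\{w_i\le\eta+\gamma\}$ versus its complement (Condition (3)) together with Laplace and truncated-Gaussian tail moments on the inactive side, and finish with Markov's inequality on $\E_{\Y}\E_{\bm w,\bm\mu}\|\wt\b-\b_0\|_2^2$. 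Your inactive-coordinate accounting, including the role of $\eta+\gamma>1$ and the jitter variance $\sum_{i\notin S}\E\mu_i^2\asymp n/\lambda_0^2$, is the paper's argument in compressed form.

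The genuine gap is your treatment of the active coordinates. You claim that $|\beta_i^0|$ dominates $\Delta_{w_i}$ with high probability, so that $\wt\beta_i-\beta_i^0=\epsilon_i-(\sigma^2\lambda_1/w_i)\,\mathrm{sign}(\wt\delta_i)$. The theorem imposes no minimum-signal condition: an active coordinate is merely nonzero and may be far below the threshold, so the dominance claim is false in general; moreover $\Delta_{w_i}$ blows up as $w_i\downarrow 0$, and since you must sum over $q$ coordinates you need a bound in expectation, not a per-coordinate high-probability statement. The exact bias $\sigma^2\lambda_1/w_i$ is also not available for free: $\lambda^*(\wt\beta_i-\mu_i)$ can be as large as $\lambda_0$ unless one shows the fitted magnitude exceeds the inflection point $\delta_{c_+}$ (and separately treats the degenerate regimes where, after rescaling by $w_i^{-1/2}$, the two-mode structure fails). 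The paper avoids all of this by bounding the active risk unconditionally: on the thresholding event $|\wt\beta_i-Y_i|\le|\mu_i-Y_i|\le \Delta(\lambda_0w_i^{-1/2},\lambda_1w_i^{-1/2})\,w_i^{-1/2}$, and otherwise $|\wt\beta_i-Y_i|\le w_i^{-1}\lambda^*(\wt\beta_i-\mu_i)$, which after the inflection-point analysis is $O\bigl((\sqrt{w_i}+\lambda_1)/w_i\bigr)$; taking expectations with $\E(1/w_i)\le C_1$, $\E(1/w_i^2)\le C_2$ gives $O(\log(n/q))$ per active coordinate with no assumption on the size of $\beta_i^0$. Your argument can be repaired by substituting exactly this unconditional bound, but as written the active-coordinate step would fail.
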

\proof See Section \ref{sec:appendix_normal_means_proof} in the Appendix.

In Theorem \ref{normal_means_weight},  $\wt\b_{\bm w}^{\bm\mu}$ denotes the BB-SSL sample whose distribution, for each given $\Y^{(n)}$, is induced  by  random weights $\bm w$ arising from $\pi(\bm w)$ and random recentering  $\bm \mu$ arising from $\psi_0(\cdot)$. 
Despite the approximate nature of BB-SSL,  the concentration rate \eqref{eq:post_conc} is {\em minimax optimal} and it is the same rate achieved by the {\em actual} posterior distribution under the same prior assumptions (\cite{rovckova2018bayesian}).
 Condition (1) in Theorem \ref{normal_means_weight} is not surprising and aligns with considerations in \cite{newton2020weighted}. Conditions (2) and (3) can be viewed as regularizing the tail behavior  of $w_i$'s (left and right, respectively).  While \cite{newton2020weighted} only showed consistency for iid models in finite-dimensional settings, Theorem \ref{normal_means_weight} is far stronger as it shows optimal  convergence rate in a high-dimensional scenario. The following Corollary discusses specific choices of $\pi(\bm w)$.

\begin{corollary}\label{normal_mean_dirichlet}
	Assume the same model and prior as in Theorem \ref{normal_means_weight}. Next, when
	$\bm w=(w_1,w_2,\cdots,w_n)^T\sim n\times\mathrm{Dir}(\alpha,\alpha,\cdots,\alpha)$ with $\alpha \gtrsim \sigma^2\log[\frac{(1-\theta)\lambda_0}{\theta\lambda_1}]$ 
	or $w_i\overset{\text{i.i.d}}{\sim} \frac{1}{\alpha}Gamma(\alpha,1)$ with $\alpha \gtrsim \sigma^2\log\left[\frac{(1-\theta)\lambda_0}{\theta\lambda_1}\right]$, 
	the BB-SSL posterior satisfies \eqref{eq:post_conc}.
\end{corollary}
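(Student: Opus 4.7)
The Corollary is a verification exercise: both weight distributions satisfy the three generic conditions in Theorem \ref{normal_means_weight}, and invoking that theorem yields the stated posterior contraction. The natural strategy is therefore to check conditions (1), (2), (3) separately for each of the two distributions, exploiting the classical fact that if $G_1,\ldots,G_n\iid \mathrm{Gamma}(\alpha,1)$ and $S=\sum_i G_i$, then $(nG_1/S,\ldots,nG_n/S)\sim n\,\mathrm{Dir}(\alpha,\ldots,\alpha)$, so that the Dirichlet case can be reduced to the Gamma case up to a normalization factor. A preliminary step is to observe that under the prescribed hyperparameter scalings $\theta\asymp(q/n)^\eta$ and $\lambda_0\asymp(n/q)^\gamma$ (with $\lambda_1$ fixed), one has
\begin{equation*}
\log\!\left[\frac{(1-\theta)\lambda_0}{\theta\lambda_1}\right]\asymp (\eta+\gamma)\log\!\left(\frac{n}{q}\right),
\end{equation*}
so the standing assumption $\alpha\gtrsim \sigma^2\log[(1-\theta)\lambda_0/(\theta\lambda_1)]$ can be rewritten as $\alpha\geq C\sigma^2(\eta+\gamma)\log(n/q)$ for a sufficiently large absolute constant $C$, and we will use this form throughout.

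\textbf{Gamma case.} Here $w_i=G_i/\alpha$ with $G_i\sim\mathrm{Gamma}(\alpha,1)$. Condition (1) is immediate since $\E G_i=\alpha$. For condition (2), the inverse-moment formula $\E(G_i^{-k})=\Gamma(\alpha-k)/\Gamma(\alpha)$ (for $\alpha>k$) gives $\E(1/w_i)=\alpha/(\alpha-1)$ and $\E(1/w_i^2)=\alpha^2/[(\alpha-1)(\alpha-2)]$, both bounded uniformly in $n$ once $\alpha$ is larger than a fixed constant (which it is, by the standing assumption, for $n$ large). For condition (3) I will use the standard Chernoff bound for a $\mathrm{Gamma}(\alpha,1)$ variable: for any $c>1$,
\begin{equation*}
\P(G_i>c\alpha)\;\leq\;\exp\!\bigl(-\alpha\bigl[c-1-\log c\bigr]\bigr).
\end{equation*}
Setting $c=\eta+\gamma>1$, writing $c_0:=c-1-\log c>0$, and plugging in $\alpha\geq C\sigma^2(\eta+\gamma)\log(n/q)$ yields
\begin{equation*}
\P(w_i>\eta+\gamma)\;\leq\;\exp(-c_0\alpha)\;\leq\;\left(\frac{q}{n}\right)^{Cc_0\sigma^2(\eta+\gamma)}.
\end{equation*}
Taking $C$ large enough so that the exponent exceeds $1$ (in fact exceeds $3/2$, to absorb the $\sqrt{\log(n/q)}$ factor) produces the desired bound $\lesssim (q/n)\sqrt{\log(n/q)}$, verifying condition (3).

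\textbf{Dirichlet case.} Writing $w_i=nG_i/S$ with $S=\sum_{j=1}^nG_j$, condition (1) is trivial by symmetry. For condition (2), $\E(1/w_i)=\E(S)/(n\,\E\,\text{partial})$ is evaluated via the Beta marginal $w_i/n\sim\mathrm{Beta}(\alpha,(n-1)\alpha)$, giving $\E(1/w_i)=(n\alpha-1)/[n(\alpha-1)]\leq 2$ for $\alpha\geq 2$, and likewise $\E(1/w_i^2)$ is uniformly bounded by a similar computation. The only subtlety is the tail bound (3). I will decompose
\begin{equation*}
\P(w_i>\eta+\gamma)\;\leq\;\P\!\left(G_i>(\eta+\gamma)\tfrac{\alpha}{2}\right)+\P\!\left(\tfrac{S}{n}<\tfrac{\alpha}{2}\right).
\end{equation*}
The first term is controlled exactly as in the Gamma case (with the constant $c_0$ replaced by a slightly smaller positive constant), while the second term is bounded by another Chernoff inequality on the sample mean of $n$ i.i.d.\ Gamma variables, giving $\exp(-c_1 n\alpha)\leq \exp(-c_1\alpha)$, which is absorbed into the same tail control provided $C$ is chosen sufficiently large.

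\textbf{Main obstacle.} The only delicate point is calibrating the absolute constant hidden in the notation $\alpha\gtrsim\sigma^2\log[(1-\theta)\lambda_0/(\theta\lambda_1)]$ so that the Chernoff exponent $c_0 C\sigma^2(\eta+\gamma)$ is unambiguously greater than the $1$ needed to dominate the polynomial factor $q/n$ and additionally swallow the $\sqrt{\log(n/q)}$ correction on the right-hand side of condition (3). Once this bookkeeping is done, Theorem \ref{normal_means_weight} applies verbatim and delivers \eqref{eq:post_conc}, completing the proof.
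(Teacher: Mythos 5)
Your overall strategy is the same as the paper's: check conditions (1)--(3) of Theorem \ref{normal_means_weight} for both weight laws, with the moment identities handling (1)--(2) and a Chernoff bound on the $\mathrm{Gamma}(\alpha,1)$ upper tail handling (3); your Gamma case is correct as written (and your worry about the $\sqrt{\log(n/q)}$ factor is unnecessary, since that factor sits on the right-hand side of condition (3) and only weakens it). Where you diverge is the Dirichlet marginal: the paper bounds $\P\bigl(n\,\mathrm{Beta}(\alpha,(n-1)\alpha)>t\bigr)$ directly by $\P\bigl(\mathrm{Gamma}(\alpha,1)>\alpha t\bigr)$ for all $t\ge(\alpha+1)/\alpha$ via an explicit manipulation of the Beta density, whereas you use the representation $w_i=nG_i/S$ and a union bound.

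The union-bound route is fine in principle, but your particular split
\begin{equation*}
\P(w_i>\eta+\gamma)\;\le\;\P\!\left(G_i>(\eta+\gamma)\tfrac{\alpha}{2}\right)+\P\!\left(\tfrac{S}{n}<\tfrac{\alpha}{2}\right)
\end{equation*}
fails on part of the parameter range allowed by Theorem \ref{normal_means_weight}. The theorem only requires $\eta+\gamma>1$, so $\eta+\gamma$ may lie in $(1,2]$; in that case $(\eta+\gamma)/2\le 1$, the threshold $(\eta+\gamma)\alpha/2$ is at or below the mean of $G_i$, and $\P\bigl(G_i>(\eta+\gamma)\alpha/2\bigr)$ does not tend to zero (by the law of large numbers it tends to a constant, or to one), so the upper-tail Chernoff bound you invoke is not available and the first term is vacuous. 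The fix is merely a recalibration of the split point: choose any intermediate level $a$ with $1<a<\eta+\gamma$ (e.g.\ $a=\sqrt{\eta+\gamma}$) and bound
\begin{equation*}
\P(w_i>\eta+\gamma)\;\le\;\P\!\left(G_i>a\,\alpha\right)+\P\!\left(\tfrac{S}{n}<\tfrac{a}{\eta+\gamma}\,\alpha\right),
\end{equation*}
so that the first term carries a Chernoff rate $a-1-\log a>0$ and the second a rate of order $n\alpha$; with $\alpha\ge C\sigma^2(\eta+\gamma)\log(n/q)$ and $C$ large both terms are $O(q/n)$, which is what condition (3) asks for. Alternatively, adopt the paper's one-line domination of the Beta tail by the Gamma tail, which avoids the split altogether. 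With either repair your argument matches the paper's conclusion.
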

\proof See the Appendix (Section \ref{sec:appendix:normal_means_corollary_proof}).

Theorem \ref{normal_means_weight} and Corollary \ref{normal_mean_dirichlet} give  insights into what weight distributions are appropriate for sparse normal means.  In parametric models, the uniform Dirichlet distribution would be enough to achieve consistency \citep{newton1994approximate}. It is interesting to note, however, that in the non-parametric normal means model,  the assumption $\bm w\sim n\times \mathrm{Dir}(\alpha,\cdots,\alpha)$ for  $\alpha<2$ yields risk (for active coordinates) that can be arbitrarily large (as we show in Section \ref{sec:appendix_normal_means_remark} in the Appendix). 
The requirement $\alpha\geq 2$ is thus {\em necessary} for controlling the risk of active coordinates and the plain uniform Dirichlet prior (with $\alpha=1$) would not be appropriate.

In the following theorems, we study the high-dimensional regression model \eqref{eq:model} {with rescaled columns $\|\X_j\|_2=\sqrt{n}$ for all $j=1,2,\cdots,p$.} 

{\color{black} 
	\begin{theorem}[Regression Model Size]
		\label{thm: regression_weight_selected}
		Consider the regression model \eqref{eq:model}
		with {$p>n$,} $q=\|\b\|_0$ (unknown). Assume the SSL prior with $(1-\theta)/\theta\asymp p^\eta$ and 
		$\lambda_0\asymp p^\gamma$ where $\eta,\gamma\geq 1$.
		Assume that $\bm w=(w_1,\cdots,w_n)^T$ are non-negative and arise from $\pi(\bm w)$ such that
		\begin{enumerate}
			\item[(1)]$\E\,  w_i =1$ for each $1\leq i\leq n$,
			\item[(2)]$\exists\, m\in(0,1)\,\,\text{s.t.}\,\,\lim_{n\rightarrow \infty}\P(\min_{i}w_i>m)=1$,
			\item[(3)]$\exists\, M>1\,\,\text{s.t.}\,\,\lim_{n\rightarrow \infty}\P(\max_{i}w_i<M)= 1$,
			\item[(4)]$Var(w_i)\lesssim \frac{1}{\log n},\, Cov(w_i,w_j)=C_0\lesssim \frac{1}{n\log n}$ for any $1\leq i,j\leq n$,
			\item[(5)] 
			{$\max_{i\neq j}|\x_i^T\x_j|\lesssim \lambda_0^2/n$, and $\xi_0>0$ satisfies }
			\begin{equation*}
			\begin{split}
			\max\{&\lambda_{\text{max}}^{1/2}\left(\X_B^T\bm{P}_A\X_B/n\right):B\cap A=\emptyset, |A|=rank(P_A)=|B|=k, \,
			\\&k(1+\xi_0)^2\left(1+\sqrt{2.5\log p}\right)^2\leq 2n\}\leq\xi_0.
			\end{split}
			\end{equation*}
			\item[(6)]  
			{$D=\frac{M}{m}\left(\frac{\eta^*}{c}+\frac{d\lambda_1}{\sqrt{n\log p}}\right)^2<1-\delta$ }
			for some $\delta>0$, where $\eta^*=\max\left\{\wt\eta+C_n\frac{\|\X\|}{\lambda_1},\frac{\wt\eta}{m}\right\}\in(0,1)$, {$\widetilde \eta>\frac{\sqrt{5}}{2\sqrt{\eta+\gamma}}(1+\xi_0)$}, $C_n$ is a sequence s.t. $C_n\rightarrow \infty$, $d=\frac{\sigma}{c\sqrt{2(\eta+\gamma-1)}}$ and $c=c(\eta^*;\b)$.
		\end{enumerate}
		Then the BB-SSL posterior satisfies
		\[\lim_{n\rightarrow \infty}\E_{\b_0}\P_{\m,\,\bm w}\left(\|\wt\b_{\bm w}^{\m}-\m\|_0\leq q(1+K)\C \Y^{(n)}\right)=1\]
		where $K=2\frac{D}{1-D}$. The definition of $c(\eta^*;\b)$ is in the Appendix, Section \ref{sec:appendix_regression_proof}.
	\end{theorem}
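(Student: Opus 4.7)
The plan is to exploit the reduction of BB-SSL to a standard Spike-and-Slab LASSO optimization on randomly perturbed data, and then adapt the model-size analysis already developed for the SSL MAP estimator (e.g.\ in \cite{rovckova2018spike}) to this randomised design. Recall the post-processing step described right after \eqref{BBL}: setting $\b^{*}=\b-\m$, $Y_i^{*}=\sqrt{w_i}(Y_i-\x_i^{T}\m)$ and $\x_i^{*}=\sqrt{w_i}\x_i$, the draw $\wt\b_{\bm w}^{\m}$ satisfies $\wt\b_{\bm w}^{\m}-\m = \arg\max_{\b^{*}}\{\wt L^{\bm 1}(\b^{*},\sigma^{2};\X^{*},\Y^{*})\,\int_\theta\wt\pi(\b^{*}\C\bm 0,\theta)\d\pi(\theta)\}$, i.e.\ a plain SSL MAP on the perturbed design. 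Thus the quantity we have to control, $\|\wt\b_{\bm w}^{\m}-\m\|_0$, is just the active-set size of an SSL estimator on $(\X^{*},\Y^{*})$, to which the existing SSL machinery applies verbatim once the relevant design and noise conditions are transferred.

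First, I would write down the coordinate-wise KKT-type characterisation of the SSL optimum supplied by \eqref{beta:implicit} for $\wh{\b_t^{*}}=\wt\b_{\bm w}^{\m}-\m$. Letting $\wt A=\{j:\wh{\beta_{t,j}^{*}}\neq 0\}$ be the active set, the first-order conditions read
\begin{equation*}
|(\X_j^{*})^{T}(\Y^{*}-\X^{*}\wh{\b_t^{*}})| \;\geq\; \sigma^{2}\lambda^{*}_{\wh\theta_j}(\wh{\beta_{t,j}^{*}}) \qquad \text{for every } j\in\wt A,
\end{equation*}
and a corresponding thresholding inequality for $j\notin\wt A$. Decomposing $\Y^{*}=\X^{*}(\b_0-\m)+\bm\epsilon^{*}$ with $\epsilon_i^{*}=\sqrt{w_i}\epsilon_i$ splits the residual into a bias term driven by $\b_0-\m$ and a reweighted Gaussian noise term. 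I would then bound $|\wt A|$ by controlling three pieces separately: (a) the overlap of $\wt A$ with $\supp(\b_0)\cup\supp(\m)$, (b) the effect of the reweighted noise on the inactive coordinates via a Gaussian maximal-inequality applied conditionally on $\bm w$, and (c) the design-induced term $\|(\X^{*})_{\wt A\setminus B}^{T}\bm P_B\cdot\|$ for $B\subset\supp(\b_0)\cup\supp(\m)$, which is bounded through the sparse-eigenvalue assumption (5), properly rescaled by $\sqrt{M}$ and $\sqrt{m}$ from conditions (2)--(3).

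Second, I would translate the resulting inequality into the multiplicative form $|\wt A\setminus B|\leq D\,|\wt A\setminus B|+q$ (the standard contraction-style recursion used in \cite{rovckova2018spike}), where the constant $D$ is exactly the quantity appearing in assumption (6): the factor $M/m$ arises from comparing $\X^{*}$ and $\X$ in the sparse-eigenvalue bound, $\eta^{*}/c$ tracks the design/noise interaction once condition (5) is applied to the perturbed matrix, and $d\lambda_1/\sqrt{n\log p}$ absorbs the slab-penalty contribution and Gaussian tail term. Condition (4) on $\mathrm{Var}(w_i)$ and $\mathrm{Cov}(w_i,w_j)$ lets me use a Chebyshev-type concentration argument to replace the empirical quantities involving $w_i$ by their means, while conditions (2)--(3) control the range of $w_i$ uniformly (with probability tending to one). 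Because $D<1-\delta$ with positive margin, one rearranges to obtain $|\wt A\setminus B|\leq \tfrac{q}{1-D}$, hence $|\wt A|\leq q+\tfrac{q}{1-D} = q(1+K)$ with $K=2D/(1-D)$ after accounting for the $|B|\leq q$ part. Finally, averaging over the sampling randomness $(\bm w,\m)$ and the data $\b_0$ and combining with the high-probability events from (2)--(4) yields the desired limit.

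The main obstacle I expect is not the SSL algebra, which parallels the deterministic proof in \cite{rovckova2018spike}, but rather transferring the sparse-eigenvalue hypothesis (5) from $\X$ to the randomly reweighted $\X^{*}$ and absorbing the randomness of $\m\sim\psi_0$ into the bias term without degrading the rate. The reweighting can inflate sparse eigenvalues by at most $M/m$, which is exactly why assumption (6) pays this factor; the challenge is to make this uniform over all sparse subsets simultaneously with the event $\{m\le\min_i w_i\le\max_i w_i\le M\}$. Similarly, because $\m$ has independent Laplace components, $\|\m\|_0=p$ almost surely, so one must be careful to measure model size relative to $\wt\b_{\bm w}^{\m}-\m$ rather than to $\wt\b_{\bm w}^{\m}$, and to exploit the fact that on the inactive coordinates the SSL thresholding returns $\wh{\beta^{*}_{t,j}}=0$, which in the original parameterisation means $\wt\beta_{\bm w,j}^{\m}=\mu_j$, precisely what the statement requires.
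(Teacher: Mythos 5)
Your high-level plan is the right one and matches the paper's in outline: reduce BB-SSL to an SSL MAP on perturbed data, run the basic-inequality/KKT analysis, pay a factor $M/m$ for the reweighting, use a Chebyshev argument from Condition (4) to replace weight-dependent quantities by their means, and close with a linear inequality in the model size. But two concrete steps would fail as you have set them up. First, your decomposition $\Y^{*}=\X^{*}(\b_0-\m)+\bm\epsilon^{*}$ treats $\b_0-\m$ as the signal; since $\m$ has independent Laplace coordinates, $\supp(\b_0)\cup\supp(\m)=\{1,\dots,p\}$ almost surely, so your piece (a) and the set $B\subset\supp(\b_0)\cup\supp(\m)$ are vacuous and the sparsity-based counting collapses. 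The paper instead keeps $\b_0$ as the sparse target, writes $\Th=\wh\b-\b_0$ for the maximizer of $Q(\b)=-\frac{1}{2\sigma^2}\|\W(\Y-\X\m)-\W\X\b\|_2^2+\pen(\b\C\theta)$, and absorbs $\X\m$ into the \emph{noise}, bounding $\|\X^T\W^2(\ep-\X\m)\|_\infty$; the smallness of $\m$ (scale $1/\lambda_0$) is what makes this harmless.

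Second, you never invoke the $\tilde\eta$-null-consistency condition of Zhang and Zhang, which is the engine of the paper's argument and the reason Condition (5) has the form it does. The paper first shows (their Proposition 3) that Condition (5) implies $\X$ is $\tilde\eta$-NC with high probability, then transfers this to the weighted, $\m$-shifted problem with the inflated constant $\eta^{*}=\max\{\tilde\eta+C_n\|\X\|/\lambda_1,\tilde\eta/m\}$ via the Chebyshev step you describe. From this single condition it deduces \emph{both} that $\Th$ lies in the cone $C(\eta^{*};\b)$ (using super-additivity of $\pen$) and the $\ell_\infty$ bound $\|\X^T\W^2(\ep-\X\m)\|_\infty\le\sqrt{M}\eta^{*}\Delta^U$. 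A direct Gaussian maximal inequality conditional on $\bm w$, as you propose, does not obviously produce either of these with the constants needed in (6), and your final counting step is also missing the key penalty lower bound $p^{*}(\wh\beta_j)>c_+^{w}$ at active coordinates, which supplies the $\hat q\,b^{w}+(\hat q-q)\log[1/p^{*}(0)]$ term that each false selection must pay; without it the inequality $(\hat q-q)B+\hat q b^w\le A(\hat q+q)$ cannot be formed. (As a smaller point, your closing algebra gives $q\bigl(1+\tfrac{1}{1-D}\bigr)$, which is not $q\bigl(1+\tfrac{2D}{1-D}\bigr)$ unless $D=1/2$, so the recursion as written does not recover the stated $K$.)
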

}

\proof  Section \ref{sec:appendix_thm_regression_selected_proof} in the Appendix.

{\color{black} 
	\begin{theorem}[Regression model]
		\label{thm:regression_weight}
		Under the same conditions as in Theorem \ref{thm: regression_weight_selected}, the BB-SSL posterior concentrates at the near-minimax rate, i.e.,
		\begin{equation}\label{eq:thm_regression_weight}
		\lim_{n\rightarrow \infty}\E_{\b_0}\P_{\bm w,\,\m}\left(\|{\wt\b}_{\bm w}^{\m}-\b_0\|_2^2> \frac{C_5(\eta^*)^2{M}}{m\,\phi^2\, c^2}q(1+K)\frac{\log p}{n}\C\Y^{(n)}\right)=0 
		\end{equation}
		where $c=c(\eta^*;\b),\phi=\phi\left(C(\eta^*;\b_0)\right)$, whose definition are in the Appendix \ref{sec:appendix_regression_proof}.
	\end{theorem}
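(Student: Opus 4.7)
\proof[Proof plan]
The strategy is to leverage the sparsity bound already provided by Theorem \ref{thm: regression_weight_selected} and then convert prediction error into estimation error via the restricted eigenvalue--type condition in Assumption (5). First, I would condition on the high-probability event $\mathcal{A}_n$ on which $\|\wt\b_{\bm w}^{\m}-\m\|_0\le q(1+K)$ (from Theorem \ref{thm: regression_weight_selected}), the weights satisfy $m\le w_i\le M$ uniformly (from Assumptions (2),(3)), and the location shifts $\bm\mu$ concentrate near zero at the Laplace--spike scale $1/\lambda_0$. Write $\tilde S=\supp(\wt\b_{\bm w}^{\m}-\m)$ and $S_0=\supp(\b_0)$, so that $|\tilde S\cup S_0|\le q(2+K)$. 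Throughout, I work with the reparametrized problem $\b^\star=\b-\m$, $Y_i^\star=\sqrt{w_i}(Y_i-\x_i^T\m)$, $\x_i^\star=\sqrt{w_i}\x_i$, under which $\wt\b_{\bm w}^{\m}$ is the standard SSL MAP for the transformed data.

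The central step is the basic (KKT) inequality for the pseudo-MAP. Because $\wt\b_{\bm w}^{\m}$ maximizes the weighted penalized likelihood at \eqref{BBL},
\begin{equation*}
\tfrac{1}{2\sigma^2}\sum_{i=1}^n w_i\bigl(\x_i^T(\wt\b_{\bm w}^{\m}-\b_0)\bigr)^2 \;\le\; \tfrac{1}{\sigma^2}\sum_{i=1}^n w_i\epsilon_i\,\x_i^T(\wt\b_{\bm w}^{\m}-\b_0)+\mathrm{pen}(\b_0;\m)-\mathrm{pen}(\wt\b_{\bm w}^{\m};\m),
\end{equation*}
where $\mathrm{pen}(\b;\m)=-\log\int_\theta\wt\pi(\b\C\m,\theta)\,d\pi(\theta)$. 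On $\mathcal{A}_n$ the weighted Gram matrix $\X^T W\X/n$ inherits a restricted eigenvalue bound from Assumption (5) up to the factor $M/m$ (uniform boundedness of weights), which lower-bounds the left-hand side by $\phi^2\, m\,\|\wt\b_{\bm w}^{\m}-\b_0\|_2^2/(2\sigma^2)$ on vectors supported in $\tilde S\cup S_0$. For the stochastic noise term, I would split the inner product into the $\tilde S\cup S_0$ coordinates and the complement, and bound $\max_j|\sum_i w_i\epsilon_i x_{ij}|$ by $\sigma\sqrt{M}\sqrt{2n\log p}$ using Gaussian concentration conditional on $\bm w$ (facilitated by the subgaussian tail implied by Assumptions (3)--(4)). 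For the penalty difference, I decompose $\wt\b_{\bm w}^{\m}-\b_0=(\wt\b_{\bm w}^{\m}-\m)+(\m-\b_0)$; the first piece is controlled by $\lambda^*_{\wt\theta}$ (which, via the definition of $p^\star_\theta$ and the setting $\lambda_0\asymp p^\gamma$, is $O(\lambda_1)$ on $\tilde S$), while the $\m$ part contributes $O(q/\lambda_0)$ which is negligible.

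Combining these pieces yields an inequality of the form
\begin{equation*}
\tfrac{\phi^2 m}{2\sigma^2}\|\wt\b_{\bm w}^{\m}-\b_0\|_2^2 \;\le\; C_1\eta^*\sqrt{|\tilde S\cup S_0|}\sqrt{\tfrac{M\log p}{n}}\,\|\wt\b_{\bm w}^{\m}-\b_0\|_2 + \text{(lower order)}.
\end{equation*}
Solving this quadratic inequality for $\|\wt\b_{\bm w}^{\m}-\b_0\|_2$ and invoking $|\tilde S\cup S_0|\le q(2+K)$ gives the target rate, with the constant $C_5(\eta^*)^2 M/(m\phi^2 c^2)$ assembled from the restricted eigenvalue $\phi$, the weight bounds $M/m$, the thresholding constant $c=c(\eta^*;\b)$ from the KKT subgradient analysis, and the factor $\eta^*$ governing the ``confusion'' between active and inactive indices.

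The main obstacle is the joint control of the random weights $\bm w$ and jittering $\m$ in the noise and penalty terms. Unlike the unweighted SSL analysis in \cite{rovckova2018bayesian}, the cross-terms $\sum w_i\epsilon_i x_{ij}$ are no longer exactly Gaussian, so I would need to either condition on $\bm w$ and apply conditional sub-Gaussian concentration (absorbing the $\sqrt{M}$ factor) or exploit Assumption (4) through a Bernstein-type bound on $\sum(w_i-1)\epsilon_i x_{ij}$, checking that $\mathrm{Var}(w_i)\lesssim 1/\log n$ is enough to preserve the $\sqrt{\log p/n}$ scale. A secondary difficulty is to show that the random shifts $\mu_j\iid\psi_0$ introduce no more than a $o(q\log p/n)$ perturbation in the basic inequality, which uses the sub-exponential tail of $\psi_0$ together with $\lambda_0\asymp p^\gamma$.
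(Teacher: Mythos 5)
Your overall skeleton (basic inequality for the pseudo-MAP, the dimensionality bound from Theorem \ref{thm: regression_weight_selected}, a compatibility/restricted-eigenvalue step, then solving for the error) is the same as the paper's, but you diverge at the key stochastic step. The paper never applies Gaussian or Bernstein concentration to $\sum_i w_i\epsilon_i x_{ij}$ directly: the entire cross term $(\W\ep-\W\X\m)^T\W\X\Th$ is bounded through the dual-norm estimate $\|\X^T\W^2(\ep-\X\m)\|_\infty\le\sqrt{M}\,\eta^*\Delta^U$ of Lemma \ref{BBLASSO_inf_norm}, which is a \emph{deterministic} consequence of the weighted, jittered null-consistency event; the probabilistic work (including the $\m$-cross terms, handled via the variance computation using Condition (5)) lives in Lemmas \ref{BBLASSO_identifiability_Zhanglemma}--\ref{BBLASSO_identifiability_lemma}. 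That route is also where the factor $\eta^*$ and the cone membership $\Th\in C(\eta^*;\b)$ (Lemma \ref{BBLASSO_cone_lemma}) come from, hence the specific constants $c(\eta^*;\b)$ and $\phi(C(\eta^*;\b_0))$ in the statement. Your direct conditional-sub-Gaussian bound can deliver the \emph{rate}, but since you bypass null consistency you lose both the $\eta^*$ factor and the cone, so you would obtain support-based RE constants; the claim that your argument "assembles" the displayed constant $C_5(\eta^*)^2M/(m\phi^2c^2)$ is not justified. Note also that Assumption (5) is a sparse-projection condition used for null consistency (following Zhang), not a restricted-eigenvalue hypothesis; $\phi$ and $c$ are separately defined quantities, not consequences of it.

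Two concrete gaps. First, $\m$ is dense: every coordinate of $\wt\b_{\bm w}^{\m}-\b_0=(\wh\b-\b_0)+\m$ is a.s.\ nonzero, so your step "the RE bound applies on vectors supported in $\tilde S\cup S_0$" fails as written. You must run the whole argument for $\wh\b-\b_0$ (which is $q(2+K)$-sparse by Theorem \ref{thm: regression_weight_selected}) and reinstate $\m$ only at the end; the paper does exactly this via the triangle inequality and a Markov bound, $\E\|\wt\b-\wh\b\|_2^2=3p/\lambda_0^2=o\!\left(q\log p/n\right)$ since $\lambda_0\asymp p^\gamma$ with $\gamma\ge1$. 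Your event "$\bm\mu$ concentrates near zero" does not substitute for this, because you need a global $\ell_2$ control over all $p$ coordinates, and the weighted cross term $\sum_i w_i(\x_i^T\m)(\x_i^T\Th)$ is likewise not covered by your $\epsilon$-concentration. Second, the penalty difference contributes $2\sigma^2q\log[1/p^*(0)]\asymp q(\eta+\gamma)\log p$, which is of the \emph{same} order as the target $q(1+K)\log p$, not "(lower order)"; it must be carried explicitly (it is the $2qC_4\log p$ term in the paper's display) and folded into the constant. With these repairs — work with $\wh\b-\b_0$, add the Markov step for $\m$, track the prior-mass term, and either re-derive the dual-norm/null-consistency bound or accept different constants — your plan goes through at the level of rates.
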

}
\proof 	Section \ref{sec:appendix_thm_regression_proof} in the Appendix.

It follows from \cite{rovckova2018spike} and from \eqref{eq:thm_regression_weight}  that the BB-SSL posterior achieves the same rate of posterior concentration  as the \emph{actual} posterior. In Theorem \ref{thm: regression_weight_selected},  Conditions (1)-(4) regulate  the distribution $\pi(\bm w)$ while Conditions (5) and (6) impose requirements on $\X$, $\lambda_0$ and $\lambda_1$. 
Conditions (2) and (3) are counterparts of  (2) and (3) in Theorem \ref{normal_means_weight}  and control the left and right tail of $w_i$'s, respectively. The larger $M$  (or the smaller $m$) is, the larger $D$ and $K$ will become and the larger the bound on $\|{\wt\b}_{\bm w}^{\m}-\b_0\|_2^2$ will be. Compared with the normal means model,  we have one additional Condition (4) which requires that each $w_i$ becomes more and more concentrated around its mean and that $w_i$'s are asymptotically uncorrelated. It is interesting to note that distributions $\pi(\bm w)$ in Corollary \ref{normal_mean_dirichlet} both satisfy Condition (4). Moreover, the Dirichlet distribution in Corollary \ref{normal_mean_dirichlet} achieves both upper bounds tightly. Finally,  Condition (5)  ensures that identifiability holds with high probability (\cite{zhang2012general}) and Condition (6) ensures that our bound is meaningful.\footnote{ In order for $\eta^*$ to be a bounded real number smaller than 1, we would need $\lambda_1/ \|\X\|\rightarrow \infty$. For example when $p\asymp n$, for random matrix where each element is generated independently by Gaussian distribution, we have $\|\X\|=O_p(\sqrt{n}+\sqrt{p})$ (\cite{vivo2007large}). So in order for such a sequence $C_n$ (s.t. $C_n\rightarrow \infty$) to exist, we need $\lambda_1/ (\sqrt{n}+\sqrt{p})\rightarrow\infty$. We can choose $\lambda_1=(\sqrt{n}+\sqrt{p})\sqrt{\log p}$ and $C_1=\log \frac{\lambda_1}{\|\X\|}$ under such settings.}
In practice, many distributions will satisfy Conditions (1)-(4), e.g. bounded distributions with a proper covariance structure or distributions from Corollary \ref{normal_mean_dirichlet}.

\begin{remark}\label{regression_dirichlet}
	In the regression model, when $\bm w$ arises from the same distribution as in Corollary \ref{normal_mean_dirichlet}, Conditions (1)-(4) in Theorem \ref{thm:regression_weight} are satisfied  by setting $m=\frac{1}{e}$ and $M=\frac{2}{3}(\eta+\gamma)$. The detailed proof is in Section \ref{sec:appendix_regression_corollary_proof} in the Appendix.
\end{remark}

{\scriptsize
	\begin{algorithm}[t]\small
		\spacingset{1.1}
		\KwData{ Data ($Y_i$, $\x_i$) for $1\leq i\leq n, x_i\in \R ^p$, truncation limit $m$}
		\KwResult{$\tilde{\bm \beta}^t, t=1,2,\cdots,T$}
		\For{$t=1,2,\cdots, T$}{
			{\vspace{-0.3cm} \hspace{5cm}\color{white} \tiny ahoj}\\
			\textbf{(a) Draw prior pseudo-samples $\tilde \x_{1:m}, \tilde y_{1:m}\sim F_\pi$}.\\
			\textbf{(b) Draw $(w_{1:n},\tilde w_{1:m})$} $\sim\text{Dir}(1,1,\cdots,1,c/m,c/m,\cdots,c/m)$.\\
			\textbf{(c) Calculate} $\wt\b^t=\arg\max_{\b\in\R^p}\left\{\sum_{j=1}^{n}w_jl(\x_j,y_j,\b)+\sum_{k=1}^{m}\tilde w_kl(\tilde\x_k,\tilde y_k,\b)\right\}$.
		}
		\caption{\bf : Posterior Bootstrap Sampling}\label{alg: Fong}
\end{algorithm}}

\subsection{Connections to Other Bootstrap Approaches}\label{sec:connection}
Our approach bears a resemblance to Bayesian non-parametric learning (NPL) introduced by \cite{lyddon2018nonparametric} and \cite{fong2019scalable} which generates exact posterior samples  under a Bayesian non-parametric model that assumes less about the underlying model  structure. Under a prior on the sampling distribution function $F_\pi$, one can   use WBB (and also WLB)  to draw  samples from a posterior of $F_\pi$ by optimizing a randomly weighted loss function $l(\cdot)$ based on an enlarged sample (observed plus pseudo-samples) with weights following a Dirichlet distribution (see Algorithm \ref{alg: Fong} which follows from \cite{fong2019scalable}).   Despite the fact that these two procedures have different objectives, there are many interesting connections. In particular, the idea of randomly perturbing the prior has an effect similar to adding pseudo-samples $\wt \x_{1:m},\wt y_{1:m}$ from the prior $F_\pi(\x, y)$  defined through
$$
\wt\x_k\sim  F_n(\x)=\frac{1}{n}\sum_{i=1}^{n}\delta(\x_i),\quad \wt y_k\C\wt \x_k=\wh y_k+\wt \x_k^T\bm\mu
$$
where $\delta(\cdot)$ is the Dirac measure, $\m$ is the Spike, and $\wh y_k=y_i$ where $i$ satisfies $\wt \x_k=\x_i$. A motivation for this prior is derived in the Appendix (Section \ref{sec:appendix_connection_explanation}). Under this prior, the NPL posterior samples $\wt\b^t$ generated by Algorithm \ref{alg: Fong} approximately follow the distribution  (see Section \ref{sec:appendix_connection_explanation} in the Appendix)
\begin{equation}\label{eq: fong}
\wt\b^t\text{\ensuremath{\overset{\text{d}}{\approx}}}\arg\max_{\wt\b\in\R^p}\left\{-\frac{1}{2}\sum_{i=1}^nw_i^*(Y_i-\x_i^T\tilde\b)^2+ \log\left[\int \prod_{j=1}^p\pi\left(\tilde\beta_j-\frac{c}{c+n}\mu_j^*\C\theta\right)d\pi(\theta)\right]
\right\}-\frac{c}{c+n}\bm\mu^*
\end{equation}
where  $(w_1^*,w_2^*,\cdots,w_n^*)^T\sim n\times \text{Dir}(1+c/n, \cdots,1+c/n)$, {each coordinate of $\bm\mu^*$ independently follows the spike distribution, and where} $c$ represents the strength of our belief in $F_\pi$ and can be interpreted as the effective sample size from $F_\pi$. 
In comparison with the BB-SSL estimate
\begin{equation}\label{eq:BB_SSL}
\wt\b^t=\arg\max_{\wt\b\in\R^p}\left\{-\frac{1}{2}\sum_{i=1}^nw_i(Y_i-\x_i^T\wt\b)^2+ \log \left[\int_\theta\prod_{j=1}^p\pi\left(\tilde\beta_j-\mu_j\C\theta\right)d\pi(\theta)\right]
\right\}
\end{equation}
where $(w_1,w_2,\cdots,w_n)^T\sim n\times\text{Dir}(\alpha, \cdots,\alpha)$,
both \eqref{eq: fong} and \eqref{eq:BB_SSL} are shrinking towards a random location and both are using Dirichlet weights. The main difference is  in the choice of  the concentration parameter $c$. When $c=0$,  \eqref{eq: fong} reduces to WBB (with a fixed weight on the prior) which reflects less confidence in the prior $F_\pi$ and thus less prior perturbation (location shift).  
When $c$ is large, \eqref{eq: fong} becomes more similar to \eqref{eq:BB_SSL} where the prior $F_\pi$ is stronger and thereby more prior perturbation is induced.  
Another difference is that  \eqref{eq: fong}, although shrinking towards a random location $\frac{c}{c+n}\mu_j^*$, adds the location back  which results in less variance (see Figures in Section  \ref{sec:appendix_connection_explanation} in the Appendix). 

\begin{table}[!t]
	\small
	\centering
	\scalebox{0.7}{\begin{tabular}{l|l}
			\hline\hline
			\bf Algorithm       & \bf Complexity \\
			\hline \bf SSVS1          & $O(p^2n)$ 
			\\
			\hline \bf SSVS2  & $O(n^2\max(p,n))$ \\
			\hline \bf Skinny Gibbs & $O(np)$ \\
			\hline \bf WLB           &  $O(np^2)$ when $n\leq p$, not applicable when $p>n$    \\
			\hline \bf BB-SSL & $O\left(\min\left(\text{maxiter}\times p(n+\frac{p}{c_1}),\, (n+\text{maxiter})\times p^2\right)\right)$ for a single value $\lambda_{0}$\\ 
			\hline\hline
	\end{tabular}}
	\caption{\small A computational complexity analysis (per sample) of each algorithm. \texttt{Maxiter} is the user-specified maximum number of iterations with a default value  $500$. For BB-SSL, $c_1$ is the pre-specified number of iterations after which  $\theta$  is updated with a default value $10$. By setting $c_1\propto p$ we have BB-SSL complexity $O(np)$. }
	\label{table: computation}
\end{table}

\section{Simulations}\label{sec:simulations}

We compare the empirical performance of our BB-SSL with several existing posterior sampling methods including WBB (\cite{newton2020weighted}), SSVS (\cite{george1993variable}), and Skinny Gibbs (\cite{narisetty2019skinny}). We implement two versions of WBB: WBB1 (with a fixed prior weight) and WBB2 (with a random prior weight). We also implement the original SSVS  algorithm (Algorithm 1 further referred to as SSVS1) and compare its complexity and running times with its faster version (further referred to as SSVS2) which uses the trick from   \cite{bhattacharya2016fast}.   
Comparisons are based on the marginal posterior distributions for $\beta_i$'s, marginal inclusion probabilities (MIP) $\P(\gamma_i=1\C \Y^{(n)})$ as well as the joint posterior distribution $\pi(\bg\C\Y^{(n)})$. As the benchmark gold standard for comparisons, we run SSVS initialized at the truth for a sufficiently large number of  iterations $T$  and discard the first $B$ samples as a burn-in. We use the same $T$ and $B$ for Skinny Gibbs except that we initialize $\b$ at the origin. For BB-SSL, we draw weights $\bm w\sim n\times\text{Dir}(\alpha,\cdots,\alpha)$ where $\alpha$ depends on $(n,p,\sigma^2)^T$. {When solving the optimization problem \eqref{BBL} using coordinate-ascent, the default initialization for $\bm\beta$ in the \texttt{SSLASSO} R package \citep{rockova2019package} is at the origin. In high-dimensional correlated settings when $\lambda_0>>\lambda_1$, however,  the performance of BB-SSL can be further enhanced by using a warm start re-initialization strategy for a sequence of increasing $\lambda_0$'s  where the last value is the target $\lambda_0$ value (as recommended by \cite{rovckova2018bayesian} and  \cite{rockova2019package}). It is computationally more economical  to perform such annealing   only once on the original data and then use the output (for the target value $\lambda_0$) for each BB-SSL iteration. We apply this  strategy  using an output  obtained from the R package \texttt{SSLASSO} using an equispaced sequence of $\lambda_0$'s of length $50$, starting at $\lambda_1$ and ending at $\lambda_0$.}
We then run  WWB1, WBB2 and BB-SSL  for $T$ iterations. Throughout the simulations we set $\sigma^2=1$ and assume the prior $\theta \sim B(1,p)$.  
{ Computational complexity of each algorithm is summarized in Table \ref{table: computation} with actual running times  reported (for varying $p$ and $n$)   in Figure \ref{fig:comp_times}.}

\begin{figure}[!t]\centering
	\begin{subfigure}{.4\textwidth}\centering
		\includegraphics[width=.97\textwidth]{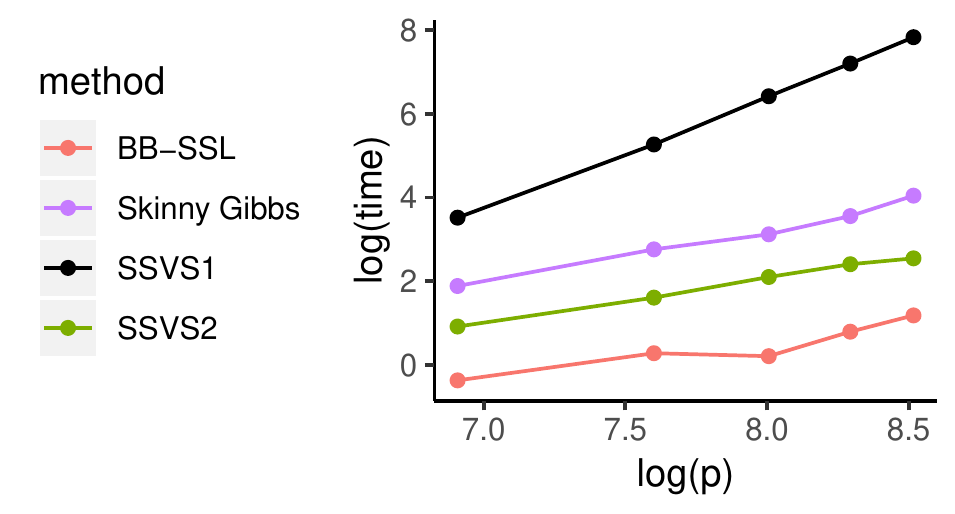}
		\caption{\small $n=100$. }
	\end{subfigure}
\begin{subfigure}{.4\textwidth}\centering
	\includegraphics[width=.7\textwidth]{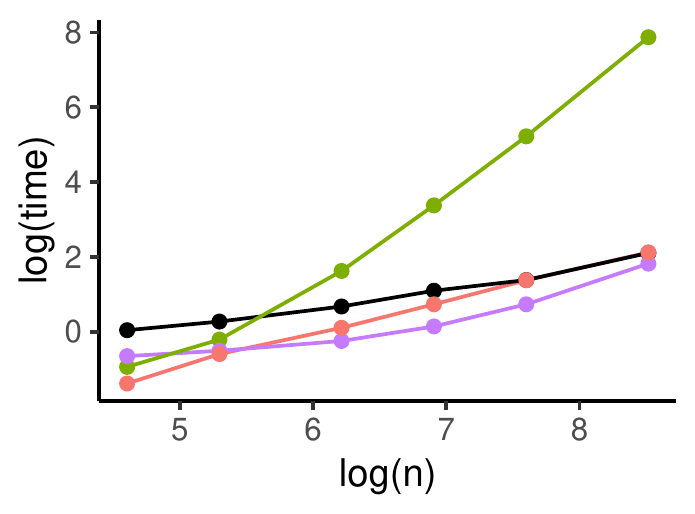}
	\caption{\small $p=100$.   }
\end{subfigure}
	\caption{\small Average running times (in seconds on a log scale) for $100$ iterations of each algorithm based $10$ independent runs.  Covariates are all correlated with a correlation coefficient $\rho=0.6$. Signals are $(2,3,-3,4)$ and we set $\lambda_0=200, \lambda_1=0.05, a=1,b=p$. BB-SSL is initialized at the SS-LASSO solution obtained with a sequence of $\lambda_0$'s (an equi-spaced series of length $50$ starting at $\lambda_1=0.05$ and ending at 
	$\lambda_0=200$). 
	}\label{fig:comp_times}
\end{figure}

\subsection{The Low-dimensional Case}
Similarly to the experimental setting in \cite{rovckova2018particle}, we generate $n=50$ observations on $p=12$ predictors with $\b_0 = (1.3,0,0,1.3,0,0,1.3,0,0,1.3,0,0)^T$, where the predictors have been grouped into $4$ blocks.  Within each block, predictors have an equal correlation {$\rho$} and there is only one active predictor. All the other correlations are set to 0. We choose a single value  for $\lambda_0\propto p$ and generate Dirichlet weights assuming $\alpha=1$ (for WBB1, WBB2 and BB-SSL). 

\paragraph{Uncorrelated Designs}
Assuming $\rho=0, \lambda_0=12$ and $\lambda_1=0.05$ we run   SSVS1 and Skinny Gibbs for $T=10\,000$ iterations with a burn-in $B=5\,000$. For WBB1, WBB2 and BB-SSL we use $T= 5\,000$ iterations. 
All methods perform very well under various metrics in this setting. We refer the reader to Section \ref{sec:simu_low_ind} in the Appendix for details. 

\paragraph{Correlated Designs}
 \begin{sidewaysfigure}
	\begin{subfigure}{0.5\hsize}\centering
		\includegraphics[width=\hsize, height=5in]{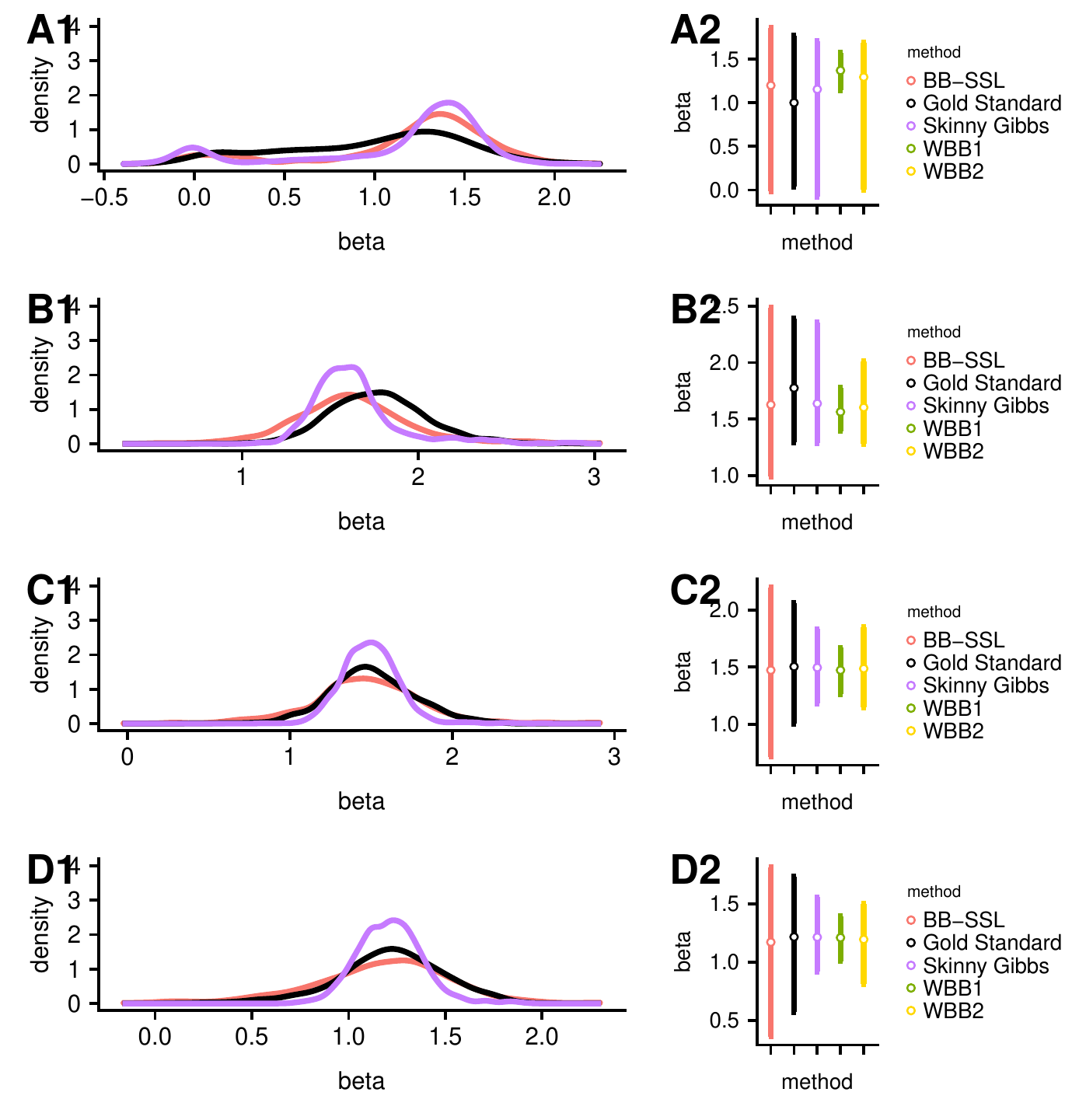}
		\caption{\small Active predictors, from top to bottom: $\beta_1,\beta_{4},\beta_{7},\beta_{10}$}
	\end{subfigure}%
	\begin{subfigure}{0.5\hsize}\centering
		\includegraphics[width=\hsize, height=5in]{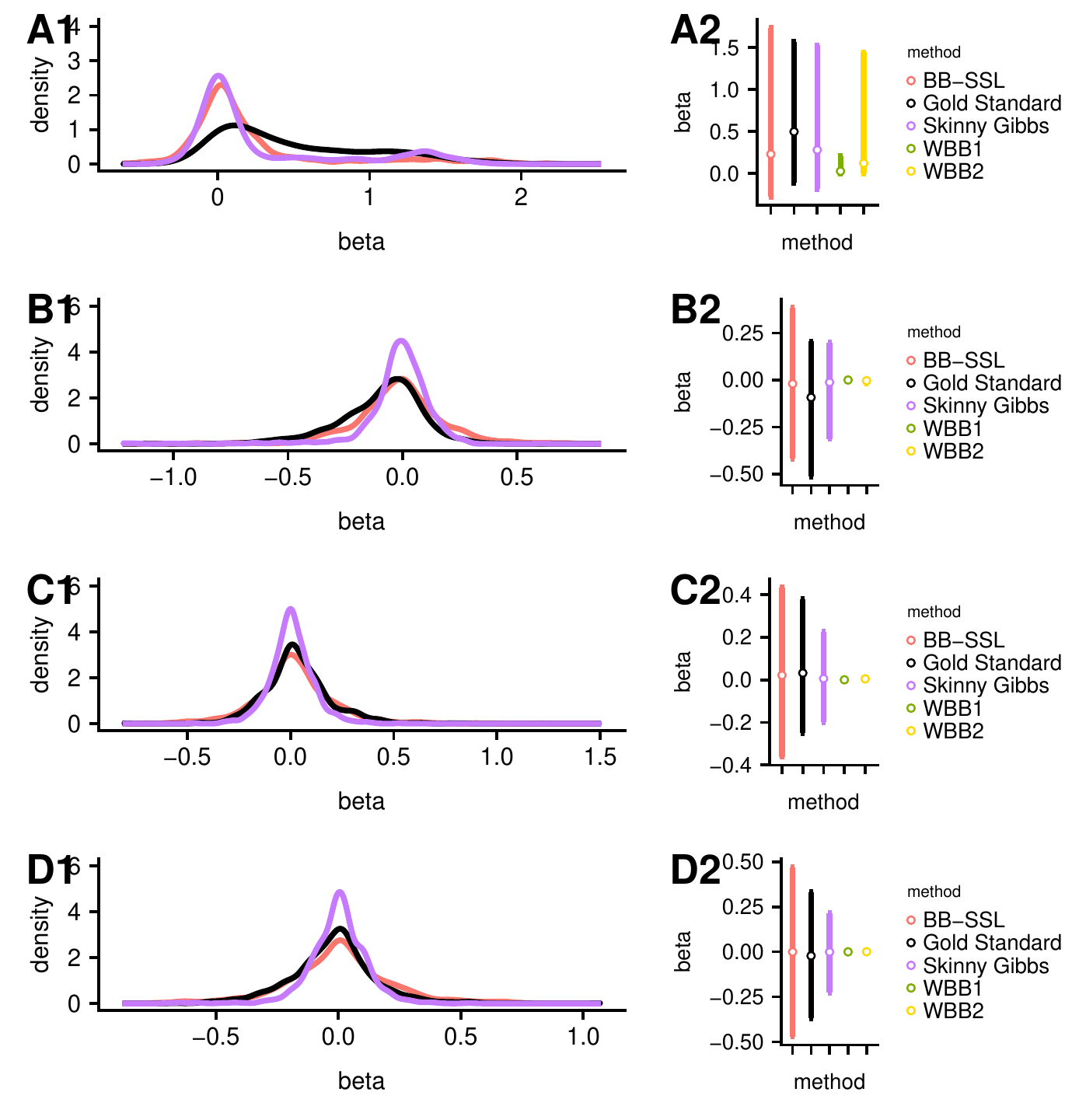}
		\caption{\small Inactive predictors, from top to bottom: $\beta_2,\beta_{5},\beta_{8},\beta_{11}$}
	\end{subfigure}
	\caption{\small Estimated posterior density (left panel) and credible intervals (right panel) of $\beta_i$'s in the low-dimensional correlated case. We have $n=50, p=12, \beta_{active}=(1.3,1.3,1.3,1.3)',\lambda_0=7,\lambda_1=0.15, \rho=0.9$. Each method has 
	$5\,000$ sample points (after thinning for SSVS and Skinny Gibbs). BB-SSL is fitted using a single value $\lambda_0=7$. Since WBB1 and WBB2 produce a point mass at zero, we exclude them from density comparisons. }
	\label{fig:low_cor_beta}
\end{sidewaysfigure}

\begin{figure}
\begin{subfigure}{\textwidth}\centering
	\includegraphics[width=.85\textwidth,height=.15\textheight]{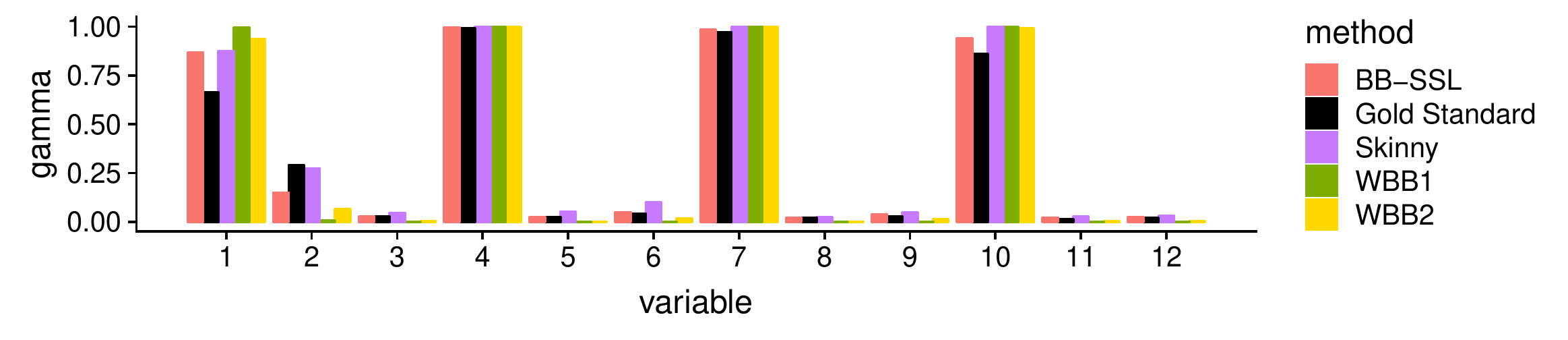}
	\caption{\small Marginal inclusion probability. }
	\label{fig:low_cor_gamma}
\end{subfigure}

\begin{subfigure}{\textwidth}
	\includegraphics[width=\textwidth, height=.18\textheight]{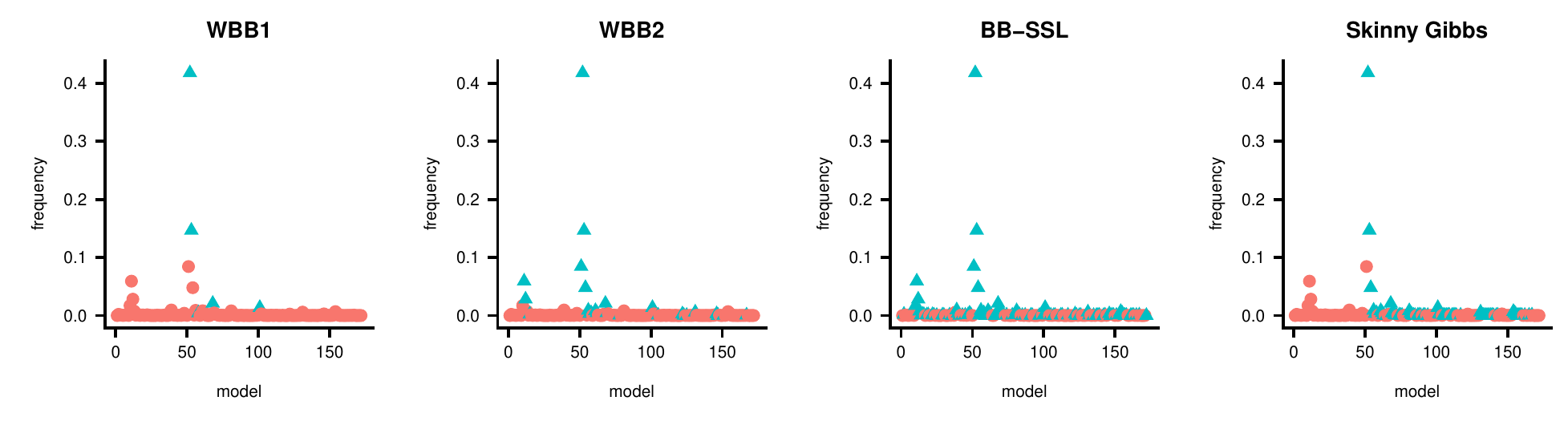}
	\caption{\small Posterior exploration plot. Blue (red) triangles are visited (unvisited) models.}
	\label{fig:low_cor_PE}
\end{subfigure}
	\caption{\small The low-dimensional correlated case with $n=50, p=12, \beta_{active}=(1.3,1.3,1.3,1.3)'$ where predictors are grouped into $4$ correlated blocks with $\rho=0.9$. 
	We choose $\lambda_{0}=7,\lambda_1=0.15$. 
	}
\end{figure}
Correlated designs are far more interesting for comparisons. 
We choose $\rho=0.9, \lambda_0=7$ and $\lambda_1=0.15$ to deliberately encourage multimodality in the model posterior (see Figure \ref{fig:low_cor_PE}).  For SSVS1 and Skinny Gibbs, we set $T=100\,000$ and $B=5\,000$. For WBB1, WBB2 and BB-SSL we set $T=95\,000$. 

In terms of the marginal densities of $\beta_i$'s, Figure \ref{fig:low_cor_beta} shows that BB-SSL tracks SSVS1 very closely. All methods can cope with multi-collinearity where   BB-SSL tends to have slightly longer credible intervals with the opposite being true for Skinny Gibbs, WBB1 and WWB2.
In terms of the marginal means of $\gamma_i$ (Figure \ref{fig:low_cor_gamma}) all methods  perform well, where the median probability model rule (truncating the marginal means at 0.5) yields the true model.
In terms of the overall posterior $\pi(\bg\C\Y^{(n)})$, we identify over 60 unique models using SSVS1 where the true model accounts for most of the posterior mass. In Figure \ref{fig:low_cor_PE}, we show the visited (blue triangle) and not visited (red dots) among these models, where $y$-axis represents the estimated posterior probability for each model (calculated from SSVS1). All methods  can detect the dominating model. BB-SSL tracked down 99\% of the posterior probability, followed by WBB1 (92\%), WBB2 (91\%) and  Skinny Gibbs (73\%). 
{The average times (reported in seconds and ordered from fastest to slowest) spent on generating $1\,000$ effective samples for $\beta_j$'s are WBB2 (0.68 s) $<$ WBB1 (0.72 s)  $<$ BB-SSL (0.74 s) $<$ SSVS2 (0.82 s) $<$ SSVS1 (0.85 s)  $<$ Skinny Gibbs (1.19 s). }


\subsection{The High-dimensional Case}


We now consider a higher-dimensional case with $n=100$ and $p=1\,000$, assuming  $\lambda_0=50,\lambda_1=0.05$ and  $\alpha=2$ for BB-SSL. For SSVS1 and Skinny Gibbs we set $T=15\,000$ and $B=5\,000$ while for WBB1, WBB2 and BB-SSL we set $T=1\,000$.
\begin{figure}
	\centering
	\begin{subfigure}{.35\textwidth}
		\centering
		\includegraphics[width=.85\linewidth,height=1.65in]{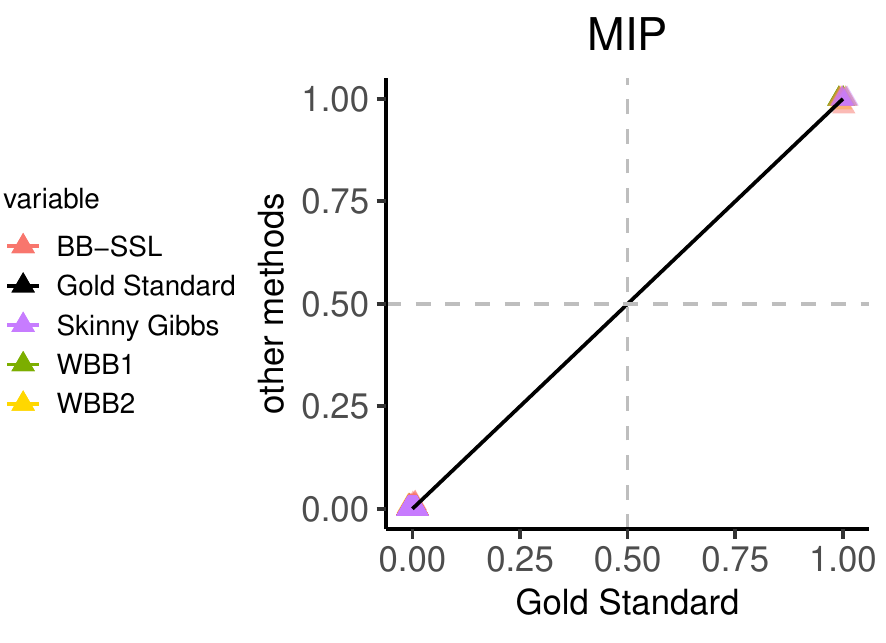}
		\caption{\small $\rho=0$}
		\label{fig:high_dim_gamma_ind}
	\end{subfigure}%
	\begin{subfigure}{.28\textwidth}
		\centering
		\includegraphics[width=.85\linewidth,height=1.65in]{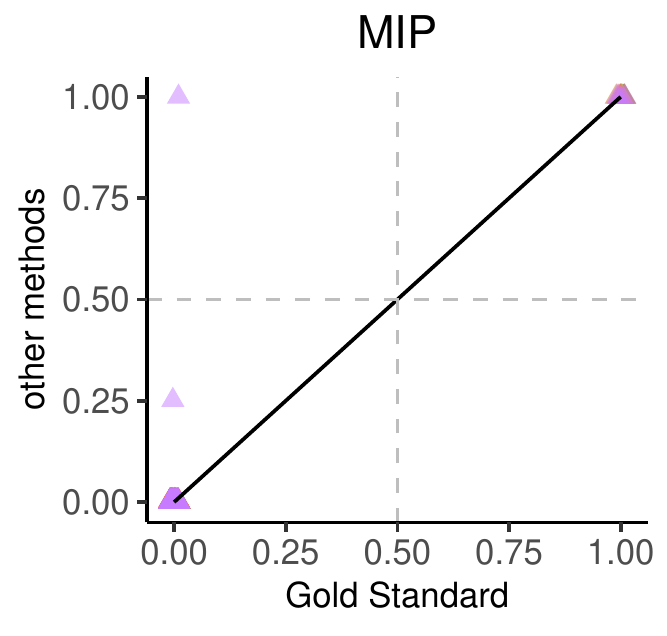}
		\caption{\small block-wise $\rho=0.9$}
		\label{fig:high_dim_gamma_9}
	\end{subfigure}
\begin{subfigure}{.28\textwidth}
	\centering
	\includegraphics[width=.85\linewidth,height=1.65in]{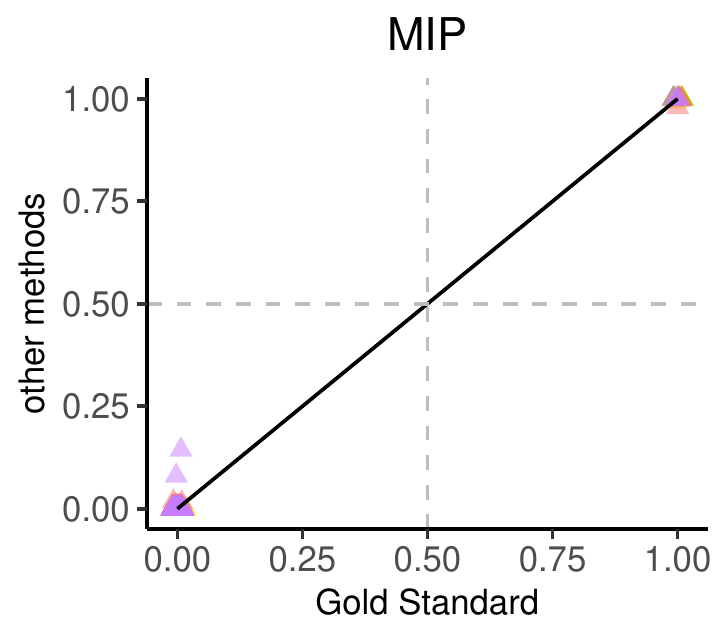}
	\caption{\small equi-correlation $\rho=0.6$}
	\label{fig:allcorrelated_gamma_6}
\end{subfigure}
	\caption{\small Posterior means of $\gamma_i$'s (i.e. a marginal inclusion probabilities) in high-dimensional settings with $n=100, p=1000$. We set $\lambda_0=50,\lambda_1=0.05$. }
	\label{fig:high_dim_gamma}
\end{figure}
{
We consider two correlation structures: (a) block-wise correlation, and (b) equi-correlation. 
In the setting (a), the  active predictors have regression coefficients $(1,2,-2,3)^T$ and all predictors are grouped into blocks of size $10$, where each group has exactly one active coordinate and where predictors have a within-group correlation $\rho$. We consider $\rho\in\{0,0.6,0.9\}$ and an extreme case  $\rho=0.99$ with a larger signal $(2,4,-4,6)^T$. 
For the equi-correlation setting (with a correlation coefficient $\rho$),  active predictors have regression coefficients $(2,3,-3,4)^T$. We consider $\rho\in\{0.6,0.9\}$.
}

{For brevity, we only show results for $\rho=0.6$  in the equi-correlation setting with the rest postponed until the Appendix (Section \ref{sec_appendix:simulation}). } 
{In the setting (b)  with $\rho=0.6$}, in terms of the marginal density of $\beta_i$'s (shown in Figure \ref{fig:r6}), Skinny Gibbs tends to underestimate the variance for active coordinates and WBB1 and WBB2 produce a point mass at $0$ for inactive coordinates. BB-SSL, on the other hand,  fares very well. Figure \ref{fig:high_dim_gamma} shows that BB-SSL, WBB1 and WBB2 accurately reproduce the MIPs, while Skinny Gibbs tends to slightly overestimate the MIP as $\rho$ increases.

\begin{sidewaysfigure}
	\begin{subfigure}{0.5\hsize}\centering
		\includegraphics[width=\hsize, height=5in]{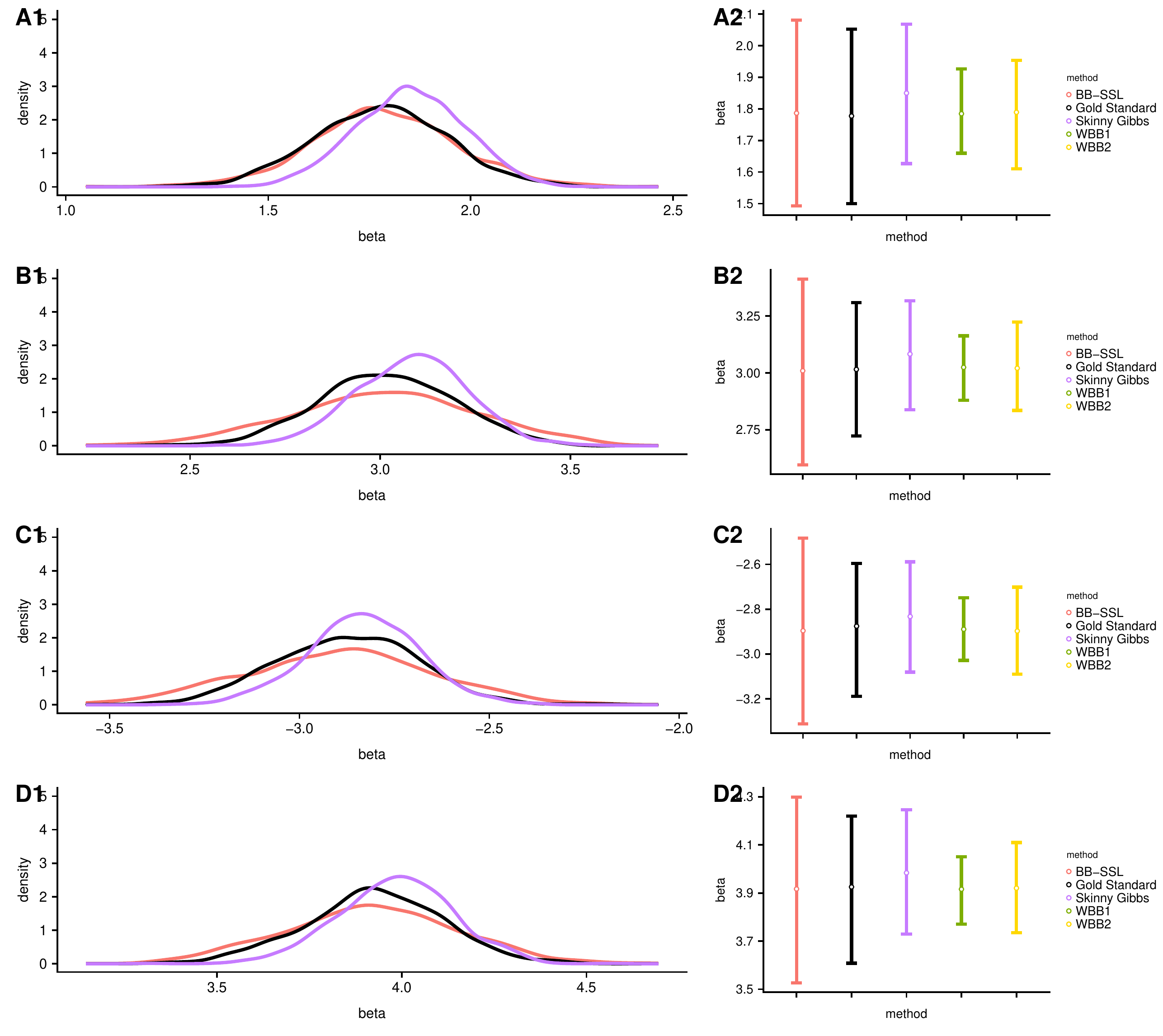}
		\caption{\small Active predictors, from top to bottom: $\beta_1,\beta_{2},\beta_{3},\beta_{4}$}
	\end{subfigure}%
	\begin{subfigure}{0.5\hsize}\centering
		\includegraphics[width=\hsize, height=5in]{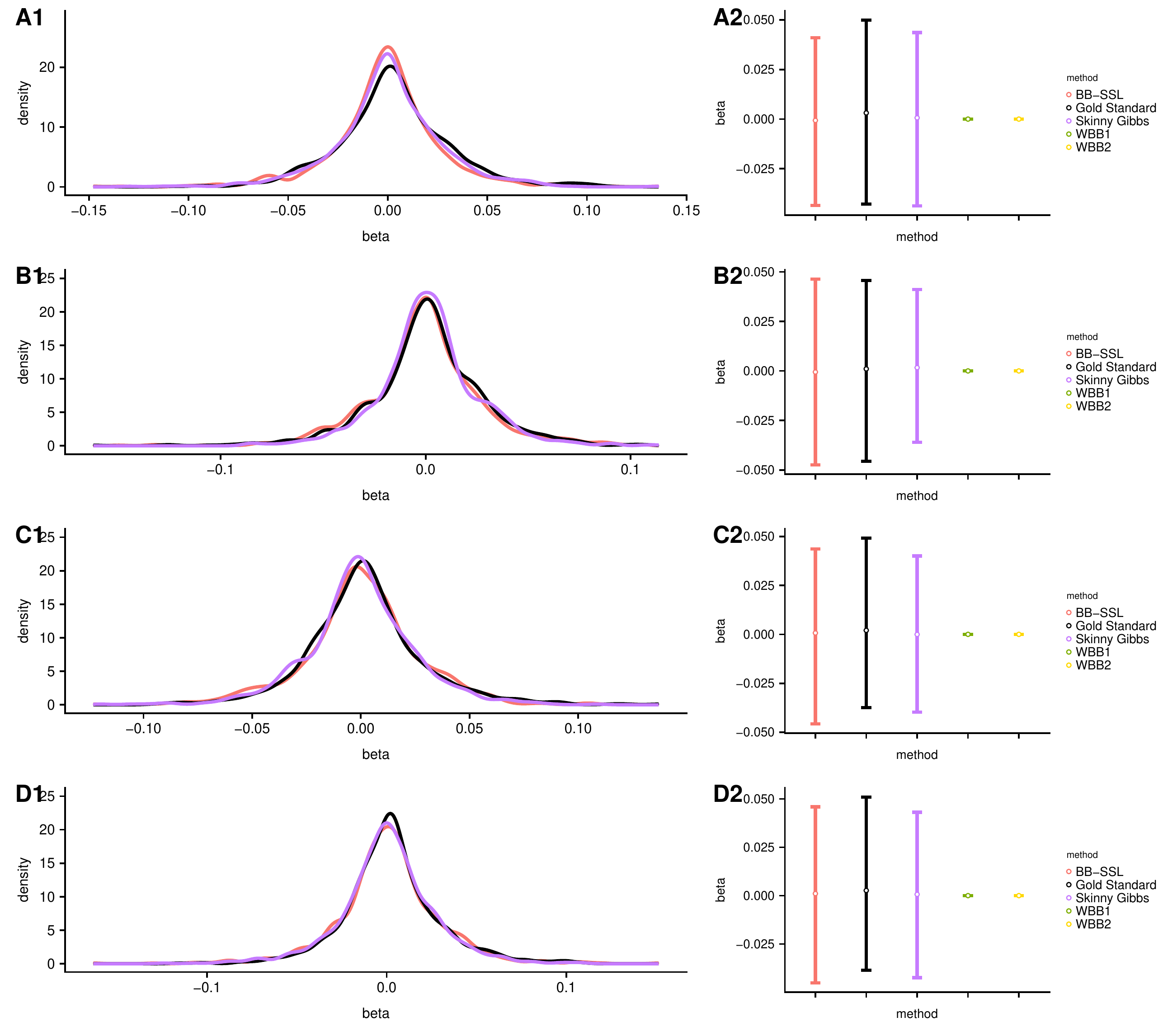}
		\caption{\small Inactive predictors, from top to bottom: $\beta_5,\beta_{6},\beta_{7},\beta_{8}$}
	\end{subfigure}
	\caption{\small  Estimated posterior density (left panel) and $90\%$ credible intervals (right panel) of $\beta_i$'s when all covariatess are correlated with $\rho=0.6$. We have $n=100, p=1000, \beta_{active}=(2,3,-3,4)',\lambda_0=50,\lambda_1=0.05$. BB-SSL is fitted using a single $\lambda_0$ and initialized at \texttt{SSLASSO} solution on the original $\X,\y$. Since WBB1 and WBB2 produce a point mass at zero, we exclude them from density comparisons.
	\label{fig:r6}
	}
\end{sidewaysfigure}

\begin{table}
\begin{subtable}[!t]{\textwidth}
	\centering
	\scalebox{0.58}{
		\begin{tabular}{l|l|l|l|l|l|l|l|l|l||l|l|l|l|l|l|l|l|l} 
			\hline\hline
			\bf \large Setting&\multicolumn{9}{l||}{  \cellcolor[gray]{0.8}\large \bf Block-wise $\rho=0.9$, $\beta_{active}=(1,2,-2,3)'$}   &\multicolumn{9}{l}{  \large\cellcolor[gray]{0.8} \bf Block-wise $\rho=0.99$, $\beta_{active}=(2,4,-4,6)'$}   \\\hline
			\multirow{3}{*}{Metric}& \multicolumn{6}{l|}{\quad\quad\quad\quad\quad\quad\quad\quad$\beta_j$'s}                                                                                         & \multicolumn{2}{l|}{\quad$\gamma_j$'s }              & Model 
			& \multicolumn{6}{l|}{\quad\quad\quad\quad\quad\quad\quad\quad$\beta_j$'s}                                                                                         & \multicolumn{2}{l|}{\quad$\gamma_j$'s }              & Model 
			\\ \cline{2-19}
			Metric& \multicolumn{2}{l|}{KL} & \multicolumn{2}{l|}{JD of 90\% CI} & \multicolumn{2}{l|}{`Bias'} & \multicolumn{2}{l|}{`Bias'}    & HD &
			\multicolumn{2}{l|}{KL } & \multicolumn{2}{l|}{JD of 90\% CI} & \multicolumn{2}{l|}{`Bias'} & \multicolumn{2}{l|}{`Bias'}    & HD 
			\\ \cline{2-19}
			& $+$ & $-$           & $+$ & $-$         & $+$ & $-$       & $+$ & $-$        & all    & $+$ & $-$   & $+$ & $-$       & $+$ & $-$      & $+$ & $-$         & all    \\ 
			\hline
			Skinny Gibbs 	& 0.19   & 0.009                      & 0.30   & \bf{0.10}                                    & 0.04  & \bf{0.003}            & *&*& \bf{0}   
			& 2.00   & 0.02                      & 0.62   & \bf{0.10}                                    & \bf{0.68}  & \bf{0.005}            & \bf{0.15}&0.002& 2.2    \\ 
			WBB1         & 0.41   & 3.09                      & 0.45   & 1                                       & \bf{0.03}  & \bf{0.003}            & *&* & \bf{0}    
			& 1.89  & 3.09                      & 0.51   & 1                                       & {0.74}  & {0.006}            & 0.25&\bf{0.001} & \bf{2}      \\ 
			WBB2       & 0.21   & 3.09                      & 0.37   & 1                                       & \bf{0.03}  & \bf{0.003}            &*&*& \bf{0}  
			& 1.89  & 3.09                      & 0.52   & 1                                       & {0.74}  & {0.006}            &0.25&\bf{0.001}& \bf{2}   \\ 
			BB-SSL     & \bf{0.02}   & \bf{0.003}                     & \bf{0.14}   & \bf{0.10}                                    & 0.04  & \bf{0.003}            &*&*& \bf{0}   
			 & \bf{1.73}   & \bf{0.01}                     & \bf{0.37}   & \bf{0.10}                                    & {0.74}  & {0.006}            &0.25&\bf{0.001}& \bf{2}   
			\\
			\hline\hline
			\bf \large Setting&\multicolumn{9}{l||}{  \cellcolor[gray]{0.8} \bf \large Equi-correlation $\rho=0.6$, $\beta_{active}=(2,3,-3,4)'$}   &\multicolumn{9}{l}{  \large \cellcolor[gray]{0.8} \bf Equi-correlation $\rho=0.9$, $\beta_{active}=(2,3,-3,4)'$}   \\\hline
			\multirow{3}{*}{Metric}& \multicolumn{6}{l|}{\quad\quad\quad\quad\quad\quad\quad\quad$\beta_j$'s}                                                                                         & \multicolumn{2}{l|}{\quad$\gamma_j$'s }              & Model 
			& \multicolumn{6}{l|}{\quad\quad\quad\quad\quad\quad\quad\quad$\beta_j$'s}                                                                                         & \multicolumn{2}{l|}{\quad$\gamma_j$'s }              & Model 
			\\ \cline{2-19}
			Metric& \multicolumn{2}{l|}{KL} & \multicolumn{2}{l|}{JD of 90\% CI} & \multicolumn{2}{l|}{`Bias'} & \multicolumn{2}{l|}{`Bias'}    & HD &
			\multicolumn{2}{l|}{KL } & \multicolumn{2}{l|}{JD of 90\% CI} & \multicolumn{2}{l|}{`Bias'} & \multicolumn{2}{l|}{`Bias'}    & HD 
			\\ \cline{2-19}
			& $+$ & $-$           & $+$ & $-$         & $+$ & $-$       & $+$ & $-$        & all    & $+$ & $-$   & $+$ & $-$       & $+$ & $-$      & $+$ & $-$         & all    \\ 
			\hline
			Skinny Gibbs & 0.13   & 0.01                      & 0.23   & \bf{0.11}                    & {0.05} & \bf{0.003}            & 0.0008&*& 1  
			& 0.30   & 0.02                      & 0.33   & 0.09                     & 0.23 & {0.004}            & 0.001&*& 2  \\ 
			WBB1         & 0.12   & 3.09                      & 0.30   & 1                                       & \bf{0.03}  & \bf{0.003}            &  *&*& \bf{0}    
			& {0.14}   & 3.08                      & \bf{0.23}   & 1                                       & {0.13}  & \bf{0.002}            &*&*& \bf{0}    \\ 
			WBB2         & 0.13   & 3.09                      & {0.30}   & 1                                       & \bf{0.03}  & \bf{0.003}            & * &*& \bf{0}   
			& {0.15}   & 3.08                      & \bf{0.23}   & 1                                       & {0.13}  & \bf{0.002}            & *&*& \bf{0}  \\ 
			BB-SSL       & \bf{0.06}   & \bf{0.003}                     & \bf{0.20}   & $\bm{0.11} $                                   & \bf{0.03}  & \bf{0.003}            &*&*& \bf{0}      
			& \bf{0.11}   & \bf{-0.003}                     & \bf{0.23}   & \bf{0.08}                                    & \bf{0.12}  & \bf{0.002}            &*&*& \bf{0}   
			\\\hline\hline 
	\end{tabular}}
\end{subtable}

\caption{\small Evaluation of approximation properties (relative to SSVS) in the high-dimensional setting with $n=100$ and $p=1\,000$ based on 10 independent runs. The best performance is marked in bold font. 
KL is the Kullback-Leibler divergence, JD is the Jaccard distance  of credible intervals (CI), HD is the Hamming distance of the median models.  `Bias'   refers to the $l_1$ distance of estimated posterior means. We denote with $*$ all numbers smaller than 0.0001, with $+$ an average over active coordinates, and with $-$ an average over inactive coordinates. }
\label{table:exp_high}
\end{table}

To better quantify the performance of each method, we gauge the  quality of the posterior approximation using various metrics in Table \ref{table:exp_high}. The  KL divergence is calculated using an R package ``FNN'' (\cite{beygelzimer2013fnn}), where all parameters are set to their default values. {We also report the Jaccard distance of $90\%$ credible intervals relative to the SSVS  benchmark. The Jaccard distance \citep{jaccard1912distribution} of two intervals $A$ and $B$ is defined as $d_J(A,B)=1-J(A,B)$ where $J(A,B)=\frac{|A\cap B|}{|A\cup B|}$ and $|\cdot|$ denotes the length. The Hamming distance is calculated using an R package ``e1071'' (\cite{meyer2019package}). We also compare the $\ell_1$ norm of posterior means (i.e. ``bias" relative to the SSVS standard) for $\beta_i$'s as well as $\gamma_i$'s.} All methods do well in terms of MIP and the selected model (based on the median probability model rule). For $\beta_i$'s,  all methods estimate the mean accurately. Taking into account the shape of the posterior for $\beta_i$'s, the performance is divided among coordinates and methods. For all methods, the approximability of active coordinates is less accurate than for the inactive ones. In the settings we tried, {we rank the performance of}   various methods as follows:  $\text{BB-SSL}\, >\, \text{Skinny Gibbs}\, >\, \text{WBB1}\, \approx\, \text{WBB2} $. {The average times (in seconds (s) when $\rho=0.6$ in the equi-correlated design) spent on generating $100$ effective samples for each $\beta_j$ are: BB-SSL (0.69 s) $<$ SSVS2 (2.58 s) $<$ Skinny Gibbs (6.61 s) $<$ WBB2 (13.53 s) $<$ WBB1 (17.25 s)  $<$ SSVS1 (34.67 s).}

\paragraph{Conclusion}
We found that BB-SSL is a reliable approximate method for posterior sampling that achieves a close-to-exact (SSVS) performance but is computationally cheaper. Additional speedups can be obtained with  parallelization. {The most expensive step in BB-SSL is solving the optimization problem \eqref{BBL} at each iteration. This could be potentially circumvented by using  the Generative Bootstrap Sampler (GBS) \citep{shin2020scalable} which constructs a generator function that can transform weights into samples from the posterior distribution. This strategy could be particularly beneficial when both $n$ and $p$ are large and when many posterior samples are needed.}   While MCMC-based methods are sensitive to the initialization and can fall into a local trap (e.g. when predictors are highly correlated),  we have seen BB-SSL  to be  less susceptible to this problem. {
BB-SSL, in some sense,  relies on the optimization procedure {\em not} finding the global mode at all times. Indeed, we want to provide a representation of the entire posterior distribution consisting of {\em both} local and global modes. However, we anticipate that the global mode will be found more often, correctly reflecting the amount of posterior mass assigned to it.
			This issue was also  discussed in Section 2.5.1 in \cite{fong2019scalable}, who point out that not necessarily finding the global mode will result in assigning more posterior density to local modes. We have found BB-SSL (initialized at the SS-LASSO solution after annealing) perform similarly as SSVS initialized at the truth in very highly correlated cases.}




\section{Data Analysis}\label{sec:real_data}

\subsection{Life Cycle Savings Data}

\begin{figure}\centering
	\includegraphics[width=\textwidth,height=0.28\textheight]{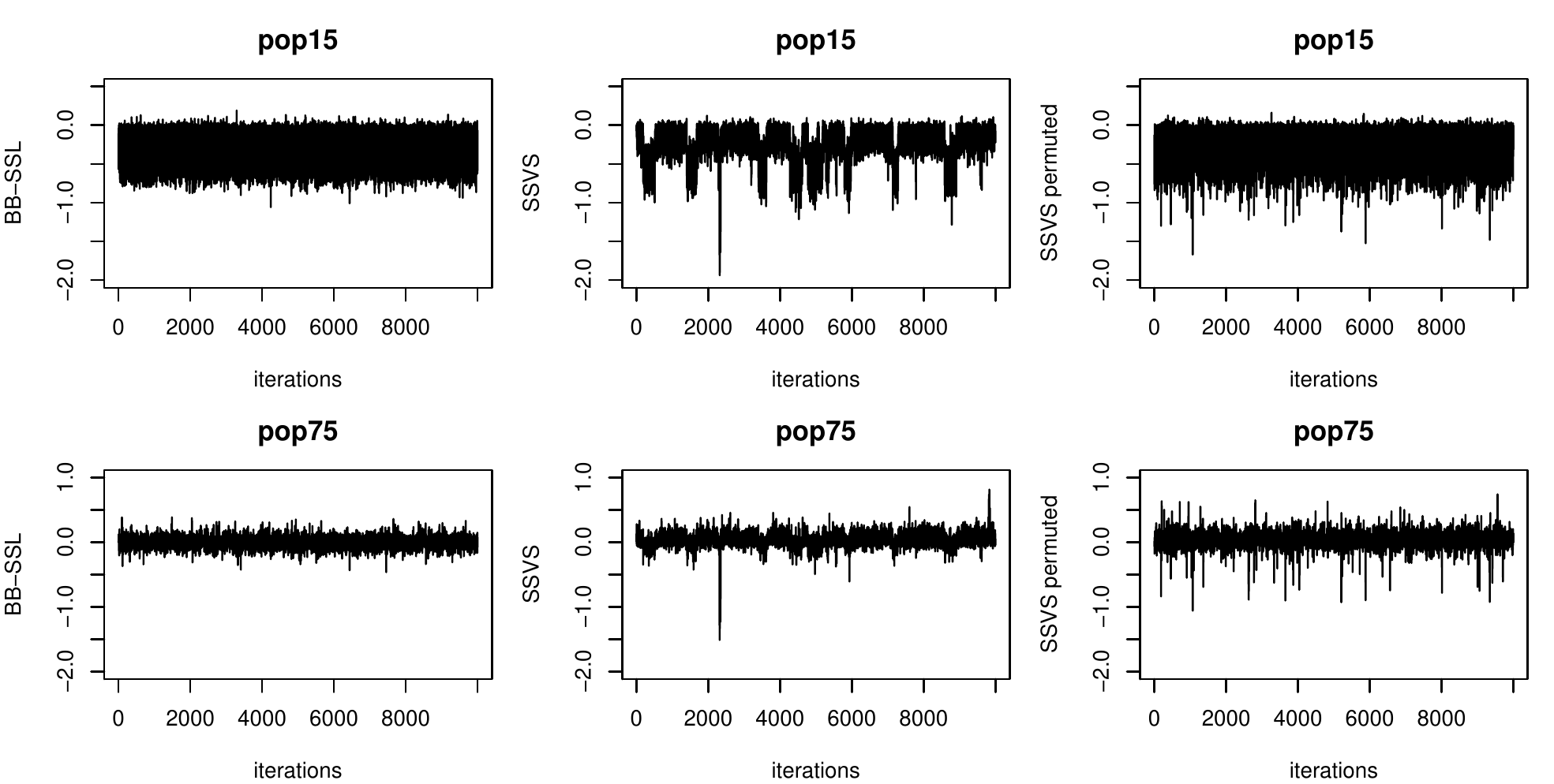}
	\includegraphics[width=\textwidth,height=0.28\textheight]{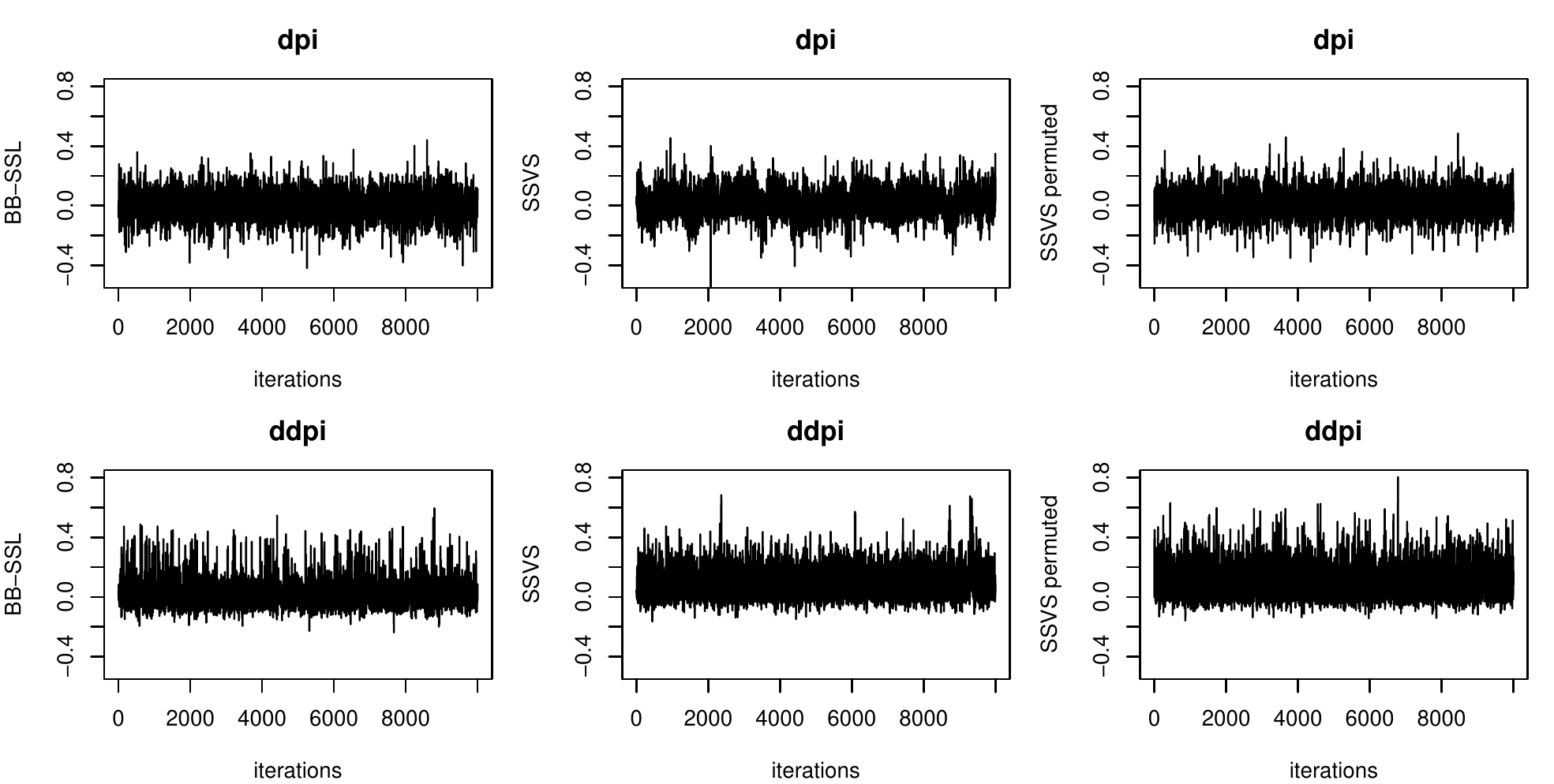}
	\caption{\small Trace plots for the Life Cycle Savings data. We choose $\lambda_0=20,\lambda_1=0.05$.  The first column is the BB-SSL traceplot with weight distribution $\alpha=2\log \frac{(1-\theta)\lambda_0}{\theta\lambda_1}=14$, the second column is thinned SSVS traceplot chain with a LASSO initialization (regularization parameter chosen by cross validation). The third column is the same SSVS chain only with samples permuted.}
	\label{fig:LCS_trace}
\end{figure}

\begin{figure}
	\begin{subfigure}{.6\textwidth}\centering
		\includegraphics[width=\textwidth, height=.35\textheight]{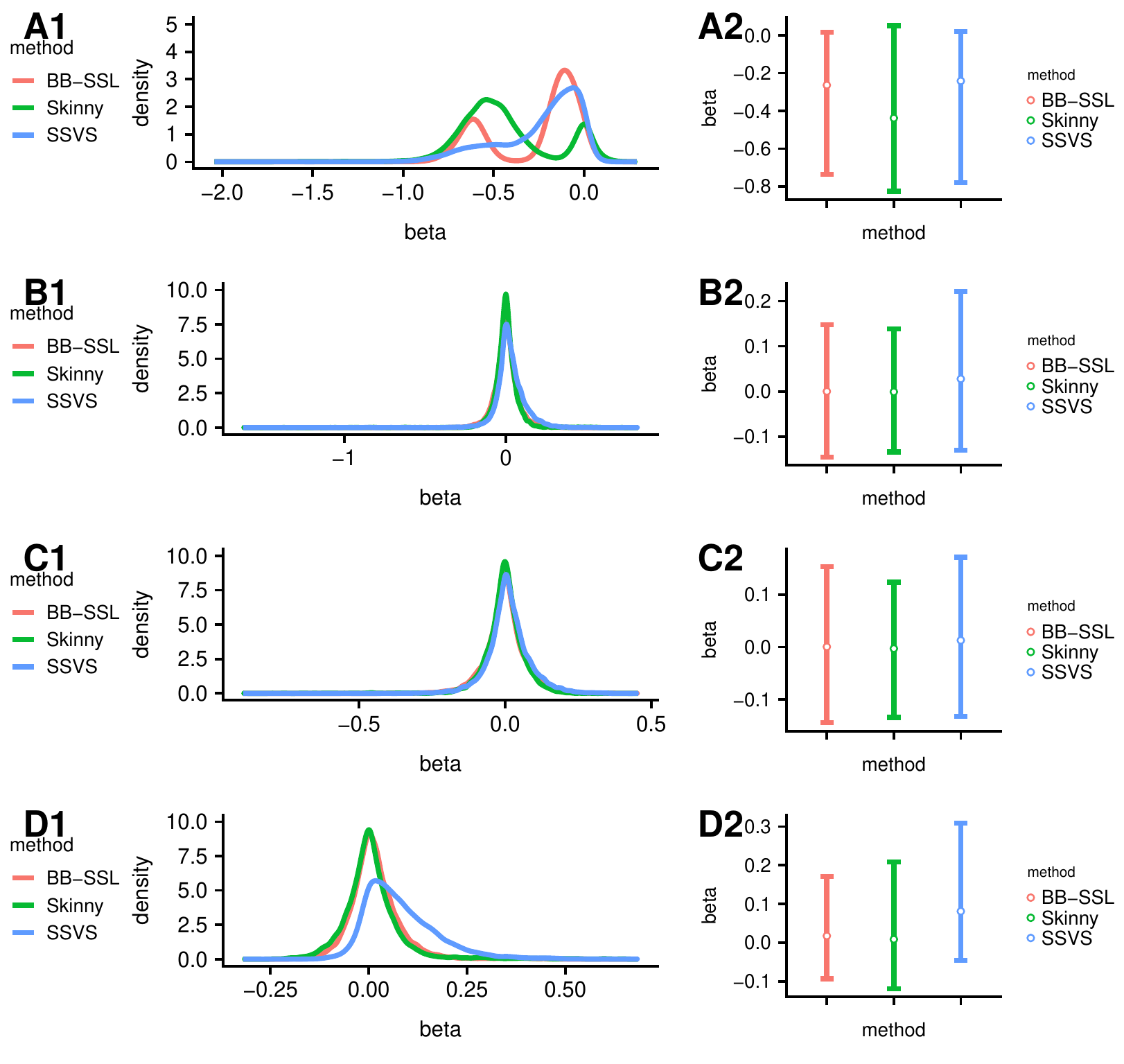}
		\caption{\small Density of $\beta_i$'s.. }
		\label{fig:LCS_density}
	\end{subfigure}
	\begin{subfigure}{.35\textwidth}\centering
		\includegraphics[width=\textwidth]{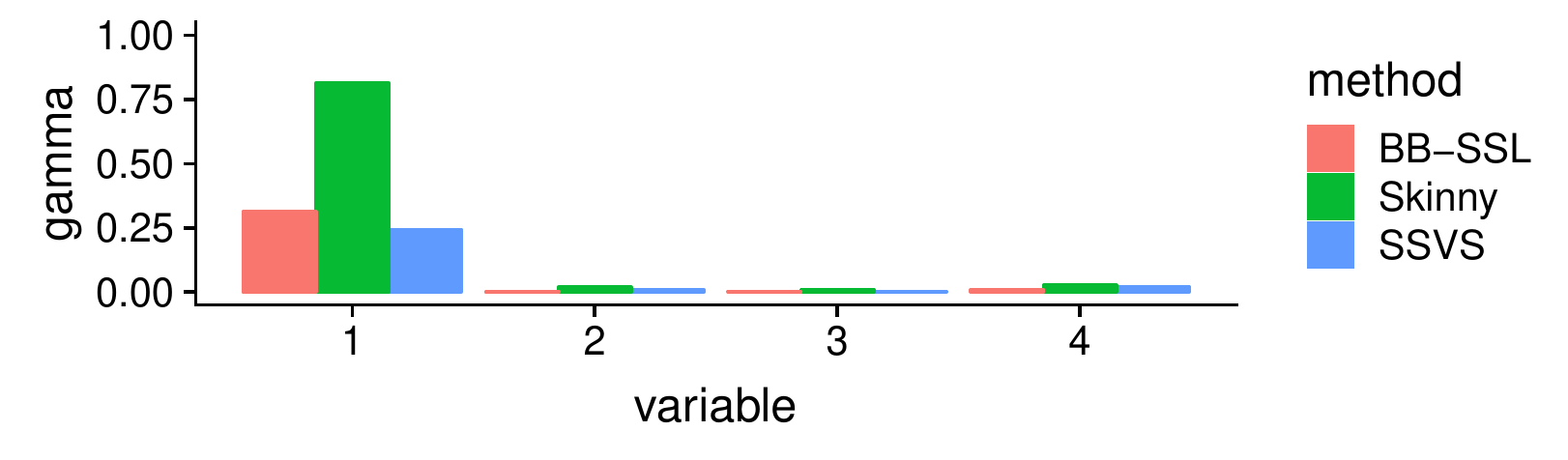}
		\caption{\small Comparison of mean of $\gamma_i$'s. }
		\label{fig:LCS_gamma}
	\end{subfigure}
	\caption{\small Plots for Life Cycle Savings Data. We choose $\lambda_0=20,\lambda_1=0.05$. SSVS is initialized at the LASSO solution (with the regularization parameter chosen by cross-validation). The weight distribution for BB-SSL uses $\alpha=2\log \frac{(1-\theta)\lambda_0}{\theta\lambda_1}=14$.}
\end{figure}

The Life Cycle Savings data \citep{belsley2005regression} consists of $n=50$ observations on  $p=4$ highly correlated predictors: ``pop15'' (percentage of population under 15 years old), ``pop75'' (percentage of population over 75 years old), ``dpi'' (per-capita disposable income), ``ddpi'' (percentage of growth rate of dpi).
According to the life-cycle savings hypothesis proposed by  \cite{ando1963life}, the savings ratio ($y$) can be explained by these four predictors and  a linear model can be used to model their relationship.  

We preprocess the data in the following way. First, we standardize predictors so that each column of $\X$ is centered and rescaled so that $||\X_j||_2=\sqrt{n}$. Next, we estimate the noise variance $\sigma^2$ using an ordinary least squares regression. We then divide $y$ by the estimated noise standard deviation and  estimate $\theta$ by fitting SSL with $\lambda_0=20,\lambda_1=0.05$. For BB-SSL we set $\alpha=2\log\frac{(1-\theta)\lambda_0}{\theta \lambda_1}\approx 14$ and set $a=1$, $b=4$. We run SSVS1 and Skinny Gibbs for $T=100\,000,\,B=5\,000$ and  BB-SSL for $T=10\,000$. 

Figure \ref{fig:LCS_trace} shows the trace plots on the four predictors. BB-SSL (first column)  has the same mean and spread as SSVS (third column). We also observe that raw samples from SSVS (second column) are correlated, so more iterations are needed in order to fully explore the posterior. In contrast, each sample from BB-SSL is independent  and thereby fewer samples will be needed in practice. See Table \ref{tab:ess} for effective sample size comparisons. Figure \ref{fig:LCS_density} shows the marginal density of $\beta_i$'s and \ref{fig:LCS_gamma} shows the marginal mean of $\gamma_i$'s. In both figures BB-SSL achieves good performance.

\begin{table}
	\centering\scalebox{.8}{
		\begin{tabular}{l|l|l|l} 
			\hline\hline
			& SSVS & Skinny Gibbs & BB-SSL  \\ 
			\hline
			Effective sample size  & 2716 &11188        & 15000    \\ 
			\hline\hline
		\end{tabular}
	}
	\caption{\small Average effective sample size (out of $15\,000$ samples) for Life Cycle Saving Data. Effective sample size is calculated using R package coda (\cite{plummer2006coda}). }\label{tab:ess}
\end{table}

\subsection{Durable Goods Marketing Data Set} \label{sec:marketing_intro}
Our second application examines a cross-sectional dataset from \cite{ni2012database} (ISMS Durable Goods Dataset 2) consisting of durable goods sales data from a major anonymous  U.S. consumer electronics retailer. The dataset features the results of a direct-mail promotion campaign in November 2003 where
roughly half of the $n=176\,961$ households received a promotional mailer with $10\$$ off their purchase during the promotion time period (December 4-15). The treatment assignment ($tr_i=\1\left(\text{promotional mailer}_i\right)$) was random. The data contains $146$ descriptors of all customers including prior purchase history, purchase of warranties etc. We will investigate the effect of the promotional campaign (as well as other covariates) on December sales. In addition, we will interact the promotion mail indicator   with customer characteristics to identify the ``mail-deal-prone" customers. 
To be more specific, we adopt the following model
\begin{equation}\label{eq:real_data_model}
	Y_i=\alpha\times  tr_i + \b^T \x_i + \bm{\gamma} \times tr_i\times \x_i+ \epsilon_i
\end{equation}
where $\x_i$ refers to the 146 covariates, $tr_i $ is the treatment assignment, and the noise $\epsilon_i$ is  iid normally distributed. And our aim is to (1) estimate the coefficients $\alpha,\b,\bm{\gamma}$; (2) identify those customers with $\E\left[Y_i\C \x_i, tr_i=1\right] > \E \left[Y_i\C \x_i, tr_i=0\right]$. 

For preprocessing, we first remove all variables that contain missing values or that are all $0$'s. We also create new predictors by interacting the treatment effect with the descriptor variables. After that the total number of predictors  becomes $p=273$. We standardize $\X$ such that each column has a zero mean   and a standard deviation $\sqrt{n}$ and we use the maximum likelihood estimate of the standard deviation to rescale the outcome.
We run BB-SSL for $T=1\,000$ iterations and SSVS1 for $T=20\,000$ iterations with a  $B=1\,000$ burnin period,  initializing MCMC at the origin. We set $\lambda_0=100,\lambda_1=0.05,a=1,b=p$. Estimating $\hat \theta=\frac{\text{\# of selected variables}}{p}$ by fitting the Spike-and-Slab LASSO, we then set $\alpha=2\log \frac{(1-\hat \theta)\theta_0}{\hat \theta \lambda_1}$.

Figure \ref{fig:causal_interaction_density} depicts estimated posterior density of selected coefficients in the model \eqref{eq:real_data_model}, showing that BB-SSL estimation is very close to the gold standard (SSVS). Further, BB-SSL identified $67.3\%$ of customers as ``mail-deal-prone'', reaching accuracy $98.2\%$ and a false positive rate $2.1\%$ (treating SSVS estimation as the truth).  Despite the comparable performance to SSVS, BB-SSL is advantageous in terms of computational efficiency. As shown in Figure \ref{fig:RealData_time_comparison}, within the same amount of time, BB-SSL obtains more effective samples compared with SSVS and its advantage becomes even more significant as time increases. This experiment confirms our hypothesis that BB-SSL has a great potential as an approximate method for large datasets.

\begin{figure}\centering
	\includegraphics[width=.4\textwidth]{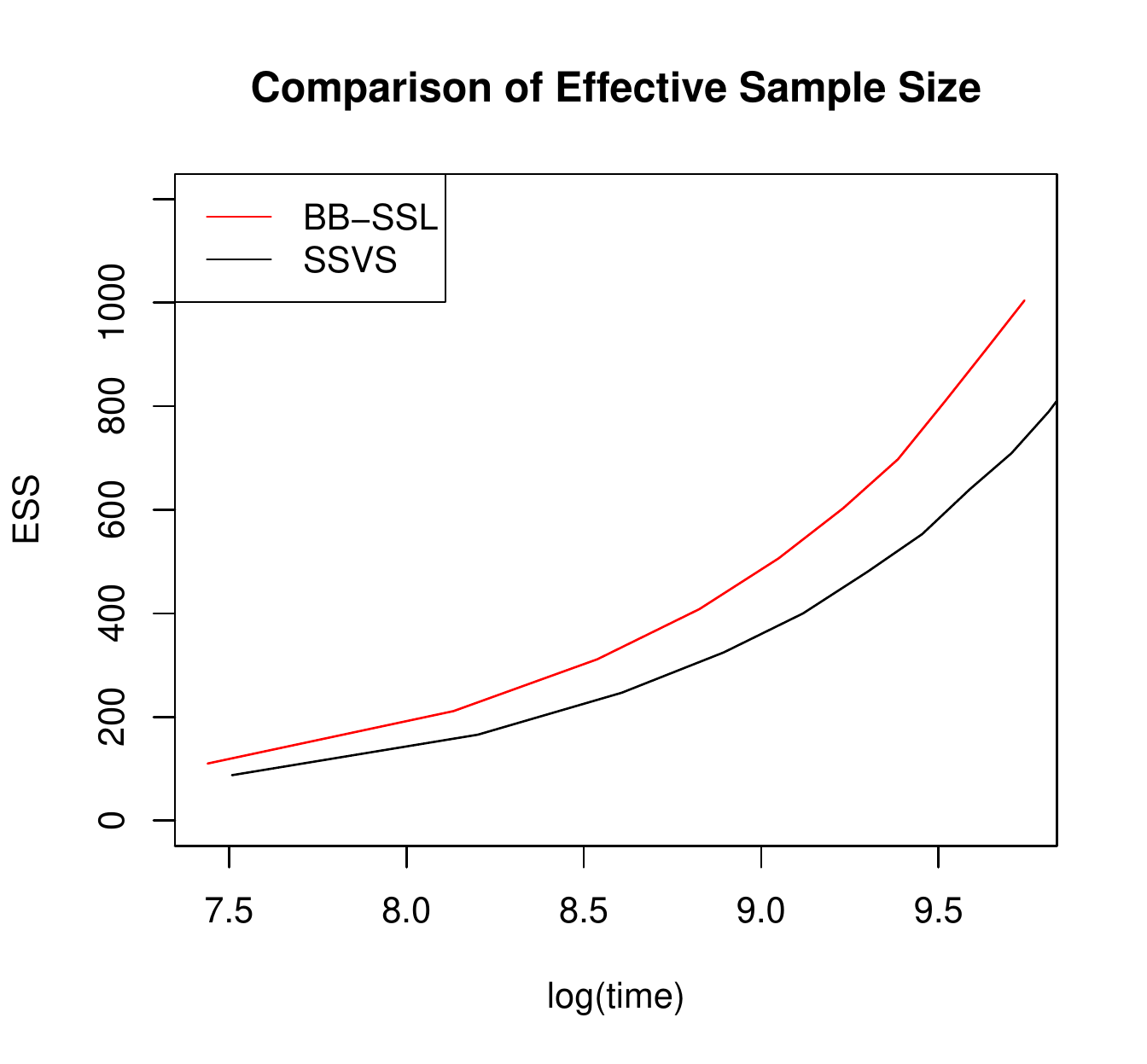}
	\caption{\small Effective sample size comparison for ISMS Durable Goods Dataset 2. We choose $\lambda_0=100,\lambda_1=0.05$.  Red line is BB-SSL with $\alpha=2\log \frac{(1-\theta)\lambda_0}{\theta\lambda_1}\approx 15$ and black line is SSVS initialized at origin.}
	\label{fig:RealData_time_comparison}
\end{figure}

\begin{figure}[!t]\centering
		\includegraphics[width=.5\textwidth,height=0.4\textheight]{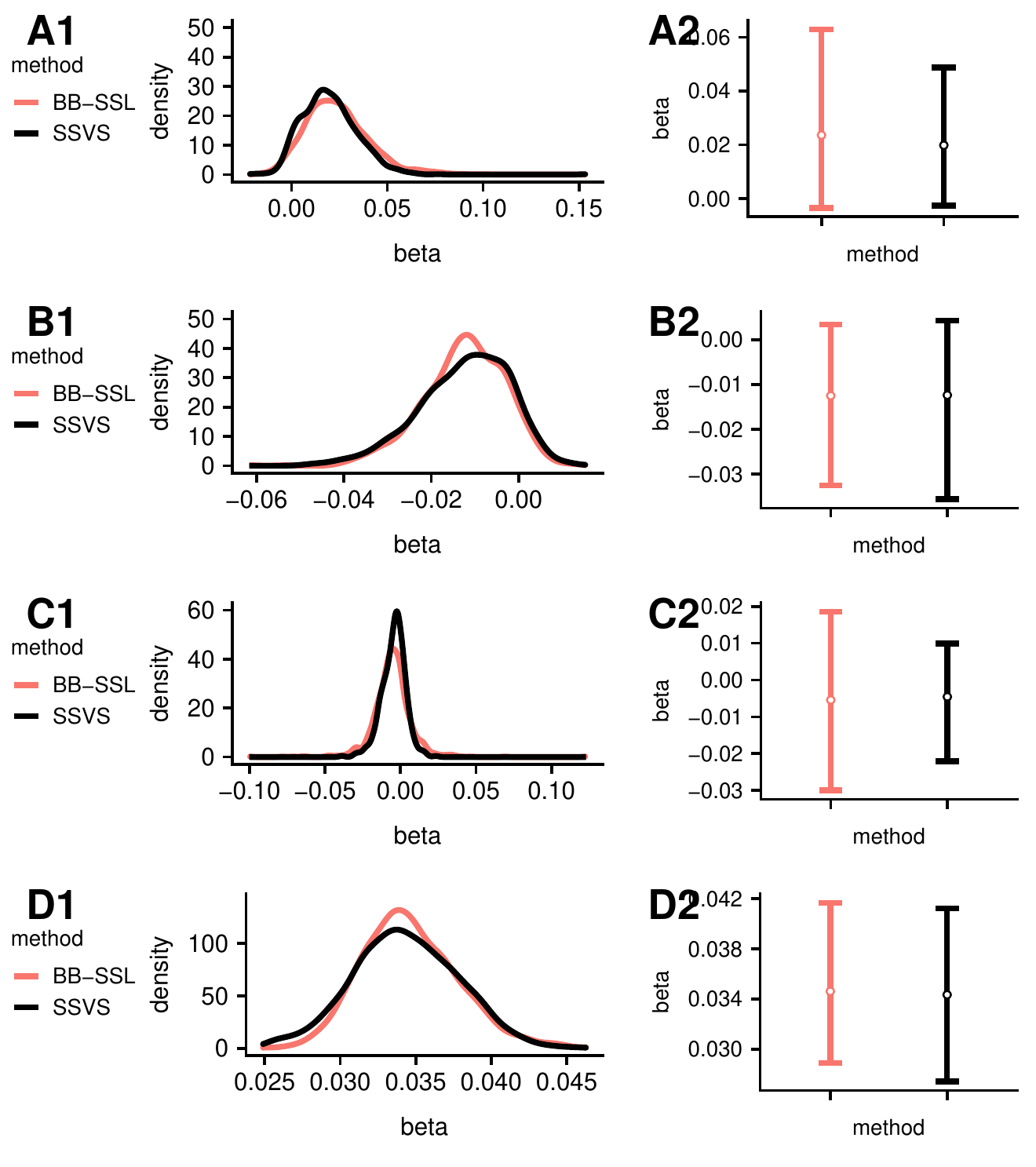}
		\includegraphics[width=.5\textwidth,height=0.4\textheight]{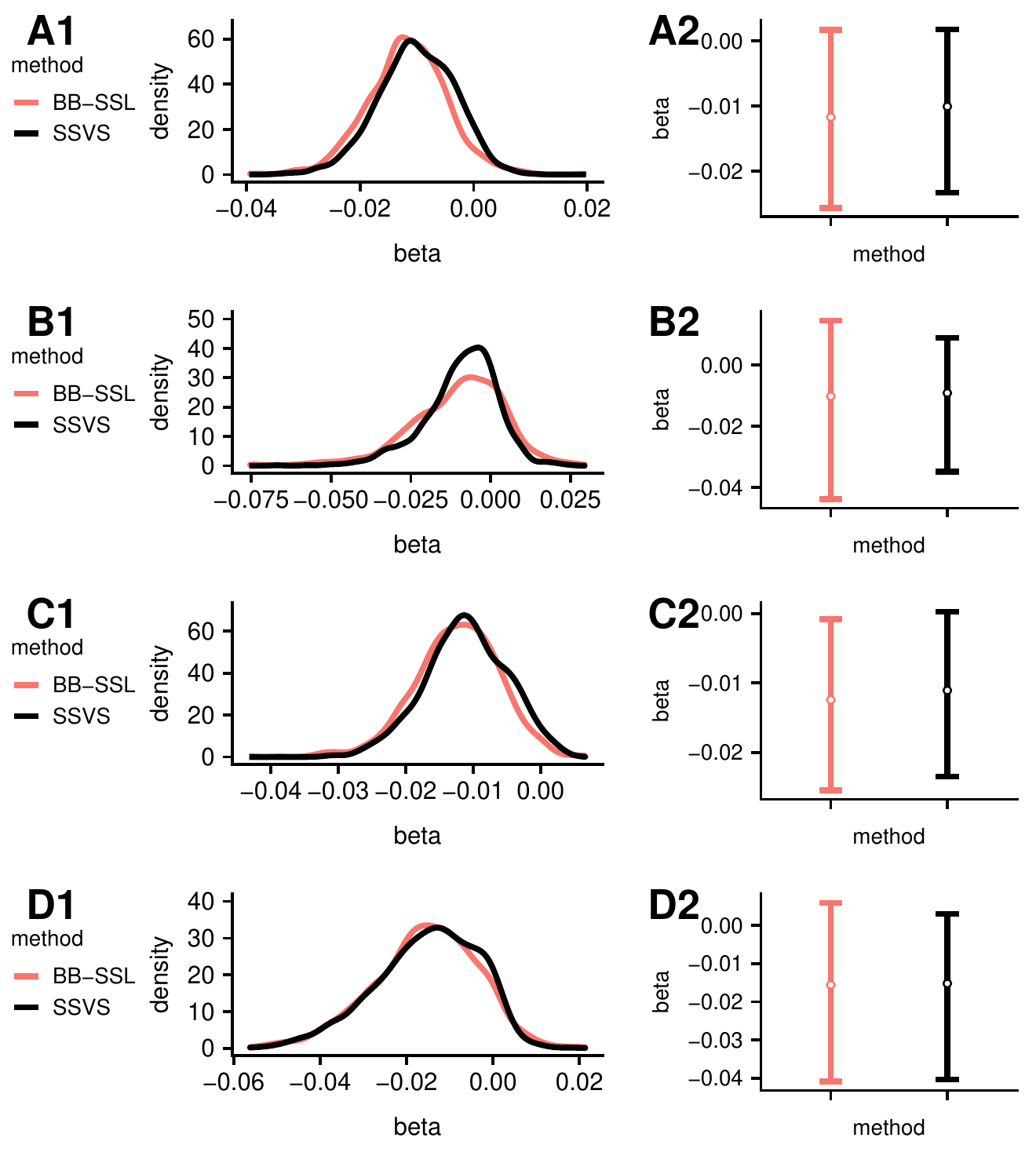}
\caption{\small Posterior density and credible intervals for the selected $\beta_i$'s. From left to right, top to bottom they correspond to ``S-SAL-TOT60M'', ``S-U-CLS-NBR-12MO'', ``PH-HOLIDAY-MAILER-RESP-SA'', ``PROMO-NOV-SALES'',  ``S-SAL-FALL-24MO'', ``S-TOT-CAT $\times$ treatment'', ``C-ESP-RECT $\times$ treatment'', ``S-CNT-TOT24M $\times$ treatment''.  We set $\lambda_0=100,\lambda_1=0.05,a=1,b=273$.  We set $\alpha=2\log\frac{(1-\theta)\lambda_0}{\theta\lambda_1}\approx 15$.}
\label{fig:causal_interaction_density}
\end{figure}

\section{Discussion}\label{sec:discussion}
In this paper we developed BB-SSL, a computational approach for approximate posterior sampling under Spike-and-Slab LASSO priors based on  Bayesian bootstrap ideas.
The fundamental premise of BB-SSL is the following: replace sampling from conditionals (which can be costly when either $n$ or $p$ are large) with fast optimization of randomly perturbed (reweigthed) posterior densities. We have explored various ways of performing the perturbation and looked into asymptotics for guidance about perturbing (weighting) distributions. We have concluded that with suitable conditions on the weights distribution, the pseudo-posterior distribution attains the same rate as the actual posterior in high-dimensional estimation problems (sparse normal means and high-dimensional regression). These theoretical results are reassuring and significantly extend existing knowledge about Weighted Likelihood Bootstrap (\cite{newton2020weighted}), which was shown to be consistent for iid data in finite-dimensional problems. We have shown in simulations and on real data that BB-SSL can approximate the true posterior well and can be computationally beneficial.  {The GBS method of \cite{shin2020scalable} could potentially greatly improve the scalability of BB-SSL. We leave this direction for future research.}

\bibliography{main} 
\bibliographystyle{apa}

\appendix
\section{Proofs}
This section presents proofs of the main theoretical statements: Section \ref{sec_appendix:normal_means} shows the proof of Theorem \ref{normal_means_weight}. Section \ref{sec:appendix:normal_means_corollary_proof} presents the proof of Corollary \ref{normal_mean_dirichlet}. Section \ref{sec:appendix_normal_means_remark} shows an example for the remark under Theorem \ref{normal_means_weight}. Section \ref{sec_appendix:regression} proves Theorem \ref{thm: regression_weight_selected}  and Theorem \ref{thm:regression_weight}. Section \ref{sec:appendix_regression_corollary_proof} proves Corollary \ref{regression_dirichlet}. Section \ref{sec_appendix:motivation_noncenter_proof} introduces a motivation for Section \ref{sec:motivation_noncenter}.

\subsection{Proof of Theorem \ref{normal_means_weight}}\label{sec_appendix:normal_means}
\subsubsection{Definitions and Lemmas used in the proof of Theorem \ref{normal_means_weight}}\label{sec_appendix:normal_means_notations}

Denote constants 
$$
c_{+}^{\lambda_{0},\lambda_{1}}=0.5(1+\sqrt{1-4/(\lambda_{0}-\lambda_{1})^{2}})
$$
and 
$$
\delta_{c+}^{\lambda_{0},\lambda_{1}}=\frac{1}{\lambda_{0}-\lambda_{1}}\log\left[\frac{1-\theta}{\theta}\frac{\lambda_{0}}{\lambda_{1}}\frac{c_{+}}{1-c_{+}}\right].
$$
With slight abuse of notation, we denote the marginal prior (after integrating out $\gamma_i$) by
$$
\pi(\beta_{i}\mid\theta,\lambda_{0},\lambda_{1})=\theta\lambda_{1}/2e^{-|\beta_{i}|\lambda_{1}}+(1-\theta)\lambda_{0}/2e^{-|\beta_{i}|\lambda_{0}}.
$$
Similarly as in  \cite{rovckova2018bayesian}, the conditional mixing weight between the spike and the slab will be denoted with
\begin{equation}\label{eq:pstar}
p^{*}(t;\lambda_{0},\lambda_{1})=\frac{\theta\lambda_{1}e^{-|t|\lambda_{1}}}{\theta\lambda_{1}e^{-|t|\lambda_{1}}+(1-\theta)\lambda_{0}e^{-|t|\lambda_{0}}}
\end{equation}
and the weighted penalty with 
$$
\lambda^{*}(t;\lambda_{0},\lambda_{1})=\lambda_{1}p^{*}(t;\lambda_{0},\lambda_{1})+\lambda_{0}(1-p^{*}(t;\lambda_{0},\lambda_{1})).
$$
The quantities $p^{*}(t;\lambda_{0},\lambda_{1})$ and $\lambda^{*}(t;\lambda_{0},\lambda_{1})$ are analogues of $p^\star_\theta$ and $\lambda^\star_\theta$ defined in (5) (main manuscript). Here, we have emphasized the dependence on $\lambda_1$ and $\lambda_0$ and suppressed the dependence on $\theta$ since this parameter is treated as fixed in our theory. 
Finally, we denote
$$
\rho(t\C\theta,\lambda_{0},\lambda_{1})=-\lambda_{1}|t|+\log[p^{*}(0;\lambda_{0},\lambda_{1})/p^{*}(t;\lambda_{0},\lambda_{1})]
$$
and (similarly as in  \cite{rovckova2018bayesian})
$$
g(t;\lambda_{0},\lambda_{1})=[\lambda^{*}(t;\lambda_{0},\lambda_{1})-\lambda_{1}]^{2}+2\log p^{*}(t;\lambda_{0},\lambda_{1}).
$$

The following Lemma (a version of Theorem 3.1 in \citet{rovckova2018bayesian}) will be useful for characterizing the BB-SSL posterior properties.

\begin{lemma}\label{lemma_appendix:delta_original_bound}
	Assume $Y_{i}$'s arise from the normal means model \eqref{eq: normal_means_model} with $\sigma=1$ for $1\leq i\leq n$.
	Under the SSL prior \eqref{eq:ss}, the mode $\wh\b=(\wh\beta_{1},\cdots,\wh\beta_{p})^{T}$ maximizing $L(\bm\beta,\Y)=-\frac{1}{2}\sum_{i=1}^n(Y_{i}-\beta_i)^{2}+\sum_{i=1}^n\rho(\beta_i\C\theta,\lambda_{0},\lambda_{1})$ 
	satisfies 
	\begin{equation}
	\wh\beta_{i}=\begin{cases}
	0, & \text{if \ensuremath{|Y_{i}|\leq\Delta(\lambda_{0},\lambda_{1})}}.\\{}
	[|Y_{i}|-\lambda^{*}(\wh{\beta_{i}};\lambda_{0},\lambda_{1})]_{+}\text{sign}(Y_{i}), & \text{otherwise}.
	\end{cases}\label{eq:iterative}
	\end{equation}
	where
	\[
	\Delta(\lambda_0,\lambda_1)=\inf_{t>0}[t/2-\rho(t\mid\theta,\lambda_0,\lambda_1)/t].
	\]
\end{lemma}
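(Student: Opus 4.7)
The key simplification in the normal-means setting with $\sigma = 1$ is that the objective is fully separable: $L(\bm\beta,\Y) = \sum_{i=1}^n L_i(\beta_i)$ with $L_i(\beta) = -\tfrac{1}{2}(Y_i - \beta)^2 + \rho(\beta \mid \theta, \lambda_0, \lambda_1)$. Therefore the global (and every coordinate-wise local) maximizer can be obtained by maximizing each $L_i$ independently, and the plan is to characterize $\arg\max_\beta L_i(\beta)$ for an arbitrary fixed $Y_i$. Throughout, I will use that $\rho(0) = 0$ and that $\rho(\cdot)$ is even, so by symmetry I can focus on $\text{sign}(\beta) = \text{sign}(Y_i)$.

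First, I will derive the stationarity condition for nonzero critical points. Starting from the identity $\log \pi(\beta\mid\theta) - \log\pi(0\mid\theta) = \rho(\beta\mid\theta,\lambda_0,\lambda_1)$ and differentiating $\log\pi(\beta\mid\theta) = \log[\theta\lambda_1/2\, e^{-\lambda_1|\beta|} + (1-\theta)\lambda_0/2\, e^{-\lambda_0|\beta|}]$, a short computation using the definition \eqref{eq:pstar} of $p^{*}$ yields $\tfrac{d\rho}{d\beta}(\beta) = -\text{sign}(\beta)\,\lambda^{*}(\beta;\lambda_0,\lambda_1)$ for $\beta \neq 0$. Setting $L_i'(\beta) = Y_i - \beta - \text{sign}(\beta)\lambda^{*}(\beta) = 0$ and requiring $\text{sign}(\beta) = \text{sign}(Y_i)$ (which the equation forces whenever $\beta \neq 0$, since $\lambda^*>0$) gives precisely the implicit relation $\wh\beta_i = [|Y_i| - \lambda^{*}(\wh\beta_i;\lambda_0,\lambda_1)]_+\,\text{sign}(Y_i)$ that appears in the second branch of \eqref{eq:iterative}.

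Next, I need to decide when $\wh\beta_i = 0$ is preferred over any nonzero stationary point. Since $L_i(0) = -\tfrac{1}{2}Y_i^2$, for any candidate $t$ with $\text{sign}(t)=\text{sign}(Y_i)$ one has
\[
L_i(t) - L_i(0) \;=\; Y_i t - \tfrac{1}{2}t^2 + \rho(t) \;=\; |Y_i|\,|t| - \tfrac{1}{2}t^2 + \rho(t).
\]
Dividing by $|t|>0$, a nonzero $t$ beats the origin iff $|Y_i| > |t|/2 - \rho(t)/|t|$. Consequently, $0$ is the global maximum whenever $|Y_i|$ fails to exceed this quantity for every $t > 0$, i.e.\ whenever $|Y_i| \le \inf_{t>0}[t/2 - \rho(t)/t] = \Delta(\lambda_0,\lambda_1)$. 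Conversely, when $|Y_i| > \Delta$, some nonzero $t$ yields a strictly larger value of $L_i$, so the maximizer must be the nonzero solution of the implicit equation. This gives the two-branch form in \eqref{eq:iterative}.

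The main subtlety is that $L_i$ is non-convex (the SSL penalty is only a difference of concave functions), so the stationarity condition may admit several solutions with differing signs. I would address this by noting that (i) the symmetry argument above rules out any stationary point with $\text{sign}(t)\ne\text{sign}(Y_i)$ being a maximizer, and (ii) the infimum in the definition of $\Delta$ automatically selects the most favorable nonzero candidate, so the comparison $L_i(t)$ vs.\ $L_i(0)$ is tight. Boundary behavior ($L_i(\beta) \to -\infty$ as $|\beta| \to \infty$) guarantees existence of a global maximum, which must then be one of the critical points identified above.
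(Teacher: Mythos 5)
Your proof is correct, and it is essentially the standard derivation behind this characterization: coordinate-wise separability, the identity $\tfrac{d}{d\beta}\rho(\beta\mid\theta,\lambda_0,\lambda_1)=-\mathrm{sign}(\beta)\,\lambda^{*}(\beta;\lambda_0,\lambda_1)$ for $\beta\neq 0$, and the direct comparison $L_i(t)-L_i(0)=|Y_i||t|-t^2/2+\rho(t)$, which yields the threshold $\Delta(\lambda_0,\lambda_1)$ after dividing by $|t|$. The paper itself does not prove this lemma but simply invokes Theorem 3.1 of \citet{rovckova2018bayesian}, and your argument reproduces that result's reasoning; the only cosmetic caveat is that at the boundary $|Y_i|=\Delta(\lambda_0,\lambda_1)$ (if the infimum is attained) the origin may tie with a nonzero stationary point, so the first branch of \eqref{eq:iterative} reflects a tie-breaking convention rather than a strict-maximizer claim, consistent with the lemma as stated.
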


Notice that $\Delta(\lambda_0,\lambda_1)$ is the  normal-means analogue of $\Delta_j$ (defined below \eqref{beta:implicit} in the main manuscript)  where we have added $(\lambda_0,\lambda_1)$ to emphasize its dependence on $\lambda_0$ and $\lambda_1$. 
Next, we will need an upper and lower bound for $\Delta(\lambda_{0},\lambda_{1})$. When $\lambda_{0}-\lambda_{1}>2$
and $g(0;\lambda_{0},\lambda_{1})>0$,  $-\frac{1}{2}(Y_{i}-\beta)^{2}+\rho(\beta\C\theta,\lambda_{0},\lambda_{1})$ has two modes and \cite{rovckova2018bayesian} shows that $\Delta(\lambda_0,\lambda_1)$ can also be written as
\[
\Delta(\lambda_{0},\lambda_{1})=\sqrt{2\log[1/p^{*}(0;\lambda_{0},\lambda_{1})]+g(\wt\beta;\lambda_{0},\lambda_{1})}+\lambda_{1},
\]	
where $\wt\beta$ is the local mode (not the global mode) 
of $-\frac{1}{2}(Y_{i}-\beta)^{2}+\rho(\beta\C\theta,\lambda_{0},\lambda_{1})$.	 It follows from  Theorem 3.1 in \citet{rovckova2018bayesian} and Lemma 1.2 in the Appendix of \cite{rovckova2018bayesian} that
\begin{equation}
\Delta^{L}(\lambda_{0},\lambda_{1})<\Delta(\lambda_{0},\lambda_{1})\leq\Delta^{U}(\lambda_{0},\lambda_{1}),
\end{equation}
where 
\[
\Delta^{L}(\lambda_{0},\lambda_{1})=\sqrt{2\log[1/p^{*}(0;\lambda_{0},\lambda_{1})]-2}+\lambda_{1}\quad\text{and}\quad\Delta^{U}(\lambda_{0},\lambda_{1})=\sqrt{2\log[1/p^{*}(0;\lambda_{0},\lambda_{1})]}+\lambda_{1}.
\]
Based on these facts, we then obtain the following simple Lemma.	

\begin{lemma}\label{lemma_appendix:Delta_bound} 
	Under assumptions
	in Theorem \ref{normal_means_weight}, for any $w_{i}>0$ which satisfies 
	$$
	g(0;\lambda_{0}w_{i}^{-1/2},\lambda_{1}w_{i}^{-1/2})>0\quad\text{and}\quad\lambda_{0}w_{i}^{-1/2}-\lambda_{1}w_{i}^{-1/2}>2
	$$
	we have
	\begin{align}
	\Delta^{L}(\lambda_{0},\lambda_{1})+\lambda_{1}(w_{i}^{-1/2}-1)
	\le \Delta(\lambda_{0}w_{i}^{-1/2},\lambda_{1}w_{i}^{-1/2})
	\le\Delta^{U}(\lambda_{0},\lambda_{1})+\lambda_{1}(w_{i}^{-1/2}-1).
	\label{eq_appendix:Delta_bound}
	\end{align}
\end{lemma}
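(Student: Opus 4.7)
The plan is to apply the two-sided bound on $\Delta$ (recalled just before the Lemma from Theorem 3.1 and Lemma 1.2 of \cite{rovckova2018bayesian}) to the rescaled parameters $(\lambda_0 w_i^{-1/2}, \lambda_1 w_i^{-1/2})$, then exploit a homogeneity property of $p^*(0;\cdot,\cdot)$ to reduce everything back to $\Delta^L(\lambda_0,\lambda_1)$ and $\Delta^U(\lambda_0,\lambda_1)$. The hypotheses $g(0;\lambda_0 w_i^{-1/2},\lambda_1 w_i^{-1/2})>0$ and $\lambda_0 w_i^{-1/2}-\lambda_1 w_i^{-1/2}>2$ are exactly what is needed to invoke that two-sided bound with the rescaled penalties.

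First I would write down, with the rescaled $\lambda$'s inserted into the definitions of $\Delta^L$ and $\Delta^U$,
\[
\Delta^{L}(\lambda_0 w_i^{-1/2},\lambda_1 w_i^{-1/2})<\Delta(\lambda_0 w_i^{-1/2},\lambda_1 w_i^{-1/2})\le \Delta^{U}(\lambda_0 w_i^{-1/2},\lambda_1 w_i^{-1/2}).
\]
The crucial observation is scale invariance of $p^*(0;\cdot,\cdot)$: since
\[
p^*(0;\lambda_0,\lambda_1)=\frac{\theta\lambda_1}{\theta\lambda_1+(1-\theta)\lambda_0},
\]
multiplying both $\lambda_0$ and $\lambda_1$ by the same positive constant $w_i^{-1/2}$ leaves $p^*(0;\cdot,\cdot)$ unchanged, so $p^*(0;\lambda_0 w_i^{-1/2},\lambda_1 w_i^{-1/2})=p^*(0;\lambda_0,\lambda_1)$. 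Consequently the radical term in $\Delta^L$ and $\Delta^U$ is unaffected by the rescaling, and only the additive $\lambda_1$ changes to $\lambda_1 w_i^{-1/2}$.

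Writing $\lambda_1 w_i^{-1/2}=\lambda_1+\lambda_1(w_i^{-1/2}-1)$ then gives
\[
\Delta^{L}(\lambda_0 w_i^{-1/2},\lambda_1 w_i^{-1/2})=\Delta^{L}(\lambda_0,\lambda_1)+\lambda_1(w_i^{-1/2}-1),
\]
and identically with $U$ in place of $L$. Combining these two equalities with the displayed two-sided bound yields \eqref{eq_appendix:Delta_bound}, proving the lemma.

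There is no real obstacle here: the argument is essentially bookkeeping, and the only nontrivial ingredient is the homogeneity of $p^*(0;\cdot,\cdot)$, which is immediate from its explicit form. The assumptions stated in the lemma are precisely those required to apply the $\Delta^L/\Delta^U$ sandwich at the rescaled penalties, so no additional case analysis on $w_i$ is needed.
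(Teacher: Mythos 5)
Your proof is correct and follows essentially the same route as the paper: invoke the $\Delta^L/\Delta^U$ sandwich at the rescaled penalties $(\lambda_0 w_i^{-1/2},\lambda_1 w_i^{-1/2})$ (which is exactly what the lemma's hypotheses license) and use the scale invariance $p^*(0;\lambda_0 w_i^{-1/2},\lambda_1 w_i^{-1/2})=p^*(0;\lambda_0,\lambda_1)$ to rewrite the bounds as $\Delta^{L,U}(\lambda_0,\lambda_1)+\lambda_1(w_i^{-1/2}-1)$. Nothing is missing.
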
 
\begin{proof}
	Since $g(0;\lambda_{0}w_{i}^{-1/2},\lambda_{1}w_{i}^{-1/2})>0$ and $\lambda_{0}w_{i}^{-1/2}-\lambda_{1}w_{i}^{-1/2}>2$, from Lemma
	\ref{lemma_appendix:delta_original_bound} (the discussion below) and the fact that $p^{*}(0;\lambda_{0},\lambda_{1})=p^{*}(0;\lambda_{0}w_{i}^{-1/2},\lambda_{1}w_{i}^{-1/2})$,
	we have 
	\begin{align*}
	& \Delta(\lambda_{0}w_{i}^{-1/2},\lambda_{1}w_{i}^{-1/2})\le\Delta^{U}(\lambda_{0}w_{i}^{-1/2},\lambda_{1}w_{i}^{-1/2})
	=\Delta^{U}(\lambda_{0},\lambda_{1})+\lambda_{1}(w_{i}^{-1/2}-1).
	\end{align*}
	The other inequality is obtained similarly.\qedhere
	
\end{proof}

\begin{lemma}\label{lemma_appendix:condtional_normal} For any fixed
	constant $C>0$, $\mu\in\mathbb{R}$ and a random variable $Y\sim N(0,1)$,
	we have 
	\begin{align*}
	& \E_{Y}\left[|Y-\mu|^{2}\1(|Y-\mu|\ge C)\right]\\
	& \quad =  1+\mu^{2}-\Phi(\mu+C)+\Phi(\mu-C)-(\mu-C)\phi(\mu-C)+(\mu+C)\phi(\mu+C),
	\end{align*}
	where $\Phi$ and $\phi$ are the standard normal distribution and density functions.
\end{lemma}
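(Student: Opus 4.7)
The statement is a pure one-dimensional calculation, so my plan is purely computational. First I would split the integration region defined by the indicator, writing
\[
\E_Y[(Y-\mu)^2\,\1(|Y-\mu|\ge C)]=\int_{-\infty}^{\mu-C}(y-\mu)^2\phi(y)\,dy+\int_{\mu+C}^{\infty}(y-\mu)^2\phi(y)\,dy,
\]
and then expand $(y-\mu)^2=y^2-2\mu y+\mu^2$ so that each piece reduces to three elementary integrals of $\phi(y)$, $y\phi(y)$ and $y^2\phi(y)$ over a half-line with finite endpoint.

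Second, I would rely on the three standard antiderivatives
\[
\int\phi(y)\,dy=\Phi(y),\qquad \int y\phi(y)\,dy=-\phi(y),\qquad \int y^2\phi(y)\,dy=\Phi(y)-y\phi(y).
\]
The first two are immediate from the definition of $\Phi$ and the identity $\phi'(y)=-y\phi(y)$. The third I would derive by integration by parts, taking $u=y$ and $dv=y\phi(y)\,dy$, so that $v=-\phi(y)$ and $\int y^2\phi(y)\,dy=-y\phi(y)+\int\phi(y)\,dy=\Phi(y)-y\phi(y)$. All boundary evaluations at $\pm\infty$ vanish because $y\phi(y)\to 0$ as $|y|\to\infty$ and $\Phi(-\infty)=0$, $\Phi(+\infty)=1$.

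Third, I would evaluate each of the three antiderivatives at the boundaries $\mu-C$ (for the left tail) and $\mu+C$ (for the right tail), assemble the linear combination with coefficients $1$, $-2\mu$, $\mu^2$, and simplify. Grouping the $\phi(\mu-C)$ and $\phi(\mu+C)$ terms arising from the $y^2$ and $y$ integrals, and grouping the $\Phi(\mu-C)$ and $\Phi(\mu+C)$ terms arising from the $y^2$ and constant integrals, yields a closed-form expression in $\Phi(\mu\pm C)$ and $\phi(\mu\pm C)$.

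There is no genuine obstacle here: the proof is a direct computation, and the only thing to watch is bookkeeping of signs and the telescoping cancellations when the four boundary contributions are summed. I would carry out the algebra carefully and compare the final expression to the claimed one, in particular sanity-checking the symmetric case $\mu=0$ (where both tails contribute equally and the answer collapses to $2[1-\Phi(C)+C\phi(C)]$) to catch any sign error before finalizing.
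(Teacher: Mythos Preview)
Your approach is correct and will yield the closed form after straightforward bookkeeping. It is, however, a genuinely different route from the paper's proof. The paper does not integrate over the two tails directly; instead it writes
\[
\E_Y\!\left[(Y-\mu)^2\1(|Y-\mu|\ge C)\right]=\E_Y\!\left[(Y-\mu)^2\right]-\E_Y\!\left[(Y-\mu)^2\1(|Y-\mu|<C)\right],
\]
uses $\E_Y[(Y-\mu)^2]=1+\mu^2$, and then evaluates the subtracted term via the standard truncated-normal moment formulas
\[
\E[Z]=\frac{\phi(a)-\phi(b)}{\Phi(b)-\Phi(a)},\qquad \E[Z^2]=1+\frac{a\phi(a)-b\phi(b)}{\Phi(b)-\Phi(a)}\quad\text{for }Z=[Y\mid Y\in[a,b]],
\]
with $a=\mu-C$, $b=\mu+C$. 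This complement trick avoids evaluating antiderivatives at $\pm\infty$ and packages the algebra into two known moment identities, so there is less sign-tracking than in your direct tail expansion. Your method, on the other hand, is entirely self-contained (no appeal to truncated-moment formulas) and makes the origin of each $\phi(\mu\pm C)$ and $\Phi(\mu\pm C)$ term transparent. Both are equally elementary; the only practical difference is where the bookkeeping burden sits. One small caveat: your sanity check at $\mu=0$ is symmetric in $\mu\pm C$, so it would not detect a swap of the arguments $\phi(\mu-C)\leftrightarrow\phi(\mu+C)$ in the final display---checking an asymmetric case such as $\mu=1$, $C=1$ would be a more discriminating test.
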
 \begin{proof} Denote a truncated normal random variable with
	$Z=[Y\mid Y\in[a,b]]$ and recall that  
	\begin{align*}
	\E\left[Z\right]=\frac{\phi(a)-\phi(b)}{\Phi(b)-\Phi(a)}\quad\text{and}\quad \E\left[Z^2\right]=1+\frac{a\phi(a)-b\phi(b)}{\Phi(b)-\Phi(a)}.
	\end{align*}
	Then we have 
	\begin{align*}
	& \E_{Y}\left[|Y-\mu|^{2}\1(|Y-\mu|\ge C)\right]\\
	&= \E_{Y}\left[|Y-\mu|^{2}\right]-\E_{Y}\left[|Y-\mu|^{2}\1(|Y-\mu|<C)\right]\\
	&= 1+\mu^{2}-\left[\Phi(\mu+C)-\Phi(\mu-C)\right]\\
	& \quad\times\left[\E\left[Y^{2}\mid Y\in[\mu-C,\mu+C]\right]-2\mu\E\left[Y\mid Y\in[\mu-C,\mu+C]\right]+\mu^{2}\right]\\
	&= 1+\mu^{2}-\Phi(\mu+C)+\Phi(\mu-C)-(\mu-C)\phi(\mu-C)+(\mu+C)\phi(\mu+C).\qedhere
	\end{align*}
\end{proof}

\subsubsection{Proof of Theorem \ref{normal_means_weight}}

\label{sec:appendix_normal_means_proof}

First assume that $\sigma=1$. Throughout this section,  we will simply denote $\Y=\Y^{(n)}$. We want to obtain the expression for BB-SSL estimate $\wt\b$. Notice that the objective function
\eqref{BBL} for BB-SSL can be written as 
\begin{equation}
\begin{split}
\wt\b & =\arg\max_{\b\in\R^{p}}\left\{ -\sum_{i=1}^{n}\frac{w_{i}}{2}(Y_{i}-\beta_{i})^{2}+\sum_{i=1}^{n}\log\pi(\beta_{i}-\mu_{i}\C\theta)\right\} \\
& =\arg\max_{\b\in\R^{p}}\left\{ -\sum_{i=1}^{n}\frac{1}{2}(Y_{i}^{*}-\beta_{i}^{*})^{2}+\sum_{i=1}^{n}\log\pi_{i}^{*}(\beta_{i}^{*}\C\theta)\right\},
\end{split}
\label{eq_appendix:bb-ssl_optimization}
\end{equation}
where $Y_{i}^{*}=w_{i}^{1/2}(Y_{i}-\mu_{i})$, $\beta_{i}^{*}=w_{i}^{1/2}(\beta_{i}-\mu_{i})$,
and  $\pi_{i}^{*}(\beta_{i}^{*}\C\theta)=\pi(\beta_{i}^{*}\C\theta,\lambda_{0}w_{i}^{-1/2},\lambda_{1}w_{i}^{-1/2})$.
So from \eqref{eq:iterative}, the mode estimator $\wh{\beta_{i}^{*}}$
for $\beta_{i}^{*}$ satisfies 
\[
\wh{\beta_{i}^{*}}=\begin{cases}
0, & \text{if \ensuremath{|Y_{i}^{*}|\leq\Delta(\lambda_{0}w_{i}^{-1/2},\lambda_{1}w_{i}^{-1/2})}}.\\{}
[|Y_{i}^{*}|-\lambda^{*}(\wh{\beta_{i}^{*}};\lambda_{0}w_{i}^{-1/2},\lambda_{1}w_{i}^{-1/2})]_{+}\text{sign}(Y_{i}^{*}), & \text{otherwise}.
\end{cases}
\]
The BB-SSL estimate $\bm{\wt\b}=(\wt\beta_{1},\cdots,\wt\beta_{p})^{T}$
can then be calculated via $\wt\beta_{i}=\wh{\beta_{i}^{*}}w_{i}^{-1/2}+\mu_{i}$.
Equation (3.2) in \citet{rovckova2018bayesian} says that 
\[
\lambda^{*}(\beta_{i};\lambda_{0},\lambda_{1})=-\frac{\partial\rho(\beta_{i}\C\theta,\lambda_{0},\lambda_{1})}{\partial|\beta_{i}|}.
\]
From the chain rule and the fact that 
\[
\rho\left(w_{i}^{1/2}(\widetilde{\beta_{i}}-\mu_{i});\lambda_{0}w_{i}^{-1/2},\lambda_{1}w_{i}^{-1/2}\right)=\rho(\widetilde{\beta_{i}}-\mu_{i};\lambda_{0},\lambda_{1}),
\]
we know 
\begin{align*}
& \lambda^{*}(\widehat{\beta_{i}^{*}};\lambda_{0}w_{i}^{-1/2},\lambda_{1}w_{i}^{-1/2})=\lambda^{*}\left(w_{i}^{1/2}\left(\widetilde{\beta_{i}}-\mu_{i}\right);\lambda_{0}w_{i}^{-1/2},\lambda_{1}w_{i}^{-1/2}\right)\\
& =-\frac{\partial\rho\left(w_{i}^{1/2}(\widetilde{\beta_{i}}-\mu_{i});\lambda_{0}w_{i}^{-1/2},\lambda_{1}w_{i}^{-1/2}\right)}{\partial\left|w_{i}^{1/2}(\widetilde{\beta_{i}}-\mu_{i})\right|}\\
& =-\frac{\partial\rho(\widetilde{\beta_{i}}-\mu_{i};\lambda_{0},\lambda_{1})}{\partial\left|\widetilde{\beta_{i}}-\mu_{i}\right|}\frac{\partial\left|\widetilde{\beta_{i}}-\mu_{i}\right|}{\partial\left|w_{i}^{1/2}\left(\widetilde{\beta_{i}}-\mu_{i}\right)\right|}=w_{i}^{-1/2}\lambda^{*}\left(\widetilde{\beta_{i}}-\mu_{i};\lambda_{0},\lambda_{1}\right).
\end{align*}
Thus, 
\begin{equation}
\widetilde{\beta}_{i}=\begin{cases}
& \mu_{i},\quad\text{if \ensuremath{|w_{i}^{1/2}(Y_{i}-\mu_{i})|\leq\Delta(\lambda_{0}w_{i}^{-1/2},\lambda_{1}w_{i}^{-1/2})}}.\\
& \mu_{i}+[|Y_{i}-\mu_{i}|-w_{i}^{-1}\lambda^{*}(\wt\beta_{i}-\mu_{i};\lambda_{0},\lambda_{1})]_{+}\text{sign}(Y_{i}-\mu_{i}),\quad\text{otherwise}.
\end{cases}\label{eq_appendix:NN_solution}
\end{equation}
From the Markov's inequality, we know that for any $M_{n}>0$, 
\[
\E_{\Y}\P_{\bm{\mu},\bm{w}}\left(\|\wt\b-\b^{0}\|_{2}^{2}>M_{n}q\log\left(\frac{n}{q}\right)\C \Y\right)
\leq
\frac{\E_{\Y}\E_{\bm{\mu},\bm{w}}\left[\|\wt\b-\b^{0}\|_2^{2}\C \Y\right]}{M_{n}q\log\frac{n}{q}}.
\]
Thus, in order to prove the desired statement, we only need to bound
$
\E_{\Y}\left[\E_{\bm{\mu},\bm{w}}\left[\|\wt\b-\b^{0}\|_2^{2}\C \Y\right]\right]
$.
Notice that for the separable SSL prior (i.e. with a fixed value of $\theta$), the posterior is separable and we obtain
\[
\E_{\Y}\left[\E_{\bm{\mu},\bm{w}}\left[\|\wt\b-\b^{0}\|_2^{2}\C \Y\right]\right]
=
\sum_{i=1}^{n}\E_{Y_{i}}\left[\E_{\mu_{i},w_{i}}\left[(\widetilde{\beta_{i}}-\beta_{i}^{0})^{2}\mid Y_{i}\right]\right].
\]
We will bound the risk $\E_{Y_{i}}\left[\E_{\mu_{i},w_{i}}\left[(\widetilde{\beta_{i}}-\beta_{i}^{0})^{2}\mid Y_{i}\right]\right]$
separately for active and inactive coordinates.

\paragraph{Active coordinates}

For active coordinates, we have 
\begin{equation}
\begin{split} & \E_{Y_{i}}\left[\E_{\mu_{i},w_{i}}\left[(\widetilde{\beta_{i}}-\beta_{i}^{0})^{2}\mid Y_{i}\right]\right]=\E_{Y_{i}}\left[\E_{\mu_{i},w_{i}}\left[(\widetilde{\beta_{i}}-Y_{i}+Y_{i}-\beta_{i}^{0})^{2}\mid Y_{i}\right]\right]\\
\le\,\, & \E_{Y_{i}}\left[2\E_{\mu_{i},w_{i}}\left[(\widetilde{\beta_{i}}-Y_{i})^{2}\mid Y_{i}\right]+2\E_{\mu_{i},w_{i}}\left[(Y_{i}-\beta_{i}^{0})^{2}\mid Y_{i}\right]\right]\\
\le\,\, & 2\E_{Y_{i}}\left[\E_{\mu_{i},w_{i}}\left[(\widetilde{\beta_{i}}-Y_{i})^{2}\mid Y_{i}\right]\right]+2\\
=\,\, & 2\E_{Y_{i}}\left[\E_{\mu_{i},w_{i}}\left[(\widetilde{\beta_{i}}-Y_{i})^{2}\1\left(w_{i}^{1/2}|Y_{i}-\mu_{i}|\leq\Delta(\lambda_{0}w_{i}^{-1/2},\lambda_{1}w_{i}^{-1/2})\right)\mid Y_{i}\right]\right]\\
\,\, & +2\E_{Y_{i}}\left[\E_{\mu_{i},w_{i}}\left[(\widetilde{\beta_{i}}-Y_{i})^{2}\1\left(w_{i}^{1/2}|Y_{i}-\mu_{i}|>\Delta(\lambda_{0}w_{i}^{-1/2},\lambda_{1}w_{i}^{-1/2})\right)\mid Y_{i}\right]\right]+2\\
=\,\, & 2U_{1}+2U_{2}+2,
\end{split}
\label{eq:appendix_normalmeans_active1}
\end{equation}
where 
\begin{align*}
& U_{1}=\E_{Y_{i}}\left[\E_{\mu_{i},w_{i}}\left[(\widetilde{\beta_{i}}-Y_{i})^{2}\1\left(w_{i}^{1/2}|Y_{i}-\mu_{i}|\leq\Delta(\lambda_{0}w_{i}^{-1/2},\lambda_{1}w_{i}^{-1/2})\right)\mid Y_{i}\right]\right],\\
& U_{2}=\E_{Y_{i}}\left[\E_{\mu_{i},w_{i}}\left[(\widetilde{\beta_{i}}-Y_{i})^{2}\1\left(w_{i}^{1/2}|Y_{i}-\mu_{i}|>\Delta(\lambda_{0}w_{i}^{-1/2},\lambda_{1}w_{i}^{-1/2})\right)\mid Y_{i}\right]\right].
\end{align*}
Now we bound $U_{1}$ and $U_{2}$ separately. First, 
\begin{equation}
\begin{split}U_{1} & =\E_{Y_{i}}\left[\E_{\mu_{i},w_{i}}\left[(\widetilde{\beta_{i}}-Y_{i})^{2}\1\left(w_{i}^{1/2}|Y_{i}-\mu_{i}|\leq\Delta(\lambda_{0}w_{i}^{-1/2},\lambda_{1}w_{i}^{-1/2})\right)\mid Y_{i}\right]\right]\\
& \stackrel{(a)}{\le}\E_{Y_{i}}\left[\E_{\mu_{i},w_{i}}\left[(\mu_{i}-Y_{i})^{2}\1\left(|Y_{i}-\mu_{i}|\leq w_{i}^{-1/2}\Delta(\lambda_{0}w_{i}^{-1/2},\lambda_{1}w_{i}^{-1/2})\right)\mid Y_{i}\right]\right]\\
& \leq\E_{w_{i}}\left[\Delta^{2}(\lambda_{0}w_{i}^{-1/2},\lambda_{1}w_{i}^{-1/2})\frac{1}{w_{i}}\right],
\end{split}
\label{eq:appendix_normalmeans_active20}
\end{equation}
where $(a)$ utilizes the observation $|\widetilde{\beta_{i}}-Y_{i}|\le|\mu_{i}-Y_{i}|$
from Equation \eqref{eq_appendix:NN_solution}.
Deploying Lemma \ref{lemma_appendix:Delta_bound} in equation \eqref{eq:appendix_normalmeans_active20},
we get (when $n$ is sufficiently large)
\begin{equation}
\begin{split}
U_{1} & \leq\E_{w_{i}}\left[\Delta^{2}(\lambda_{0}w_{i}^{-1/2},\lambda_{1}w_{i}^{-1/2})\frac{1}{w_{i}}
\1\left(g(0;\lambda_{0}w_{i}^{-1/2},\lambda_{1}w_{i}^{-1/2})>0\,\,\text{and}\,\,\lambda_{0}w_{i}^{-1/2}-\lambda_{1}w_{i}^{-1/2}>2\right)\right]
\\&\quad+\E_{w_{i}}\left[\Delta^{2}(\lambda_{0}w_{i}^{-1/2},\lambda_{1}w_{i}^{-1/2})\frac{1}{w_{i}}
\1\left(g(0;\lambda_{0}w_{i}^{-1/2},\lambda_{1}w_{i}^{-1/2})\le0\,\,\text{or}\,\, \lambda_{0}w_{i}^{-1/2}-\lambda_{1}w_{i}^{-1/2}\le2\right)\right]\\
& \le\E_{w_{i}}\left[\frac{1}{w_{i}}\left(\Delta^{U}(\lambda_{0},\lambda_{1})+\lambda_{1}w_{i}^{-1/2}-\lambda_{1}\right)^{2}\right]\\&\quad+\E_{w_{i}}\left[\Delta^{2}(\lambda_{0}w_{i}^{-1/2},\lambda_{1}w_{i}^{-1/2})\frac{1}{w_{i}}
\1\left(w_{i}\ge\frac{(\lambda_{0}-\lambda_{1})^{2}(1-p^{*}(0;\lambda_{0},\lambda_{1}))^{2}}{2\log(1/p^{*}(0;\lambda_{0},\lambda_{1}))}\,\,\text{or}\,\,
w_i\ge\frac{(\lambda_0-\lambda_1)^2}{4}
\right)\right]\\
& \stackrel{(a)}{\le}
2\left(\Delta^{U}(\lambda_{0},\lambda_{1})-\lambda_{1}\right)^{2}\E_{w_{i}}\frac{1}{w_{i}}
+2\lambda_1^2\E_{w_{i}}\frac{1}{w_{i}^{2}}
+\E_{w_{i}}\left[\frac{\lambda_{0}^{2}}{w_{i}}\frac{1}{w_{i}}\1\left(w_{i}\ge C(\lambda_0,\lambda_1)\right)\right]\\
& \le2\left(\Delta^{U}(\lambda_{0},\lambda_{1})-\lambda_1\right)^{2}\E_{w_{i}}\frac{1}{w_{i}}+2\E_{w_{i}}\frac{1}{w_{i}^{2}}
+\frac{\lambda_0^2}{C^2(\lambda_0,\lambda_1)}\P\left(w_{i}\ge C(\lambda_0,\lambda_1)\right)\\
& \stackrel{(b)}{\le}2\left(\Delta^{U}(\lambda_{0},\lambda_{1})-\lambda_1\right)^{2}\E_{w_{i}}\frac{1}{w_{i}}+2\E_{w_{i}}\frac{1}{w_{i}^{2}}+\frac{\lambda_0^2}{C^3(\lambda_0,\lambda_1)}\\
& \le2\left(\Delta^{U}(\lambda_{0},\lambda_{1})-\lambda_1\right)^{2}\E_{w_{i}}\frac{1}{w_{i}}+2\E_{w_{i}}\frac{1}{w_{i}^{2}}+1,
\end{split}
\label{eq:appendix_normalmeans_active2}
\end{equation}
where $(a)$ uses the fact that $(a+b)^2\le 2a^2+2b^2$, $\Delta(\lambda_{0}w_{i}^{-1/2},\lambda_{1}w_{i}^{-1/2})\le\lambda_{0}w_{i}^{-1/2}$, and we denote $C(\lambda_0,
\lambda_1)=\min\left\{\frac{(\lambda_{0}-\lambda_{1})^{2}(1-p^{*}(0;\lambda_{0},\lambda_{1}))^{2}}{2\log(1/p^{*}(0;\lambda_{0},\lambda_{1}))},\frac{(\lambda_0-\lambda_1)^2}{4}\right\}$,
and $(b)$ follows from the Markov's inequality.

For $U_{2}$, we have 
\begin{equation}
\begin{split}
U_{2}= & \E_{Y_{i}}\left[\E_{\mu_{i},w_{i}}\left[(\widetilde{\beta_{i}}-Y_{i})^{2}\1\left(w_{i}^{1/2}(Y_{i}-\mu_{i})>\Delta(\lambda_{0}w_{i}^{-1/2},\lambda_{1}w_{i}^{-1/2})\right)\mid Y_{i}\right]\right]\\
\text{\ensuremath{}}\stackrel{(a)}{\le} & \E_{Y_{i}}\left[\E_{\mu_{i},w_{i}}\left[\left(w_{i}^{-1}\lambda^{*}(\widetilde{\beta_{i}}-\mu_{i})\right)^{2}\1\left(w_{i}^{1/2}|Y_{i}-\mu_{i}|>\Delta(\lambda_{0}w_{i}^{-1/2},\lambda_{1}w_{i}^{-1/2})\right)\mid Y_{i}\right]\right]\\
\le & \E_{Y_{i}}\left[\E_{\mu_{i},w_{i}}\left[\left(w_{i}^{-1}\lambda^{*}(\widetilde{\beta_{i}}-\mu_{i})\right)^{2}
\1\left(w_{i}^{1/2}|Y_{i}-\mu_{i}|>\Delta(\lambda_{0}w_{i}^{-1/2},\lambda_{1}w_{i}^{-1/2})\right)\right.\right.
\\&\quad\quad\left.\left.\1\left(\lambda_{0}w_i^{-1/2}-\lambda_{1}w_i^{-1/2}>2\,\,\text{and}\,\,g(0;\lambda_{0}w_{i}^{-1/2},\lambda_{1}w_{i}^{-1/2})>0\right)\mid Y_{i}\right]\right]\\
& +\E_{Y_{i}}\left[\E_{\mu_{i},w_{i}}\left[\left(w_{i}^{-1}\lambda^{*}(\widetilde{\beta_{i}}-\mu_{i})\right)^{2}
\1\left(w_{i}^{1/2}|Y_{i}-\mu_{i}|>\Delta(\lambda_{0}w_{i}^{-1/2},\lambda_{1}w_{i}^{-1/2})\right)\right.\right.
\\&\quad\quad\quad\left.\left.\1\left(\lambda_{0}w_i^{-1/2}-\lambda_{1}w_i^{-1/2}\le2\right)\mid Y_{i}\right]\right]\\
& +\E_{Y_{i}}\left[\E_{\mu_{i},w_{i}}\left[\left(w_{i}^{-1}\lambda^{*}(\widetilde{\beta_{i}}-\mu_{i})\right)^{2}
\1\left(w_{i}^{1/2}|Y_{i}-\mu_{i}|>\Delta(\lambda_{0}w_{i}^{-1/2},\lambda_{1}w_{i}^{-1/2})\right)\right.\right.
\\&\quad\quad\quad\left.\left.\1\left(g(0;\lambda_{0}w_{i}^{-1/2},\lambda_{1}w_{i}^{-1/2})\le0\right)\mid Y_{i}\right]\right]\\
= & U_{3}+U_{4}+U_{5},
\end{split}
\label{eq:appendix_normalmeans_active3}
\end{equation}
where $(a)$ utilizes the fact that $|\widetilde{\beta_{i}}-Y_{i}|\le w_{i}^{-1}\lambda^{*}(\widetilde{\beta_{i}}-\mu_{i})$
from Equation \eqref{eq_appendix:NN_solution}, and we abbreviate
$\lambda^{*}(\widetilde{\beta_{i}}-\mu_{i})=\lambda^{*}(\widetilde{\beta_{i}}-\mu_{i};\lambda_{0},\lambda_{1})$.
In order to bound \eqref{eq:appendix_normalmeans_active3}, we bound
$U_{3},U_{4},U_{5}$ separately.
\begin{itemize}
	\item To bound $U_{3}$: When $\lambda_{0}w_i^{-1/2}-\lambda_{1}w_i^{-1/2}>2$ and
	$g(0;\lambda_{0}w_{i}^{-1/2},\lambda_{1}w_{i}^{-1/2})>0$, following
	the proof of Theorem 4.1 in \citet{rovckova2018bayesian}, we know
	that $|\widehat{\beta_{i}^{*}}|>\delta_{c+}^{\lambda_{0}/\sqrt{w_{i}},\lambda_{1}/\sqrt{w_{i}}}$
	and thus $p^{*}(\widehat{\beta_{i}^{*}};\lambda_{0}w_{i}^{-1/2},\lambda_{1}w_{i}^{-1/2})>c_{+}^{\lambda_{0}/\sqrt{w_{i}},\lambda_{1}/\sqrt{w_{i}}}$.
	This implies $p^{*}(\widetilde{\beta_{i}}-\mu_{i};\lambda_{0},\lambda_{1})>c_{+}^{\lambda_{0}/\sqrt{w_{i}},\lambda_{1}/\sqrt{w_{i}}}$, so we have
	\begin{align*}
	& \lambda^{*}(\widetilde{\beta_{i}}-\mu_{i};\lambda_{0}w_{i}^{-1/2},\lambda_{1}w_{i}^{-1/2})<c_{+}^{\lambda_{0}/\sqrt{w_{i}},\lambda_{1}/\sqrt{w_{i}}}(\lambda_{1}w_{i}^{-1/2}-\lambda_{0}w_{i}^{-1/2})+\lambda_{0}w_{i}^{-1/2}\\
	= & (1-c_{+}^{\lambda_{0}/\sqrt{w_{i}},\lambda_{1}/\sqrt{w_{i}}})(\lambda_{0}w_{i}^{-1/2}-\lambda_{1}w_{i}^{-1/2})+\lambda_{1}w_{i}^{-1/2}
	\stackrel{(a)}{<}\frac{2w_{i}^{1/2}}{\lambda_{0}-\lambda_{1}}+\lambda_{1}w_{i}^{-1/2}<1+\lambda_{1}w_{i}^{-1/2},
	\end{align*}
	where $(a)$ follows from the fact that
	\[
	c_{+}^{\lambda_{0}/\sqrt{w_{i}},\lambda_{1}/\sqrt{w_{i}}}(1-c_{+}^{\lambda_{0}/\sqrt{w_{i}},\lambda_{1}/\sqrt{w_{i}}})=\frac{1}{(\lambda_{0}w_{i}^{-1/2}-\lambda_{1}w_{i}^{-1/2})^{2}}=\frac{w_{i}}{(\lambda_{0}-\lambda_{1})^{2}},
	\]
	and 
	\[
	c_{+}^{\lambda_{0}/\sqrt{w_{i}},\lambda_{1}/\sqrt{w_{i}}}>0.5.
	\]
	In view that $\lambda^{*}(\widetilde{\beta_{i}}-\mu_{i};\lambda_{0}w_{i}^{-1/2},\lambda_{1}w_{i}^{-1/2})=w_{i}^{-1/2}\lambda^{*}(\widetilde{\beta_{i}}-\mu_{i};\lambda_{0},\lambda_{1})$,
	we have 
	\[
	U_{3}\le\E_{w_{i}}\left(\frac{w_{i}^{1/2}+\lambda_1}{w_{i}}\right)^{2}
	\le2\E_{w_{i}}\frac{1}{w_{i}}+2\lambda_1^2\E_{w_{i}}\frac{1}{w_{i}^{2}}.
	\]
	\item To bound $U_{4}$: When $\lambda_{0}w_i^{-1/2}-\lambda_{1}w_i^{-1/2}\le2$,
	we have 
	\[
	\lambda^{*}(\widetilde{\beta_{i}}-\mu_{i};\lambda_{0}w_{i}^{-1/2},\lambda_{1}w_{i}^{-1/2})\le\lambda_{0}w_{i}^{-1/2}\le\lambda_{1}w_{i}^{-1/2}+2.
	\]
	So in view that $\lambda^{*}(\widetilde{\beta_{i}}-\mu_{i};\lambda_{0}w_{i}^{-1/2},\lambda_{1}w_{i}^{-1/2})=w_{i}^{-1/2}\lambda^{*}(\widetilde{\beta_{i}}-\mu_{i};\lambda_{0},\lambda_{1})$,
	we have 
	\[
	U_{4}
	\le \E_{w_{i}}\left(\frac{\lambda_1+2w_{i}^{1/2}}{w_{i}}\right)^{2}
	\le2\lambda_1^2\E_{w_{i}}\frac{1}{w_{i}^{2}}+8\E_{w_{i}}\frac{1}{w_{i}}.
	\]
	\item To bound $U_{5}$: When $g(0;\lambda_{0}w_{i}^{-1/2},\lambda_{1}w_{i}^{-1/2})\le0$,
	we have $w_{i}\ge\frac{(\lambda_{0}-\lambda_{1})^{2}(1-p^{*}(0;\lambda_{0},\lambda_{1}))^{2}}{2\log(1/p^{*}(0;\lambda_{0},\lambda_{1}))}$,
	thus when $n$ is sufficiently large,
	\[
	U_{5}\le\E_{w_{i}}\left(\frac{\lambda_{0}}{w_{i}}\1\left(w_{i}\ge\frac{(\lambda_{0}-\lambda_{1})^{2}(1-p^{*}(0;\lambda_{0},\lambda_{1}))^{2}}{2\log(1/p^{*}(0;\lambda_{0},\lambda_{1}))}\right)\right)^{2}
	\le\frac{4[\log(1/p^{*}(0;\lambda_{0},\lambda_{1}))]^{2}\lambda_{0}^{2}}{(\lambda_{0}-\lambda_{1})^{4}(1-p^{*}(0;\lambda_{0},\lambda_{1}))^{4}}\le\frac{1}{\lambda_0}.
	\]
\end{itemize}
So plugging the above bounds into Equation \eqref{eq:appendix_normalmeans_active3},
we obtain 
\begin{align}
U_{2}
\le U_{3}+U_{4}+U_{5}
\le 10\E_{w_{i}}\frac{1}{w_{i}^{2}}+4\lambda_1^2\E_{w_{i}}\frac{1}{w_{i}}+\frac{1}{\lambda_0}.
\label{eq:appendix_normalmeans_active4}
\end{align}

Thus, from \eqref{eq:appendix_normalmeans_active1}, \eqref{eq:appendix_normalmeans_active2},
\eqref{eq:appendix_normalmeans_active4} and condition (2), we know
that for active coordinates, when $n$ is sufficiently large, 
\begin{align*}
& \E_{Y_{i}}\E_{\mu_{i},w_{i}}[(\wt\beta_{i}-\beta_{i}^{0})^{2}\C Y]
\leq 6+\left[4\left(\Delta^{U}(\lambda_{0},\lambda_{1})-\lambda_1\right)^{2}+8\lambda_1^2\right]\E_{w_{i}}\frac{1}{w_{i}}+24\E_{w_{i}}\frac{1}{w_{i}^{2}}\\
& \leq C_{3}\left(\Delta^{U}(\lambda_{0},\lambda_{1})\right)^{2}.
\end{align*}

\paragraph{Inactive coordinates}

For inactive coordinates, we have 
\begin{equation}
\begin{split} & \E_{Y_{i}}\left[\E_{\mu_{i},w_{i}}[(\widetilde{\beta_{i}}-\beta_{i}^{0})^{2}\C Y_{i}]\right]=\E_{Y_{i}}\left[\E_{\mu_{i},w_{i}}[(\widetilde{\beta_{i}})^{2}\C Y_{i}]\right]\\
= & \E_{Y_{i}}\left[\E_{\mu_{i},w_{i}}\left[(\widetilde{\beta_{i}}-\mu_{i}+\mu_{i})^{2}\1(\sqrt{w_{i}}|Y_{i}-\mu_{i}|\geq\Delta(\lambda_{0}w_{i}^{-1/2},\lambda_{1}w_{i}^{-1/2}))\C Y_{i}\right]\right]\\
& +\E_{Y_{i}}\left[\E_{\mu_{i},w_{i}}\left[(\widetilde{\beta_{i}})^{2}\1(\sqrt{w_{i}}|Y_{i}-\mu_{i}|<\Delta(\lambda_{0}w_{i}^{-1/2},\lambda_{1}w_{i}^{-1/2}))\C Y_{i}\right]\right]\\
\stackrel{(a)}{\le} & \E_{Y_{i}}\left[\E_{\mu_{i},w_{i}}\left[\left(|Y_{i}-\mu_{i}|+|\mu_{i}|\right)^{2}\1(\sqrt{w_{i}}|Y_{i}-\mu_{i}|\geq\Delta(\lambda_{0}w_{i}^{-1/2},\lambda_{1}w_{i}^{-1/2}))\C Y_{i}\right]\right]+\E_{\mu_{i}}\mu_{i}^{2}\\
\leq & 2\E_{Y_{i}}\left[\E_{\mu_{i},w_{i}}\left[\left(|Y_{i}-\mu_{i}|^{2}+|\mu_{i}|^{2}\right)\1(\sqrt{w_{i}}|Y_{i}-\mu_{i}|\geq\Delta(\lambda_{0}w_{i}^{-1/2},\lambda_{1}w_{i}^{-1/2}))\right.\right.\\
& \quad\quad\quad\left.\left.\1(w_{i}\leq\eta+\gamma)\1(|\mu_{i}|\leq\frac{1}{\lambda_{0}}+\frac{1}{\sqrt{\lambda_{0}}})\C Y_{i}\right]\right]\\
& +2\E_{Y_{i}}\left[\E_{\mu_{i},w_{i}}\left[\left(|Y_{i}-\mu_{i}|^{2}+|\mu_{i}|^{2}\right)\1(\sqrt{w_{i}}|Y_{i}-\mu_{i}|\geq\Delta(\lambda_{0}w_{i}^{-1/2},\lambda_{1}w_{i}^{-1/2}))\right.\right.\\
& \quad\quad\quad\quad\left.\left.\1(w_{i}\leq\eta+\gamma)\1(|\mu_{i}|>\frac{1}{\lambda_{0}}+\frac{1}{\sqrt{\lambda_{0}}})\C Y_{i}\right]\right]\\
& +2\E_{Y_{i}}\left[\E_{\mu_{i},w_{i}}\left[\left(|Y_{i}-\mu_{i}|^{2}+|\mu_{i}|^{2}\right)\1(\sqrt{w_{i}}|Y_{i}-\mu_{i}|\geq\Delta(\lambda_{0}w_{i}^{-1/2},\lambda_{1}w_{i}^{-1/2}))\right.\right.\\
& \quad\quad\quad\quad\left.\left.\1(w_{i}>\eta+\gamma)\C Y_{i}\right]\right]+\frac{2}{\lambda_{0}^{2}}\\
= & 2U_{1}+2U_{2}+2U_{3}+\frac{2}{\lambda_{0}^{2}},
\end{split}
\label{eq:appendix_normalmeans_inactive}
\end{equation}
where $(a)$ uses the fact that $|\widetilde{\beta}_{i}-\mu_{i}|\le|Y_{i}-\mu_{i}|$ when $\sqrt{w_{i}}|Y_{i}-\mu_{i}|>\Delta(\lambda_{0}w_{i}^{-1/2},\lambda_{1}w_{i}^{-1/2})$ and $\widetilde\beta_i=\mu_i$ when $\sqrt{w_{i}}|Y_{i}-\mu_{i}|\leq\Delta(\lambda_{0}w_{i}^{-1/2},\lambda_{1}w_{i}^{-1/2}))$
in view of Equation \eqref{eq_appendix:NN_solution}, and we denote
\begin{align*}
& U_{1}=\E_{Y_{i}}\left[\E_{\mu_{i},w_{i}}\left[\left(|Y_{i}-\mu_{i}|^{2}+|\mu_{i}|^{2}\right)\1(\sqrt{w_{i}}|Y_{i}-\mu_{i}|\geq\Delta(\lambda_{0}w_{i}^{-1/2},\lambda_{1}w_{i}^{-1/2}))\right.\right.\\
& \quad\quad\quad\left.\left.\1(w_{i}\leq\eta+\gamma)\1(|\mu_{i}|\leq\frac{1}{\lambda_{0}}+\frac{1}{\sqrt{\lambda_{0}}})\C Y_{i}\right]\right],\\
& U_{2}=\E_{Y_{i}}\left[\E_{\mu_{i},w_{i}}\left[\left(|Y_{i}-\mu_{i}|^{2}+|\mu_{i}|^{2}\right)\1(\sqrt{w_{i}}|Y_{i}-\mu_{i}|\geq\Delta(\lambda_{0}w_{i}^{-1/2},\lambda_{1}w_{i}^{-1/2}))\right.\right.\\
& \quad\quad\quad\quad\left.\left.\1(w_{i}\leq\eta+\gamma)\1(|\mu_{i}|>\frac{1}{\lambda_{0}}+\frac{1}{\sqrt{\lambda_{0}}})\C Y_{i}\right]\right],\\
& U_{3}=\E_{Y_{i}}\left[\E_{\mu_{i},w_{i}}\left[\left(|Y_{i}-\mu_{i}|^{2}+|\mu_{i}|^{2}\right)\1(\sqrt{w_{i}}|Y_{i}-\mu_{i}|\geq\Delta(\lambda_{0}w_{i}^{-1/2},\lambda_{1}w_{i}^{-1/2}))\right.\right.\\
& \quad\quad\quad\quad\left.\left.\1(w_{i}>\eta+\gamma)\C Y_{i}\right]\right].
\end{align*}
Now we bound $U_{1},U_{2},U_{3}$ separately.

For the first term $U_{1}$ in \eqref{eq:appendix_normalmeans_inactive},
when $n$ is sufficiently large, we have 
\begin{equation}
\begin{split}
2U_{1}\stackrel{(a)}{\leq} & 2\E_{Y_{i}}\left[\E_{\mu_{i},w_i}\left[|Y_{i}-\mu_{i}|^{2}\1\left(|Y_{i}-\mu_{i}|\geq\frac{\Delta^L(\lambda_0,\lambda_1)+\lambda_1(w_i^{-1/2}-1)}{\sqrt{\eta+\gamma}}\right)\1\left(|\mu_{i}|\leq\frac{1}{\lambda_{0}}+\frac{1}{\sqrt{\lambda_{0}}}\right)\C Y_{i}\right]\right]\\
& +2\E\mu_{i}^{2}\\
\le & 2\E_{\mu_{i}}\left[\E_{Y_{i}}\left[|Y_{i}-\mu_{i}|^{2}\1\left(|Y_{i}-\mu_{i}|\geq\frac{\Delta^L(\lambda_0,\lambda_1)-\lambda_1}{\sqrt{\eta+\gamma}}\right)\C\mu_{i}\right]\1\left(|\mu_{i}|\leq\frac{1}{\lambda_{0}}+\frac{1}{\sqrt{\lambda_{0}}}\right)\right]
+4/\lambda_{0}^{2}\\
\stackrel{(b)}{=} & 2\E_{\mu_{i}}\left[\1\left(|\mu_{i}|\leq 1/\lambda_0+1/\sqrt{\lambda_0}\right)\left[\left(1-\Phi\left(\mu_{i}+\frac{\Delta^L(\lambda_0,\lambda_1)-\lambda_1}{\sqrt{\eta+\gamma}}\right)\right)+\mu_{i}^{2}\right.\right.\\
& \quad\quad+\Phi\left(\mu_{i}-\frac{\Delta^L(\lambda_0,\lambda_1)-\lambda_1}{\sqrt{\eta+\gamma}}\right)-\left(\mu_{i}-\frac{\Delta^L(\lambda_0,\lambda_1)-\lambda_1}{\sqrt{\eta+\gamma}}\right)\phi\left(\mu_{i}-\frac{\Delta^L(\lambda_0,\lambda_1)-\lambda_1}{\sqrt{\eta+\gamma}}\right)\\
& \left.\left.\quad\quad+\left(\mu_{i}+\frac{\Delta^L(\lambda_0,\lambda_1)-\lambda_1}{\sqrt{\eta+\gamma}}\right)\phi\left(\mu_{i}+\frac{\Delta^L(\lambda_0,\lambda_1)-\lambda_1}{\sqrt{\eta+\gamma}}\right)\right]\right]+4/\lambda_{0}^{2}\\
\le & 8/\lambda_{0}^{2}
+4\left[1-\Phi\left(-1/\lambda_{0}-1/\sqrt{\lambda_{0}}+\frac{\Delta^L(\lambda_0,\lambda_1)-\lambda_1}{\sqrt{\eta+\gamma}}\right)\right]\\
& +2\left[\left(\frac{\Delta^L(\lambda_0,\lambda_1)-\lambda_1}{\sqrt{\eta+\gamma}}+1/\lambda_0+1/\sqrt{\lambda_0}\right)
\phi\left(1/\lambda_{0}+1/\sqrt{\lambda_{0}}-\frac{\Delta^L(\lambda_0,\lambda_1)-\lambda_1}{\sqrt{\eta+\gamma}}\right)\right]\\
\stackrel{(c)}{\le} & 8/\lambda_{0}^{2}+4\left[\frac{1}{-1/\lambda_{0}-1/\sqrt{\lambda_{0}}+\left(\Delta^{L}(\lambda_{0},\lambda_{1})-\lambda_1\right)/\sqrt{\eta+\gamma}}\right.\\
& \quad\quad\quad\quad\quad\left.\phi\left(-1/\lambda_{0}-1/\sqrt{\lambda_{0}}+\left(\Delta^{L}(\lambda_{0},\lambda_{1})-\lambda_1\right)/\sqrt{\eta+\gamma}\right)\right]\\
& +4\left[\left(\left(\Delta^{L}(\lambda_{0},\lambda_{1})-\lambda_1\right)/\sqrt{\eta+\gamma}\right)\phi\left(1/\lambda_{0}+1/\sqrt{\lambda_{0}}-\left(\Delta^{L}(\lambda_{0},\lambda_{1})-\lambda_1\right)/\sqrt{\eta+\gamma}\right)\right]\\
\le & 5\frac{\Delta^{L}(\lambda_{0},\lambda_{1})-\lambda_1}{\sqrt{\eta+\gamma}}\left[\phi\left(1/\lambda_{0}+1/\sqrt{\lambda_{0}}-\left(\Delta^{L}(\lambda_{0},\lambda_{1})-\lambda_1\right)/\sqrt{\eta+\gamma}\right)\right],
\end{split}
\label{eq:appendix_normalmeans_inactive_term1}
\end{equation}
where $(a)$ follows from the fact that when $n$ is sufficiently
large, $w_{i}\le\eta+\gamma\Rightarrow g(0;\lambda_{0}w_{i}^{-1/2},\lambda_{1}w_{i}^{-1/2})>0,\lambda_{0}w_{i}^{-1/2}-\lambda_{1}w_{i}^{-1/2}>2$, which ensures that Lemma \ref{lemma_appendix:Delta_bound} holds, $(b)$ uses Lemma \ref{lemma_appendix:condtional_normal},
$(c)$ follows from the fact that $1-\Phi(x)\le \phi(x)/x$ for all $x>0$, and here $x=-1/\lambda_0-1/\sqrt{\lambda_0}+(\Delta^L(\lambda_0,\lambda_1)-\lambda_1)/\sqrt{\eta+\gamma}>0$ always holds when $n$ is sufficiently large.

For the second term in \eqref{eq:appendix_normalmeans_inactive},
\begin{align*}
2U_{2} & =2\E_{Y_{i}}\left[\E_{\mu_{i},w_{i}}\left[\left(|Y_{i}-\mu_{i}|^{2}+|\mu_{i}|^{2}\right)\1(\sqrt{w_{i}}|Y_{i}-\mu_{i}|\geq\Delta(\lambda_{0}/\sqrt{w_{i}},\lambda_{1}/\sqrt{w_{i}}))\right.\right.\\
& \left.\quad\quad\quad\quad\left.\1(w_{i}\leq\eta+\gamma)\1\left(|\mu_{i}|>\frac{1}{\lambda_{0}}+\frac{1}{\sqrt{\lambda_{0}}}\right)\mid Y_{i}\right]\right]\\
& \leq2\E_{Y_{i}}\left[\E_{\mu_{i},w_{i}}\left[\left(|Y_{i}-\mu_{i}|^{2}+|\mu_{i}|^{2}\right)\1\left(|\mu_{i}|>\frac{1}{\lambda_{0}}+\frac{1}{\sqrt{\lambda_{0}}}\right)\mid Y_{i}\right]\right]\\
& \leq2\E_{Y_{i}}\left[\E_{\mu_{i},w_{i}}\left[(2Y_{i}^{2}+3\mu_{i}^{2})\1\left(|\mu_{i}|>\frac{1}{\lambda_{0}}+\frac{1}{\sqrt{\lambda_{0}}}\right)\mid Y_{i}\right]\right]\\
& =4\P(|\mu_{i}|>\frac{1}{\lambda_{0}}+\frac{1}{\sqrt{\lambda_{0}}})+6\E_{\mu_{i}}\mu_{i}^{2}\1\left(|\mu_{i}|>\frac{1}{\lambda_{0}}+\frac{1}{\sqrt{\lambda_{0}}}\right)\\
& \stackrel{(a)}{=}4e^{-1-\sqrt{\lambda_{0}}}+6\left[\left(\frac{1}{\lambda_{0}}+\frac{1}{\sqrt{\lambda_{0}}}\right)^{2}+\frac{2}{\lambda_{0}}\left(\frac{2}{\lambda_{0}}+\frac{1}{\sqrt{\lambda_{0}}}\right)\right]e^{-1-\sqrt{\lambda_{0}}}\\
& <\frac{\Delta^{L}(\lambda_{0},\lambda_{1})-\lambda_1}{\sqrt{\eta+\gamma}}\left[\phi\left(1/\lambda_{0}+1/\sqrt{\lambda_{0}}-\left(\Delta^{L}(\lambda_{0},\lambda_{1})-\lambda_1\right)/\sqrt{\eta+\gamma}\right)\right],
\end{align*}
where $(a)$ follows from integration by parts.

For the third term in \eqref{eq:appendix_normalmeans_inactive}, utilizing
condition (3), when $n$ is sufficiently large, we have 
\begin{align*}
2U_{3} & =2\E_{Y_{i}}\left[\E_{\mu_{i},w_{i}}\left[\left(|Y_{i}-\mu_{i}|^{2}+|\mu_{i}|^{2}\right)\1(\sqrt{w_{i}}|Y_{i}-\mu_{i}|\geq\Delta(\lambda_{0}/\sqrt{w_{i}},\lambda_{1}/\sqrt{w_{i}}))\1(w_{i}>\eta+\gamma)\mid Y_{i}\right]\right]\\
& \leq2\E_{Y_{i}}\left[\E_{\mu_{i},w_{i}}\left[\left(2|Y_{i}|^{2}+3|\mu_{i}|^{2}\right)\1(w_{i}>\eta+\gamma)\mid Y_{i}\right]\right]\\
& \le\left(4+\frac{12}{\lambda_{0}^{2}}\right)\mathbb{P}\left(w_{i}>\eta+\gamma\right)\le\widetilde{C_{4}}\frac{q}{n}\sqrt{\log\left(\frac{n}{q}\right)}\\
& \le C_{4}\frac{\Delta^{L}(\lambda_{0},\lambda_{1})-\lambda_1}{\sqrt{\eta+\gamma}}\left[\phi\left(1/\lambda_{0}+1/\sqrt{\lambda_{0}}-\left(\Delta^{L}(\lambda_{0},\lambda_{1})-\lambda_1\right)/\sqrt{\eta+\gamma}\right)\right].
\end{align*}
In \eqref{eq:appendix_normalmeans_inactive}, combining the bound
on $U_{1},U_{2},U_{3}$, the risk for inactive coordinates will be
bounded 
\[
\begin{split} & \E_{Y_{i}}\left[\E_{\mu_{i},w_{i}}[(\tilde{\beta}_{i}-\beta_{i}^{0})^{2}\C Y_{i}]\right]=2U_{1}+2U_{2}+2U_{3}+2/\lambda_{0}^{2}\\
& \leq C_{5}\frac{\Delta^{L}(\lambda_{0},\lambda_{1})-\lambda_1}{\sqrt{\eta+\gamma}}\left[\phi\left(1/\lambda_{0}+1/\sqrt{\lambda_{0}}-\left(\Delta^{L}(\lambda_{0},\lambda_{1})-\lambda_1\right)/\sqrt{\eta+\gamma}\right)\right].
\end{split}
\]

Combining the risk for active and inactive coordinates, we obtain
\[
\begin{split} 
& \E_{\Y}\E_{\bm{\mu},\bm{w}}[\|\widetilde{\b}-\b^{0}\|_2^{2}\C \Y]\\
\le & qC_{3}\left[\Delta^{U}(\lambda_{0},\lambda_{1})\right]^{2}+(n-q)C_{5}\frac{\Delta^{L}(\lambda_{0},\lambda_{1})-\lambda_1}{\sqrt{\eta+\gamma}}\left[\phi\left(1/\lambda_{0}+1/\sqrt{\lambda_{0}}-\left(\Delta^{L}(\lambda_{0},\lambda_{1})-\lambda_1\right)/\sqrt{\eta+\gamma}\right)\right]\\
= & qC_{3}\left[\Delta^{U}(\lambda_{0},\lambda_{1})\right]^{2}+(n-q)C_{6}\frac{\Delta^{L}(\lambda_{0},\lambda_{1})}{\sqrt{\eta+\gamma}}\exp\left\{ -\frac{\left(1/\lambda_{0}+1/\sqrt{\lambda_{0}}-\left(\Delta^{L}(\lambda_{0},\lambda_{1})-\lambda_1\right)/\sqrt{\eta+\gamma}\right)^{2}}{2}\right\} \\
\le & qC_{3}\left[\Delta^{U}(\lambda_{0},\lambda_{1})\right]^{2}+(n-q)C_{7}\Delta^{L}(\lambda_{0},\lambda_{1})\exp\left\{ -\frac{\left(\Delta^{L}(\lambda_{0},\lambda_{1})\right)^{2}}{2(\eta+\gamma)}\right\} \\
\le & qC_{3}\left[\Delta^{U}(\lambda_{0},\lambda_{1})\right]^{2}+(n-q)C_{7}\Delta^{L}(\lambda_{0},\lambda_{1})\frac{q}{n}\\
\le & qC_{8}\left[\Delta^{U}(\lambda_{0},\lambda_{1})\right]^{2}.
\end{split}
\]

Then from the Markov's inequality, for any $M_{n}\rightarrow\infty$, 
\[
\E_{\Y}\P_{\bm{\mu},\bm{w}}\left(\|\wt\b-\b^{0}\|_{2}^{2}>M_{n}q\log\left(\frac{n}{q}\right)\C \Y\right)
\leq\E_{\Y}\frac{\E_{\bm{\mu},\bm{w}}\left[\|\wt\b-\b^{0}\|_2^{2}\C \Y\right]}{M_{n}q\log\frac{n}{q}}
\leq\frac{qC_{8}\left[\Delta^{U}(\lambda_{0},\lambda_{1})\right]^{2}}{M_{n}q\log\frac{n}{q}},
\]
where $\frac{qC_{3}\left[\Delta^{U}(\lambda_{0},\lambda_{1})\right]^{2}}{M_{n}q\log\frac{n}{q}}\rightarrow0$.
This means when $\sigma=1$, for any $M_{n}\rightarrow\infty$, 
\[
\E_{\Y}\P_{\bm{\mu},\bm{w}}\left(\|\wt\b-\b^{0}\|_{2}^{2}>M_{n}q\log\left(\frac{n}{q}\right)\C \Y\right)\rightarrow0.
\]

\smallskip

For a general fixed value $\sigma>0$, notice that we can always rescale the model
so that it becomes
\[
Y_{i}/\sigma=\beta_{i}^{0}/\sigma+N(0,1).
\]
And thus all previous analysis holds for the rescaled model. Thus, for any $M_{n}\rightarrow\infty$, 
\begin{align*}
& \E_{\Y}\P_{\bm{\mu},\bm{w}}\left(\|\wt\b-\b^{0}\|_{2}^{2}>M_{n}q\log\left(\frac{n}{q}\right)\C \Y\right)
=\E_{\Y}\P_{\bm{\mu},\bm{w}}\left(\|\wt\b/\sigma-\b^{0}/\sigma\|_{2}^{2}>\sigma^{-2}M_{n}q\log\left(\frac{n}{q}\right)\C \Y\right)\\
\le & \E_{\Y}\frac{\E_{\bm{\mu},\bm{w}}\left[\|\wt\b/\sigma-\b^{0}/\sigma\|_{2}^{2}\C \Y\right]}{\sigma^{-2}M_{n}q\log\left(\frac{n}{q}\right)}\rightarrow0.
\end{align*}

\subsection{Proof of Corollary \ref{normal_mean_dirichlet}}

\label{sec:appendix:normal_means_corollary_proof} 

The condition (1) of
Theorem \ref{normal_means_weight} is satisfied. With $\alpha\geq2$,
the condition (2) also holds because when $w_{i}\sim\frac{1}{\alpha}\text{Gamma}(\alpha,1)$,
$\frac{1}{w_{i}}\sim\alpha\times\text{Inverse-Gamma}(\alpha,1)$ and
when $\bm{w}\sim n\text{Dir}(\alpha,\cdots,\alpha)$, $w_{i}\sim n\times\text{Beta}(\alpha,(n-1)\alpha)$.
Both of them satisfy condition (2). So we only need to check condition
(3).

When $\bm{w}\sim n\text{Dir}(\alpha,\cdots,\alpha)$, the following equation holds for any
$t\geq\frac{\alpha+1}{\alpha}$,
\begin{align*}
& \P_{w_{i}}(w_{i}>t)=\P\left(\text{Beta}(\alpha,n\alpha-\alpha)>\frac{t}{n}\right)=\frac{\int_{t/n}^{1}v^{\alpha-1}(1-v)^{n\alpha-\alpha-1}dv}{\text{Beta}(\alpha,n\alpha-\alpha)}\\
& \stackrel{z=\alpha nv}{=}\frac{1}{\text{Beta}(\alpha,n\alpha-\alpha)}\int_{\alpha t}^{\alpha n}\left(\frac{z}{\alpha n}\right)^{\alpha-1}\left(1-\frac{z}{\alpha n}\right)^{n\alpha-\alpha-1}d\frac{z}{\alpha n}\\
& =\frac{1}{\text{Beta}(\alpha,n\alpha-\alpha)(\alpha n)^{\alpha}}\int_{\alpha t}^{\alpha n}z^{\alpha-1}\left[\left(1-\frac{1}{\alpha n/z}\right){}^{-\alpha n/z+1}\right]{}^{\frac{n\alpha-\alpha-1}{1-\alpha n/z}}dz\\
& \stackrel{(a)}{\leq}\frac{1}{\text{Beta}(\alpha,n\alpha-\alpha)(\alpha n)^{\alpha}}\int_{\alpha t}^{\alpha n}z^{\alpha-1}e^{\frac{n\alpha-\alpha-1}{z-\alpha n}z}dz\\
& \leq\frac{1}{\text{Beta}(\alpha,n\alpha-\alpha)(\alpha n)^{\alpha}}\int_{\alpha t}^{\alpha n}z^{\alpha-1}e^{\frac{n\alpha-\alpha-1}{\alpha t-\alpha n}z}dz\\
& \stackrel{(b)}{\leq}\frac{1}{\text{Beta}(\alpha,n\alpha-\alpha)(\alpha n)^{\alpha}}\int_{\alpha t}^{\alpha n}z^{\alpha-1}e^{-z}dz=\frac{\Gamma(n\alpha)}{\Gamma(n\alpha-\alpha)(\alpha n)^{\alpha}\Gamma(\alpha)}\int_{\alpha t}^{\alpha n}z^{\alpha-1}e^{-z}dz\\
& \leq\frac{\Gamma(n\alpha)}{\Gamma(n\alpha-\alpha)(\alpha n)^{\alpha}}\P(\text{Gamma}(\alpha,1)>\alpha t)\leq\P(\text{Gamma}(\alpha,1)>\alpha t)\\
& \stackrel{(c)}{\leq}\frac{\E_{v\sim\text{Gamma}(\alpha,1)}e^{xv}}{e^{\alpha tx}}=\frac{(1-x)^{-\alpha}}{e^{\alpha tx}}=e^{\alpha(-\log(1-x)-tx)},
\end{align*}
where (a) uses the fact $(1-\frac{1}{x})^{-x+1}\leq e$ for any $x>0$.
Inequality (b) uses $t\geq\frac{\alpha+1}{\alpha}$. Inequality (c)
uses Chernoff bound and we need $x\in(0,1)$ for the above inequality
to hold.
Notice that when $w_{i}\sim\frac{1}{\alpha}\text{Gamma}(\alpha,1)$,
we can directly get 
\[
\P_{w_{i}}(w_{i}>t)=\P(\text{Gamma}(\alpha,1)>\alpha t)\leq e^{\alpha(-\log(1-x)-tx)}.
\]
Setting $x=1-\frac{1}{t}$ and 
$\alpha=\frac{\log[\frac{(1-\theta)\lambda_{0}}{\theta\lambda_{1}}]}{(t-1-\log t)(\eta+\gamma)}$,
we have
\begin{equation}
\begin{split} 
& \P_{w_{i}}(w_{i}>t)\leq e^{\alpha(-\log(1-x)-tx)}=e^{\alpha(\log(t)+1-t)}\asymp\frac{q}{n}.
\end{split}
\label{alpha_rate}
\end{equation}
Setting $t=\eta+\gamma$, we notice that $t=\eta+\gamma\ge\frac{\alpha+1}{\alpha}$
when $n$ is sufficiently large and thus condition (3) always holds
when $n$ is sufficiently large. We can get Lemma \ref{normal_mean_dirichlet}
by applying Theorem \ref{normal_means_weight}.

Now let us consider the case when   $\alpha$ depends on $\sigma^{2}$.
From the proof of Theorem \ref{normal_means_weight}, we know that for
any fixed $\sigma>0$,
\begin{align*}
& \E_{\Y}\P_{\bm{\mu},\bm{w}}\left(\|\wt\b-\b^{0}\|_{2}^{2}>M_{n}q\log\left(\frac{n}{q}\right)\C \Y\right)=\E_{\Y}\P_{\bm{\mu},\bm{w}}\left(\|\wt\b/\sigma-\b^{0}/\sigma\|_{2}^{2}>\sigma^{-2}M_{n}q\log\left(\frac{n}{q}\right)\C \Y\right)\\
\le & \E_{\Y}\frac{\E_{\bm{\mu},\bm{w}}\left[\|\wt\b/\sigma-\b^{0}/\sigma\|_{2}^{2}\C \Y\right]}{\sigma^{-2}M_{n}q\log\left(\frac{n}{q}\right)}
\stackrel{(a)}{\lesssim} \frac{\sigma^{2}}{M_{n}q\log\frac{n}{q}}q\log\left(\frac{n}{q}\right)\E_{w_{i}}\frac{1}{w_{i}},
\end{align*}
where $(a)$ follows from the fact that the dominating term in the upper
bound for risk $\E_{\Y}\P_{\bm{\mu},\bm{w}}\left(\|\wt\b/\sigma-\b^{0}/\sigma\|_{2}^{2}>M_{n}q\log\left(\frac{n}{q}\right)\C \Y\right)$
comes from $4q[\Delta^U(\lambda_0,\lambda_1)]^2\E_{w_{i}}\frac{1}{w_{i}}$. When $w_{i}\sim\frac{1}{\alpha}\text{Gamma}(\alpha,1),$we
have $\E_{w_{i}}\frac{1}{w_{i}}=1+\frac{1}{\alpha-1}$. Thus 
\[
\E_{\Y}\P_{\bm{\mu},\bm{w}}\left(\|\wt\b-\b^{0}\|_{2}^{2}>M_{n}q\log\left(\frac{n}{q}\right)\C \Y\right)\lesssim\frac{1}{M_{n}}\left(\sigma^{2}+\frac{\sigma^{2}}{\alpha-1}\right).
\]
An ideal choice of $\alpha$ should thus satisfy $\alpha-1\propto\sigma^{2}$.
We thereby suggest choosing $\alpha\gtrsim\sigma^{2}\log[\frac{(1-\theta)\lambda_{0}}{\theta\lambda_{1}}]$
when $\bm{w}\sim n\times\text{Dir}(\alpha,\cdots,\alpha)$, observing
that $w_{i}\sim n\times\text{Beta}(\alpha,(n-1)\alpha)$ and thus
\[
\E_{w_{i}}\frac{1}{w_{i}}=\frac{1}{n}\frac{\alpha+(n-1)\alpha-1}{\alpha-1}=1+\frac{n-1}{n(\alpha-1)}.
\]

\subsection{Explanation of Remarks below  Corollary \ref{normal_means_weight}}
\label{sec:appendix_normal_means_remark}

We utilize the notation introduced in Section \ref{sec_appendix:normal_means_notations}.
Here we want to show that for $\bm{w}\sim n\times\text{Dir}(\alpha,\cdots,\alpha)$
where $0<\alpha<2$ is a fixed constant, the risk for BB-SSL arising
from \eqref{eq: normal_means_model} can be arbitrarily large.

From the proof of Theorem \ref{normal_means_weight}, we know that
for active coordinate $i$, when $n$ is sufficiently large, 
\begin{equation}\label{eq_appendix:remark}
\begin{split}
& \E_{\bm{Y}}\E_{\bm{\mu,w}}[\|\widetilde{\bm{\beta}}-\bm{Y}\|_{2}^{2}\mid\bm{Y}]\\
\ge & \E_{Y_{i}}\left[\E_{\mu_{i},w_{i}}\left[(\widetilde{\beta_{i}}-Y_{i})^{2}\1\left(\sqrt{w_{i}}(Y_{i}-\mu_{i})>\Delta(\lambda_{0}w_{i}^{-1/2},\lambda_{1}w_{i}^{-1/2})\right)\1\left(|Y_{i}-\mu_{i}|\ge w_{i}^{-1}\lambda^{*}(\tilde{\beta}_{i}-\mu_{i})\right)\mid Y_{i}\right]\right]\\
\stackrel{(a)}{=} & \E_{Y_{i}}\left[\E_{\mu_{i},w_{i}}\left[\left(w_{i}^{-1}\lambda^{*}(\tilde{\beta}_{i}-\mu_{i})\right){}^{2}\1\left(\sqrt{w_{i}}(Y_{i}-\mu_{i})>\Delta(\lambda_{0}w_{i}^{-1/2},\lambda_{1}w_{i}^{-1/2})\right)\right.\right.
\\&\quad\quad\left.\left.\1\left(|Y_{i}-\mu_{i}|\ge w_{i}^{-1}\lambda^{*}(\tilde{\beta}_{i}-\mu_{i})\right)\mid Y_{i}\right]\right]\\
\ge & \lambda_{1}^{2}\E_{Y_{i},\mu_{i}}\left[\E_{w_{i}}\left[\frac{1}{{w_{i}^{2}}}
\1\left({w_{i}}>\frac{\Delta^{2}(\lambda_{0}w_{i}^{-1/2},\lambda_{1}w_{i}^{-1/2})}{(Y_{i}-\mu_{i})^{2}}\right)\1\left(w_{i}\ge\frac{\lambda_{0}}{|Y_{i}-\mu_{i}|}\right)
\right.\right.
\\&\quad\quad\quad\quad\left.\left.\1\left(g(0;\lambda_0w_i^{-1/2},\lambda_1w_i^{-1/2})>0,\,\lambda_0w_i^{-1/2}-\lambda_1w_i^{-1/2}>2\right)\mid Y_{i},\mu_{i}\right]\right]\\
\stackrel{(b)}{\ge} & \lambda_{1}^{2}\E_{Y_{i},\mu_{i}}\left[\E_{w_{i}}\left[\frac{1}{{w_{i}^{2}}}
\1\left({w_{i}}>\frac{[\Delta^{U}(\lambda_{0},\lambda_{1})+\lambda_{1}(w_i^{-1/2}-1)]^{2}}{(Y_{i}-\mu_{i})^{2}}\right)\1\left(w_{i}\ge\frac{\lambda_{0}}{|Y_{i}-\mu_{i}|}\right)
\1\left(w_{i}<B_{n}\right)\mid Y_{i},\mu_{i}\right]\right]\\
\ge & \lambda_{1}^{2}\E_{Y_{i},\mu_{i}}\left[\E_{w_{i}}\left[\frac{1}{{w_{i}^{2}}}
\1\left({w_{i}}>\frac{[\Delta^{U}(\lambda_{0},\lambda_{1})+\lambda_{1}(\sqrt{|Y_i-\mu_i|/\lambda_0}-1)]^{2}}{(Y_{i}-\mu_{i})^{2}}\right)\right.\right.
\\&\quad\quad\quad\quad\left.\left.\1\left(w_{i}\ge\frac{\lambda_{0}}{|Y_{i}-\mu_{i}|}\right)\1\left(w_{i}<B_{n}\right)\mid Y_{i},\mu_{i}\right]\right]\\
\ge & \lambda_{1}^{2}\E_{Y_{i},\mu_{i}}\left[\E_{w_{i}}\left[\frac{1}{{w_{i}^{2}}}\1\left({w_{i}}>C(Y_{i},\mu_{i})\right)\1\left(w_{i}<B_{n}\right)\mid Y_{i},\mu_{i}\right]\right],
\end{split}
\end{equation}
where $(a)$ follows from Equation \eqref{eq_appendix:NN_solution},
$(b)$ follows from Lemma \ref{lemma_appendix:Delta_bound}, and we denote
\[
B_{n}=\min\left\{\frac{(\lambda_{0}-\lambda_{1})^{2}(1-p^{*}(0;\lambda_{0},\lambda_{1}))^{2}}{2\log[1/p^{*}(0;\lambda_{0},\lambda_{1})]},\frac{(\lambda_0-\lambda_1)^2}{4}\right\},
\]
\[
C(Y_{i},\mu_{i})=\max\left\{ \frac{[\Delta^{U}(\lambda_{0},\lambda_{1})+\lambda_{1}(\sqrt{|Y_i-\mu_i|/\lambda_0}-1)]^{2}}{(Y_{i}-\mu_{i})^{2}},
\frac{\lambda_{0}}{|Y_{i}-\mu_{i}|}\right\}.
\]
Notice that $n\text{Beta}(\alpha,n\alpha-\alpha)\xrightarrow{\text{d}}\frac{1}{\alpha}\text{Gamma}(\alpha,1)$
and as $n\rightarrow\infty$,
\[
\E_{w_{i}}\left[\frac{1}{{w_{i}^{2}}}\1\left(B_{n}>{w_{i}}>C(Y_{i},\mu_{i})\right)\mid Y_{i},\mu_{i}\right]\rightarrow\E_{v\sim\text{Gamma}(\alpha,1)}\left[\frac{\alpha^{2}}{{v^{2}}}\1\left(\alpha B_{n}>{v}>\alpha C(Y_{i},\mu_{i})\right)\mid Y_{i},\mu_{i}\right],
\]
\[
\E_{v\sim\text{Gamma}(\alpha,1)}\left[\frac{\alpha^{2}}{{v^{2}}}\1\left(\alpha B_{n}>{v}>\alpha C(Y_{i},\mu_{i})\right)\mid Y_{i},\mu_{i}\right]\le\frac{1}{C^{2}(Y_{i},\mu_{i})},
\]
\[
\frac{1}{C^{2}(Y_{i},\mu_{i})}\le\frac{|Y_{i}-\mu_{i}|^{2}}{\lambda_0^{2}},
\]
where $\frac{|Y_{i}-\mu_{i}|^{2}}{\lambda_{0}^{2}}$ is integrable.
Thus, direct application of the Dominated Convergence Theorem shows
that as $n\rightarrow\infty$,
\begin{align*}
& \E_{Y_{i},\mu_{i}}\left[\E_{w_{i}}\left[\frac{1}{{w_{i}^{2}}}\1\left({w_{i}}>C(Y_{i},\mu_{i})\right)\1\left(w_{i}<B_{n}\right)\mid Y_{i},\mu_{i}\right]\right]\\
\rightarrow & \E_{Y_{i},\mu_{i}}\left[\E_{v\sim\text{Gamma}(\alpha,1)}\left[\frac{\alpha^{2}}{{v^{2}}}\1\left(\alpha B_{n}>{v}>\alpha C(Y_{i},\mu_{i})\right)\mid Y_{i},\mu_{i}\right]\right].
\end{align*}
Plugging into Equation \eqref{eq_appendix:remark}, we know that when $n$ is sufficiently
large,
\begin{align*}
& \E_{\bm{Y}}\E_{\bm{\mu,w}}[\|\widetilde{\bm{\beta}}-\bm{Y}\|_{2}^{2}\mid\bm{Y}]\\
\ge & \frac{1}{2}\lambda_{1}^{2}\E_{Y_{i},\mu_{i}}\left[\E_{v\sim\text{Gamma}(\alpha,1)}\left[\frac{\alpha^{2}}{{v^{2}}}\1\left(\alpha B_{n}>{v}>\alpha C(Y_{i},\mu_{i})\right)\mid Y_{i},\mu_{i}\right]\right]\\
= & \frac{1}{2}\lambda_{1}^{2}\alpha^{2}\E_{Y_{i},\mu_{i}}\left[\E_{z\sim\text{Inv-Gamma}(\alpha,1)}\left[z^{2}\1(\frac{1}{\alpha B_{n}}<{z}<\frac{1}{C(Y_{i},\mu_{i})\alpha})\mid Y_{i},\mu_{i}\right]\right].
\end{align*}

For all constants $M,m$ which satisfy $M>2m>0$, we have
\begin{align*}
& \E_{z\sim\text{Inv-Gamma}(\alpha,1)}z^{2}\1(m<{z}<M)=\frac{1}{\Gamma(\alpha)}\int_{m}^{M}z^{2}\frac{1}{z^{\alpha+1}}e^{-1/z}dz=\frac{1}{\Gamma(\alpha)}\int_{m}^{M}z^{1-\alpha}e^{-1/z}dz\\
\stackrel{x=1/z}{=} & \frac{1}{\Gamma(\alpha)}\int_{1/M}^{1/m}x^{\alpha-3}e^{-x}dx>\frac{1}{\Gamma(\alpha)}\int_{1/M}^{2/M}x^{\alpha-3}e^{-x}dx
=\frac{M^{2-\alpha}\exp\left\{ -2/M\right\} }{\Gamma(\alpha)2^{3-\alpha}}.
\end{align*}
We set
\[
M=\frac{1}{C(Y_{i},\mu_{i})\alpha},\quad m=\frac{1}{\alpha B_{n}}.
\]
Then, suppose $\beta_{i}^{0}>2$, when $n$ is sufficiently large,
we have
\begin{align*}
& \E_{\bm{Y}}\E_{\bm{\mu,w}}[\|\widetilde{\bm{\beta}}-\bm{Y}\|_{2}^{2}\mid\bm{Y}]\\
\ge & \frac{1}{2}\lambda_{1}^{2}\alpha^{2}\E_{Y_{i},\mu_{i}}\left[\E_{z\sim\text{Inv-Gamma}(\alpha,1)}\left[z^{2}\1(m<{z}<M)\1\left(M>2m\right)\mid Y_{i},\mu_{i}\right]\right]\\
\ge & \frac{1}{2}\lambda_{1}^{2}\alpha^{2}\E_{Y_{i},\mu_{i}}\left[\frac{M^{2-\alpha}\exp\left\{ -2/M\right\} }{\Gamma(\alpha)2^{3-\alpha}}\1\left(M>2m\right)\right]\\
= & \frac{\lambda_{1}^{2}\alpha^{\alpha}}{2^{4-\alpha}\Gamma(\alpha)}\E_{Y_{i},\mu_{i}}\left[\left(\frac{1}{C(Y_{i},\mu_{i})}\right)^{2-\alpha}\exp\left\{ -2C(Y_{i},\mu_{i})\alpha\right\} \1\left(C(Y_{i},\mu_{i})<B_{n}/2\right)\right]\\
\ge & \frac{\lambda_{1}^{2}\alpha^{\alpha}}{2^{4-\alpha}\Gamma(\alpha)}\E_{Y_{i},\mu_{i}}\left[\left(\frac{1}{C(Y_{i},\mu_{i})}\right)^{2-\alpha}\exp\left\{ -2C(Y_{i},\mu_{i})\alpha\right\} \1\left(C(Y_{i},\mu_{i})<B_{n}/2\right)\1\left(|Y_{i}-\beta_{i}^{0}|\le1\right)\1(|\mu_{i}|<1)\right]\\
\ge & \frac{\lambda_{1}^{2}\alpha^{\alpha}}{2^{4-\alpha}\Gamma(\alpha)}\E_{Y_{i},\mu_{i}}\left[\left(\frac{1}{C_{n}}\right)^{2-\alpha}\exp\left\{ -2C_{n}\alpha\right\} \1\left(C_{n}<B_{n}/2\right)\1\left(|Y_{i}-\beta_{i}^{0}|\le1\right)\1(|\mu_{i}|<1)\right]\\
= & \frac{\lambda_{1}^{2}\alpha^{\alpha}}{2^{4-\alpha}\Gamma(\alpha)}\left(1/C_{n}\right)^{2-\alpha}\exp\left\{ -2C_{n}\alpha\right\} \1\left(C_{n}<B_{n}/2\right)\P\left(|N(0,1)|\le1\right)\P(|\mu_{i}|<1)\\
= & C\frac{\lambda_{1}^{2}\alpha^{\alpha}}{2^{4-\alpha}\Gamma(\alpha)}\left(1/C_{n}\right)^{2-\alpha}\exp\left\{ -2C_{n}\alpha\right\} \1\left(C_{n}<B_{n}/2\right),
\end{align*}
where 
\[
C_{n}=\max\left\{ \frac{[\Delta^{U}(\lambda_{0},\lambda_{1})+\lambda_{1}(\sqrt{(\beta_{i}^{0}+2)/\lambda_{0}}-1)]^{2}}{(\beta_{i}^{0}-2)^{2}},\frac{\lambda_{0}}{|\beta_{i}^{0}-2|}\right\} .
\]
Notice that $C_{n}<B_{n}/2$ is always satisfied when $n$ is sufficiently
large, so we know that when $n$ is sufficiently large,
\[
\E_{\bm{Y}}\E_{\bm{\mu,w}}[\|\widetilde{\bm{\beta}}-\bm{Y}\|_{2}^{2}\mid\bm{Y}]\ge C\frac{\lambda_{1}^{2}\alpha^{\alpha}}{2^{4-\alpha}\Gamma(\alpha)}\left(1/C_{n}\right)^{2-\alpha}\exp\left\{ -2C_{n}\alpha\right\},
\]
where the lower bound depends on $\beta_{i}^{0}$ through $C_n$. For any fixed,
sufficiently large $n$, when $\beta_{i}^{0}$ becomes larger and
larger, $C_{n}$ becomes smaller and smaller, and as $\beta_{i}^{0}\rightarrow\infty$,
$C_{n}\rightarrow0$, and the lower bound goes to infinity, which
implies that $\E_{\bm{Y}}\E_{\bm{\mu,w}}[\|\widetilde{\bm{\beta}}-\bm{Y}\|_{2}^{2}\mid\bm{Y}]\rightarrow\infty$ as $\beta_i^0\rightarrow\infty$.

Thus, we conclude that the risk $\E_{\bm{Y}}\E_{\bm{\mu,w}}[\|\widetilde{\bm{\beta}}-\bm{Y}\|_{2}^{2}\mid\bm{Y}]$
depends on the truth $\bm{\beta^{0}}$ and for any fixed, sufficiently
large $n$, when one of the coordinates $\beta_{i}^{0}$ becomes arbitrarily
large, the risk can also be arbitrarily large.

\subsection{Proof of Theorem \ref{thm: regression_weight_selected} and Theorem \ref{thm:regression_weight}}
\label{sec_appendix:regression}

\subsubsection{Definitions and Lemmas used in Theorem \ref{thm: regression_weight_selected} and Theorem \ref{thm:regression_weight}}\label{sec:appendix_regression_proof}

We rewrite model \eqref{eq:model} in a matrix form $\Y=\X\b+\bm{\epsilon}$ and   denote 
$$
\pen(\b\mid\theta)=\log\left[\frac{\pi(\b\C\theta)}{\pi(\bm{0}_{p}\C\theta)}\right],
$$ 
where $\bm 0_p\in\mathbb{R}^p$ is a vector of all 0's.
We write $\bm{W}=\text{diag}(\sqrt{w_{1}},\cdots,\sqrt{w_{n}})\in\mathbb{R}^{n\times n}$ and   denote with
$$
Q(\b)=-\frac{1}{2\sigma^2}\|\bm{W}\Y-\bm{W}\X\bm{\mu}-\bm{W}\X\b\|_2^{2}+\pen(\b\C\theta)
$$
and with
\begin{equation}\label{eq:bhat}
\wh\b=\arg\max_{\b}Q(\b).
\end{equation}
Notice that  the BB-SSL solution $\wt\b$ satisfies  $\wt\b=\wh\b+\bm{\mu}$. The matrix norm $\|\cdot\|_{a}$ is defined
as $\|\X\|_{a}=\sup_{\b}\frac{\|\X\b\|_{a}}{\|\b\|_{a}}$ where $\|\cdot\|_{a}$
is the vector $a$-norm. Write $\Th=\wh\b-\b_{0}$. We use $\|\cdot\|_{1}$ to denote the vector 1-norm. We define 
\begin{equation}\label{def_appendix:delta_regression}
\Delta=\inf_{t>0}[nt/2-\sigma^2\rho(t\C\theta)/t],
\end{equation}
with $\rho(t\C\theta)$ defined as 
$$
\rho(t\C\theta)=-\lambda_{1}|t|+\log[p^{*}(0)/p^{*}(t)],
$$ 
where $p^*(t)$ is equal to the $p^*(t;\lambda_0,\lambda_1)$ defined in Equation \eqref{eq:pstar}. We fix $\lambda_0,\lambda_1$ in $p^*(t;\lambda_0,\lambda_1)$ and thus simply write $p^*(t)$. 
Notice that 
$$
\pen(\b\mid\theta)=\sum_{j=1}^p\rho(\beta_j\mid\theta).
$$
The proofs
below use similar ideas and techniques as \citet{rovckova2018spike}. We first outline auxiliary Lemmas and then prove them later in this section.

\begin{definition}\label{def:eta-NC} Let $\tilde{\eta}\in(0,1]$.
	We say that $\X$  satisfies the $\tilde{\eta}$-null
	consistency ($\tilde{\eta}$-NC) condition with a penalty function $\text{pen}(\b\mid\theta)$ if 
	\[
	\arg\max_{\b\in\R^{p}}\left\{-\frac{1}{2\sigma^2}\|\bm{\epsilon}/\tilde{\eta}-\X\b\|_2^{2}+\pen(\b\mid\theta)\right\}=\bm{0}_{p}.
	\]
\end{definition}


\begin{lemma}\label{BBLASSO_identifiability_Zhanglemma} Under Condition
	(5) in Theorem \ref{thm: regression_weight_selected}, we have 
	\[
	\lim_{n\rightarrow\infty}\P_{\b_0}\left(\arg\max_{\b\in\R^{p}}\left\{-\frac{1}{2\sigma^2}\|\bm{\epsilon}/\tilde{\eta}-\X\b\|_2^{2}+\pen(\b\C\theta)\right\}=\bm{0}_{p}\right)=1,
	\]
	i.e. $\X$ satisfies the $\tilde{\eta}$-NC condition with probability
	approaching 1. \end{lemma}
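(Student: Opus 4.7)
The plan is to reduce the $\tilde\eta$-null consistency statement to a coordinate-wise KKT inequality of the form $\max_{j}|\X_j^T\bm\epsilon|/\tilde\eta\le\Delta$, and then upgrade that local condition to global optimality via the sparse-eigenvalue identifiability framework of \cite{zhang2012general} encoded in Condition~(5). First I would invoke the global-mode characterization for the separable SSL objective, in direct analogy with Lemma~\ref{lemma_appendix:delta_original_bound} and Theorem~3.1 of \cite{rovckova2018spike}: $\bm 0_p$ is the unique global maximizer of $F(\b)=-\frac{1}{2\sigma^2}\|\bm\epsilon/\tilde\eta-\X\b\|_2^2+\pen(\b\C\theta)$ provided that (a) $|\X_j^T\bm\epsilon|/\tilde\eta\le\Delta$ for every $j=1,\dots,p$, with $\Delta$ as in \eqref{def_appendix:delta_regression}, and (b) no competing sparse stationary point dominates $\bm 0_p$.

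Next I would control both sides of the threshold inequality. Under $\lambda_1<1/e^2$, $\theta\asymp p^{-\eta}$ and $\lambda_0\asymp p^{\gamma}$, we have $p^*(0)\asymp p^{-(\eta+\gamma)}$, so a two-mode analysis of $nt/2-\sigma^2\rho(t\C\theta)/t$ (as in Lemma~1.2 in the Appendix of \cite{rovckova2018bayesian}, adapted to the rescaling $\|\X_j\|_2=\sqrt n$) yields
\[
\Delta\;\ge\;\sigma\sqrt{2n\log\bigl[1/p^*(0)\bigr]-2n}\;+\;\sigma^{2}\lambda_1\;\gtrsim\;\sigma\sqrt{2n(\eta+\gamma)\log p}.
\]
On the other hand, since each $\X_j^T\bm\epsilon\sim\mathcal N(0,n\sigma^2)$, the standard Gaussian tail bound combined with a union bound over $j\in\{1,\dots,p\}$ gives
\[
\P_{\b_0}\!\left(\max_{1\le j\le p}|\X_j^T\bm\epsilon|\;>\;\sigma\sqrt{2.5\,n\log p}\right)\;\longrightarrow\;0.
\]
Condition~(6) in Theorem~\ref{thm: regression_weight_selected} was chosen precisely so that $\widetilde\eta>\frac{\sqrt 5}{2\sqrt{\eta+\gamma}}(1+\xi_0)\ge \frac{\sqrt{2.5}}{\sqrt{2(\eta+\gamma)}}$, which forces $\sigma\sqrt{2.5\,n\log p}/\tilde\eta<\Delta$ with probability tending to one, delivering part (a).

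Finally, to pass from the coordinate-wise condition to genuine global optimality I would invoke the general identifiability theorem of \cite{zhang2012general}: the restricted-eigenvalue bound in Condition~(5), with its inflation factor $(1+\xi_0)^2(1+\sqrt{2.5\log p})^2$, calibrates exactly with the high-probability noise bound above to rule out any additional sparse stationary point of $F$ with support size $k\le 2n/[(1+\xi_0)^2(1+\sqrt{2.5\log p})^2]$. This yields part (b) and thereby the claim. The main obstacle, as one would expect with the non-convex SSL penalty, is matching the constants across the three pieces: the Gaussian exponent $2.5$ in the union bound, the quantity $\sqrt{2(\eta+\gamma)\log p}$ controlling $\Delta$, and the $(1+\xi_0)$ inflation from \cite{zhang2012general}; it is the careful alignment of these three constants that produces the precise threshold $\sqrt{5}(1+\xi_0)/(2\sqrt{\eta+\gamma})$ appearing in Condition~(6), and verifying this dovetailing will be the technical crux.
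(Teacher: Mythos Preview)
Your overall strategy is the same as the paper's: both arguments rest on Proposition~3 of \cite{zhang2012general}, and both pin down the threshold comparison between $\Delta$ and $(1+\xi_0)\sigma\tilde\eta^{-1}n^{1/2}(1+\sqrt{2.5\log p})$ via the asymptotic $\Delta\sim\sigma\sqrt{2n(\eta+\gamma)\log p}$ together with the constraint $\tilde\eta>\tfrac{\sqrt5}{2\sqrt{\eta+\gamma}}(1+\xi_0)$. The paper applies Zhang's proposition as a black box, whereas you partly unpack it by separating a coordinate-wise KKT step from the sparse-eigenvalue step; that organizational difference is harmless.

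There is, however, one genuine gap. Zhang's Proposition~3 has \emph{three} hypotheses, not two: besides the eigenvalue bound (your Condition~(5)) and the size condition $\Delta/n\ge(1+\xi_0)\sigma\tilde\eta^{-1}n^{-1/2}(1+\sqrt{2\log(2p/\delta)})$, it requires the penalty to satisfy
\[
-\frac{\sigma^{2}}{n}\rho(t\mid\theta)\;\ge\;\min\!\left\{\frac{\Delta^{2}}{2n^{2}},\,\frac{\Delta|t|}{n}\right\}\qquad\text{for all }t.
\]
For the $\ell_1$ penalty this is automatic, but for the SSL penalty it is not, and your proposal never addresses it. The paper devotes a separate calculation to this: it introduces
\[
f_{1}(t)=\frac{1}{t}\log\frac{p^{*}(t)}{p^{*}(0)}-\frac{1}{\sigma}\sqrt{2n\log[1/p^{*}(0)]},\qquad
f_{2}(t)=\lambda_{1}t+\log p^{*}(t)-\frac{\sigma^{2}\lambda_{1}^{2}}{2n}-\lambda_{1}\sqrt{\tfrac{2\sigma^{2}}{n}\log[1/p^{*}(0)]},
\]
and shows $\max\{f_1(t),f_2(t)\}\ge 0$ for all $t>0$ by analyzing the sign of $f_1$ near $0$ (via L'H\^{o}pital) and of $f_2$ away from $0$. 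Without this step your invocation of Zhang's identifiability theorem for part~(b) is incomplete: the eigenvalue condition alone does not rule out competing stationary points unless the penalty obeys the above lower bound. This is the ``technical crux'' you anticipate, but it sits in a different place than the constant-matching you describe.
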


\begin{lemma}\label{BBLASSO_identifiability_lemma} Under Conditions
	(1)-(5) in Theorem \ref{thm: regression_weight_selected}, given 
	that $\|\ep\|_{\infty}\lesssim\sqrt{\log n}$ and $\X$ satisfies $\tilde{\eta}$-NC
	condition, we have 
	\[
	\lim_{n\rightarrow\infty}\P_{\m,\bm{w}}\left(\arg\max_{\b\in\R^{p}}\left\{ -\frac{1}{2\sigma^2}\|\bm{W}(\bm{\epsilon}-\X\bm{\mu})/\eta^{*}-\bm{W}\X\b\|_2^{2}+pen(\b\C\theta)\right\} =\bm{0}_{p}\,\big|\,\ep\right)=1,
	\]
	where $\eta^{*}=\max\left\{ \wt\eta+C_{n}\frac{\|\X\|}{\lambda_{1}},\frac{\wt\eta}{m}\right\} $
	and $C_{n}$ is any sequence that satisfies $C_{n}\rightarrow\infty$.
\end{lemma}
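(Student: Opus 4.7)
My strategy is to transfer the deterministic $\wt\eta$-NC property granted by Lemma~\ref{BBLASSO_identifiability_Zhanglemma} to the randomly weighted and jittered BB-SSL objective, by verifying a coordinate-wise score bound at $\bm 0_p$ and invoking the SSL mode characterization (Theorem 3.1 of \cite{rovckova2018spike}, restated as Lemma~\ref{lemma_appendix:delta_original_bound}). Because $\pen(\cdot\C\theta)$ is coordinate-separable, $\bm 0_p$ is the global argmax of the BB-SSL null objective provided that, with $\P_{\m,\bm w}$-probability $1-o(1)$, (i) the per-coordinate score $z_j^{BB}:=\frac{1}{\eta^{*}}(\bm W\X_j)^{T}\bm W(\ep-\X\m)$ lies below the weighted threshold $\|\bm W\X_j\|_{2}^{2}\,\Delta_j^{BB}$ for every $j$, with $\Delta_j^{BB}$ the analogue of \eqref{def_appendix:delta_regression} for the weighted column norm $\|\bm W\X_j\|_2^2$, and (ii) the separation condition $g(0;\lambda_{0}/\|\bm W\X_j\|,\lambda_{1}/\|\bm W\X_j\|)>0$ holds. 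Claim (ii) follows from Cond.\ (2)--(3) pinning $\|\bm W\X_j\|_{2}^{2}\in[mn,Mn]$ w.h.p., combined with the growth $\lambda_{0}\asymp p^{\gamma}$, $\gamma\geq 1$.

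\textbf{Score bounds.} First, I would extract the quantitative form of $\wt\eta$-NC: applying Lemma~\ref{lemma_appendix:delta_original_bound} coordinate-wise at $\bm 0_p$ to the unweighted null problem yields $|\X_j^{T}\ep|\leq \wt\eta\,\Delta$ for every $j$, with $\Delta$ given by \eqref{def_appendix:delta_regression}. Next, I would decompose
\[
\eta^{*}\,z_j^{BB}=\sum_{i=1}^{n}w_iX_{ij}\epsilon_i\;-\;\X_j^{T}\bm W^{T}\bm W\X\m,
\]
and treat the two summands separately. For the noise piece, split $\sum_i w_iX_{ij}\epsilon_i=\X_j^{T}\ep+\sum_i(w_i-1)X_{ij}\epsilon_i$; the first summand is controlled by the $\wt\eta$-NC bound, while the second has variance $\sum_iX_{ij}^{2}\epsilon_i^{2}\,\mathrm{Var}(w_i)+\sum_{i\neq k}X_{ij}X_{kj}\epsilon_i\epsilon_k\,\mathrm{Cov}(w_i,w_k)\lesssim n$ by Cond.\ (4) and the hypothesis $\|\ep\|_{\infty}\lesssim\sqrt{\log n}$, so a Chebyshev inequality plus union bound over $j$ gives $\max_j|\sum_i(w_i-1)X_{ij}\epsilon_i|=O_p(\sqrt{n\log p})$, which is negligible relative to $\Delta$. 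Together with $\eta^{*}\geq\wt\eta/m$ and $\|\bm W\X_j\|_{2}^{2}\geq mn$, this yields $|\sum_i w_iX_{ij}\epsilon_i|/\eta^{*}\leq\tfrac12\|\bm W\X_j\|_{2}^{2}\Delta_j^{BB}$ with slack. For the prior-shift piece, the spike distribution gives $\|\m\|_{2}=O_p(\sqrt p/\lambda_{0})$ by Markov, and $\bm W^{T}\bm W\preceq M\cdot I$ bounds $|\X_j^{T}\bm W^{T}\bm W\X\m|\lesssim M\sqrt n\,\|\X\|\,\|\m\|_{2}$. The choice $\eta^{*}\geq C_{n}\|\X\|/\lambda_{1}$ then produces a bound of order $M\sqrt{np}\,\lambda_{1}/(C_{n}\lambda_{0})$, which is $o(\Delta)$ because $\lambda_{0}\gg\lambda_{1}$ and $C_{n}\to\infty$.

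\textbf{Main obstacle.} The delicate point is bridging the per-coordinate KKT verification to a genuine global-argmax statement, since the SSL objective is non-convex and local stationarity at $\bm 0_p$ does not automatically imply global optimality; this is where the separation condition $g(0;\cdot,\cdot)>0$ enters, and verifying it uniformly in $j$ under the random weighted column norms requires the two-sided concentration $\|\bm W\X_j\|_{2}^{2}\in[mn,Mn]$ w.h.p.\ provided by Cond.\ (2)--(4). A secondary subtlety is that the weighted threshold $\Delta_j^{BB}$ itself depends on $\bm w$, but by the same concentration it remains of the same order as $\Delta$ uniformly in $j$ with probability $1-o(1)$. The conclusion then follows by intersecting three $1-o(1)$ events: bounded and concentrated weights (Cond.\ (2)--(4)), a small $\ell_{\infty}$ norm of $\m$ from the spike tail, and the Chebyshev control of the weighted-noise fluctuation.
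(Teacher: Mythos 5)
There is a genuine gap at the core of your argument. You reduce the claim to a coordinate-wise KKT check at $\bm 0_p$ (per-coordinate scores $z_j^{BB}$ below per-coordinate thresholds $\Delta_j^{BB}$) and then hope to upgrade this to global optimality. But the thresholding characterization in Lemma \ref{lemma_appendix:delta_original_bound} is a \emph{necessary} condition satisfied by the global mode, not a sufficient one: for a correlated design the SSL objective is neither convex nor coordinate-separable, so verifying the univariate stationarity/threshold condition in each coordinate (with the others held at zero) only certifies that $\bm 0_p$ is a coordinate-wise local maximizer. You flag this yourself as the ``main obstacle,'' but your proposed resolution --- the separation condition $g(0;\cdot,\cdot)>0$ together with two-sided concentration of $\|\bm W\X_j\|_2^2$ --- only controls the \emph{univariate} non-convexity of each marginal subproblem; it does nothing about the cross-coordinate coupling, which is exactly what the null-consistency condition and the sparse-eigenvalue Condition (5) are there to handle (via Proposition 3 of \cite{zhang2012general}). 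Relatedly, you only ever extract from the $\wt\eta$-NC hypothesis its coordinate-wise consequence $|\X_j^T\ep|\le\wt\eta\Delta$; that weaker statement cannot by itself yield global optimality of $\bm 0_p$ for the perturbed problem.

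The paper's proof avoids this entirely by transferring the NC property as a \emph{global} inequality on objective values. For each $\b$ it bounds, conditionally on $\ep$, the fluctuation $\mathrm{Var}_{\m,\bm w}[(\ep-\X\m)^T\W^2\X\b]\lesssim\|\X\b\|_2^2$ using Conditions (4)--(5), so that by Chebyshev $(\ep-\X\m)^T\W^2\X\b\le|\ep^T\X\b|+C_n\|\X\b\|_2$ with probability tending to one; the slack $C_n\|\X\b\|_2$ is then absorbed into the penalty via $\|\X\b\|_2\le-\tfrac{\|\X\|}{\lambda_1}\pen(\b\C\theta)$, and together with $\|\W\X\b\|_2^2\ge m\|\X\b\|_2^2$ and the definition $\eta^{*}=\max\{\wt\eta+C_n\|\X\|/\lambda_1,\wt\eta/m\}$ one obtains
\[
-\frac{\eta^{*}}{2\sigma^2}\|\W\X\b\|_2^{2}+(\ep-\X\m)^{T}\W^{2}\X\b+\eta^{*}\pen(\b\C\theta)\;\le\;-\frac{\wt\eta}{2\sigma^2}\|\X\z\|_2^{2}+\ep^{T}\X\z+\wt\eta\,\pen(\z\C\theta)\;\le\;0
\]
with $\z=\pm\b$, the last inequality being exactly the $\wt\eta$-NC condition. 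This pointwise domination of objective values is what makes global optimality of $\bm 0_p$ transfer; your score-level argument does not supply it. (Your bounds on the noise and prior-shift pieces are broadly in the right spirit, but note also that $\eta^{*}$ must lie in $(0,1)$ by Condition (6), so dividing the score by $\eta^{*}$ inflates rather than shrinks it; the $C_n\|\X\|/\lambda_1$ term in $\eta^{*}$ is there to absorb the fluctuation into the penalty, not to damp the prior-shift term.)
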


\begin{lemma}\label{BBLASSO_cone_lemma} If, for $\eta^{*}\in(0,1]$, 
	\begin{equation}\label{eq:assumption_cone}
	\arg\max_{\b\in\R^{p}}\left\{ -\frac{1}{2\sigma^2}\|\bm{W}(\bm{\epsilon}-\X\bm{\mu})/\eta^{*}-\bm{W}\X\b\|_2^{2}+\pen(\b\C\theta)\right\} =\bm{0}_{p},
	\end{equation}
	then $\wh\b$ defined in \eqref{eq:bhat} lies inside a
	cone 
	$$
	C(\eta^{*};\b)=\{\Th\in\R^{p}:(\eta^{*}+1)\pen(\Th_{S}\C\theta)\leq(1-\eta^{*})\pen(\Th_{S^{C}}\C\theta)\}
	$$
	with  high probability, where $S$ is the active set of $\beta_{j}^0$'s, $S^C=\{1,2,\cdots,p\}\backslash S$ and
	$$\pen(\Th_S\mid \theta)=\sum_{j\in S}\rho(\hat\beta_j-\beta^0_j\mid\theta) \quad\text{and}\quad \pen(\Th_{S^C}\mid \theta)=\sum_{j\in S^C}\rho(\hat\beta_j-\beta^0_j\mid\theta),$$
	with $\hat\beta_j$ the $j$-th dimension of $\hat\b$.
\end{lemma}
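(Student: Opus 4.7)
The plan is to derive a basic inequality from the fact that $\wh\b$ maximizes $Q$, combine it with the assumed NC identity \eqref{eq:assumption_cone} to eliminate the random cross-term, and then extract the cone membership using a subadditivity property of the SSL penalty magnitude.

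First I would set $\Th = \wh\b - \b_0$, substitute $\Y = \X\b_0 + \bm\epsilon$ into $Q(\b_0+\Th) \geq Q(\b_0)$, and expand the squared norm to obtain
\begin{equation*}
\frac{1}{\sigma^{2}}\bigl(\bm W(\bm\epsilon - \X\bm\mu)\bigr)^{\top}\bm W\X\Th \;\geq\; \tfrac{1}{2\sigma^{2}}\|\bm W\X\Th\|_{2}^{2} + \pen(\b_0\C\theta) - \pen(\b_0+\Th\C\theta).
\end{equation*}
Similarly, since $\bm 0_{p}$ maximizes the NC functional in \eqref{eq:assumption_cone} and $\pen(\bm 0_{p}\C\theta)=0$, evaluating that functional at $\b=\Th$ and comparing with its value at $\bm 0_{p}$ yields
\begin{equation*}
\frac{1}{\sigma^{2}}\bigl(\bm W(\bm\epsilon - \X\bm\mu)\bigr)^{\top}\bm W\X\Th \;\leq\; \eta^{*}\Bigl[\tfrac{1}{2\sigma^{2}}\|\bm W\X\Th\|_{2}^{2} - \pen(\Th\C\theta)\Bigr].
\end{equation*}
Chaining the two displays eliminates the noise inner product and yields the workhorse inequality $(1-\eta^{*})\tfrac{1}{2\sigma^{2}}\|\bm W\X\Th\|_{2}^{2} \leq \pen(\b_0+\Th\C\theta) - \pen(\b_0\C\theta) - \eta^{*}\pen(\Th\C\theta)$, whose left-hand side is non-negative.

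Next I would split the first-difference penalty along the partition $\{S,S^{C}\}$. On $S^{C}$ the difference equals $\pen(\Th_{S^{C}}\C\theta)$ exactly, since $\beta^{0}_{j}=0$ there. On $S$ I would invoke the subadditivity of $-\rho(\cdot\C\theta)$: because $\rho(0\C\theta)=0$ and a direct calculation gives $p^{*}(s)p^{*}(t)\geq p^{*}(s+t)p^{*}(0)$ for $s,t\geq 0$ (which reduces to $(e^{(\lambda_{0}-\lambda_{1})s}-1)(e^{(\lambda_{0}-\lambda_{1})t}-1)\geq 0$), one obtains $\rho(|a|+|b|\C\theta)\geq \rho(|a|\C\theta)+\rho(|b|\C\theta)$. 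Combined with the triangle inequality $|\beta^{0}_{j}|\leq |\beta^{0}_{j}+\Th_{j}|+|\Th_{j}|$ and the monotonicity of $\rho$ in $|t|$, this yields $\rho(\beta^{0}_{j}+\Th_{j}\C\theta)-\rho(\beta^{0}_{j}\C\theta)\leq -\rho(\Th_{j}\C\theta)$ for $j\in S$. Summing over $S$ gives $\pen(\b_0+\Th\C\theta)-\pen(\b_0\C\theta) \leq -\pen(\Th_{S}\C\theta)+\pen(\Th_{S^{C}}\C\theta)$; plugging this into the workhorse inequality and using $\pen(\Th\C\theta)=\pen(\Th_{S}\C\theta)+\pen(\Th_{S^{C}}\C\theta)$, a single rearrangement delivers exactly $(1+\eta^{*})\pen(\Th_{S}\C\theta) \leq (1-\eta^{*})\pen(\Th_{S^{C}}\C\theta)$, i.e., $\Th\in C(\eta^{*};\b)$.

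The main subtlety is keeping signs straight throughout, since $\rho$ is non-positive and the cone inequality is really the comparison of absolute penalty masses $(1+\eta^{*})|\pen(\Th_{S}\C\theta)|\geq (1-\eta^{*})|\pen(\Th_{S^{C}}\C\theta)|$. The rest of the derivation is deterministic once \eqref{eq:assumption_cone} holds, so the ``with high probability'' qualifier in the statement is inherited directly from the probability bound supplied by Lemma \ref{BBLASSO_identifiability_lemma}.
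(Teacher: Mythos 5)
Your proposal is correct and follows essentially the same route as the paper's proof: the basic inequality $Q(\wh\b)\geq Q(\b_0)$, the NC functional evaluated at $\Th$ versus $\bm 0_p$ to bound the cross-term, the $S$/$S^C$ split, and the super-additivity of the penalty (which the paper simply asserts and you verify explicitly via the $p^\star(s)p^\star(t)\geq p^\star(0)p^\star(s+t)$ computation). The sign bookkeeping and the final rearrangement to $(1+\eta^*)\pen(\Th_S\C\theta)\leq(1-\eta^*)\pen(\Th_{S^C}\C\theta)$ match the paper's argument exactly.
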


\begin{lemma}\label{BBLASSO_inf_norm} 
	If \eqref{eq:assumption_cone} holds
	and $\max\limits_{1\leq i\leq n} w_{i}\leq M$, then 
	$$
	\|\X^{T}\W^{2}(\ep-\X\m)\|_{\infty}\leq \sqrt{M}\Delta^U\eta^{*},
	$$ 
	where $\Delta^U=\sqrt{2n\sigma^{2}\log[1/p^{*}(0)]}+\sigma^{2}\lambda_{1}$.
\end{lemma}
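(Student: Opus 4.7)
The plan is to translate the global optimality statement \eqref{eq:assumption_cone} into coordinate-wise threshold inequalities that directly control each $\X_j^T \W^2(\ep - \X\m)$, and then use $\max_i w_i \leq M$ to bound the threshold by $\sqrt{M}\,\Delta^U$.

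Assuming \eqref{eq:assumption_cone}, the function
\[
Q_*(\b) = -\frac{1}{2\sigma^2}\|\W(\ep - \X\m)/\eta^* - \W\X\b\|_2^2 + \pen(\b\C\theta)
\]
attains its global maximum at $\b = \bm{0}_p$. Since the penalty is separable, $\pen(\b\C\theta)=\sum_j \rho(\beta_j\C\theta)$, restricting $Q_*$ to the $j$-th coordinate axis (holding all other coordinates at $0$) gives a one-dimensional function whose maximizer must therefore be $\beta_j=0$. Writing $\wt z_j := \X_j^T \W^2(\ep - \X\m)/\eta^*$ and $a_j := \|\W\X_j\|_2^2$, and using $\rho(0\C\theta)=0$, the inequality $Q_*(\bm{0}_p) \geq Q_*(\beta_j \bm{e}_j)$ reduces to
\[
\sigma^{-2}\beta_j \wt z_j - \tfrac{a_j}{2\sigma^2}\beta_j^2 + \rho(\beta_j\C\theta) \leq 0 \qquad \text{for all } \beta_j \in \R.
\]
Since $\rho$ is even in its argument, evaluating at $\beta_j = \pm t$ for $t>0$ and dividing by $t$ produces $\pm \wt z_j \leq a_j t/2 - \sigma^2 \rho(t\C\theta)/t$, so after optimizing over $t$,
\[
|\wt z_j| \leq \inf_{t>0}\Bigl[\tfrac{a_j t}{2} - \sigma^2 \rho(t\C\theta)/t\Bigr] =: \Delta_j^{(w)}.
\]
This is the direct reweighted analogue of the SSL threshold in Lemma \ref{lemma_appendix:delta_original_bound}.

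Next I will bound $\Delta_j^{(w)}$ by $\sqrt{M}\,\Delta^U$. Since $\max_i w_i \leq M$, one has $a_j = \sum_i w_i X_{ij}^2 \leq M\|\X_j\|_2^2 = Mn$, and the infimum is monotone in $a_j$, so $\Delta_j^{(w)}$ is dominated by the same infimum with $a_j$ replaced by $Mn$. For any coefficient $c>0$, plugging the candidate $t^* = \sqrt{2\sigma^2\log[1/p^*(0)]/c}$ into $F_c(t) := ct/2 - \sigma^2 \rho(t\C\theta)/t$ and using $\rho(t\C\theta) = -\lambda_1 t + \log[p^*(0)/p^*(t)]$ for $t>0$ together with $\log p^*(t^*) \leq 0$ yields the clean bound
\[
\inf_{t>0} F_c(t) \leq F_c(t^*) \leq \sqrt{2c\sigma^2 \log[1/p^*(0)]} + \sigma^2 \lambda_1,
\]
which is the regression analogue of the upper bound in Lemma \ref{lemma_appendix:delta_original_bound}. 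Specializing to $c=Mn$ and using $M>1$ (Condition (3) of Theorem \ref{thm: regression_weight_selected}) gives
\[
\Delta_j^{(w)} \leq \sqrt{2Mn\sigma^2\log[1/p^*(0)]} + \sigma^2 \lambda_1 \leq \sqrt{M}\,\Delta^U.
\]

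Combining the two displays yields $|\X_j^T \W^2(\ep-\X\m)| = \eta^*|\wt z_j| \leq \eta^* \Delta_j^{(w)} \leq \sqrt{M}\,\Delta^U \eta^*$ for every $j$, and taking the maximum over $j$ produces the claim. The only mildly delicate step is verifying that the upper-bound scaling for the threshold carries the right dependence on the effective coefficient in the reweighted problem (i.e., that replacing $n$ by $Mn$ yields the $\sqrt{M}$ factor on the dominant $\sqrt{2n\sigma^2\log[1/p^*(0)]}$ term while leaving $\sigma^2\lambda_1$ as a lower-order summand); this is handled explicitly by the plug-in $t^*$ above, so no appeal to the local-mode/$g$-function machinery of Lemma 1.2 in \cite{rovckova2018bayesian} is required for this one-sided bound.
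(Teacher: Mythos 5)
Your proof is correct and follows essentially the same route as the paper's: restrict the global optimality in \eqref{eq:assumption_cone} to each coordinate axis, use separability of the penalty and $\|\W\X_j\|_2^2\leq Mn$ to obtain a reweighted threshold, and then bound that threshold by $\sqrt{M}\Delta^U$. The only difference is cosmetic: where the paper cites Proposition 5 of \cite{moran2018variance} for the bound $\inf_{t>0}[ct/2-\sigma^2\rho(t\C\theta)/t]\leq\sqrt{2c\sigma^2\log[1/p^*(0)]}+\sigma^2\lambda_1$, you verify it directly by plugging in $t^*=\sqrt{2\sigma^2\log[1/p^*(0)]/c}$, which is a valid self-contained substitute.
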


\begin{definition} The minimal restricted eigenvalue is defined as
	\[
	c(\eta^{*};\b)=\inf_{\Th\in\R^{p}}\left\{ \frac{\|\X\Th\|_2}{\|\X\|_2\|\Th\|_2}:\Th\in C(\eta^{*};\b)\right\}. 
	\]
\end{definition}

\begin{definition} The compatibility number $\phi(C)$ of vectors
	in cone $C\subset\R^{p}$ is defined as 
	\[
	\phi(C)=\inf_{\Th\in\R^{p}}\left\{ \frac{\|\X\Th\|_2\|\Th\|_{0}^{1/2}}{\|\X\|_2\|\Th\|_{1}}:\Th\in C(\eta^{*};\b)\right\}. 
	\]
\end{definition}

\subsubsection{Proof of Lemma \ref{BBLASSO_identifiability_Zhanglemma}}
\label{sec_appendix:BBLASSO_identifiability_Zhanglemma}

This Lemma is a direct consequence of Proposition 3 of \citet{zhang2012general}.
This Proposition says the following. In a regression
model \eqref{eq:model}, suppose $\delta\in(0,1]$ and $\xi_{0}>0$,
and 
\begin{equation}
-\frac{\sigma^2}{n}\rho(t\mid\theta)\ge\min\left\{ \frac{\Delta^{2}}{2n^{2}},\frac{\Delta|t|}{n}\right\} \quad\text{with}\quad\frac{\Delta}{n}\ge(1+\xi_{0})\frac{\sigma}{\widetilde{\eta}}n^{-1/2}\left(1+\sqrt{2\log\left(2p/\delta\right)}\right).\label{eq:appendix_lemma_identifiability_delta}
\end{equation}
Then, the $\widetilde{\eta}$-NC condition is satisfied with 
probability at least $2-e^{\delta/2}-\exp\left\{ -n(1-1/\sqrt{2})^{2}\right\} $,
provided that

\begin{align}\label{eq:appendix_lemma_identifiability_matrix}
&\max\left\{ \lambda_{\max}^{1/2}\left(\frac{\X_{B}^{\top}\bm{P}_{A}\X_{B}}{n}\right):B\cap A=\emptyset,|A|=\text{rank}(\bm{P}_{A})=|B|=k,\right.\nonumber
\\&\quad\quad\left.k(1+\xi_{0})^{2}\left(1+\sqrt{2\log(2p/\delta)}\right)^{2}\le2n\right\} \le\xi_{0}.
\end{align}

From the Condition (5), we know that \eqref{eq:appendix_lemma_identifiability_matrix}
holds with $\delta=2p^{-1/4}$. Thus, we only need to show \eqref{eq:appendix_lemma_identifiability_delta}
holds. Denote 
\begin{align*}
f_{1}(t) & =\frac{1}{t}\log\frac{p^{*}(t)}{p^{*}(0)}-\frac{1}{\sigma}\sqrt{2n\log[1/p^{*}(0)]},\quad t>0,\\
f_{2}(t) & =\lambda_{1}t+\log p^{*}(t)-\frac{\sigma^{2}\lambda_{1}^{2}}{2n}-\lambda_{1}\sqrt{2\sigma^{2}/n\log[1/p^{*}(0)]}.
\end{align*}
Notice that $\Delta\le\sqrt{2n\sigma^{2}\log[1/p^{*}(0)]}+\sigma^{2}\lambda_{1}$
from Proposition 5 in \cite{moran2018variance} and \eqref{eq:appendix_lemma_identifiability_delta}
holds trivially when $t=0$. Thus,  in order to show \eqref{eq:appendix_lemma_identifiability_delta},
we only need to show that 
\[
\max\left\{ f_{1}(t),f_{2}(t)\right\} \ge0,\quad\forall t>0,
\]
\[
\Delta\ge(1+\xi_{0})\frac{\sigma}{\widetilde{\eta}}n^{1/2}\left(1+\sqrt{2.5\log p}\right).
\]
Notice that
\begin{align*}
&\lim_{t\rightarrow0+}\frac{\partial f_{1}(t)}{\partial t}
=\lim_{t\rightarrow0+}\left[-\frac{1}{t^{2}}\log\frac{p^{*}(t)}{p^{*}(0)}+\frac{1}{t}\frac{\partial\log p^{*}(t)}{\partial t}\right]
\\&\stackrel{(a)}{=}\lim_{t\rightarrow0+}\left[-\frac{1}{2t}\frac{\partial\log p^{*}(t)}{\partial t}+\frac{1}{t}\frac{\partial\log p^{*}(t)}{\partial t}\right]
=\lim_{t\rightarrow0+}\frac{1}{2t}\frac{\partial\log p^{*}(t)}{\partial t}>0
\end{align*}
where $(a)$ follows from L'Hopital's rule. It implies that there
exists a positive constant $t_{0}>0$ such that, for all $t\in(0,t_{0}]$,
$\frac{\partial f_{1}(t)}{\partial t}>0$ and thus 
we have
\begin{align*}
f_{1}(t)	&>\lim_{t\rightarrow0+}f_{1}(t)=\lim_{t\rightarrow0+}\left[\frac{1}{t}\log\frac{p^{*}(t)}{p^{*}(0)}-\frac{1}{\sigma}\sqrt{2n\log[1/p^{*}(0)]}\right]
\\&\stackrel{(a)}{=}\lim_{t\rightarrow0+}\frac{\partial\log p^{*}(t)}{\partial t}-\frac{1}{\sigma}\sqrt{2n\log[1/p^{*}(0)]}
\\&=\frac{\lambda_{0}-\lambda_{1}}{\theta\lambda_{1}/[(1-\theta)\lambda_{0}]+1}-\frac{1}{\sigma}\sqrt{2n\log[1+(1-\theta)\lambda_{0}/(\theta\lambda_{1})]}
\\&\gtrsim p-\sqrt{n\log p}
\end{align*}
where $(a)$ follows from L'Hopital's rule. Thus, $f_{1}(t)>0$ for all $t\in(0,t_{0}]$, when $n$ is sufficiently large.
For $t\ge t_{0}$, 
\[
f_{2}(t)\ge f_{2}(t_{0})=\lambda_{1}t_{0}+\log p^{*}(t_{0})-\frac{\sigma^{2}\lambda_{1}^{2}}{2n}-\lambda_{1}\sqrt{2\sigma^{2}/n\log[1/p^{*}(0)]}>0
\]
is always satisfied when $n$ is sufficiently large.
Combining the above, we know that (when $n$ is sufficiently large)
\[
\max\left\{ f_{1}(t),f_{2}(t)\right\} \ge0,\quad\forall t>0.
\]
Recall that Proposition 5 of \cite{moran2018variance} implies that 
\[
\lim_{n\rightarrow\infty}\frac{\Delta}{\sqrt{2n\sigma^{2}(\eta+\gamma)\log p}}=1,
\]
\[
\lim_{n\rightarrow\infty}\frac{(1+\xi_{0})\frac{\sigma}{\widetilde{\eta}}n^{1/2}\left(1+\sqrt{2.5\log p}\right)}{\sqrt{2n\sigma^{2}(\eta+\gamma)\log p}}=\frac{1+\xi_{0}}{\widetilde{\eta}}\sqrt{\frac{5}{4(\eta+\gamma)}}.
\]
Thus, if $(1+\xi_{0})/\widetilde{\eta}<\sqrt{4(\eta+\gamma)/5}$,
when $n$ is sufficiently large, we know that $\Delta\ge(1+\xi_{0})\frac{\sigma}{\widetilde{\eta}}n^{1/2}\left(1+\sqrt{2.5\log p}\right)$.
To sum up, if $(1+\xi_{0})/\widetilde{\eta}<\sqrt{4(\eta+\gamma)/5}$
then (for $n$ is sufficiently large) 
\[
\P_{\b_0}\left(\arg\max_{\b\in\R^{p}}\left\{-\frac{1}{2}\|\bm{\epsilon}/\tilde{\eta}-\X\b\|_2^{2}+pen(\b\C\theta)\right\}=\bm{0}_{p}\right)
\geq2-e^{1/p^{1/4}-e^{-n(1-1/\sqrt{2})^{2}}},
\]
and thus 
\[
\lim_{n\rightarrow\infty}\P_{\b_0}\left(\arg\max_{\b\in\R^{p}}\left\{-\frac{1}{2}\|\bm{\epsilon}/\tilde{\eta}-\X\b\|_2^{2}+pen(\b\C\theta)\right\}=\bm{0}_{p}\right)=1.\qedhere
\]

\subsubsection{Proof of Lemma \ref{BBLASSO_identifiability_lemma}}

On the event  $\|\ep\|_{\infty}\lesssim\sqrt{\log n}$, since
$\E w_{i}=1$, we have 
\[
\begin{split} & \text{Var}_{\m,\bm{w}}\left[(\ep-\X\m)^{T}\W^{2}\X\b\C\ep\right]=\text{Var}_{\m,\bm{w}}\left(\sum_{i}(\epsilon_{i}-\x_{i}^{T}\m)w_{i}\x_{i}^{T}\b\,\big|\,\ep\right)\\
= & \E_{\m,\bm{w}}\left[\left(\sum_{i}(w_{i}-1)\epsilon_{i}\x_{i}^{T}\b-\sum_{i}w_{i}\x_{i}^{T}\m\x_{i}^{T}\b\right)^{2}\,\big|\,\ep\right]\\
= & \E_{\bm{w}}\left[\left(\sum_{i}(w_{i}-1)\epsilon_{i}\x_{i}^{T}\b\right)^{2}\,\big|\,\ep\right]+\E_{\m,\bm{w}}\left[\sum_{i}w_{i}\x_{i}^{T}\m\x_{i}^{T}\b\right]^{2}\\
= & \sum_{i}\text{Var}(w_{i})(\epsilon_{i}\x_{i}^{T}\b)^{2}+\sum_{i\neq j}\text{Cov}(w_{i},w_{j})\epsilon_{i}\x_{i}^{T}\b\epsilon_{j}\x_{j}^{T}\b+\sum_{i,j}\x_{i}^{T}\b\x_{j}^{T}\b\x_{i}^{T}\E(\m\m^{T})\x_{j}\E\left(w_{i}w_{j}\right)\\
\stackrel{(a)}{\leq} & \frac{C_{1}}{\log n}\sum_{i}(\epsilon_{i}\x_{i}^{T}\b)^{2}+\frac{C_{2}}{n\log n}\sum_{i\neq j}\frac{1}{2}\left[(\epsilon_{i}\x_{i}^{T}\b)^{2}+(\epsilon_{j}\x_{j}^{T}\b)^{2}\right]+\E\left(w_{i}w_{j}\right)\frac{2}{\lambda_{0}^{2}}\sum_{i,j}\x_{i}^{T}\b\x_{j}^{T}\b(\x_{i}^{T}\x_{j})\\
\leq & \frac{C_{1}}{\log n}\sum_{i}(\epsilon_{i}\x_{i}^{T}\b)^{2}+\frac{C_{2}}{n\log n}(n-1)\sum_{i}(\epsilon_{i}\x_{i}^{T}\b)^{2}+C\E\left(w_{i}w_{j}\right)\frac{2\max_{i\neq j}|\x_{i}^{T}\x_{j}|}{\lambda_{0}^{2}}(\sum_{i}|\x_{i}^{T}\b|)^{2}\\
\stackrel{(b)}{\leq} & \frac{\wt C_{1}}{\log n}\|\ep\|_{\infty}^{2}\|\X\b\|_2^{2}+C_{3}\frac{2n\max_{i\neq j}|\x_{i}^{T}\x_{j}|}{\lambda_{0}^{2}}\sum_{i}(\x_{i}^{T}\b)^{2}\\
\stackrel{(d)}{\leq} & \wt C_{3}\|\X\b\|_2^{2},
\end{split}
\]
where (a) uses the assumption (4) in Theorem \ref{thm: regression_weight_selected},
the fact $ab\leq\frac{1}{2}(a^{2}+b^{2})$ and $\mu\sim$ Spike. The inequality (b) follows
from $\|\bm{\epsilon}^{T}\X\b\|_{2}^{2}\le\|\bm{\epsilon}\|_{\infty}^{2}\|\X\b\|_{2}^{2}$
and the Cauchy-Schwarz inequality $(\sum_{i}|\x_{i}^{T}\b|)^{2}\leq n\sum_{i}(\x_{i}^{T}\b)^{2}$.
The inequality (d) follows from the fact that $\lambda_{0}\asymp p^{\gamma}$
where $\gamma\geq1$, $\|\ep\|_{\infty}\lesssim\sqrt{\log n}$ and $\max_{i\neq j}|\x_{i}^{T}\x_{j}|\lesssim \lambda_0^2/n$.
Thus, from the Markov's inequality, on the event that $\|\ep\|_{\infty}\lesssim\sqrt{\log n}$,
we have for any $t>0$,
\[
\begin{split}
\P_{\m,\bm{w}}\left(|(\ep-\X\m)^{T}\W^{2}\X\b-\ep^{T}\X\b|>t\C\ep\right)\leq\frac{\text{Var}_{\m,\bm{w}}\left[(\ep-\X\m)^{T}\W^{2}\X\b\mid\ep\right]}{t^{2}}\leq\frac{\wt C_{3}\|\X\b\|_2^{2}}{t^{2}}.
\end{split}
\]
Set $t=C_{n}\|\X\b\|_2$ where $C_{n}\rightarrow\infty$, we have 
\begin{equation}
\lim_{n\rightarrow\infty}\P_{\m,\bm{w}}\left(|(\ep-\X\m)^{T}\W^{2}\X\b-\ep^{T}\X\b|>C_{n}\|\X\b\|_2\,\C\ep\right)=0.
\label{eq:appendix_lemma_NC0}
\end{equation}
When $|(\ep-\X\m)^{T}\W^{2}\X\b-\ep^{T}\X\b|\leq C_{n}\|\X\b\|_2$,
we have 
\begin{equation}
(\ep-\X\m)^{T}\W^{2}\X\b\leq|\ep^{T}\X\b|+C_{n}\|\X\b\|_2.\label{eq:appendix_lemma_NC1}
\end{equation}
Notice that 
\[
\|\X\b\|_2+\frac{\|\X\|_2}{\lambda_{1}}pen(\b\C\theta)\stackrel{(e)}{\leq}\|\X\|_2\times\|\b\|_2+\frac{\|\X\|_2}{\lambda_{1}}(-\lambda_{1}\|\b\|_1)\le0,
\]
where (e) follows from $pen(\b\C\theta)=-\lambda_{1}\|\b\|_1+\sum_{j}\log\frac{p^{*}(0)}{p^{*}(\beta_{j})}\leq-\lambda_{1}\|\b\|_1$.
Plugging this into \eqref{eq:appendix_lemma_NC1}, we have 
\begin{equation}
(\ep-\X\m)^{T}\W^{2}\X\b\leq|\ep^{T}\X\b|-C_{n}\frac{\|\X\|_2}{\lambda_{1}}pen(\b\C\theta)\label{eq:appendix_lemma_NC2}.
\end{equation}
Since the $\tilde{\eta}$-NC condition holds, we have 
\begin{equation}
-\frac{\tilde{\eta}}{2\sigma^2}\|\X\b\|_2^{2}+\ep^{T}\X\b+\tilde{\eta}\times pen(\b\C\theta)\leq0,\,\forall\b\label{eq:appendix_lemma_NC4}.
\end{equation}
Thus, on condition that $m<\min_{i}w_{i}\le\max_{i}w_{i}<M$, if we
choose $\eta^{*}=\max\left\{ \tilde{\eta}+C_{n}\frac{\|\X\|}{\lambda_{1}},\frac{\tilde{\eta}}{m}\right\} $,
we have $\forall\b$, 
\begin{equation}
\begin{split} 
& -\frac{\eta^{*}}{2\sigma^2}\|\W\X\b\|_2^{2}+(\ep-\X\m)^{T}\W^{2}\X\b+\eta^{*}pen(\b\C\theta)\\
\stackrel{(f)}{\leq} & -\frac{\eta^{*}m}{2\sigma^2}\|\X\b\|_2^{2}+|\ep^{T}\X\b|+\left(\eta^{*}-C_{n}\frac{\|\X\|}{\lambda_{1}}\right)pen(\b\C\theta)\\
\stackrel{(g)}{\leq} & -\frac{\tilde{\eta}}{2\sigma^2}\|\X\z\|_2^{2}+\ep^{T}\X\z+\tilde{\eta}\times pen(\z\C\theta)\leq0,
\end{split}
\label{eq:appendix_lemma_NC3}
\end{equation}
where (f) follows from Equation \eqref{eq:appendix_lemma_NC2}, (g)
follows from the definition of $\eta^{*}$ and the fact that $pen(\b\C\theta)\leq0$
for any $\b$. We set $\z=\b$ if $\ep^{T}\X\b\geq0$ and $\z=-\b$
if $\ep^{T}\X\b<0$. The last inequality directly follows from the $\tilde{\eta}$-NC
condition \eqref{eq:appendix_lemma_NC4}.

Previous analysis implies that under conditions (1), (4) and (5) in Theorem
\ref{thm: regression_weight_selected} (and assuming $m<\min_{i}w_{i}\le\max_{i}w_{i}<M$,
$\|\ep\|_{\infty}\lesssim\sqrt{\log n}$ and that the $\tilde{\eta}$-NC condition
holds), whenever $|(\ep-\X\m)^{T}\W^{2}\X\b-\ep^{T}\X\b|\leq C_{n}\|\X\b\|_2$
holds, \eqref{eq:appendix_lemma_NC3} holds. Thus, when $\|\ep\|_{\infty}\lesssim\sqrt{\log n}$
and $\X$ satisfies $\tilde{\eta}$-NC condition,
\[
\begin{split} & \P_{\m,\bm{w}}\left(|(\ep-\X\m)^{T}\W^{2}\X\b-\ep^{T}\X\b|\leq C_{n}\|\X\b\|_2\,\C\ep,m<\min_{i}w_{i}\le\max_{i}w_{i}<M\right)\\
\leq & \P_{\m,\bm{w}}\left(-\frac{\eta^{*}}{2\sigma^2}\|\W\X\b\|_2^{2}+(\ep-\X\m)^{T}\W^{2}\X\b+\eta^{*}pen(\b\C\theta)\leq0\C\ep,m<\min_{i}w_{i}\le\max_{i}w_{i}<M\right).
\end{split}
\]
Notice that the conditions (2) and (3) in Theorem \ref{thm: regression_weight_selected}
say
\[
\P\left(m<\min_{i}w_{i}\le\max_{i}w_{i}<M\right)\rightarrow1.
\]
Thus, as $n\rightarrow\infty$,
\begin{align*}
& \P_{\m,\bm{w}}\left(|(\ep-\X\m)^{T}\W^{2}\X\b-\ep^{T}\X\b|\leq C_{n}\|\X\b\|_2\,\C\ep,m<\min_{i}w_{i}\le\max_{i}w_{i}<M\right)\\
\rightarrow & \P_{\m,\bm{w}}\left(|(\ep-\X\m)^{T}\W^{2}\X\b-\ep^{T}\X\b|\leq C_{n}\|\X\b\|_2\,\C\ep\right),
\end{align*}
\begin{align*}
& \P_{\m,\bm{w}}\left(-\frac{\eta^{*}}{2\sigma^1}\|\W\X\b\|_2^{2}+(\ep-\X\m)^{T}\W^{2}\X\b+\eta^{*}pen(\b\C\theta)\leq0\C\ep,m<\min_{i}w_{i}\le\max_{i}w_{i}<M\right)\\
\rightarrow & \P_{\m,\bm{w}}\left(-\frac{\eta^{*}}{2\sigma^2}\|\W\X\b\|_2^{2}+(\ep-\X\m)^{T}\W^{2}\X\b+\eta^{*}pen(\b\C\theta)\leq0\C\ep\right).
\end{align*}
So given that $\|\ep\|_{\infty}\lesssim\sqrt{\log n}$ and that $\X$ satisfies
$\tilde{\eta}$-NC condition, we have 
\[
\begin{split} & \lim_{n\rightarrow\infty}\P_{\m,\bm{w}}\left(|(\ep-\X\m)^{T}\W^{2}\X\b-\ep^{T}\X\b|\leq C_{n}\|\X\b\|_2\,\C\ep\right)\\
\leq & \lim_{n\rightarrow\infty}\P_{\m,\bm{w}}\left(-\frac{\eta^{*}}{2\sigma^2}\|\W\X\b\|_2^{2}+(\ep-\X\m)^{T}\W^{2}\X\b+\eta^{*}pen(\b\C\theta)\leq0\C\ep\right).
\end{split}
\]
Combined with Equation \eqref{eq:appendix_lemma_NC0}, we know (given
that $\|\ep\|_{\infty}\lesssim\sqrt{\log n}$ and $\X$ satisfies $\tilde{\eta}$-NC
condition), the following holds 
\[
\lim_{n\rightarrow\infty}\P_{\m,\bm{w}}\left(-\frac{\eta^{*}}{2\sigma^2}\|\W\X\b\|_2^{2}+(\ep-\X\m)^{T}\W^{2}\X\b+\eta^{*}pen(\b\C\theta)\leq0\C\ep\right)=1,
\]
which is equivalent to the conclusion in Lemma \ref{BBLASSO_identifiability_lemma}.

\subsubsection{Proof of Lemma \ref{BBLASSO_cone_lemma}}
Starting from basic inequality $Q(\wh\b)\geq Q(\b_{0})$, we get 
\begin{equation}
\|\bm{W}\X\bm{\Theta}\|_2^{2}-2(\W\ep-\W\X\m)^{T}\W\X\Th+2\sigma^{2}pen(\b_{0}\C\theta)-2\sigma^{2}pen(\wh\b\C\theta)\leq0\label{eq:BBLASSO_cone_lemma1}.
\end{equation}
From 
$$
\arg\max_{\b\in\R^{p}}\left\{-\frac{1}{2\sigma^{2}}\|\bm{W}\bm{(}\bm{\epsilon}-\X\bm{\mu})/\eta^{*}-\bm{W}\X\b\|_2^{2}+pen(\b\C\theta)\right\}=\bm{0}_{p},
$$
we have, for all $\Th\in\mathbb{R}^p$,
\begin{equation}
-2\Th^{T}\X^{T}\W^{2}(\ep-\X\m)\geq-\eta^{*}\|\W\X\bm{\Theta}\|_2^{2}+2\eta^{*}\sigma^{2}pen(\bm{\Theta}\C\theta).
\label{eq:BBLASSO_cone_lemma2}
\end{equation}
Notice that $pen(\cdot\C\theta)$ is super-additive ($pen(\bm{a}\mid\theta)+pen(\bm{b}\mid\theta)\le pen(\bm{a}+\bm{b}\mid\theta)$
for all $\bm{a,b}$). Plugging \eqref{eq:BBLASSO_cone_lemma2} into \eqref{eq:BBLASSO_cone_lemma1} (denoting $\b=\b_{0}$), we get 
\[
\begin{split} 
& (1-\eta^{*})\|\bm{W}\X\Th\|_2^{2}\leq-2\eta^{*}\sigma^{2}pen(\bm{\Theta}\C\theta)-2\sigma^{2}pen(\b\C\theta)+2\sigma^{2}pen(\bm{\Theta}+\b\C\theta)\\
& =-2\eta^{*}\sigma^{2}pen(\bm{\Theta}_{S}\C\theta)-2\eta^{*}\sigma^{2}pen(\bm{\Theta}_{S^{C}}\C\theta)-2\sigma^{2}pen(\b_{S}\C\theta)+2\sigma^{2}pen(\bm{\Theta}_{S}+\b_{S}\C\theta)+2\sigma^{2}pen(\bm{\Theta}_{S^{C}}\C\theta)\\
& \stackrel{(a)}{\le}-2\eta^{*}\sigma^{2}pen(\bm{\Theta}_{S}\C\theta)-2\eta^{*}\sigma^{2}pen(\bm{\Theta}_{S^{C}}\C\theta)-2\sigma^{2}pen(\bm{\Theta}_{S}\C\Theta)+2\sigma^{2}pen(\bm{\Theta}_{S^{C}}\C\theta)\\
& =-2(\eta^{*}+1)\sigma^{2}pen(\bm{\Theta}_{S}\C\theta)-2(\eta^{*}-1)\sigma^{2}pen(\bm{\Theta}_{S^{C}}\C\theta),
\end{split}
\]
where $(a)$ utilizes the fact that 
\[
pen(\bm{\Theta}_{S}+\b_{S}\C\theta)+pen(\bm{\Theta}_{S}\C\Theta)=pen(\bm{\Theta}_{S}+\b_{S}\C\theta)+pen(-\bm{\Theta}_{S}\C\Theta)\le pen(\b_{S}\C\theta).
\]
Since $(1-\eta^{*})\|\bm{W}\X\bm{\theta}\|_2^{2}\geq0$, we get the
desired conclusion.

\subsubsection{Proof of Lemma \ref{BBLASSO_inf_norm}}
The proof follows from the proof of Lemma 1 in the Appendix of \citet{zhang2012general}.
For any $j\in{1,2,\cdots,p}$, denote with $\bm{1}_{j}\in\mathbb{R}^{p}$
the vector where the $j$-th element is 1 and all the other elements
are 0. Denote 
$$
G(\b)=-\frac{1}{2}\|\bm{W}(\bm{\epsilon}-\X\bm{\mu})/\eta^{*}-\bm{W}\X\b\|_2^{2}+pen(\b\C\theta).
$$
Notice that $pen(\b\mid\theta)=\sum_{j=1}^{p}\rho(\beta_{j}\mid\theta)$.
For any $t>0$, $G(\bm{0}_{p})\geq G(t\bm{1}_{j})$ leads to
\[
t\left(\W\X_{j}\right)^{T}\W(\bm{\epsilon}-\X\bm{\mu})/\eta^{*}\le\frac{t^{2}}{2}\|\W\X_{j}\|_2^{2}-\sigma^{2}pen\left(t\bm{1}_{j}\mid\theta\right)\stackrel{(a)}{\le}\frac{t^{2}Mn}{2}-\sigma^{2}\rho\left(t\mid\theta\right),
\]
where $(a)$ follows from the fact that $\max_{i}w_{i}\le M$. Thus,
for any $t>0$, 
\[
\left(\W\X_{j}\right)^{T}\W(\bm{\epsilon}-\X\bm{\mu})/\eta^{*}\leq M\left(\frac{1}{2}nt-\frac{\sigma^{2}}{M}\frac{\rho(t\mid\theta)}{t}\right).
\]
Again, from the definition of $\Delta$ \eqref{def_appendix:delta_regression} and Proposition 5 in \cite{moran2018variance}, we have
\begin{align*}
& \|\X^{T}\W^{2}\left(\ep-\X\m\right)\|_{\infty}\leq\eta^{*}M\inf_{t>0}\left(\frac{1}{2}nt-\frac{\sigma^{2}}{M}\rho(t\mid\theta)/t\right)
\\\le&\eta^{*}M\left[\sqrt{2n\sigma^{2}/M\log[1/p^{*}(0)]}+\sigma^{2}\lambda_{1}/M\right]\\
\le&\eta^{*}\sqrt{M}\left[\sqrt{2n\sigma^{2}\log[1/p^{*}(0)]}+\sigma^{2}\lambda_{1}\right]=\eta^{*}\sqrt{M}\Delta^{U}.
\end{align*}

\subsubsection{Proof of Theorem \ref{thm: regression_weight_selected}}\label{sec:appendix_thm_regression_selected_proof}

The proof follows from the proof of Theorem 7 in \citet{rovckova2018spike}.
First, we assume that (i) $\arg\max_{\b\in\R^{p}}\{-\frac{1}{2\sigma^{2}}\|\bm{W}(\bm{\epsilon}-\X\bm{\mu})/\eta^{*}-\bm{W}\X\b\|_2^{2}+pen(\b\C\theta)\}=\bm{0}_{p}$;
(ii) $\max w_{i}\leq M$, (iii) $\min w_{i}\geq m$, (iv) $\|\ep\|_{\infty}\lesssim\sqrt{\log n}$
and (v) $\Th\in C(\eta^{*};\b)$ holds. Then, from Lemma \ref{BBLASSO_inf_norm},
we know that
\[
\begin{split}\|\X^{T}\W^{2}(\ep-\W\m)\|_{\infty}\leq\sqrt{M}\eta^{*}\Delta^{U}\end{split},
\]
where $\Delta^{U}$ is defined in Lemma \ref{BBLASSO_inf_norm}.
Following \citet{rovckova2018spike}, we denote $c_{+}^{w}=0.5(1+\sqrt{1-\frac{4\|\bm{W}\X_{j}\|_2^{2}}{\sigma^{2}(\lambda_{0}-\lambda_{1})^{2}}})$. Notice that $\delta_{c_{+}^{w}}=\frac{1}{\lambda_{0}-\lambda_{1}}\log(\frac{1-\theta}{\theta}\frac{\lambda_{0}}{\lambda_{1}}\frac{c_{+}^{w}}{1-c_{+}^{w}})$ is the inflection point of $Q(\b)$ in the
$j$-th direction while keeping the other coordinates fixed. Since $\|\bm{W}\X_{j}\|_2^{2}\le Mn\leq\sigma^{2}(\lambda_{0}-\lambda_{1})^{2}$
when $n$ is sufficiently large, $c_{+}^{w}$ is well-defined
when $n$ is sufficiently large. Denote $\hat{q}=\|\hat{\b}\|_{0}$
and $q=\|\b_{0}\|_{0}$. From the basic inequality $0\geq Q(\b_{0})-Q(\wh\b)$,
we get 
\begin{equation}
\begin{split}0 & \geq\|\bm{W}\X\bm{\Theta}\|_2^{2}-2(\bm{W}\bm{\epsilon}-\W\X\m)^{T}\bm{W}\X\bm{\Theta}+2\sigma^{2}\log\frac{\pi(\b_{0}\C\theta)}{\pi(\wh\b\C\theta)}\\
& \geq\|\bm{W}\X\bm{\Theta}\|_2^{2}-2(\bm{W}\bm{\epsilon}-\W\X\m)^{T}\bm{W}\X\bm{\Theta}+2\sigma^{2}\left[-\lambda_{1}\|\b_{0}-\wh\b\|_1+\sum_{j=1}^{p}\log\frac{p^{*}(\hat{\beta}_{j})}{p^{*}(0)}+\sum_{j=1}^{p}\log\frac{p^{*}(0)}{p^{*}(\beta^0_{j})}\right]\\
& \stackrel{(a)}{\ge}\|\bm{W}\X\bm{\Theta}\|_2^{2}-2(\bm{W}\bm{\epsilon}-\W\X\m)^{T}\bm{W}\X\bm{\Theta}+2\sigma^{2}\left[-\lambda_{1}\|\b_{0}-\wh\b\|_1+\hat{q}b^{w}+(\hat{q}-q)\log\frac{1}{p^{*}(0)}\right]\\
& \stackrel{(b)}{\ge}\|\bm{W}\X\bm{\Theta}\|_2^{2}-2\|(\bm{W}\bm{\epsilon}-\W\X\m)^{T}\bm{W}\X\|_{\infty}\times\|\bm{\Theta}\|_{1}-2\sigma^{2}\lambda_{1}\|\b_{0}-\wh\b\|_{1}\\
& \quad\quad+2\sigma^{2}\hat{q}b^{w}+2(\hat{q}-q)\sigma^{2}\log[1/p^{*}(0)],
\end{split}
\label{eq:thm_regress1}
\end{equation}
where $(a)$ follows from the fact that $p^{*}(\hat{\beta}_{j})>c_{+}^{w}$
when $\hat{\beta}_{j}\neq0$ and we denote $0>b^{w}=\log c_{+}^{w}>\log0.5$,
$(b)$ uses Holder inequality. So, from Equation \eqref{eq:thm_regress1} and Lemma \ref{BBLASSO_cone_lemma}, we have
\[
\begin{split}
0 & \geq\|\bm{W}\X\bm{\Theta}\|_2^{2}-2(\sqrt{M}\eta^{*}\Delta^{U}+\sigma^{2}\lambda_{1})\|\bm{\Theta}\|_1+2\sigma^{2}\hat{q}b^{w}+2(\hat{q}-q)\sigma^{2}\log[1/p^{*}(0)]\\
& \stackrel{(a)}{\geq} mc^{2}(\eta^{*};\b)\|\bm{\Theta}\|_2^{2}\|\X\|_2^{2}-2(\sqrt{M}\eta^{*}\Delta^{U}+\sigma^{2}\lambda_{1})\|\bm{\Theta}\|_1+2\sigma^{2}\hat{q}b^{w}+2\sigma^{2}(\hat{q}-q)\log[1/p^{*}(0)]\\
& \geq mc^{2}(\eta^{*};\b)\|\bm{\Theta}\|_2^{2}\|\X\|_2^{2}-2(\sqrt{M}\eta^{*}\Delta^{U}+\sigma^{2}\lambda_{1})\|\bm{\Theta}\|_{2}\|\bm{\Theta}\|_{0}^{1/2}+2\sigma^{2}\hat{q}b^{w}+2(\hat{q}-q)\sigma^{2}\log[1/p^{*}(0)],
\end{split}
\]
where (a) follows from the definition of $c(\eta^{*};\b)$.
This is equivalent to
\[
\begin{split}
\left(\sqrt{m}c(\eta^{*};\b)\|\bm{\Theta}\|_2\|\X\|_2-\frac{\sqrt{M}\eta^{*}\Delta^{U}+\sigma^{2}\lambda_{1}}{\sqrt{m}c(\eta^{*};\b)\|\X\|_2}\|\bm{\Theta}\|_{0}^{1/2}\right)^{2}-\frac{(\sqrt{M}\eta^{*}\Delta^{U}+\sigma^{2}\lambda_{1})^{2}}{mc^{2}(\eta^{*};\b)\|\X\|_2^{2}}\|\bm{\Theta}\|_{0}\\
+2\sigma^{2}\hat{q}b^{w}+2(\hat{q}-q)\sigma^{2}\log\frac{1}{p^{*}(0)}\le0,
\end{split}
\]
and consequently,
\[
\begin{split}
(\hat{q}-q)\log\frac{1}{p^{*}(0)}+\hat{q}b^{w}\leq\frac{(\sqrt{M}\eta^{*}\Delta^{U}+\sigma^{2}\lambda_{1})^{2}}{2m\sigma^{2}c^{2}(\eta^{*};\b)\|\X\|_2^{2}}\|\bm{\Theta}\|_{0}\stackrel{(a)}{\le}\frac{(\sqrt{M}\eta^{*}\Delta^{U}+\sigma^{2}\lambda_{1})^{2}}{2m\sigma^{2}c^{2}(\eta^{*};\b)n}(\hat{q}+q),
\end{split}
\]
where $(a)$ follows from $\|\X\|_2^{2}\ge n$. Thus, 
\[
\begin{split}
\hat{q}\leq q\frac{A+B}{B+b^{w}-A}=q\left(1+\frac{2A-b^{w}}{B+b^{w}-A}\right)\leq q\left(1+\frac{2r}{1-r}\right),
\end{split}
\]
where 
\[
A=\frac{(\sqrt{M}\eta^{*}\Delta^{U}+\sigma^{2}\lambda_{1})^{2}}{2m\sigma^{2}c^{2}(\eta^{*};\b)n},\,B=\log[1/p^{*}(0)],\,b^{w}=\log c_{+}^{w}\in(\log0.5,0),\,r=\frac{A}{B}.
\]
For simplicity assume that $\frac{1-\theta}{\theta}=C_1p^{\eta},\lambda_{0}=C_{2}p^{\gamma}$
with $C_{1}C_{2}=4$. Then $B=\log(1+\frac{1-\theta}{\theta}\frac{\lambda_{0}}{\lambda_{1}})>(\eta+\gamma-1)\log p$,
so 
\[
\begin{split}
r&=\frac{A}{B}=\left(\frac{\sqrt{M}\eta^{*}\Delta^{U}}{\sigma c(\eta^{*};\b)\sqrt{2nmB}}+\frac{\sigma\lambda_{1}}{c(\eta^{*};\b)\sqrt{2mnB}}\right)^{2}
\\&<\left(\frac{\eta^{*}}{c}\sqrt{\frac{M}{m}}+\frac{\sigma\lambda_{1}\sqrt{M}}{c\sqrt{2mn(\eta+\gamma-1)\log p}}\right)^{2}=D.
\end{split}
\]

Thus, the desired conclusion holds when conditions (i)-(v) hold. Now,
we only need to verify that $\lim_{n\rightarrow\infty}\P_{\b,\m,\bm{w}}\left(\text{condition (i), (ii), (iii), (iv), (v) all holds}\right)=1$.
Notice that (i) implies (v) from Lemma \ref{BBLASSO_cone_lemma} and
\[
\lim_{n\rightarrow\infty}\P_{\b_0}\left(\text{condition (iv) holds}\right)\ge\lim_{n\rightarrow\infty}\P_{\b_0}\left(\|\ep\|_{\infty}\leq\sqrt{2\log n}\right)=1.
\]
Combined with Lemma \ref{BBLASSO_identifiability_Zhanglemma} and Lemma \ref{BBLASSO_identifiability_lemma},
we have  
\[
\begin{split} & \lim_{n\rightarrow\infty}\P_{\b_0,\m,\bm{w}}\left(\arg\max_{\b\in\R^{p}}\{-\frac{1}{2\sigma^2}\|\bm{W}(\bm{\epsilon}-\X\bm{\mu})/\eta^{*}-\bm{W}\X\b\|_2^{2}+pen(\b\C\theta)\}=\bm{0}_{p}\right)\\
= & \lim_{n\rightarrow\infty}\E_{\b_0}\P_{\m,\bm{w}}\left(\arg\max_{\b\in\R^{p}}\{-\frac{1}{2\sigma^2}\|\bm{W}(\bm{\epsilon}-\X\bm{\mu})/\eta^{*}-\bm{W}\X\b\|_2^{2}+pen(\b\C\theta)\}=\bm{0}_{p}\C\ep\right)=1.
\end{split}
\]
This means $\lim_{n\rightarrow\infty}\P_{\b_0,\m,\bm{w}}\left(\text{condition (i) holds}\right)=1$.
In addition, assumptions (2) and  (3) in Theorem \ref{thm: regression_weight_selected}
say that 
$$
\lim_{n\rightarrow\infty}\P_{\b_0,\m,\bm{w}}\left(\text{condition (ii) holds}\right)=1\quad\text{and}\quad \lim_{n\rightarrow\infty}\P_{\b_0,\m,\bm{w}}\left(\text{condition (iii) holds}\right)=1.$$
Thereby, from the union bound, $\lim_{n\rightarrow\infty}\P_{\b_0,\m,\bm{w}}\left(\text{condition (i), (ii), (iii), (iv), (v) all holds}\right)=1$.
Since 
\[
\begin{split}\lim_{n\rightarrow\infty}\E_{\b_0}\P_{\m,\bm{w}}\left(\hat{q}\leq q(1+\frac{2D}{1-D})\C\Y^{(n)}\right) & =\lim_{n\rightarrow\infty}\P_{\b_0,\m,\bm{w}}\left(\hat{q}\leq q(1+\frac{2D}{1-D})\right)\\
& \geq\lim_{n\rightarrow\infty}\P_{\b_0,\m,\bm{w}}\left(\text{conditions (i) - (v) all hold}\right).
\end{split}
\]
We have 
\[
\lim_{n\rightarrow\infty}\E_{\b_0}\P_{\m,\bm{w}}\left(\hat{q}\leq q(1+\frac{2D}{1-D})\C\Y^{(n)}\right)=1.
\]

\subsubsection{Proof of Theorem \ref{thm:regression_weight}}\label{sec:appendix_thm_regression_proof}
The proof follows from
the proof of Theorem 8 in \citet{rovckova2018spike}. Notice that 
\[
\|\widetilde{\b}-\b_{0}\|_2\le\|\wt\b-\wh\b\|_2+\|\wh\b-\b_{0}\|_2.
\]
First, we prove the high probability bound for $\|\Th\|_2=\|\widehat{\b}-\b_{0}\|_2$.
Since $Q(\wh\b)\geq Q(\b_{0})$, we know if (i) $\arg\max_{\b\in\R^{p}}\{-\frac{1}{2\sigma^{2}}\|\bm{W}(\bm{\epsilon}-\X\bm{\mu})/\eta^{*}-\bm{W}\X\b\|_2^{2}+pen(\b\C\theta)\}=\bm{0}_{p}$;
(ii) $\max w_{i}\leq M$, then 
\begin{equation}
\begin{split}0 & \geq\|\bm{W}\X\bm{\Theta}\|_2^{2}-2(\bm{W}\bm{\epsilon}-\W\X\m)^{T}\bm{W}\X\bm{\Theta}+2\log\frac{\pi(\b_{0}\C\theta)}{\pi(\wh\b\C\theta)}\\
& \stackrel{(a)}{\ge}\|\bm{W}\X\bm{\Theta}\|_2^{2}-2\|(\bm{W}\bm{\epsilon}-\W\X\m)^{T}\bm{W}\X\|_{\infty}\|\bm{\Theta}\|_1-2\lambda_{1}\|\bm{\Theta}\|_1+2q\log p^{*}(0)\\
& \stackrel{(b)}{\ge}\|\bm{W}\X\bm{\Theta}\|_2^{2}-2\left(\sqrt{M}\eta^{*}\Delta^{U}+\lambda_{1}\right)\|\bm{\Theta}\|_1+2q\log p^{*}(0),
\end{split}
\label{pf: regression_weight}
\end{equation}
where $(a)$ follows from $\log\frac{\pi(\b_{0}\C\theta)}{\pi(\b\C\theta)}\geq-\lambda_{1}\|\bm{\Theta}\|_1+q\log p^{*}(0)$
and Holder Inequality, $(b)$ follows from Lemma \ref{BBLASSO_inf_norm}.
From Theorem \ref{thm: regression_weight_selected}, $\|\bm{\Theta}\|_{0}\leq(1+K)q$,
and using $4ab\leq a^{2}+4b^{2}$, we have 
\[
\begin{split}
2(\sqrt{M}\eta^{*}\Delta^{U}+\lambda_{1})\|\bm{\Theta}\|_1 
& \leq3(\sqrt{M}\eta^{*}\Delta^{U}+\lambda_{1})\frac{\|\X\bm{\Theta}\|_2\sqrt{(K+1)q}}{\|\X\|_2\phi}-(\sqrt{M}\eta^{*}\Delta^{U}+\lambda_{1})\|\bm{\Theta}\|_1\\
& \leq\frac{m\|\X\bm{\Theta}\|_2^{2}}{2}+\frac{5(K+1)q(\sqrt{M}\eta^{*}\Delta^{U}+\lambda_{1})^{2}}{m\|\X\|_2^{2}\phi^{2}}-(\sqrt{M}\eta^{*}\Delta^{U}+\lambda_{1})\|\bm{\Theta}\|_1.
\end{split}
\]
Plugging into \eqref{pf: regression_weight}, we know (i), (ii), plus
(iii) $\min w_{i}\geq m$ and (iv) $\|\ep\|_{\infty}\lesssim\sqrt{\log n}$
implies the following: 
\[
\begin{split}0 & \geq\|\bm{W}\X\bm{\Theta}\|_2^{2}-\frac{m\|\X\bm{\Theta}\|_2^{2}}{2}-\frac{5(K+1)q(\sqrt{M}\eta^{*}\Delta^{U}+\lambda_{1})^{2}}{m\|\X\|_2^{2}\phi^{2}}+(\sqrt{M}\eta^{*}\Delta^{U}+\lambda_{1})\|\bm{\Theta}\|_1+2q\log p^{*}(0)\\
& \geq\frac{m}{2}\|\X\bm{\Theta}\|_2^{2}-\frac{5(K+1)q(\sqrt{M}\eta^{*}\Delta^{U}+\lambda_{1})^{2}}{m\|\X\|_2^{2}\phi^{2}}+(\sqrt{M}\eta^{*}\Delta^{U}+\lambda_{1})\|\bm{\Theta}\|_1+2q\log p^{*}(0).
\end{split}
\]
Thus, whenever (i)-(iv) holds, in view that $\|\X\|_2^2\ge n$,
we have
\[
\begin{split}
\frac{m}{2}\|\X\bm{\Theta}\|_2^{2}+(\sqrt{M}\eta^{*}\Delta^{U}+\lambda_{1})\|\bm{\Theta}\|_1 
& \leq\frac{5(K+1)q(C_{3}\sqrt{M}\eta^{*}\sqrt{n\log p})^{2}}{mn\phi^{2}}+2qC_{4}\log p\\
& <\frac{C_{5}^{2}M(\eta^{*})^{2}}{m\phi^{2}}q(1+K)\log p.
\end{split}
\]
Thus, whenever (i)-(iv) holds, 
\[
\|\X\bm{\Theta}\|_2\leq\frac{C_{5}\eta^{*}\sqrt{M}}{\sqrt{m}\,\phi}\sqrt{q(1+K)\log p}.
\]
It follows from definition of $c$ that 
\[
\lim_{n\rightarrow\infty}
\P_{\bm{w},\m,\b_{0}}\left(\|\bm{\Theta}\|_2\leq\frac{C_{5}\eta^{*}\sqrt{M}}{\sqrt{m}\,\phi\,c}\sqrt{q(1+K)\frac{\log p}{n}}\right)\geq\lim_{n\rightarrow\infty}\P\left(\text{condition (i)-(iv) holds}\right)=1.
\]
Notice that the difference between $\wt\b$ and $\wh\b$ only depends
on $\m$ and satisfies 
\[
\begin{split}
\P_{\m}\left(\|\wt\b-\wh\b\|_2^{2}>t\C\Y^{(n)}\right)
\leq\frac{\E\sum_{j=1}^{p}(\hat{\beta}_{j}-\tilde{\beta}_{j})^{2}}{t}
=\frac{1}{t}\sum_{j=1}^{p}\left(\frac{1}{\lambda_{0}^{2}}+\frac{2}{\lambda_{0}^{2}}\right)
=\frac{1}{t}\frac{3p}{\lambda_{0}^{2}}.\end{split}
\]
Set $t=\frac{C_{5}^{2}(\eta^{*})^{2}M}{m\,\phi^{2}\,c^{2}}{q(1+K)\frac{\log p}{n}}$,
then $\frac{1}{t}\frac{3p}{\lambda_{0}^{2}}\rightarrow0$ when $n,p\rightarrow\infty$.
Thus, from triangle inequality, 
\[
\begin{split} & \E_{\b_{0}}\P_{\bm{w},\m}\left(\|\wt\b-\b_{0}\|_2>\frac{C_{5}\eta^{*}\sqrt{M}}{\sqrt{m}\,\phi\,c}\sqrt{q(1+K)\frac{\log p}{n}}\C\Y^{(n)}\right)\\
& \leq\E_{\b_{0}}\P_{\bm{w},\m}\left(\|\wt\b-\wh\b\|_2^{2}>\frac{C_{5}\eta^{*}\sqrt{M}}{2\sqrt{m}\,\phi\,c}\sqrt{q(1+K)\frac{\log p}{n}}\C\Y^{(n)}\right)\\
& +\E_{\b_{0}}\P_{\bm{w},\m}\left(\|\bm{\Theta}\|_2>\frac{C_{5}\eta^{*}\sqrt{M}}{2\sqrt{m}\,\phi\,c}\sqrt{q(1+K)\frac{\log p}{n}}\C\Y^{(n)}\right)\\
& \leq\frac{1}{t}\frac{3p}{\lambda_{0}^{2}}+\P_{\b_{0},\m,\bm{w}}\left(\|\bm{\Theta}\|_2>\frac{C_{5}\eta^{*}\sqrt{M}}{2\sqrt{m}\,\phi\,c}\sqrt{q(1+K)\frac{\log p}{n}}\C\Y^{(n)}\right),
\end{split}
\]
where the right hand side  goes to zero as $n$ goes to infinity. 
Thus, we have 
\[
\lim_{n\rightarrow\infty}\E_{\b}\P_{\bm{w},\m}\left(\|\wt\b-\b_{0}\|_2>\frac{C_{5}\eta^{*}\sqrt{M}}{\sqrt{m}\,\phi\,c}\sqrt{q(1+K)\frac{\log p}{n}}\C\Y^{(n)}\right)=0.
\]

\subsection{Proof of Corollary \ref{regression_dirichlet}}\label{sec:appendix_regression_corollary_proof}
We notice that if
$w_{i}\sim\frac{1}{\alpha}\text{Gamma}(\alpha,1)$, we have $\E w_{i}=1$,
$Var(w_{i})=\frac{1}{\alpha}\lesssim\frac{1}{\log n}$, $Cov(w_{i},w_{j})=0$.
So we only need to prove the two high probability bounds for the order
statistics for $w_{i}$. 
Let $\alpha=2(\eta+\gamma)\log p$, from the union bound we have
\[
\begin{split} & \P\left(\min w_{i}<\frac{1}{e}\right)\leq n\P\left(w_{i}<\frac{1}{e}\right)=n\P_{v\sim\text{Gamma}(\alpha,1)}(v<\alpha/e)=ne^{-\alpha/e}\sum_{i=\alpha}^{\infty}\frac{(\alpha/e)^{i}}{i!}\\
& \leq ne^{-\alpha/e}\sum_{i=\alpha}^{\infty}\frac{(\alpha/e)^{i}}{\sqrt{2\pi}i^{i+1/2}e^{-i}}=n\frac{e^{-\alpha/e}}{\sqrt{2\pi}}\sum_{i=\alpha}^{\infty}\left(\frac{\alpha}{i}\right)^{i}\leq n\frac{e^{-\alpha/e}}{\sqrt{2\pi}}[\sum_{i=\alpha+1}^{\infty}\left(\frac{\alpha}{i}\right)^{i}+1]\\
& \leq n\frac{e^{-\alpha/e}}{\sqrt{2\pi}}\left[\sum_{i=\alpha+1}^{\infty}\left(\frac{\alpha}{\alpha+1}\right)^{i}+1\right]=n\frac{e^{-\alpha/e}}{\sqrt{2\pi}}\left[\left(\frac{\alpha}{\alpha+1}\right)^{\alpha+1}(\alpha+1)+1\right]\leq C_{0}n\frac{e^{-\alpha/e}}{\sqrt{2\pi}}\left[\frac{\alpha+1}{e}+1\right]\\
& \leq C_{1}(\log p)\,n\,p^{-\frac{2(\eta+\gamma)}{e}}.
\end{split}
\]
Since $2(\eta+\gamma)/e\ge4/e>1$, we know $\lim_{n\rightarrow\infty}\P(\min w_{i}<\frac{1}{e})=0$.
From the proof of Corollary \ref{normal_mean_dirichlet} and the union
bound, for any $t\ge\frac{\alpha+1}{\alpha}$,
\[
\begin{split} & \P(\max w_{i}>t)=ne^{\alpha(log(t)+1-t}\leq Cn\left(p\right)^{-\frac{\eta+\gamma}{t}}.\end{split}
\]
Set $t=\frac{2}{3}(\eta+\gamma)>\frac{4}{3}\ge\frac{\alpha+1}{\alpha}$
when $n$ is sufficiently large, then $\lim_{n\rightarrow\infty}\P(\max w_{i}>t)=0$.
If $w_{i}\sim n\text{Dir}(\alpha,\cdots,\alpha)$, we have $\E w_{i}=1$,
$Var(w_{i})=n^{2}[\frac{1/n(1-1/n)}{n\alpha+1}]\lesssim\frac{1}{\log n}$,
$Cov(w_{i},w_{j})=-n^{2}[\frac{1/n^{2}}{n\alpha+1}]\lesssim\frac{1}{n\log n}$.
Using the fact that $w_{i}\xrightarrow{d}\frac{1}{\alpha}Gamma(\alpha,1)$,
we can prove 
\[
\begin{split} & \lim_{n\rightarrow\infty}\P\left(\min w_{i}<\frac{1}{e}\right)\end{split}
=0,
\]
\[
\begin{split} & \lim_{n\rightarrow\infty}\P(\max w_{i}>2(\eta+\gamma)/3)=0.\end{split}
\]
Thus, conditions (1)-(4) hold for both $\bm{w}\sim n\text{Dir}(\alpha,\cdots,\alpha)$
and $w_{i}\sim\frac{1}{\alpha}Gamma(\alpha,1)$ where $\alpha\gtrsim\log p$.

\subsection{Derivation of Observations in Section \ref{sec:motivation_noncenter}}
\label{sec_appendix:motivation_noncenter_proof}
In this section, we theoretically justify our  statements from Section \ref{sec:motivation_noncenter}.

\subsubsection{Notation}
Define 
\begin{align*}
c_0^{(-)}&=(1-\theta)\lambda_0e^{-y_i\lambda_0+\lambda_0^2/2n},  &c_0^{(+)}&=(1-\theta)\lambda_0e^{y_i\lambda_0+\lambda_0^2/2n}. 
\end{align*}
Next, define
\begin{align*}
\phi_0^{(-)}(x)&=\phi(x;y_i-\lambda_0/n,1/n), 
&\phi_0^{(+)}(x)&=\phi(x;y_i+\lambda_0/n,1/n), 
\end{align*}
where $\phi(x; \mu, \sigma^2)$ is the Gaussian density with mean $\mu$ and variance $\sigma^2$. The quantities $c_1^{(-)},\,c_1^{(+)},\,\phi_1^{(-)}(x),\,\phi_1^{(+)}(x)$ are defined in a similar way in Section \ref{sec:motivation_noncenter} in the main paper, with $\lambda_0$ replaced by $\lambda_1$. 
Throughout this section, we assume that the parameters $\lambda_0$ and  $\lambda_1$ satisfy $(1-\theta)/\theta\asymp n^a$, $\lambda_0\asymp n^d$ where $a,d\geq 2$ and $1/\sqrt{n}<\lambda_1\leq c_0$ with some constant $c_0$.

\subsubsection{Active Coordinates}\label{sec:appendix_motivation_active}

\begin{prop}\label{prop:1}
	Assume the true posterior defined in \eqref{eq:true_posterior} with an active coordinate. Conditioning on the event  $|y_i|>|\beta_i^0|/2$, we have 
	$$
	w_0=\pi(\gamma_i=0\C y_i)\rightarrow 0.
	$$
\end{prop}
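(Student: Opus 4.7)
The plan is to reduce the claim to a ratio of marginal likelihoods and then bound each marginal separately. By Bayes' rule,
\begin{equation*}
w_0 \;=\; \frac{(1-\theta)\,m_0(y_i)}{(1-\theta)\,m_0(y_i)+\theta\,m_1(y_i)},\qquad m_k(y_i)\;=\;\int \phi(y_i;\beta,1/n)\,\psi_k(\beta)\,d\beta,
\end{equation*}
so that $w_0\to 0$ is equivalent to showing the posterior odds $R_n := \frac{1-\theta}{\theta}\cdot\frac{m_0(y_i)}{m_1(y_i)}\to 0$. I would then bound $m_0$ above and $m_1$ below, using only that $|y_i|>|\beta_i^0|/2>0$ and the postulated rates $(1-\theta)/\theta\asymp n^a$, $\lambda_0\asymp n^d$ with $a,d\ge 2$, and $\lambda_1\le c_0$.

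For the upper bound on $m_0(y_i)$, split the integral at $|\beta|=|y_i|/2$. On $\{|\beta|\le|y_i|/2\}$ the Gaussian factor is controlled since $|y_i-\beta|\ge|y_i|/2\ge|\beta_i^0|/4$, giving $\phi(y_i;\beta,1/n)\le\sqrt{n/(2\pi)}\,e^{-ny_i^2/8}$, while $\int\psi_0\le 1$. On $\{|\beta|>|y_i|/2\}$ the Gaussian is at most $\sqrt{n/(2\pi)}$, while the Laplace spike has tail mass $\int_{|\beta|>|y_i|/2}\psi_0(\beta)\,d\beta = e^{-\lambda_0|y_i|/2}$. Hence
\begin{equation*}
m_0(y_i)\;\le\;\sqrt{\tfrac{n}{2\pi}}\,\bigl(e^{-ny_i^2/8}+e^{-\lambda_0|y_i|/2}\bigr).
\end{equation*}
For the lower bound on $m_1(y_i)$, write $\beta=y_i-\varepsilon$ with $\varepsilon\sim\mathcal{N}(0,1/n)$; the triangle inequality $|y_i-\varepsilon|\le|y_i|+|\varepsilon|$ yields $\psi_1(y_i-\varepsilon)\ge\psi_1(y_i)\,e^{-\lambda_1|\varepsilon|}$, so $m_1(y_i)\ge\psi_1(y_i)\,\mathbb{E}\,e^{-\lambda_1|\varepsilon|}$. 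Since $\lambda_1\le c_0$ is bounded, $\mathbb{E}\,e^{-\lambda_1|\varepsilon|}$ is bounded below by a positive constant uniformly in $n$, so $m_1(y_i)\gtrsim \lambda_1\,e^{-\lambda_1|y_i|}$.

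Combining, I would obtain
\begin{equation*}
R_n \;\lesssim\; n^{a+1/2}\,\lambda_1^{-1}\,\Bigl(e^{-ny_i^2/8+\lambda_1|y_i|}+e^{-\lambda_0|y_i|/2+\lambda_1|y_i|}\Bigr),
\end{equation*}
and then argue that both exponents diverge to $-\infty$ fast enough to swamp the polynomial prefactor. For the first, factor $-|y_i|(n|y_i|/8-\lambda_1)$; using $|y_i|\ge|\beta_i^0|/2$ and $\lambda_1\le c_0$ gives $n|y_i|/8-\lambda_1\ge n|y_i|/16$ for $n$ large, so the exponent is at most $-n(\beta_i^0)^2/64\to-\infty$. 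For the second, $\lambda_0\asymp n^d\gg\lambda_1$ forces $-|y_i|(\lambda_0/2-\lambda_1)\le-|\beta_i^0|\lambda_0/8\to-\infty$. Either way $R_n\to 0$, yielding $w_0\to 0$.

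The main technical subtlety is the crude but tight region-splitting bound on $m_0$: the spike concentrates on an interval of length $O(\lambda_0^{-1})=O(n^{-d})$ while the Gaussian likelihood peaks at $\beta\approx\beta_i^0$, so neither a pure ``Gaussian near the mode'' nor a pure ``spike mass at zero'' estimate suffices by itself. Splitting at $|\beta|=|y_i|/2$ is what simultaneously unlocks the Gaussian exponential decay on the near-zero piece and the Laplace tail exponent on the far piece; both contributions must decay super-polynomially because the prior odds $(1-\theta)/\theta$ grow polynomially like $n^a$.
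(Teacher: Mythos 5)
Your proof is correct, and it takes a genuinely different route from the paper's. The paper works directly with the exact conjugate form of the Gaussian--Laplace convolution: it expresses $w_1$ as a ratio involving the four orthant-truncated-Gaussian pieces $c_1^{(\pm)}\phi_1^{(\pm)}$ and $c_0^{(\pm)}\phi_0^{(\pm)}$, normalizes by $\theta\lambda_1$, and evaluates each of the four integrals via the normal CDF, using the Mills ratio to show the two spike terms vanish and identifying the exact limit $e^{-y_i\lambda_1}$ of the surviving slab term. You instead avoid the closed form entirely: you reduce the claim to the posterior odds $R_n=\frac{1-\theta}{\theta}\,\frac{m_0(y_i)}{m_1(y_i)}$, bound the spike marginal $m_0$ from above by splitting the integral at $|\beta|=|y_i|/2$ (Gaussian decay on the near-zero piece, Laplace tail mass $e^{-\lambda_0|y_i|/2}$ on the far piece), and bound the slab marginal $m_1$ from below by the triangle inequality $\psi_1(y_i-\varepsilon)\ge\psi_1(y_i)e^{-\lambda_1|\varepsilon|}$. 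Both exponents you obtain, $-ny_i^2/8+\lambda_1|y_i|$ and $-\lambda_0|y_i|/2+\lambda_1|y_i|$, do diverge to $-\infty$ at rates ($e^{-cn}$ and $e^{-c n^{d}}$ with $d\ge 2$) that dominate the polynomial prefactor $n^{a+1/2}\lambda_1^{-1}\lesssim n^{a+1}$, so the argument closes. Your approach is more elementary and does not rely on the specific conjugacy of the model, and your bounds are in fact uniform over $\{|y_i|>|\beta_i^0|/2\}$ rather than pointwise in $y_i$; what it gives up is the exact limiting form of the dominant term, which the paper's computation supplies and then reuses in the subsequent proposition identifying the limiting shape of $\pi(\beta_i\mid y_i,\gamma_i=1)$.
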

\begin{proof}
	Without loss of generality, we assume $y_i>\beta_i/2>0$.
	\begin{equation}\label{eq: w1}
	\begin{split}
	w_1&=\pi(\gamma_i=1\C y_i)=\frac{\pi(y_i\C \gamma_i=1)\pi(\gamma_i=1)}{\pi(y_i\C \gamma_i=1)\pi(\gamma_i=1)+\pi(y_i\C \gamma_i=0)\pi(\gamma_i=0)}
	\\&=\frac{\int_{\beta_i}\pi(y_i\C \beta_i)\pi(\beta_i\C \gamma_i=1)\pi(\gamma_i=1)d\beta_i}{\int_{\beta_i}\pi(y_i\C \beta_i)\pi(\beta_i\C \gamma_i=1)\pi(\gamma_i=1)d\beta_i+\int_{\beta_i}\pi(y_i\C \beta_i)\pi(\beta_i\C \gamma_i=0)\pi(\gamma_i=0)d\beta_i}
	\\&=\frac{\int_{0}^{\infty}c_1^{(-)}\phi_1^{(-)}(\beta_i)d\beta_i +\int_{-\infty}^{0}c_1^{(+)}\phi_1^{(+)}(\beta_i)d\beta_i}
	{\int_{0}^{\infty}c_1^{(-)}\phi_1^{(-)}(\beta_i)d\beta_i+\int_{0}^{\infty}c_0^{(-)}\phi_0^{(-)}(\beta_i)d\beta_i +\int_{-\infty}^{0}c_1^{(+)}\phi_1^{(+)}(\beta_i)d\beta_i+\int_{-\infty}^{0}c_0^{(+)}\phi_0^{(+)}(\beta_i)d\beta_i}
	\end{split}
	\end{equation}
	We consider the four terms in the denominator separately. It is helpful to divide each of them by $\theta\lambda_1$. Regarding the first term, we have
	\begin{equation}\label{eq:motiv_denom1}
	\begin{split}
	\frac{1}{\theta\lambda_1}\int_{0}^{\infty}c_1^{(-)}\phi_1^{(-)}(\beta_i)d\beta_i
	=e^{-y_i\lambda_1+\lambda_1^2/2n}\left(1-\Phi\left(-\sqrt{n}(y_i-\frac{\lambda_1}{n})\right)\right)\rightarrow e^{-y_i\lambda_1}.
	\end{split}
	\end{equation}
	Regarding the second term in the denominator, from the Mills ratio we have
	\begin{equation*}
	\begin{split}
	&\frac{1}{\theta\lambda_1}\int_{0}^{\infty}c_0^{(-)}\phi_0^{(-)}(\beta_i)d\beta_i
	= \frac{1}{\theta\lambda_1}c_0^{(-)}\Phi\left(-\sqrt{n}\left(y_i-\frac{\lambda_0}{n}\right)\right)
	\\&\leq C\frac{(1-\theta)\lambda_0}{\theta\lambda_1}e^{-y_i\lambda_0+\lambda_0^2/2n}\frac{\phi\left(\sqrt{n}(y_i-\frac{\lambda_0}{n}\right)}{\sqrt{n}(y_i-\frac{\lambda_0}{n})}
	=C\frac{(1-\theta)\lambda_0}{\sqrt{n}(y_i-\frac{\lambda_0}{n})\theta\lambda_1}\frac{1}{\sqrt{2\pi}}e^{-\frac{n}{2}y_i^2}
	\end{split}
	\end{equation*}
	where $\frac{(1-\theta)\lambda_0}{\sqrt{n}(y_i-\frac{\lambda_0}{n})\theta\lambda_1}\frac{1}{\sqrt{2\pi}}e^{-\frac{n}{2}y_i^2}\rightarrow 0$. Thus, 
	\begin{equation}\label{eq:motiv_denom2}
	\frac{1}{\theta\lambda_1}\int_{0}^{\infty}c_0^{(-)}\phi_0^{(-)}(\beta_i)d\beta_i\rightarrow 0.
	\end{equation}
	For the third term in the denominator, we write
	\begin{equation}\label{eq:motiv_denom3}
	\begin{split}
	\frac{1}{\theta\lambda_1}\int_{-\infty}^{0}c_1^{(+)}\phi_1^{(+)}(\beta_i)d\beta_i
	=e^{y_i\lambda_1+\lambda_1^2/2n}\Phi\left(-\sqrt{n}(y_i+\frac{\lambda_1}{n})\right)\rightarrow 0.
	\end{split}
	\end{equation}
	For the fourth term, we then have
	\begin{equation*}
	\begin{split}
	\frac{1}{\theta\lambda_1}\int_{-\infty}^{0}c_0^{(+)}\phi_0^{(+)}(\beta_i)d\beta_i
	&=\frac{1}{\theta\lambda_1}\int_{-\infty}^{0}(1-\theta)\lambda_0\sqrt{\frac{n}{2\pi}}e^{-\frac{n}{2}(\beta_i-y_i)^2+\beta_i\lambda_0}d\beta_i
	\\&\leq \frac{1}{\theta\lambda_1}\int_{-\infty}^{0}(1-\theta)\lambda_0\sqrt{\frac{n}{2\pi}}e^{-\frac{n}{2}y_i^2+\beta_i\lambda_0}d\beta_i
	=\frac{1}{\theta\lambda_1}(1-\theta)\sqrt{\frac{n}{2\pi}}e^{-\frac{n}{2}y_i^2}
	\end{split}
	\end{equation*}
	where $\frac{1}{\theta\lambda_1}(1-\theta)\sqrt{\frac{n}{2\pi}}e^{-\frac{n}{2}y_i^2}\rightarrow 0$. Thus, 
	\begin{equation}\label{eq:motiv_denom4}
	\frac{1}{\theta\lambda_1}\int_{-\infty}^{0}c_0^{(+)}\phi_0^{(+)}(\beta_i)d\beta_i\rightarrow 0.
	\end{equation}
	From \eqref{eq:motiv_denom1}, \eqref{eq:motiv_denom2}, \eqref{eq:motiv_denom3} and \eqref{eq:motiv_denom4}, we know that 
	$w_1\rightarrow \frac{e^{-y_i\lambda_1}}{e^{-y_i\lambda_1}}=1$.
	Thus,
	$w_0=1-w_1\rightarrow 0$.
\end{proof}

\begin{prop}
	For the true posterior defined in \eqref{eq:true_posterior}, we have 
	$$
	\pi\left( \sqrt{n}(\beta_i-y_i)\C y_i,\gamma_i=1  \right)\rightarrow \phi(\sqrt{n}(\beta_i-y_i);0,1).
	$$
\end{prop}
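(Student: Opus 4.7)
My plan is to perform the change of variable $u=\sqrt{n}(\beta_i-y_i)$, compute the density $f_n(u)$ of $u$ under $\pi(\cdot\C y_i,\gamma_i=1)$ explicitly, show that $f_n(u)\to\phi(u;0,1)$ pointwise for each fixed $u\in\mathbb{R}$, and then apply Scheff\'e's lemma (\cite{scheffe1947useful}) to upgrade pointwise convergence to convergence in total variation. Without loss of generality I will assume $y_i>|\beta_i^0|/2>0$ (the case $y_i<0$ is symmetric); since $\beta_i^0$ is fixed and nonzero, this lower bound on $y_i$ holds with probability tending to one.

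The key computation has two steps. First, I will analyze the normalizing constant
\[
Z_n=\int_0^\infty c_1^{(-)}\phi_1^{(-)}(\beta_i)\,d\beta_i+\int_{-\infty}^0 c_1^{(+)}\phi_1^{(+)}(\beta_i)\,d\beta_i.
\]
Since $\phi_1^{(-)}$ is Gaussian with mean $y_i-\lambda_1/n$ and variance $1/n$, the first piece equals $c_1^{(-)}\Phi(\sqrt{n}(y_i-\lambda_1/n))\to c_1^{(-)}$ because $y_i>|\beta_i^0|/2$ and $\lambda_1/n\to 0$. For the second piece, Mills' ratio combined with $c_1^{(+)}/c_1^{(-)}=e^{2y_i\lambda_1}$ gives
\[
\frac{c_1^{(+)}\,\Phi(-\sqrt{n}(y_i+\lambda_1/n))}{c_1^{(-)}\,\Phi(\sqrt{n}(y_i-\lambda_1/n))}\lesssim e^{2y_i\lambda_1}\,\frac{\phi(\sqrt{n}(y_i+\lambda_1/n))}{\sqrt{n}(y_i+\lambda_1/n)}\longrightarrow 0,
\]
so $Z_n/c_1^{(-)}\to 1$. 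Second, for every fixed $u$, eventually $\beta_i=y_i+u/\sqrt{n}>0$, so the unnormalized density in the positive orthant is what matters, and
\[
f_n(u)=\frac{1}{\sqrt{n}\,Z_n}\,c_1^{(-)}\,\phi_1^{(-)}\!\bigl(y_i+u/\sqrt{n}\bigr)=\frac{c_1^{(-)}}{Z_n}\cdot\frac{1}{\sqrt{2\pi}}\exp\!\left(-\tfrac{1}{2}\bigl(u+\lambda_1/\sqrt{n}\bigr)^2\right).
\]
Since $c_1^{(-)}/Z_n\to 1$ and $\lambda_1/\sqrt{n}\to 0$ (using $\lambda_1\leq c_0$ and $\lambda_1>1/\sqrt{n}$ from the standing assumptions of this subsection), the right-hand side converges pointwise to $\phi(u;0,1)$.

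Once pointwise convergence is established, Scheff\'e's lemma immediately yields convergence of $f_n$ to $\phi(\cdot;0,1)$ in $L^1$, hence in total variation, which is the stated conclusion. I anticipate the main obstacle to be purely bookkeeping rather than conceptual: one must keep careful track of the two orthants separately and verify that the ratio $c_1^{(+)}/c_1^{(-)}=e^{2y_i\lambda_1}$, which does not vanish, is dominated by the exponentially small Gaussian tail $\Phi(-\sqrt{n}(y_i+\lambda_1/n))$. Everything else is routine Gaussian asymptotics driven by $\lambda_1/n\to 0$ and $\lambda_1/\sqrt{n}\to 0$.
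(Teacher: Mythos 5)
Your proof is correct and follows essentially the same route as the paper's: the substitution $u=\sqrt{n}(\beta_i-y_i)$, pointwise convergence of the transformed density to $\phi(u;0,1)$ driven by $\lambda_1/\sqrt{n}\to 0$, and Scheff\'e's lemma for the total-variation upgrade. If anything, your treatment of the normalizing constant $Z_n$ (showing $c_1^{(+)}\Phi(-\sqrt{n}(y_i+\lambda_1/n))$ is negligible via Mills' ratio so that $Z_n/c_1^{(-)}\to 1$) is spelled out more explicitly than in the paper, which leaves that step implicit by leaning on the computation already done for the mixing weights in the preceding proposition.
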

\begin{proof}
	Setting $u_i=\sqrt{n}(\beta_i-y_i)$, we have
	\begin{equation}\label{eq:true_posterior_positive}
	\begin{split}
	\pi\left( u_i\C y_i,\gamma_i=1  \right)&=
	\frac{1}{\sqrt{n}}\frac{\1(u_{i}\geq -\sqrt{n}y_i)c_1^{(-)}\phi_1^{(-)}(u_i/\sqrt{n}+y_i) +\1(u_{i}< -\sqrt{n}y_i)c_1^{(+)}\phi_1^{(+)}(u_i/\sqrt{n}+y_i)}
	{\int_{0}^{\infty}c_1^{(-)}\phi_1^{(-)}(\beta_i)d\beta_i+ \int_{-\infty}^{0}c_1^{(+)}\phi_1^{(+)}(\beta_i)d\beta_i}.
	\end{split}
	\end{equation}
	Notice that both 
	$$
	\frac{1}{\sqrt{n}}\phi_1^{(-)}(u_i/\sqrt{n}+y_i) \rightarrow \phi(u_i;0,1)
	\quad\text{and}\quad
	\frac{1}{\sqrt{n}}\phi_1^{(+)}(u_i/\sqrt{n}+y_i) \rightarrow \phi(u_i;0,1).
	$$
	For any $u_i$, only one of $\1(u_{i}< -\sqrt{n}y_i)$  and $\1(u_{i}\geq -\sqrt{n}y_i)$ holds, and the denominator in \eqref{eq:true_posterior_positive} does not depend on $u_i$, so
	$
	\pi\left( u_i\C y_i,\gamma_i=1  \right)\rightarrow \phi(u_i;0,1)
	$.
\end{proof}

\begin{prop}
	Consider the fixed WBB estimator $\wh\beta_i$. Conditioning on $w_i$ and $y_i$, when $\wh\beta_i\neq 0$ and  $|y_i|>\frac{|\beta_i^0|}{2}$,  we have 
	$$
	n\left(\wh\beta_i-y_i\right)\rightarrow -\frac{1}{w_i}\lambda_1.
	$$
\end{prop}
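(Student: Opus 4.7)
The plan is to substitute directly into the WBB fixed-point equation (16) and then invoke the structural characterization of the nonzero SSL mode to control the weighted shrinkage $\lambda^{*}(\wh\beta_i)$. WLOG assume $y_i>0$ (the case $y_i<0$ is symmetric by reflection). Conditioning on $\wh\beta_i\neq 0$, equation (16) gives
\[
\wh\beta_i=y_i-\frac{\lambda^{*}(\wh\beta_i)}{w_i n},
\]
since $\mathrm{sign}(\sqrt{w_i n}\,y_i)=1$ and the $[\cdot]_+$ is inactive. Multiplying through by $n$ yields $n(\wh\beta_i-y_i)=-\lambda^{*}(\wh\beta_i)/w_i$, so the claim reduces to showing $\lambda^{*}(\wh\beta_i)\to\lambda_1$ as $n\to\infty$.

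Recalling $\lambda^{*}(t)=\lambda_1\,p^{*}(t)+\lambda_0(1-p^{*}(t))$, the goal is equivalently $p^{*}(\wh\beta_i)\to 1$ with a rate strong enough that $(\lambda_0-\lambda_1)(1-p^{*}(\wh\beta_i))\to 0$. Here I would invoke the characterization of the nonzero global mode of the SSL penalized objective. Specifically, transplanting Theorem~3.1 of \cite{rovckova2018bayesian} (see also Lemma~\ref{lemma_appendix:delta_original_bound} in this paper's appendix) to the WBB setting in which $\sigma^{2}/\|\X_j\|_2^{2}$ is replaced by $1/(w_i n)$: if $\wh\beta_i\neq 0$ is the global maximizer, then it sits in the \emph{slab regime} and satisfies $p^{*}(\wh\beta_i)\ge c_+^{w_i}$, where
\[
c_+^{w_i}=\tfrac{1}{2}\Bigl(1+\sqrt{1-4/[w_i n(\lambda_0-\lambda_1)^{2}]}\Bigr).
\]
This requires the discriminant conditions $\lambda_0 w_i^{-1/2}-\lambda_1 w_i^{-1/2}>2$ and $g(0;\lambda_0 w_i^{-1/2},\lambda_1 w_i^{-1/2})>0$, both of which follow from the assumed growth $\lambda_0\asymp n^d$, $d\ge 2$, and $(1-\theta)/\theta\asymp n^a$, $a\ge 2$, together with $w_i>0$ (implicit in the conditioning).

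Under these same growth assumptions we have $w_i n(\lambda_0-\lambda_1)^{2}\asymp n^{2d+1}\to\infty$, so a one-term Taylor expansion gives $1-c_+^{w_i}\asymp 1/[w_i n(\lambda_0-\lambda_1)^{2}]$. Consequently,
\[
0\le \lambda^{*}(\wh\beta_i)-\lambda_1=(\lambda_0-\lambda_1)\bigl(1-p^{*}(\wh\beta_i)\bigr)\le (\lambda_0-\lambda_1)(1-c_+^{w_i})\lesssim \frac{1}{w_i n(\lambda_0-\lambda_1)}\longrightarrow 0.
\]
Combined with the first step, this yields $n(\wh\beta_i-y_i)\to -\lambda_1/w_i$, which is the desired conclusion. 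Note that the hypothesis $|y_i|>|\beta_i^0|/2$ does not enter the deterministic argument once we have conditioned on $\wh\beta_i\neq 0$; its role is probabilistic, guaranteeing (via the preceding Proposition~\ref{prop:1}) that we are in the active regime where the slab mode indeed wins with probability tending to one.

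The main obstacle is the second step: rigorously justifying that the nonzero global mode lies in the slab regime with $p^{*}(\wh\beta_i)\ge c_+^{w_i}$, rather than at the spurious small-magnitude local maximum. This is a direct adaptation of the mode-analysis in \cite{rovckova2018bayesian}, but one must verify the two discriminant conditions above under the WBB rescaling $1/n\mapsto 1/(w_i n)$, which is where the lower tail assumption on $w_i$ is implicitly used to ensure $w_i$ is not pathologically small.
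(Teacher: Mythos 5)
Your first step coincides exactly with the paper's: on the nonzero branch of \eqref{eq: WBB_solution} (with the sign fixed WLOG by $y_i>0$), rearranging gives $n(\wh\beta_i-y_i)=-\lambda^*(\wh\beta_i)/w_i$, so everything reduces to $\lambda^*(\wh\beta_i)\to\lambda_1$, i.e.\ $(\lambda_0-\lambda_1)(1-p^*(\wh\beta_i))\to 0$. Where you part ways with the paper is in how this last limit is justified. The paper writes $1-p^*(\wh\beta_i)$ out explicitly as a ratio involving $\frac{(1-\theta)\lambda_0}{\theta\lambda_1}e^{-|\wh\beta_i|(\lambda_0-\lambda_1)}$ and lets the exponential, with $|\wh\beta_i|$ pinned near the fixed nonzero $y_i$, beat the polynomially growing prefactor; this is precisely where the hypothesis $|y_i|>|\beta_i^0|/2$ does its (admittedly implicit) work, since it keeps $|\wh\beta_i|$ bounded away from zero. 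You instead invoke the slab-regime characterization $p^*(\wh\beta_i)\ge c_+$ of the nonzero global mode. That device is legitimate here and is in fact used by the paper elsewhere (in the bound on $U_3$ in the proof of Theorem \ref{normal_means_weight}), and it has the genuine advantage you note: it does not require $|\wh\beta_i|$ to be bounded below, only that the surviving mode is the slab mode.

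The one concrete problem is that your rescaling of $c_+$ is inverted. The weighted objective $-\frac{w_i n}{2}(y_i-\beta_i)^2+\log\pi(\beta_i\C\theta)$ maps to the unit-variance normal-means problem via $\beta_i\mapsto\sqrt{w_i n}\,\beta_i$, so the effective penalty gap is $(\lambda_0-\lambda_1)/\sqrt{w_i n}$, not $(\lambda_0-\lambda_1)\sqrt{w_i n}$. Hence the two-mode condition is $(\lambda_0-\lambda_1)/\sqrt{w_i n}>2$, one has $c_+^{w_i}=\tfrac12\bigl(1+\sqrt{1-4 w_i n/(\lambda_0-\lambda_1)^2}\bigr)$, and $1-c_+^{w_i}\asymp w_i n/(\lambda_0-\lambda_1)^2$, which is larger than your claimed order by a factor of $(w_i n)^2$. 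The conclusion survives, because $(\lambda_0-\lambda_1)(1-c_+^{w_i})\lesssim w_i n/(\lambda_0-\lambda_1)\asymp w_i\,n^{1-d}\to 0$ under $\lambda_0\asymp n^d$ with $d\ge 2$ and $w_i$ fixed by the conditioning, and the corrected (stronger) discriminant conditions also hold in that regime. But since this rescaling is exactly the step you single out as the main obstacle, it needs to be fixed before the argument is complete.
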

\begin{proof}
	Fixed WBB estimate $\wh\beta_i$ satisfies equation \eqref{eq: WBB_solution} in the main text. Without loss of generality, we assume $y_i>\frac{\beta_i^0}{2}>0$.
	Conditioning on $y_i, w_i$, when $\wh{\beta_i}\neq0$,
	\begin{equation*}
	\begin{split}
	&n\left(\wh{\beta_i}-y_i\right)
	=-\frac{1}{w_i}\lambda_1-\frac{1}{w_i}(\lambda_0-\lambda_1)(1-p^*(\wh{\beta_i}))  
	\\&=-\frac{1}{w_i}\lambda_1-\frac{1}{w_i}\frac{\frac{(1-\theta)\lambda_0}{\theta\lambda_1}(\lambda_0-\lambda_1)e^{-\C\wh{\beta_i}\C(\lambda_0-\lambda_1)}}{1+\frac{(1-\theta)\lambda_0}{\theta\lambda_1}e^{-\C\wh{\beta_i}\C(\lambda_0-\lambda_1)}}
	\rightarrow -\frac{1}{w_i}\lambda_1.\qedhere
	\end{split}
	\end{equation*}
\end{proof}

\begin{prop}
	Consider the fixed WBB estimator $\wh\beta_i$. Conditioning on the event that $|y_i|>\frac{|\beta_i^0|}{2}$, we have 
	$$
	\P_{w_i}\left( \wh\beta_i=0  \C y_i\right)\rightarrow 0.
	$$
\end{prop}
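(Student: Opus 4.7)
The plan is to leverage the explicit closed form in \eqref{eq: WBB_solution}, which states $\wh\beta_i=0$ if and only if $|y_i|\le\Delta_{w_i}$. Conditioning on the event $|y_i|>|\beta_i^0|/2=:c>0$ fixes a strictly positive constant $c$, so the problem reduces to a tail bound on the single random variable $w_i$, namely $\P_{w_i}(\Delta_{w_i}\ge c\C y_i)\to 0$.

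The core step is a quantitative upper bound on $\Delta_{w_i}=\inf_{t>0}[t/2-\rho(t\C\theta)/(nw_it)]$. Changing variables $s=t\sqrt{nw_i}$ and using the scale invariance $p^{*}(0;\lambda_0,\lambda_1)=p^{*}(0;\lambda_0/\sqrt{nw_i},\lambda_1/\sqrt{nw_i})$ yields the identity $\Delta_{w_i}=\Delta(\lambda_0/\sqrt{nw_i},\lambda_1/\sqrt{nw_i})/\sqrt{nw_i}$, where $\Delta(\cdot,\cdot)$ is the unit-variance threshold analysed in Lemma \ref{lemma_appendix:delta_original_bound}. Applying the $\Delta^U$ upper bound from that lemma then produces
$$\Delta_{w_i}\le\sqrt{\frac{2\log[1/p^{*}(0)]}{nw_i}}+\frac{\lambda_1}{nw_i}.$$
Under the standing hypotheses $\lambda_0\asymp n^d$, $(1-\theta)/\theta\asymp n^a$ with $a,d\ge 2$, and $\lambda_1\le c_0$, the right-hand side is of order $\sqrt{(\log n)/(nw_i)}+1/(nw_i)$.

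Combining these, the event $\Delta_{w_i}\ge c$ forces $w_i\le K(\log n)/n$ for a constant $K=K(c,a,d,c_0)$. Consequently $\P_{w_i}(\wh\beta_i=0\C y_i)\le\P(w_i\le K(\log n)/n)$. For the WBB weight distributions used throughout the paper (uniform Dirichlet, $\mathrm{Exp}(1)$, or $\mathrm{Gamma}(\alpha,1)/\alpha$ with $\alpha\ge 1$), the density is bounded in a neighborhood of the origin, so $\P(w_i\le\epsilon)=O(\epsilon)$, giving $\P(w_i\le K(\log n)/n)=O((\log n)/n)\to 0$, which finishes the proof.

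The main obstacle will be obtaining the upper bound on $\Delta_{w_i}$ with the correct $w_i$-scaling, since the infimum definition does not immediately expose how the threshold shrinks in $w_i$ and $n$ jointly. Once the rescaling identity reduces $\Delta_{w_i}$ to the already-analysed $\Delta(\lambda_0,\lambda_1)$ of Lemma \ref{lemma_appendix:delta_original_bound}, the remainder is a routine small-ball computation for $w_i$, which is straightforward to verify for each of the weight distributions of interest.
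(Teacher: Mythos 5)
Your proof is correct, but it takes a genuinely more direct route than the paper's. You exploit the exact characterization $\{\wh\beta_i=0\}=\{|y_i|\le\Delta_{w_i}\}$ (which does hold: $\Delta_{w_i}$ is by construction the selection threshold at which the global maximizer of the weighted objective switches from zero to a nonzero stationary point, so the $[\cdot]_+$ in the ``otherwise'' branch of \eqref{eq: WBB_solution} is never binding there), and this collapses the whole problem to the single tail bound $\P(\Delta_{w_i}\ge |\beta_i^0|/2)\to 0$, i.e.\ a left-tail bound $\P(w_i\lesssim (\log n)/n)\to 0$. The paper instead avoids invoking the equivalence: it bounds $\P(\wh\beta_i=0\C y_i)\le \E\big[|\wh\beta_i-y_i|^{1/2}\big]/|y_i|^{1/2}$ via Markov's inequality and then splits over the two branches of \eqref{eq: WBB_solution}; the thresholded branch reproduces exactly your term $\P\big(|\beta_i^0|/2\le \sqrt{2\log[1/p^*(0)]/(nw_i)}+\lambda_1/(nw_i)\big)$, while the non-thresholded branch produces an extra term $\E\big[(\lambda^*(\wh\beta_i)/(nw_i))^{1/2}\big]$ that the paper controls by recycling the $U_3,U_4,U_5$ analysis from the normal-means risk proof. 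So your argument buys brevity (no second term) at the cost of relying on the iff, while the paper's is robust to using only the implication ``$|y_i|\le\Delta_{w_i}\Rightarrow\wh\beta_i=0$.'' Your rescaling identity $\Delta_{w_i}=\Delta(\lambda_0/\sqrt{nw_i},\lambda_1/\sqrt{nw_i})/\sqrt{nw_i}$ and the resulting upper bound coincide with the paper's \eqref{eq_appendix:delta_wi} (both of you gloss over, in the same way, the regime where the two-mode formula behind $\Delta^U$ is not applicable, which only occurs for $w_i$ so large that it is irrelevant here). The final small-ball step is also consistent with the paper, which likewise asserts without further justification that $\P(w_i^{1/2}<\epsilon_n)\to 0$ for a vanishing sequence $\epsilon_n$; your remark about bounded density near the origin is slightly stronger than needed (any fixed weight law with no atom at zero suffices, including $\mathrm{Gamma}(\alpha,1)/\alpha$ with $\alpha<1$).
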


\begin{proof}
	The fixed WBB estimator $\hat\beta_i$ satisfies Equation \eqref{eq: WBB_solution}.
	When $n$ is sufficiently large, from Proposition 5 in \cite{moran2018variance}, we have
	\begin{equation}
	\Delta_{w_{i}}\le \sqrt{2/(nw_i)\log[1/p^*(0)]}+\lambda_1/(nw_i).
	\label{eq_appendix:delta_wi}
	\end{equation}
	Thus, from the Markov's inequality, we find 
	\begin{equation}
	\begin{split} 
	& \P_{w_{i}}\left(\wh\beta_{i}=0\C y_{i}\right) 
	\leq\P\left(|\wh\beta_{i}-y_{i}|^{1/2}\ge |y_{i}|^{1/2}\mid y_i\right)
	\leq\frac{\E\left[|\wh\beta_{i}-y_{i}|^{1/2}\mid y_i\right]}{|y_{i}|^{1/2}}\\
	& \stackrel{(a)}{=}\frac{1}{|y_{i}|^{1/2}}\left\{ \E\left[|y_{i}|^{1/2}\1\left(|y_{i}|\leq\Delta_{w_{i}}\right)\mid y_i\right]
	+\E\left[\left(\frac{1}{w_{i}n}\lambda^{*}(\wh{\beta_{i}})\right)^{1/2}\1\left(|y_{i}|>\Delta_{w_{i}}\right)\mid y_i\right]\right\} \\
	& \stackrel{(b)}{\le} \P\left(|\beta_i^0|/2\le \sqrt{2/(nw_i)\log[1/p^*(0)]}+\lambda_1/(nw_i)\right)
	+\frac{1}{|y_{i}|^{1/2}}\E\left(\frac{1}{nw_{i}}\lambda^{*}(\wh{\beta_{i}})\right)^{1/2}
	\\& \leq \P\left(w_i^{1/2}<\frac{1}{|\beta^0_i|}\left[\sqrt{\frac{2}{n}\log[1/p^*(0)]+\frac{2\lambda_1}{n}|\beta_i^0|}+\sqrt{\frac{2}{n}\log[1/p^*(0)]}\right]\right)
	+\frac{1}{|y_{i}|^{1/2}}\E\left(\frac{1}{nw_{i}}\lambda^{*}(\wh{\beta_{i}})\right)^{1/2}.
	\end{split}
	\label{eq: active_WBB_bound}
	\end{equation}
	where (a) follows from Equation \eqref{eq: WBB_solution}, (b) follows from the condition $|y_i|>|\beta_i^0|/2$ and \eqref{eq_appendix:delta_wi}.
	Notice that 
	$$
	\P\left(w_i^{1/2}<\frac{1}{|\beta^0_i|}\left[\sqrt{\frac{2}{n}\log[1/p^*(0)]+\frac{2\lambda_1}{n}|\beta_i^0|}+\sqrt{\frac{2}{n}\log[1/p^*(0)]}\right]\right)
	\rightarrow 0.
	$$
	In order to bound $\E\left(\frac{1}{nw_i}\lambda^{*}(\wh{\beta_{i}})\right)^{1/2}$,
	notice that 
	\[
	\begin{split}
	\wh\beta_i=&\arg\max_{\beta_i\in\R}\left\{- \frac{w_i n}{2}(y_i-\beta_i)^2+\log \pi(\beta_i\C\theta) \right\}=\arg\max_{\beta_i\in\R}\left\{- \frac{1}{2}(\sqrt{w_in}y_i-\sqrt{w_in}\beta_i)^2+\log \pi(\beta_i\C\theta) \right\}.
	\end{split}
	\]
	Thus, following the same analysis as that for Equation \eqref{eq:appendix_normalmeans_active3}, \eqref{eq:appendix_normalmeans_active4} in Section \ref{sec:appendix_normal_means_proof},
	we know that
	\[
	\begin{split}
	&\E\left(\frac{1}{nw_{i}}\lambda^{*}(\wh{\beta_{i}})\right)^{1/2}
	\le \E \left(\frac{\sqrt{nw_i}+\lambda_1}{nw_i}\right)^{1/2}
	+\E \left(\frac{2\sqrt{nw_i}+\lambda_1}{nw_i}\right)^{1/2}+\frac{1}{\lambda_0}
	\\\le&
	(1+\sqrt{2})\E \frac{1}{(nw_i)^{1/4}}+2\E \left(\frac{\lambda_1}{nw_i}\right)^{1/2}
	+\frac{1}{\lambda_0},
	\end{split}
	\]
	and thus, the right hand size of \eqref{eq: active_WBB_bound} satisfies
	\[
	\P\left(w_i^{1/2}<\frac{1}{|\beta^0_i|}\left[\sqrt{\frac{2}{n}\log[1/p^*(0)]+\frac{2\lambda_1}{n}|\beta_i^0|}+\sqrt{\frac{2}{n}\log[1/p^*(0)]}\right]\right)
	+
	\frac{1}{y_{i}^{2}}\E\left(\frac{1}{nw_{i}}\lambda^{*}(\wh{\beta_{i}})\right)^{2}\rightarrow0.
	\]
	Thus $\P_{w_{i}}\left(\wh\beta_{i}=0\C y_{i}\right)\rightarrow0$
	holds. 
\end{proof}

\subsubsection{Inactive Coordinates}\label{sec:appendix_motivation_inactive}

\begin{prop}
	For $w_0$ and $w_1$ defined in \eqref{eq:true_posterior}, when conditioning on $y_i\asymp \frac{1}{\sqrt{n}}$, we have
	$w_1\rightarrow 0$ and $w_0\rightarrow 1$.
\end{prop}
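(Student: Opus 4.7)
The plan is to mimic the structure of Proposition \ref{prop:1}, but with roles reversed: here the spike component of the denominator will dominate and kill $w_1$. Without loss of generality I will assume $y_i>0$ (the negative case is symmetric in the sense that the dominant term switches between the $(+)$ and $(-)$ orthants, but the asymptotics are identical). By the same Bayes expansion as in \eqref{eq: w1}, I write
\begin{equation*}
w_{1}=\frac{\int_{0}^{\infty}c_1^{(-)}\phi_1^{(-)}(\beta_i)\,d\beta_i + \int_{-\infty}^{0}c_1^{(+)}\phi_1^{(+)}(\beta_i)\,d\beta_i}{\sum_{\ell\in\{0,1\}}\left[\int_{0}^{\infty}c_\ell^{(-)}\phi_\ell^{(-)}(\beta_i)\,d\beta_i+\int_{-\infty}^{0}c_\ell^{(+)}\phi_\ell^{(+)}(\beta_i)\,d\beta_i\right]},
\end{equation*}
so that the whole problem reduces to comparing the slab contribution (the numerator) to the spike contribution in the denominator.

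First I would bound the slab contribution from above. Since $|y_i|\asymp 1/\sqrt n$ and $\lambda_1\le c_0$ is bounded, both $c_1^{(\pm)}=\theta\lambda_1\e^{\pm y_i\lambda_1+\lambda_1^2/(2n)}$ are of order $\theta\lambda_1$, and the corresponding Gaussian integrals $\Phi(\pm\sqrt n(y_i\mp\lambda_1/n))$ are bounded above by $1$. Hence
\begin{equation*}
\int_{0}^{\infty}c_1^{(-)}\phi_1^{(-)}(\beta_i)\,d\beta_i + \int_{-\infty}^{0}c_1^{(+)}\phi_1^{(+)}(\beta_i)\,d\beta_i \lesssim \theta\lambda_1.
\end{equation*}

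Second, I would establish a matching lower bound of order $(1-\theta)\sqrt n\,\e^{-ny_i^2/2}$ on the spike contribution in the denominator. Since $\lambda_0/n\gtrsim n^{d-1}\gg |y_i|$, we have $y_i-\lambda_0/n<0$ and $y_i+\lambda_0/n>0$, so I apply Mills' ratio to $\Phi(\sqrt n(y_i-\lambda_0/n))$ and $\Phi(-\sqrt n(y_i+\lambda_0/n))$. A direct computation (the cross terms $\pm y_i\lambda_0+\lambda_0^2/(2n)$ in $c_0^{(\pm)}$ cancel exactly with the Gaussian exponent $-n(y_i\mp\lambda_0/n)^2/2$ inside $\phi$) gives
\begin{equation*}
\int_{-\infty}^{0} c_0^{(+)}\phi_0^{(+)}(\beta_i)\,d\beta_i \;\sim\; \frac{(1-\theta)\lambda_0}{\sqrt{2\pi n}\,(y_i+\lambda_0/n)}\,\e^{-ny_i^2/2} \;\sim\; \frac{(1-\theta)\sqrt n}{\sqrt{2\pi}}\,\e^{-ny_i^2/2},
\end{equation*}
and similarly for the $(-)$ orthant.

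Finally, combining the two bounds,
\begin{equation*}
w_1 \;\lesssim\; \frac{\theta\lambda_1}{(1-\theta)\sqrt n\,\e^{-ny_i^2/2}} \;\asymp\; \frac{\lambda_1}{n^{a+1/2}\,\e^{-ny_i^2/2}},
\end{equation*}
where I used $(1-\theta)/\theta\asymp n^a$ with $a\ge 2$. Since $ny_i^2\asymp 1$ is bounded, $\e^{-ny_i^2/2}$ is bounded away from zero, and $\lambda_1\le c_0$. Hence the right-hand side tends to $0$, giving $w_1\to 0$ and therefore $w_0=1-w_1\to 1$.

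The main obstacle is bookkeeping: one needs to keep track of the exact polynomial-in-$n$ prefactors in the Mills'-ratio asymptotics so that the cancellation between $\pm y_i\lambda_0+\lambda_0^2/(2n)$ and the Gaussian tail exponent is visible, and to verify that the Mills'-ratio approximation is actually applicable (i.e.\ $\sqrt n(\lambda_0/n-y_i)\to\infty$, which follows from $\lambda_0\asymp n^d$ with $d\ge 2$). Once those estimates are in hand the conclusion is immediate from the scaling $(1-\theta)/\theta\asymp n^a$.
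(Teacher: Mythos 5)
Your proof is correct and follows essentially the same route as the paper's: expand $w_1$ via Bayes' rule, bound the slab (numerator) contribution by $O(\theta\lambda_1)$, and show the spike terms in the denominator are of order $(1-\theta)\sqrt{n}\,\e^{-ny_i^2/2}$, so the ratio vanishes because $(1-\theta)/\theta\asymp n^a$. The only difference is technical: you obtain the lower bound on the spike integrals via the Mills-ratio asymptotics (exploiting the exact cancellation of $\pm y_i\lambda_0+\lambda_0^2/(2n)$ with the Gaussian exponent), whereas the paper truncates the integration domain to a small interval near the origin; both are valid, and yours in fact pins down the sharp order of the spike contribution rather than merely showing it diverges.
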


\begin{proof}
	The expression for $w_1$ is given in \eqref{eq: w1}. Again, we consider the four terms in the denominator separately and divide each one of them by $\theta\lambda_1$. For the first term, we write
	\begin{equation}\label{eq:motiv_inactive_denom1}
	\begin{split}
	\frac{1}{\theta\lambda_1}\int_{0}^{\infty}c_1^{(-)}\phi_1^{(-)}(\beta_i)d\beta_i
	=e^{-y_i\lambda_1+\lambda_1^2/2n}\left(1-\Phi\left(-\sqrt{n}(y_i-\frac{\lambda_1}{n})\right)\right)\rightarrow 1-\Phi\left(-\sqrt{n}y_i\right).
	\end{split}
	\end{equation}
	For the second term, for any fixed $\epsilon>0$, we have
	\begin{equation*}
	\begin{split}
	\frac{1}{\theta\lambda_1}\int_{0}^{\infty}c_0^{(-)}\phi_0^{(-)}(\beta_i)d\beta_i
	&=\frac{1}{\theta\lambda_1}\int_{0}^{\infty}(1-\theta)\lambda_0\sqrt{\frac{n}{2\pi}}e^{-\frac{n}{2}(\beta_i-y_i)^2-\beta_i\lambda_0}d\beta_i
	\\&
	\geq \frac{1}{\theta\lambda_1}\int_{0}^{\epsilon y_i}(1-\theta)\lambda_0\sqrt{\frac{n}{2\pi}}e^{-\frac{n(\epsilon-1)^2}{2}y_i^2-\beta_i\lambda_0}d\beta_i
	\\&=\frac{1}{\theta\lambda_1}(1-\theta)\sqrt{\frac{n}{2\pi}}e^{-\frac{n(\epsilon-1)^2}{2}y_i^2}\left( 1-e^{-\epsilon y_i\lambda_0}  \right)
	\end{split}
	\end{equation*}
	where $e^{-\epsilon y_i\lambda_0}    \rightarrow 0$. 
	Since the right-hand size term
	$$
	\frac{1}{\theta\lambda_1}(1-\theta)\sqrt{\frac{n}{2\pi}}e^{-\frac{n(\epsilon-1)^2}{2}y_i^2}\left( 1-e^{-\epsilon y_i\lambda_0}  \right)\rightarrow \infty
	$$
	because $y_i\asymp n^{-1/2}$ we know
	\begin{equation}\label{eq:motiv_inactive_denom2}
	\frac{1}{\theta\lambda_1}\int_{0}^{\infty}c_0^{(-)}\phi_0^{(-)}(\beta_i)d\beta_i \rightarrow \infty.
	\end{equation}
	Regarding the third term in denominator,  we obtain
	\begin{equation}\label{eq:motiv_inactive_denom3}
	\begin{split}
	\frac{1}{\theta\lambda_1}\int_{-\infty}^{0}c_1^{(+)}\phi_1^{(+)}(\beta_i)d\beta_i
	=e^{y_i\lambda_1+\lambda_1^2/2n}\Phi\left(-\sqrt{n}(y_i+\frac{\lambda_1}{n})\right).
	\rightarrow \Phi\left(-\sqrt{n}y_i\right)
	\end{split}
	\end{equation}
	Finally, for  the fourth term, for any fixed $\epsilon>0$, we have
	\begin{equation*}
	\begin{split}
	\frac{1}{\theta\lambda_1}\int_{-\infty}^{0}c_0^{(+)}\phi_0^{(+)}(\beta_i)d\beta_i
	&=\frac{1}{\theta\lambda_1}\int_{-\infty}^{0}(1-\theta)\lambda_0\sqrt{\frac{n}{2\pi}}e^{-\frac{n}{2}(\beta_i-y_i)^2+\beta_i\lambda_0}d\beta_i
	\\&\geq \frac{1}{\theta\lambda_1}\int_{-\epsilon y_i}^{0}(1-\theta)\lambda_0\sqrt{\frac{n}{2\pi}}e^{-\frac{n(1+\epsilon)^2}{2}y_i^2+\beta_i\lambda_0}d\beta_i
	\\&=\frac{1}{\theta\lambda_1}(1-\theta)\sqrt{\frac{n}{2\pi}}e^{-\frac{n(1+\epsilon)^2}{2}y_i^2}\left( 1-e^{-\epsilon y_i\lambda_0}  \right)
	\end{split}
	\end{equation*}
	where $e^{-\epsilon y_i\lambda_0}    \rightarrow 0$. 
	Since the right hand side term satisfies
	$$
	\frac{1}{\theta\lambda_1}(1-\theta)\sqrt{\frac{n}{2\pi}}e^{-\frac{n(1+\epsilon)^2}{2}y_i^2}\left( 1-e^{-\epsilon y_i\lambda_0}  \right) \rightarrow \infty
	$$
	we know
	\begin{equation}\label{eq:motiv_inactive_denom4}
	\frac{1}{\theta\lambda_1}\int_{-\infty}^{0}c_0^{(+)}\phi_0^{(+)}(\beta_i)d\beta_i \rightarrow \infty.
	\end{equation}
	Combining \eqref{eq:motiv_inactive_denom1}, \eqref{eq:motiv_inactive_denom2}, \eqref{eq:motiv_inactive_denom3} and \eqref{eq:motiv_inactive_denom4}, we know that in \eqref{eq: w1},
	the $\text{denominator}\,\rightarrow \infty$ and thus the  $\text{numerator}\,\rightarrow 1$.
	Thus
	$w_1\rightarrow 0$ and $w_0=1-w_1\rightarrow 1$.
\end{proof}

We denote 
$$
d_{\text{TV}}(P,Q)=\sup_{A\in\mathcal{F}}|P(A)-Q(A)|,
$$
which is the total variation distance between two probability measures $P$ and $Q$ on a sigma-algebra $\mathcal{F}$ of subsets of the sample space.
We have the following result:
\begin{prop}
	When $n$ is sufficiently large, conditioning on  the event 
	$|y_i|\asymp \frac{1}{\sqrt{n}}$, we have 
	$$
	d_{\text{TV}}\left(\pi\left(\lambda_0\beta_{i}\C y_i,\gamma_i=0\right),\,\text{Laplace}(1)\right)\rightarrow 0,
	$$
	where $\text{Laplace}(1)$ is the Laplace distribution whose density is $f(t)=\frac{1}{2}e^{-|t|}$.
\end{prop}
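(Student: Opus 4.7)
The plan is to write down the conditional density $\pi(\beta_i\mid y_i,\gamma_i=0)$ explicitly, perform the change of variables $u=\lambda_0\beta_i$, show pointwise convergence of the resulting density to the Laplace$(1)$ density $\tfrac12 e^{-|u|}$, and then invoke Scheff\'e's lemma to upgrade pointwise convergence to convergence in total variation.

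First, I would observe that on $\{\gamma_i=0\}$ the prior reduces to $\psi_0(\beta_i)=(\lambda_0/2)e^{-\lambda_0|\beta_i|}$, so Bayes' rule gives
\[
\pi(\beta_i\mid y_i,\gamma_i=0)\;\propto\; e^{-\tfrac{n}{2}(y_i-\beta_i)^2}\,e^{-\lambda_0|\beta_i|}\;\propto\; e^{ny_i\beta_i-\tfrac{n}{2}\beta_i^2-\lambda_0|\beta_i|}.
\]
Setting $u=\lambda_0\beta_i$, the density of $u$ becomes
\[
f_n(u)\;=\;\frac{1}{Z_n}\exp\!\Bigl\{-|u|+\frac{ny_i}{\lambda_0}u-\frac{n}{2\lambda_0^2}u^2\Bigr\},
\qquad Z_n=\int_{\mathbb{R}}\exp\!\Bigl\{-|u|+\tfrac{ny_i}{\lambda_0}u-\tfrac{n}{2\lambda_0^2}u^2\Bigr\}du.
\]
Under the standing assumption $y_i\asymp n^{-1/2}$ together with $\lambda_0\asymp n^d$ for some $d\ge 2$, we have $ny_i/\lambda_0\asymp n^{1/2-d}\to 0$ and $n/\lambda_0^2\asymp n^{1-2d}\to 0$. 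Hence, for each fixed $u\in\mathbb{R}$,
\[
\exp\!\Bigl\{-|u|+\tfrac{ny_i}{\lambda_0}u-\tfrac{n}{2\lambda_0^2}u^2\Bigr\}\;\longrightarrow\; e^{-|u|}\qquad\text{as }n\to\infty.
\]

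Next I would handle the normalization $Z_n$. Since $-nu^2/(2\lambda_0^2)\le 0$, the integrand is bounded by $\exp\{-|u|+|ny_i/\lambda_0|\,|u|\}$. Because $|ny_i/\lambda_0|\to 0$, for all $n$ sufficiently large we have $|ny_i/\lambda_0|\le 1/2$, and the integrand is dominated by $e^{-|u|/2}$, which is integrable on $\mathbb{R}$. The Dominated Convergence Theorem then yields $Z_n\to\int e^{-|u|}du=2$, and hence $f_n(u)\to \tfrac12 e^{-|u|}$ pointwise.

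The final step is to conclude TV convergence. By Scheff\'e's lemma (\cite{scheffe1947useful}), pointwise convergence of a sequence of probability densities to a probability density implies convergence in $L^1$, which for probability densities is exactly twice the total variation distance between the associated measures. Therefore
\[
d_{\text{TV}}\bigl(\pi(\lambda_0\beta_i\mid y_i,\gamma_i=0),\,\mathrm{Laplace}(1)\bigr)=\tfrac12\!\int_{\mathbb{R}}\!|f_n(u)-\tfrac12 e^{-|u|}|\,du\;\longrightarrow\;0.
\]
There is no single hard step here; the only delicate point is ensuring that the dominating function is valid uniformly for all sufficiently large $n$ on the conditioning event $|y_i|\asymp n^{-1/2}$, which is why I would be explicit about taking $n$ large enough that $|ny_i/\lambda_0|\le 1/2$ before applying DCT.
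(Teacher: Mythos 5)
Your proof is correct and follows essentially the same route as the paper's: the same change of variables $u=\lambda_0\beta_i$, pointwise convergence of the density using $ny_i/\lambda_0\to 0$ and $n/\lambda_0^2\to 0$, and Scheff\'e's lemma to upgrade to total variation. The only difference is that you explicitly control the normalizing constant $Z_n$ via dominated convergence, a step the paper glosses over but which is needed for Scheff\'e to apply to the \emph{normalized} densities, so your write-up is if anything slightly more complete.
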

\begin{proof}
	Notice that 
	\begin{equation*}
	\begin{split}
	\pi\left(\beta_i\C y_i,\gamma_i=0\right)
	&\propto \pi\left(y_i\C \beta_i,\gamma_i=0\right) \pi\left(\beta_i\C \gamma_i=0\right)
	\propto
	e^{-\frac{n}{2}(\beta_i-y_i)^2}e^{-|\beta_i|\lambda_0}.
	\end{split}
	\end{equation*}
	Thus, letting $\beta_i'=\lambda_0\beta_i$, since $\lambda_0\asymp p^d$ with $d\ge 2$ we have for any fixed $\beta_i'$ 
	\begin{equation*}
	\begin{split}
	\pi\left(\beta_i'\C y_i,\gamma_i=0\right)
	\propto
	e^{-\frac{n}{2}(\beta_i'/\lambda_0-y_i)^2}e^{-|\beta_i'|}
	\rightarrow 
	e^{-\frac{n}{2}(y_i)^2}e^{-|\beta_i'|}
	\end{split}
	\end{equation*}
	which implies that 
	$$
	\pi\left(\beta_i'\C y_i,\gamma_i=0\right)
	\rightarrow 
	e^{-|\beta_i'|}.
	$$
	From \cite{scheffe1947useful}, we know that $\beta_i'$ converges in total variation to Laplace(1).
\end{proof}

\begin{prop} 
	Consider the  fixed WBB estimator $\wh\beta_{i}$. Conditioning
	on $|y_{i}|\asymp n^{-1/2}$, we have 
	$$
	\P_{w_{i}}\left(\wh{\beta_{i}}=0\C y_{i}\right)\rightarrow1.
	$$
\end{prop}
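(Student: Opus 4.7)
The strategy is to exploit the thresholding characterization from Equation \eqref{eq: WBB_solution}, which says $\wh\beta_i=0$ if and only if $|y_i|\le \Delta_{w_i}$. It therefore suffices to show $\P_{w_i}\left(\Delta_{w_i}\ge |y_i|\C y_i\right)\to 1$ under the hypothesis $|y_i|\asymp n^{-1/2}$. The main obstacle will be to obtain a lower bound on the random threshold $\Delta_{w_i}$ that dominates $n^{-1/2}$ on a set of $w_i$ of probability tending to one; once we have such a bound, the result follows from tail control on $w_i\sim\mathrm{Exp}(1)$.

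The first step is to derive a lower bound for $\Delta_{w_i}$ via the rescaling underlying Lemma \ref{lemma_appendix:Delta_bound}. Writing $\Delta_{w_i}=(nw_i)^{-1/2}\,\Delta(\lambda_0/\sqrt{nw_i},\lambda_1/\sqrt{nw_i})$ and applying Lemma \ref{lemma_appendix:Delta_bound} with $w_i$ replaced by $nw_i$, I obtain, on the event that $(\lambda_0-\lambda_1)/\sqrt{nw_i}>2$ and $g(0;\lambda_0/\sqrt{nw_i},\lambda_1/\sqrt{nw_i})>0$, the inequality
\begin{equation*}
\Delta_{w_i}\ge \sqrt{\frac{2\log[1/p^*(0)]-2}{nw_i}}+\frac{\lambda_1}{nw_i}.
\end{equation*}
Under the prior assumptions $(1-\theta)/\theta\asymp n^a$ and $\lambda_0\asymp n^d$ with $a,d\ge 2$, one has $\log[1/p^*(0)]=\log(1+(1-\theta)\lambda_0/(\theta\lambda_1))\gtrsim (a+d)\log n$, and therefore $\Delta_{w_i}\gtrsim \sqrt{\log n/(nw_i)}$ on this event.

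The second step is to control the probability over $w_i\sim\mathrm{Exp}(1)$. Since $|y_i|\asymp n^{-1/2}$, the requirement $\sqrt{\log n/(nw_i)}\ge |y_i|$ reduces to $w_i\lesssim \log n$, and Markov's inequality gives $\P(w_i>c\log n)\le 1/(c\log n)\to 0$ for any $c>0$. A direct computation shows that the gap condition $(\lambda_0-\lambda_1)/\sqrt{nw_i}>2$ fails only when $w_i>(\lambda_0-\lambda_1)^2/(4n)\asymp n^{2d-1}$, and that $g(0;\lambda_0/\sqrt{nw_i},\lambda_1/\sqrt{nw_i})>0$ fails only when $w_i$ is at least polynomially large in $n$; both of these are again negligible by Markov. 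Combining these estimates via a union bound yields $\P_{w_i}(\Delta_{w_i}\ge |y_i|\C y_i)\to 1$, which by the thresholding characterization proves $\P_{w_i}(\wh\beta_i=0\C y_i)\to 1$.

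The most delicate point is justifying the lower bound on $\Delta_{w_i}$ in the appropriate regime, since Lemma \ref{lemma_appendix:Delta_bound} was stated for the normal means model and we must carry the rescaling factor $\sqrt{nw_i}$ correctly; once this reduction is in place, the remainder is a routine application of Markov's inequality to $w_i$.
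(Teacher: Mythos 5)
Your proof is correct and follows essentially the same route as the paper's: both reduce the claim to the thresholding event $|y_i|\le\Delta_{w_i}$, lower-bound $\Delta_{w_i}$ by a quantity of order $\sqrt{\log[1/p^*(0)]/(nw_i)}$, and conclude via a tail bound showing $w_i$ stays below a threshold that diverges with $n$. The only difference is cosmetic: you derive the lower bound on $\Delta_{w_i}$ from the explicit rescaling and Lemma \ref{lemma_appendix:Delta_bound} (verifying the side conditions on a high-probability event), whereas the paper simply invokes the asymptotic equivalence $\Delta_{w_i}\asymp\sqrt{2nw_i\log[1/p^{*}(0)]}/(nw_i)$.
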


\begin{proof} The definition of the fixed WBB sample $\wh{\beta_{i}}$ is in
	\eqref{eq: WBB_solution}. Notice that
	\[
	\Delta_{w_{i}}=\inf_{t>0}\left[t/2-\rho(t\mid\theta)/(nw_{i}t)\right]=\frac{1}{nw_{i}}\inf_{t>0}\left[nw_{i}t/2-\rho(t\mid\theta)/t\right]\asymp\frac{\sqrt{2nw_{i}\log[1/p^{*}(0)]}}{nw_{i}}.
	\]
	Then, when $n$ is sufficiently large,
	\begin{align*}
	&\P_{w_{i}}\left(\wh{\beta_{i}}=0\C y_{i}\right)\ge\P_{w_{i}}\left(|y_{i}|\le\Delta_{w_{i}}\C y_{i}\right)  \\&\ge\P_{w_{i}}\left(|y_{i}|\le\frac{1}{2}\frac{\sqrt{2nw_{i}\log[1/p^{*}(0)]}}{nw_{i}}\C y_{i}\right)
	\ge\P_{w_{i}}\left(w_{i}\le\sqrt{\frac{\log[1/p^{*}(0)]}{2n|y_{i}|^{2}}}\C y_{i}\right),
	\end{align*}
	where the right-hand size 
	\[
	\P_{w_{i}}\left(w_{i}\le\sqrt{\frac{\log[1/p^{*}(0)]}{2n|y_{i}|^{2}}}\C y_{i}\right)\rightarrow1
	\]
	since $y_{i}\asymp n^{-1/2}$ and $\log[1/p^{*}(0)]\rightarrow\infty$.
	Thus,
	\[
	\P_{w_{i}}\left(\wh{\beta_{i}}=0\C y_{i}\right)\rightarrow1.\qedhere
	\]
\end{proof}

\begin{remark}
	The random WBB is equivalent to the fixed WBB by setting weights to be $\bm w/w_p$ where $w_p$ is the weight assigned to the  prior term. Thus, using exactly the same arguments as fixed WBB, we can prove:  $\P_{w_i}\left(\wh{\beta}_i^{\text{random}}=0\C y_i\right)\rightarrow 1$.
\end{remark}

\section{Details of Connection to NPL in Section \ref{sec:connection}}\label{sec:appendix_connection_explanation}

In Algorithm \ref{alg: Fong}, we define the loss function $l$ as 
\begin{equation}\label{eq: loss-NPL}
l(\x_i,y_i,\b)=-\frac{1}{2\sigma^2}(y_i-\x_i^T\b)^2+\frac{1}{n}\log \left[\int_\theta\prod_{j=1}^p\pi(\beta_j\C\theta)d\pi(\theta)\right].
\end{equation}

\paragraph{Motivation for the Prior.}
For paired data $(\x_i,y_i)$, \cite{fong2019scalable} uses independent prior which assumes that $y_i$ does not depend on $\x_i$:
$$
\text{Prior 1: }\wt\x_k\sim \hat F_n(\x)=\frac{1}{n}\sum_{i=1}^{n}\delta(\x_i),\quad \wt y_k\C\wt \x_k\sim N(0,\sigma^2).
$$
However, this choice of $F_\pi$ might be problematic when the sample size $n$ is small. When $n$ is small, a well-specified prior can help us better estimate $\b$ but this independent prior shrinks all coeffcients towards zero and will result in bias (see Figure \ref{fig: Fong_compare1}).


One possible solution is to use $y=\x^T\wh{\b}+\epsilon$ where $\wh{\b}$ is the MAP of $\b$ under SSL penalty. This choice of $f_\pi(y|x)$ has an Empirical Bayes flavor (\cite{martin2014asymptotically}). 
However, it includes information only from the posterior mode, ignoring shape information contained in the posterior variance. We could consider incorporating such information by adding noise $\bm\mu$ to $\wh{\b}$. We would want $\bm\mu$ to be centered at the origin but  not too far away from the origin. One choice that comes to mind is the spike distribution. So the prior $F_\pi$ becomes 
$$\wt\x_k\sim \hat F_n(\x)=\frac{1}{n}\sum_{i=1}^{n}\delta(\x_i),\quad \wt y_k\C\wt \x_k=\wt\x_k^T(\wh{\b}+\bm\mu)+\epsilon$$
where $\m\sim\text{Spike and }\epsilon\sim N(0,\sigma^2)$. 
If $\wh{\b}$ is close enough to the truth, we have $\wt\x_k^T\wh{\b}\approx \wt\x_k^T\b_0$. Since $y_i=\x_i^T\b_0+\epsilon_i$ and $\epsilon_i\stackrel{d}{=}\epsilon$, we can set $\wt y_k\C\wt \x_k=y_i+\x_i^T\bm\mu$ where $i$ satisfies $\wt\x_k=\x_i$. Then the above prior becomes
$$
\text{Prior 2: }\wt\x_k\sim \hat F_n(\x)=\frac{1}{n}\sum_{i=1}^{n}\delta(\x_i),\quad \wt y_k\C\wt \x_k=y_i+\x_i^T\bm\mu\, \,\text{where}\,\,i\,\,\text{satisfies}\,\,\wt \x_k=\x_i.
$$

\paragraph{Derivation for Equation \eqref{eq: fong}}
When choosing $m=n$ in Algorithm \ref{alg: Fong}, the NPL posterior of \cite{fong2019scalable} using Prior 2 becomes
\begin{align*}
\wt\b^t&=\arg\max_{\b\in\R^p}\left\{-\frac{1}{2\sigma^2}\sum_{i=1}^nw_i(y_i-\x_i^T\b)^2-\frac{1}{2\sigma^2}\sum_{i=1}^n\tilde w_i(y_i+\x_i'\bm{\mu}-\x_i'\b)^2+ \frac{1}{n}\log \left[\int_\theta\prod_{j=1}^p\pi(\beta_j\C\theta)d\pi(\theta)\right]
\right\}
\\&=\arg\max_{\b\in\R^p}\left\{-\frac{1}{2\sigma^2}\sum_{i=1}^n(w_i+\tilde w_i)(y_i+\frac{\tilde w_i}{w_i+\tilde{w_i}}\x_i^T\bm\mu-\x_i^T\b)^2+ \frac{1}{n}\log \left[\int_\theta\prod_{j=1}^p\pi(\beta_j\C\theta)d\pi(\theta)\right]
\right\}
\\&\approx\arg\max_{\b\in\R^p}\left\{-\frac{1}{2\sigma^2}\sum_{i=1}^n(w_i+\tilde w_i)(y_i+\frac{c}{c+n}\x_i^T\bm\mu-\x_i^T\b)^2+ \frac{1}{n}\log \left[\int_\theta\prod_{j=1}^p\pi(\beta_j\C\theta)d\pi(\theta)\right]
\right\}
\\&\stackrel{\tilde\b=\b-c/(c+n)\bm\mu}{=}\arg\max_{\tilde\b\in\R^p}\left\{-\frac{1}{2\sigma^2}\sum_{i=1}^nw_i^*(y_i-\x_i^T\tilde\b)^2+ \log \left[\int_\theta\prod_{j=1}^p\pi(\tilde\beta_j+\frac{c}{c+n}\mu_j\C\theta)d\pi(\theta)\right]
\right\}+\frac{c}{c+n}\bm{\mu},
\end{align*}
where $w_i^*=n(w_i+\tilde w_i)$. Since $\bm\mu$ and $-\bm\mu$ follow the same distribution, define $\bm\mu^*=-\bm\mu$ and
\begin{equation*}
\tilde\b^t\text{\ensuremath{\overset{\text{D}}{=}}}\arg\max_{\tilde\b\in\R^p}\left\{-\frac{1}{2\sigma^2}\sum_{i=1}^nw_i^*(y_i-\x_i^T\tilde\b)^2+ \log\left[\int_\theta\prod_{j=1}^p\pi(\tilde\beta_j-\frac{c}{c+n}\mu_j^*\C\theta)d\pi(\theta)\right],
\right\}-\frac{c}{c+n}\bm\mu^*
\end{equation*}
where $(\bm w_{1:n},\tilde w_{1:n})\sim  \text{Dir}(1,\cdots,1,c/n, c/n, \cdots,c/n)$ and thus $(w_1^*,w_2^*,\cdots,w_n^*)\sim n\text{Dir}(1+c/n, \cdots,1+c/n)$. 

\begin{sidewaysfigure}
	\begin{subfigure}{0.5\hsize}\centering
		\includegraphics[width=\hsize, height=5in]{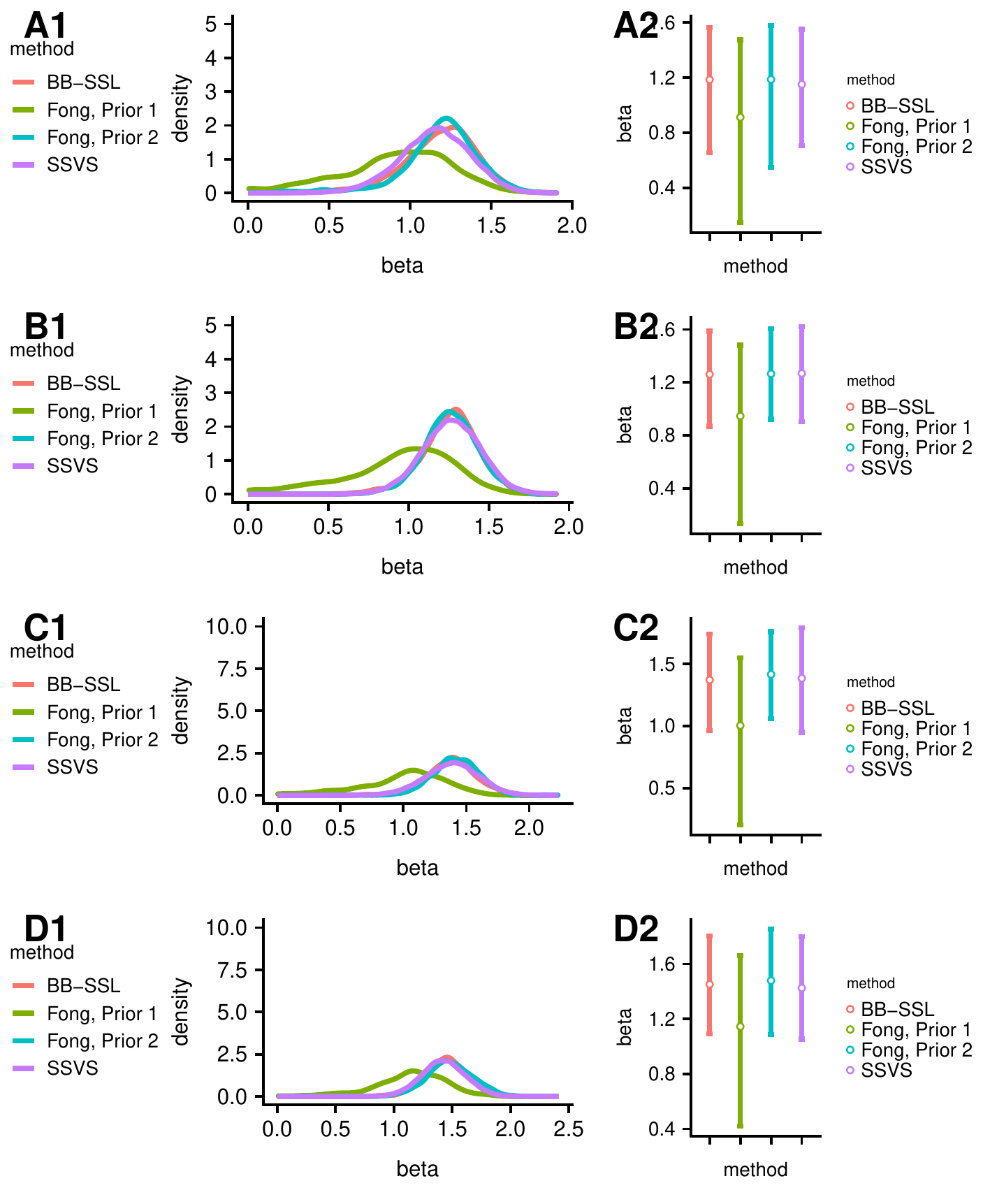}
		\caption{\small Active predictors, from top to bottom: $\beta_1,\beta_{4},\beta_{7},\beta_{10}$}
	\end{subfigure}%
	\begin{subfigure}{0.5\hsize}\centering
		\includegraphics[width=\hsize, height=5in]{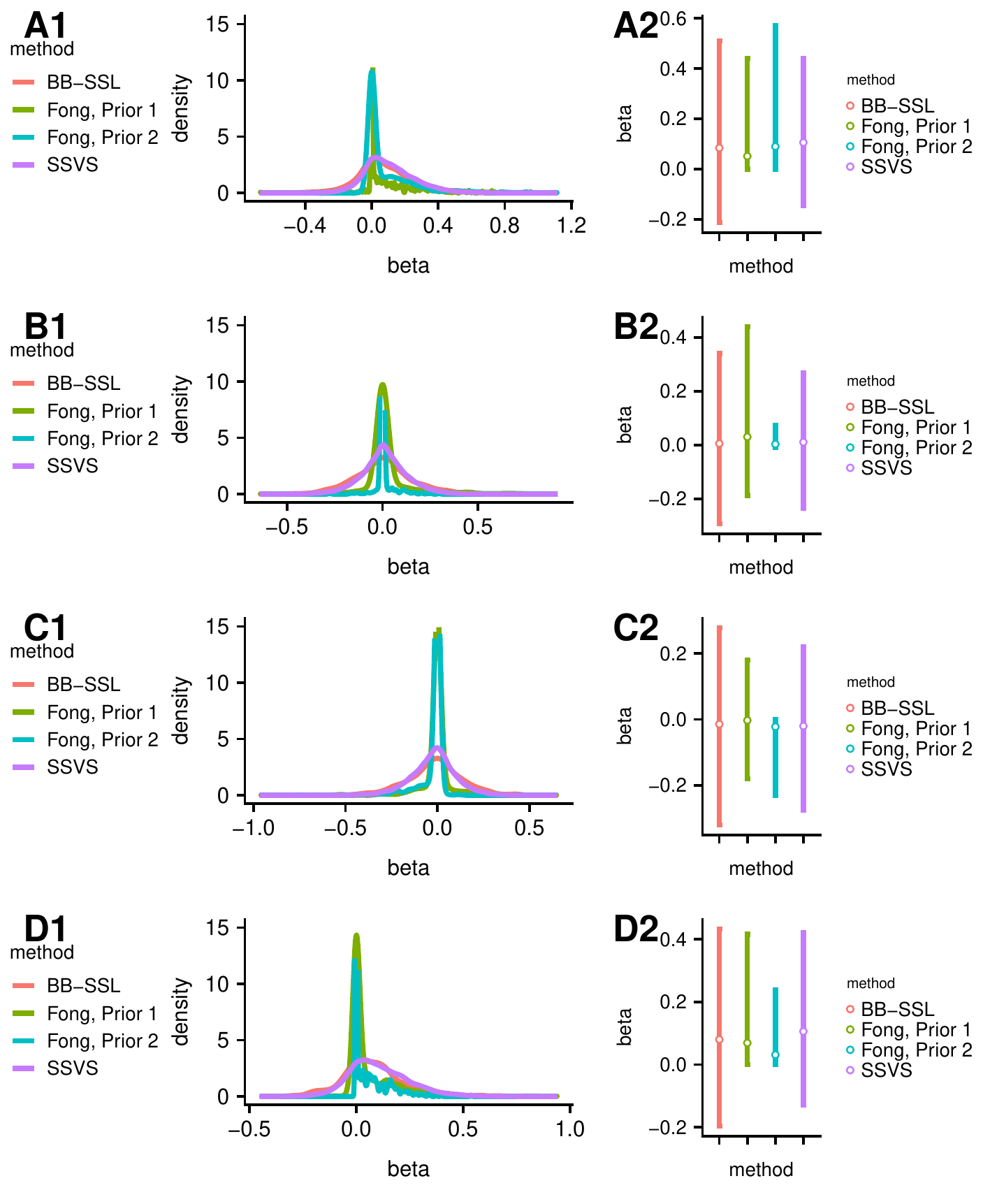}
		\caption{\small Inactive predictors, from top to bottom: $\beta_2,\beta_{5},\beta_{8},\beta_{11}$}
	\end{subfigure}
	\caption{\small Comparison of \cite{fong2019scalable}'s algorithm 2 with different choice of priors and loss function versus BB-SSL and SSVS under simulated dataset. We use $n=50, p=12$, $\b=(1.3,0,0,1.3,0,0,1.3,0,0,1.3,0,0)^T$, predictors are grouped into 4 blockes with correlation coefficient $\rho=0.6$, $\alpha=2$, $m=50$, $c=10$, $\lambda_0=7,\lambda_1=0.15$.}\label{fig: Fong_compare1}
\end{sidewaysfigure}

\section{Another Gibbs Sampler with Complexity $O(np)$}\label{sec_appendix:MCMC_np}
A referee suggested a one-site Gibbs sampler which can be implemented with our continuous Spike-and-Slab LASSO prior and which can potentially improve  computational efficiency. The  sampler is reminiscent of   \cite{geweke1996variable} who proposed a variant for point mass spikes to avoid getting stuck after generating $\beta_i=0$. However, although   \cite{george1997approaches} mentioned that this implementation might also be useful under continuous spike-and-slab priors (where the variance ratio of slab to spike is large), we find this implementation to be rarely used for continuous priors. One-site Gibbs samplers are expected to converge slower and yield higher autocorrelation, which is one of the  reasons  why block-Gibbs samplers have been  preferred. We nevertheless explore this Gibbs sampler further. 

The algorithm to implement \cite{geweke1996variable}'s Gibbs sampler for the Spike-and-Slab LASSO prior is outlined in Algorithm \ref{alg:Gibbs2} and we further refer to it as Gibbs II. We calculate the vector $\bm r_j$ through $\bm r_j=(\y-\X\b) +\X_j\beta_j$ and $\y-\X\b$ is calculated once before running MCMC and updated after each update on $\beta_j$. When $p>n$, Algorithm \ref{alg:Gibbs2} has complexity $O(np)$ compared to $O(p^3)$ for SSVS1 and $O(n^2p)$ for SSVS2. However, generating from the Truncated Normal distribution typically requires an accept-reject algorithm  \citep{geweke1991efficient} which creates some computational bottlenecks. We utilize the R package `truncnorm' \citep{trautmann2015package} which implements this accept-reject algorithm in C. The performance of Algorithm \ref{alg:Gibbs2}  against other methods is summarized in Table \ref{table:Gibbs2}. We see that the per-iteration running time of Gibbs II is similar to SSVS2 (using the trick in \cite{bhattacharya2016fast}), yet the serial correlation is higher than for both implementations of SSVS. Thus, contrary to the point-mass spike, Algorithm \ref{alg:Gibbs2} (which is based on \cite{geweke1996variable}) trades  computational efficiency for serial correlation.  In general, the time used for generating the same number of effective samples is slightly longer than SSVS2, but improves significantly over SSVS1. From this observation, we believe Gibbs II is yet another useful method to speed up the standard SSVS1 under the Spike-and-Slab LASSO prior.

\begin{table}
	\centering
	\scalebox{0.7}{
		\begin{tabular}{l|l|l|l|l} 
			\hline\hline
			method                  & time per 100 iter (sec)  & ESS per 100 iter & lag-one auto correlation on $\beta_j$'s &  time per 100 ESS (sec)\\ 
			\hline
			BB-SSL                  & 0.7                  & 100        &  -0.002   &  0.7\\ 
			\hline
			SSVS1   & 40             &98    &     0.006     & 40.82      \\ 
			\hline
			SSVS2   & 3.2          &98        &0.006      & 3.27         \\ 
			\hline
			Gibbs II & 3.1       & 83        &0.100     & 3.73       \\
			\hline\hline
	\end{tabular}}
	\caption{\small Performance comparison of Algorithm \ref{alg:Gibbs2} (referred to as Gibbs II) against other methods. We set $n=100, p=1000$, variables are equi-correlated with $\rho=0.9$, and the active coefficients are $\beta_{active}=(2,4,-4,6)'$.}
	\label{table:Gibbs2}
\end{table}

A less relevant, but perhaps interesting, observation is that since the Spike-and-Slab LASSO prior is a mixture of Laplace distributions, SSVS1 (which augments the data with $\tau_j$'s) can be seen as a generalization of the MCMC algorithm in \cite{park2008bayesian} while Gibbs II can be seen as a spike-and-slab version of the MCMC algorithm in \cite{hans2009bayesian}. Our conclusion from this exercise is that, for the Spike-and-Slab LASSO prior,  BB-SSL is actually doing better (both  in terms of serial correlation and timing) compared with Gibbs II.

{\scriptsize
	\begin{algorithm}[t]\small
		\spacingset{1.1}
		\textbf{Set}: $\lambda_0\gg\lambda_1,$  $a,b>0$, $T$ (number of MCMC iterations), $B$ (number of samples to discard as burn-in).
		\\
		\textbf{Initialize}: $\bm \beta^0$ (e.g. LASSO solution after 10-fold cross validation) and $\bm\tau^0$.
		\\
		\For{$t=1,2,\cdots, T$}{
			\For{$j=1,2,\cdots, p$}{
				\textbf{(a)} Sample $\gamma_{j}\sim\pi\left(\gamma_{j}\vert\theta,\b_{-j},\bg_{-j},\y\right)=\text{Bernoulli}\left(\frac{\pi_{1}}{\pi_{1}+\pi_{0}}\right)$, where 
				\begin{align*}
				&\pi_{1}=\theta c_{1},\quad\pi_{0}=(1-\theta)c_{0}, 
				\\&c_{1}=\lambda_{1}\left[
				e^{(\mu^1_+)^2/(2s_j)}\mathbb{P}\left(N(\mu_{+}^{1},s_{j})\ge0\right)
				+e^{(\mu^1_-)^2/(2s_j)}\mathbb{P}\left(N(\mu_{-}^{1},s_{j})\le0\right)\right],
				\\
				&c_{0}=\lambda_{0}\left[e^{(\mu^0_+)^2/(2s_j)}
				\mathbb{P}\left(N(\mu_{+}^{0},s_{j})\ge0\right)+e^{(\mu^0_-)^2/(2s_j)}\mathbb{P}\left(N(\mu_{-}^{0},s_{j})\le0\right)\right],
				\\
				&\mu_{-}^{1}=\frac{\bm r_{j}^{T}\X_{j}+\sigma^{2}\lambda_{1}}{\|\X_{j}\|_2^2},
				\quad
				\mu_{+}^{1}=\frac{\bm r_{j}^{T}\X_{j}-\sigma^{2}\lambda_{1}}{\|\X_{j}\|_2^2},
				\quad
				\bm r_{j}=\y-\sum_{k\neq j}\X_{k}\beta_{k},
				\\&\mu_{-}^{0}=\frac{\bm r_{j}^{T}\X_{j}+\sigma^{2}\lambda_{0}}{\|\X_{j}\|_2^2},
				\quad
				\mu_{+}^{0}=\frac{\bm r_{j}^{T}\X_{j}-\sigma^{2}\lambda_{0}}{\|\X_{j}\|_2^2},
				\quad
				s_{j}=\sigma^{2}\|\X_{j}\|_2^{-2}.
				\end{align*}
				\textbf{(b)} Sample $\beta_{j}\sim\pi(\beta_{j}\vert\gamma_{j},\theta,\b_{-j},\bm\gamma_{-j},\y)=u_{0}\text{TruncNormal}^{-}\left(\mu_{-}^{\gamma_{j}},s_{j}\right)+(1-u_{0})\text{TruncNormal}^{+}\left(\mu_{+}^{\gamma_{j}},s_{j}\right)$ where
				\begin{align*}
				&u_{0}=\frac{\mathbb{P}\left(N(\mu_{-}^{\gamma_{j}},s_{j})\le 0\right)}{\mathbb{P}\left(N(\mu_{-}^{\gamma_{j}},s_{j})\le 0\right)+\mathbb{P}\left(N(\mu_{+}^{\gamma_{j}},s_{j})\ge 0\right)\exp\left\{ -\frac{2\bm r_{j}^{T}\X_{j}\lambda_{\gamma_{j}}}{\|\X_{j}\|^2_2}\right\} },
				\\&
				\text{TruncNormal}^-(\mu,s)=\P\left(X\mid X\leq 0\right)\quad
				\text{where}\quad
				X\sim N(\mu,s),
				\\&
				\text{TruncNormal}^+(\mu,s)=\P\left(X\mid X\geq 0\right)\quad
				\text{where}\quad
				X\sim N(\mu,s).
				\end{align*}
			}
			\textbf{(c)} Sample $\theta\sim\mathrm{Beta}(\sum_{j=1}^{p} \gamma_j+a,p-\sum_{j=1}^{p} \gamma_j+b)$.\\	
		}
		\textbf{Return}: $\b^t,\bm\gamma^t,\theta^t$ where $t=B+1,B+2,\cdots,T$.
		\caption{\bf Gibbs II}\label{alg:Gibbs2}
	\end{algorithm}
}

\clearpage
\section{Details on Computational Complexity Analysis}\label{sec:appendix_computation}
Below are details for the computational complexity analysis as shown in Table \ref{table: computation} of the main text.

\paragraph{BB-SSL, WBB1, WBB2.} BB-SSL uses R package \texttt{SSLASSO} (\cite{rockova2019package}) to implement the coordinate-descent algorithm in \cite{rovckova2018spike}. It iteratively updates $\bm \beta$ until convergence. 
At each iteration, we first update the active coordinates, then the candidate coordinates, and finally the inactive coordinates. The total number of iterations is limited to a pre-defined number. There are two ways to update each coordinate, one way is to keep track of a residual vector and for each coordinate we compute the inner product between the residual vector and $\X_j$ -- this takes $O(n)$; another way is to pre-compute the Gram matrix and for each coordinate, we calculate the inner product between $\X^T\X_j$ and $\hat\b$ -- this takes $O(p)$. For both ways, we update $\theta$ every $c$ iterations where each update is $O(p)$. So for a single value of $\lambda_{0}$, SSLASSO is
$O(\min(\text{maxiter}\times p(n+\frac{p}{c_1}),\, (n+\text{maxiter})\times p^2))$. For a sequence of $\lambda_{0}$'s, complexity is $O(L\times\min(\text{maxiter}\times p(n+\frac{p}{c_1}),\, (n+\text{maxiter})\times p^2))$ where $L$ is the length of $\lambda_{0}$'s. 
Usually if the biggest $\lambda_{0}$ is large enough, we would expect that the larger $\lambda_{0}$'s can reach convergence quickly using the estimated $\b$ from previous $\lambda_{0}$'s, so usually it takes less than $O\left(L\times\min\left(\text{maxiter}\times p(n+\frac{p}{c_1}),\, (n+\text{maxiter})\times p^2\right)\right)$.

\paragraph{SSVS1.} The computational complexity for Algorithm \ref{SSVS} is $O(p^3)$ per iteration when $p>n$. Under this setting, we use the Woodbury matrix identity to simplify the matrix multiplication $(\bm X^T\bm X+\bm D_{\tau}^{-1})^{-1}=\bm D_{\tau}-\bm D_{\tau}\bm X^T(\bm{I}_n+\bm X\bm D_{\tau}\bm X^T)^{-1}\bm X\bm D_{\tau}$, whose complexity is $O(p^2n)$. 

\paragraph{SSVS2.} Using \cite{bhattacharya2016fast}'s matrix inversion formula to generate the $p$-dimensional multivatiate Gaussian takes $O(n^2p+n^3)=O(n^2\max(n,p))$.

\paragraph{Skinny Gibbs.} We modified Skinny Gibbs to sample from the posterior using SSL prior. Theoretically it is of complexity $O(np)$. However, sometimes when $n$ is relatively small, we observe that the running time of Skinny Gibbs is slower than Bhattacharya's method. This is because Skinny Gibbs involves an $O(n)$ matrix product for each coordinate and the update for each coordinate is implemented via for-loop, whereas in Bhattacharya's method the $O(n^2p)$ operation is one matrix product which is very efficiently optimized in R. We will see that as $n$ increases, this problem diminishes and Skinny Gibbs becomes faster than Bhattacharya's method.

\paragraph{WLB.} Generating each WLB sample involves solving a least square problem whose complexity is $O(p^2n)$ when $p\leq n$. WLB is not applicable when $p>n$.


\section{Additional Experimental Results}\label{sec_appendix:simulation}
\subsection{Low Dimensional Setting}\label{sec:simu_low_ind}
When all covariates are mutually independent, we first investigate the marginal density of $\beta_i$'s, which is shown in Figure \ref{fig:low_ind_beta}. We find that all methods perform well for active $\beta_i$'s. WBB1 and WBB2 are doing poorly for inactive $\beta_i$'s.
For the marginal mean of $\gamma_i$'s, as shown in Figure \ref{fig:low_ind_gamma}, all methods perform  well. 
All methods can detect over 95\% of models. 
In Table \ref{table:approx_low}, we quantify the performance of each method in the low-dimensional setting. In the settings we tried, BB-SSL consistently has the best performance.

\begin{sidewaysfigure}
	\begin{subfigure}{0.5\hsize}\centering
		\includegraphics[width=\hsize, height=5in]{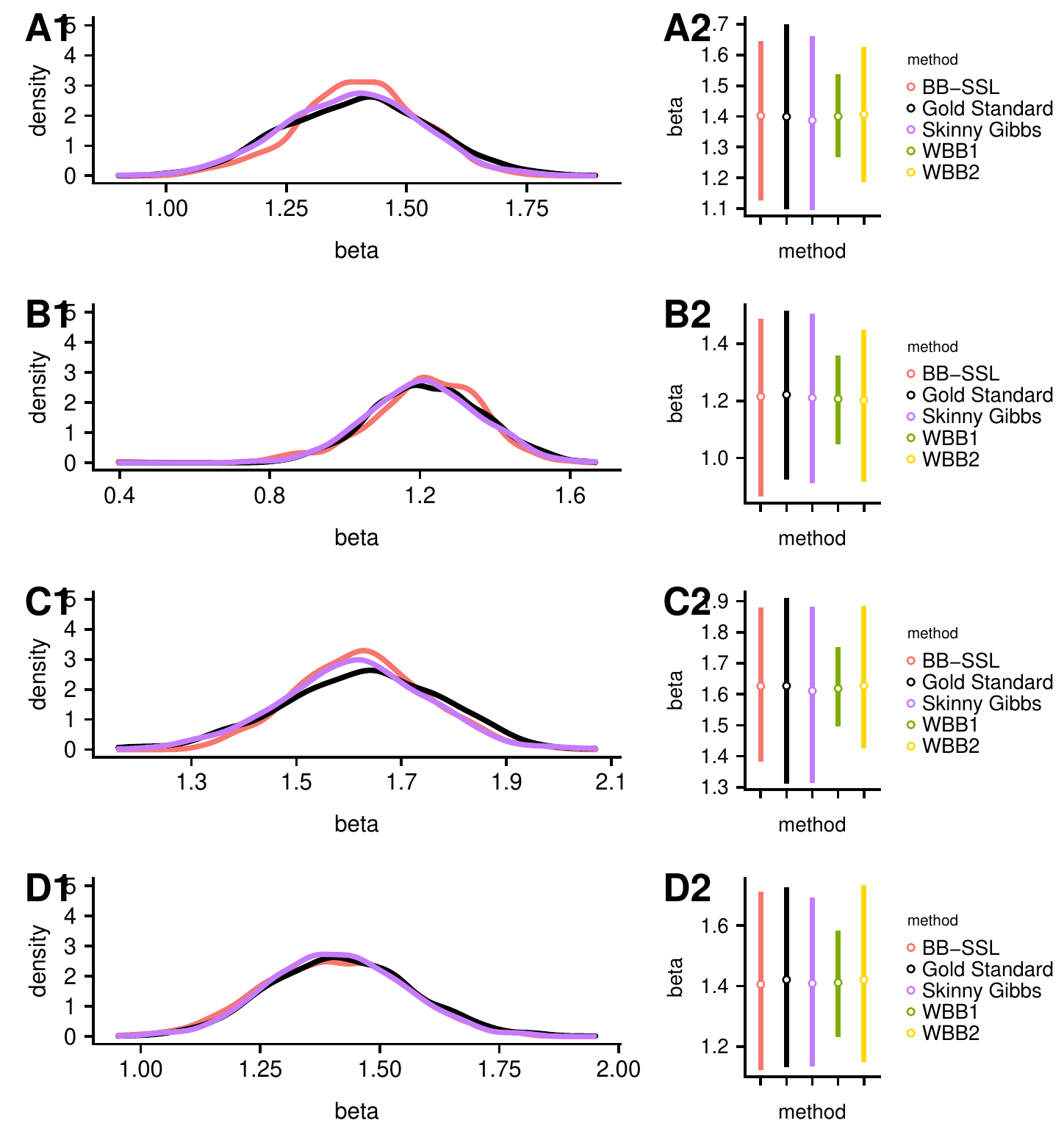}
		\caption{\scriptsize Active predictors, from top to bottom: $\beta_1,\beta_{4},\beta_{7},\beta_{10}$}
	\end{subfigure}%
	\begin{subfigure}{0.5\hsize}\centering
		\includegraphics[width=\hsize, height=5in]{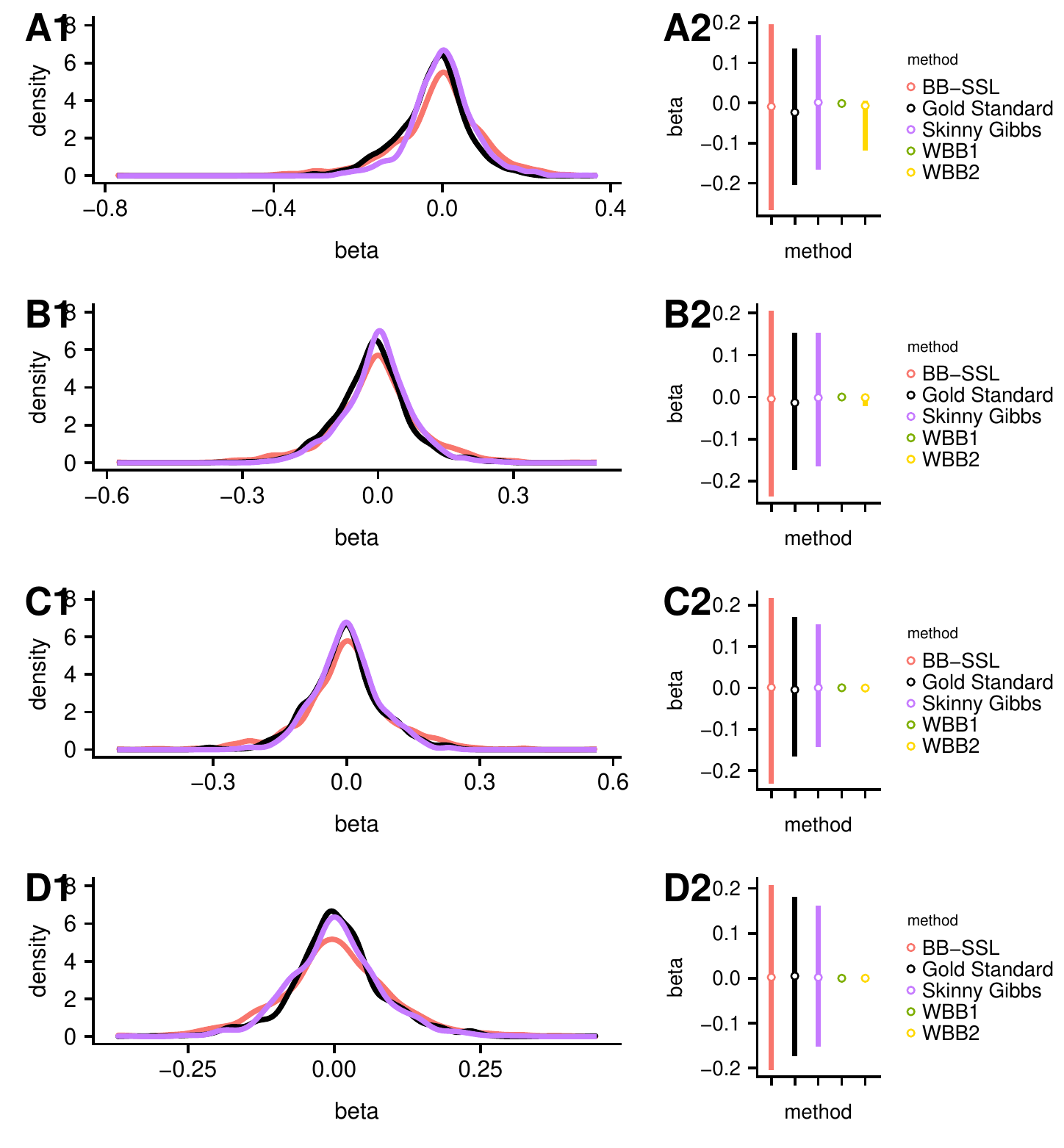}
		\caption{\scriptsize Inactive predictors, from top to bottom: $\beta_2,\beta_{5},\beta_{8},\beta_{11}$}
	\end{subfigure}
	\caption{\small Estimated posterior density (left panel) and credible intervals (right panel) of $\beta_i$'s in the low-dimensional independent case. We have $n=50, p=12, \beta_{active}=(1.3,1.3,1.3,1.3)',\lambda_0=7,\lambda_1=0.15, \rho=0$. Each method has 
		$5\,000$ sample points (after thinning for SSVS and Skinny Gibbs). BB-SSL is fitted using a single value $\lambda_0=7$. Since WBB1 and WBB2 produce a point mass at zero, we exclude them from density comparisons. }
	\label{fig:low_ind_beta}
\end{sidewaysfigure}

\begin{figure}
	\begin{center}
		\includegraphics[width = .7\textwidth]{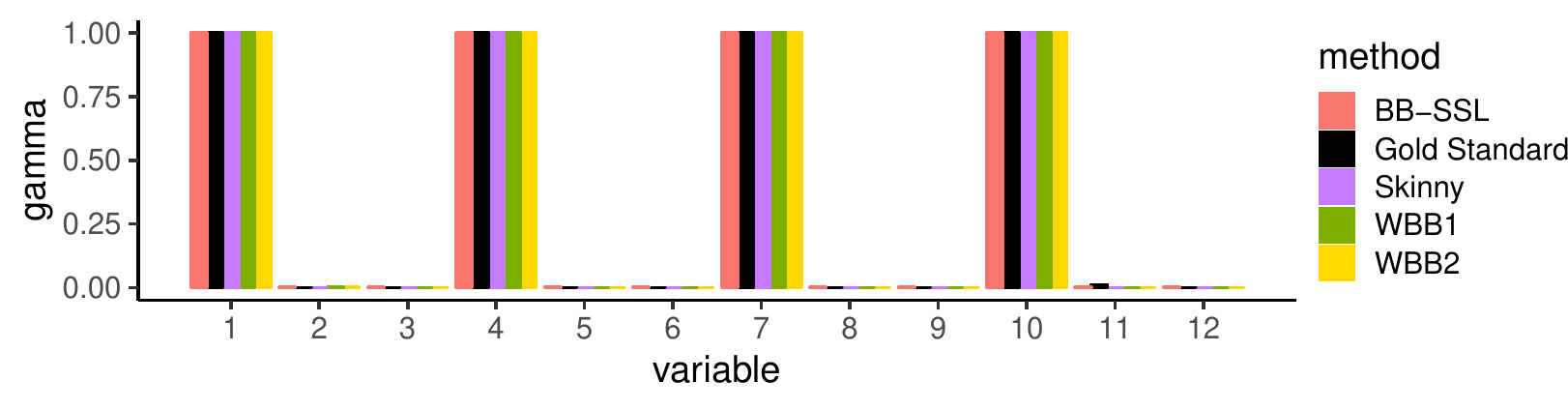}
	\end{center}
	\caption{\small Mean of $\gamma_i, i = 1,2,\cdots,12$ in low-dimensional, independent case. $n=50, p=12, \beta_{active}=(1.3,1.3,1.3,1.3)$ and predictors mutually independent. Each method is (thinned to) 1,000 sample points. $\lambda_0=13,\lambda_1=0.05$. \texttt{SSLASSO} is fitted using a single $\lambda_0$.  }
	\label{fig:low_ind_gamma}
\end{figure}

\begin{table}
	\begin{subtable}[!t]{\textwidth}
		\centering
		\scalebox{0.58}{
			\begin{tabular}{l|l|l|l|l|l|l|l|l|l||l|l|l|l|l|l|l|l|l} 
				\hline\hline
				\bf \large Setting&\multicolumn{9}{l||}{  \cellcolor[gray]{0.8}\large \bf Block-wise $\rho=0.6$, $\beta_{active}=(1.3,1.3,1.3,1.3)'$}   &\multicolumn{9}{l}{  \large\cellcolor[gray]{0.8} \bf Block-wise $\rho=0.9$, $\beta_{active}=(1.3,1.3,1.3,1.3)'$}   \\\hline
				\multirow{3}{*}{Metric}& \multicolumn{6}{l|}{\quad\quad\quad\quad\quad\quad\quad\quad$\beta_j$'s}                                                                                         & \multicolumn{2}{l|}{\quad$\gamma_j$'s }              & Model 
				& \multicolumn{6}{l|}{\quad\quad\quad\quad\quad\quad\quad\quad$\beta_j$'s}                                                                                         & \multicolumn{2}{l|}{\quad$\gamma_j$'s }              & Model 
				\\ \cline{2-19}
				Metric& \multicolumn{2}{l|}{KL} & \multicolumn{2}{l|}{JD of 90\% CI} & \multicolumn{2}{l|}{`Bias'} & \multicolumn{2}{l|}{`Bias'}    & HD &
				\multicolumn{2}{l|}{KL } & \multicolumn{2}{l|}{JD of 90\% CI} & \multicolumn{2}{l|}{`Bias'} & \multicolumn{2}{l|}{`Bias'}    & HD 
				\\ \cline{2-19}
				& $+$ & $-$           & $+$ & $-$         & $+$ & $-$       & $+$ & $-$        & all    & $+$ & $-$   & $+$ & $-$       & $+$ & $-$      & $+$ & $-$         & all    \\ 
				\hline
				Skinny Gibbs & 0.24  & 0.25                & 0.29   & 0.34                               & \bf{0.08} &0.07       & \bf{0.02}&0.0007& \bf{0} 
				& 0.68  & 0.33                     & 0.37   & 0.39                                  & 0.15 &0.08       & \bf{0.04}&0.009& \bf{0.2}   \\ 
				WBB1        & 0.22  & 1.66        &0.28& 0.62                                   & 0.09 & 0.04    & \bf{0.02}&0.0004& \bf{0}  
				& 0.67  & 2.73                 & 0.41 & 0.89                                       & 0.14 & \bf{0.08}    & 0.05&0.003& \bf{0.2}     \\ 
				WBB2        & 0.20 & 1.69       &0.28 & 0.62         & \bf{0.08} & {0.04}        &\bf{0.02}&0.0004& \bf{0}
				& 0.56 & 2.69       &0.33 & 0.87          & 0.14 & \bf{0.07}        &0.05&0.003& \bf{0.2}  \\ 
				BB-SSL      	& \bf{0.15}   & \bf{0.07}                     & \bf{0.21}   & \bf{0.24}                                    &{0.09}  & \bf{0.03}            &\bf{0.02}&\bf{0.0003}& \bf{0}   
				& \bf{0.14}   & \bf{0.09}                     & \bf{0.21}   & \bf{0.24}                                    &\bf{0.13}  & \bf{0.07}            &\bf{0.04}&\bf{0.002}& \bf{0.2}   
				\\
				\hline\hline
				\bf \large Setting&\multicolumn{9}{l||}{  \cellcolor[gray]{0.8} \bf \large Equi-correlation $\rho=0.6$, $\beta_{active}=(1.3,1.3,1.3,1.3)'$}   &\multicolumn{9}{l}{  \large \cellcolor[gray]{0.8} \bf Equi-correlation $\rho=0.9$, $\beta_{active}=(1,1.5,-1.5,2)'$}   \\\hline
				\multirow{3}{*}{Metric}& \multicolumn{6}{l|}{\quad\quad\quad\quad\quad\quad\quad\quad$\beta_j$'s}                                                                                         & \multicolumn{2}{l|}{\quad$\gamma_j$'s }              & Model 
				& \multicolumn{6}{l|}{\quad\quad\quad\quad\quad\quad\quad\quad$\beta_j$'s}                                                                                         & \multicolumn{2}{l|}{\quad$\gamma_j$'s }              & Model 
				\\ \cline{2-19}
				Metric& \multicolumn{2}{l|}{KL} & \multicolumn{2}{l|}{JD of 90\% CI} & \multicolumn{2}{l|}{`Bias'} & \multicolumn{2}{l|}{`Bias'}    & HD &
				\multicolumn{2}{l|}{KL } & \multicolumn{2}{l|}{JD of 90\% CI} & \multicolumn{2}{l|}{`Bias'} & \multicolumn{2}{l|}{`Bias'}    & HD 
				\\ \cline{2-19}
				& $+$ & $-$           & $+$ & $-$         & $+$ & $-$       & $+$ & $-$        & all    & $+$ & $-$   & $+$ & $-$       & $+$ & $-$      & $+$ & $-$         & all    \\ 
				\hline
				Skinny Gibbs & 0.11  & 0.20                     & 0.20   & 0.31                                  & \bf{0.06} &0.06       & {0.05}&0.002& \bf{0.2} 
				& 0.66   & 0.29                      & 0.27   & 0.40                     & 0.30 & {0.06}            & 0.01&0.016& 1  \\ 
				WBB1        & 0.15 & 1.97            & 0.22 & 0.75                                       & 0.09 & 0.05    & 0.02&\bf{0.001}& \bf{0.2}  
				& 1.18  &2.72             &0.37  & 0.88                                    & {0.14}  & {0.06}            &\bf{0.01}&\bf{0.001}& \bf{0.2}    \\ 
				WBB2        & 0.14 & 1.98       &0.23 & 0.75          & 0.09 & {0.05}        &0.02&\bf{0.001}& \bf{0.2}
				& 1.20 & \bf{2.70}              & 0.37   & 0.87     & {0.14}  & \bf{0.05}            & \bf{0.01}&\bf{0.001}& \bf{0.2}  \\ 
				BB-SSL      	& \bf{0.10}   & \bf{0.07}                     & \bf{0.19}   & \bf{0.24}                                    &{0.08}  & \bf{0.03}            &\bf{0.02}&{0.002}& \bf{0.2}   
				& \bf{0.23}   & \bf{0.07}                     & {0.51}   & \bf{0.20}                                    & \bf{0.12}  & \bf{0.05}            &\bf{0.01}&\bf{0.001}& \bf{0.2}   
				\\\hline\hline 
		\end{tabular}}
	\end{subtable}
	\caption{ \small Evaluation of approximation properties (relative to SSVS) in the low-dimensional setting with $n=50$ and $p=12$ based on 10 independent runs. We set $\lambda_0=7,\lambda_1=0.15$. The best performance is marked in bold font. 
		KL is the Kullback-Leibler divergence, JD is the Jaccard distance  of credible intervals (CI), HD is the Hamming distance of the median models.  `Bias'   refers to the $l_1$ distance of estimated posterior means. We denote with $*$ all numbers smaller than 0.0001, with $+$ an average over active coordinates, and with $-$ an average over inactive coordinates. }
	\label{table:approx_low}
\end{table}

\subsection{High Dimensional, Block-wise Correlation Structure}\label{sec:appendix_simu_high.99}

Figure  \ref {fig:high_ind_beta} shows the posterior for $\beta_i$'s when $\rho=0$, Figure \ref{fig:high_high_beta} is for $\rho=0.6$ and Figure \ref{fig:high_moderate_beta} is for $\rho=0.9$ in the block-wise correlated setting. Skinny Gibbs slightly underestimates the variance for active coordinates. WBB1 and WBB2 does poorly for inactive coordinates. In general BB-SSL does  well. Figure \ref{fig:high_dim_gamma_6} shows the marginal inclusion probabilities (MIP) when $\rho=0.6$ ($\rho=0,0.9$ are included in the main text) and we see that all methods perform well except that Skinny Gibbs sometimes overestimates MIP.

\begin{sidewaysfigure}
	\begin{subfigure}{0.5\hsize}\centering
		\includegraphics[width=\hsize, height=5in]{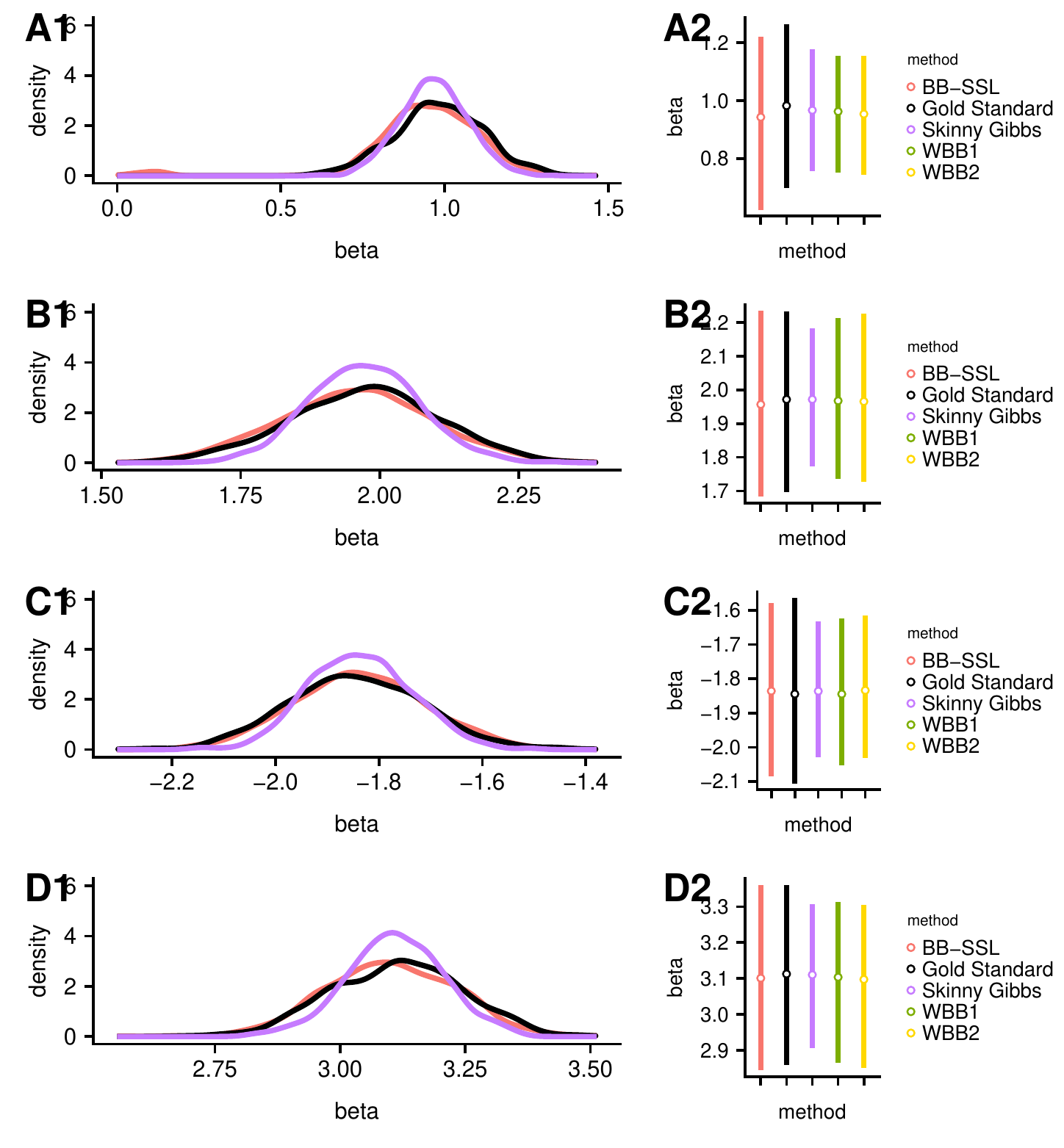}
		\caption{\small Active predictors, from top to bottom: $\beta_1,\beta_{11},\beta_{21},\beta_{31}$}
	\end{subfigure}%
	\begin{subfigure}{0.5\hsize}\centering
		\includegraphics[width=\hsize, height=5in]{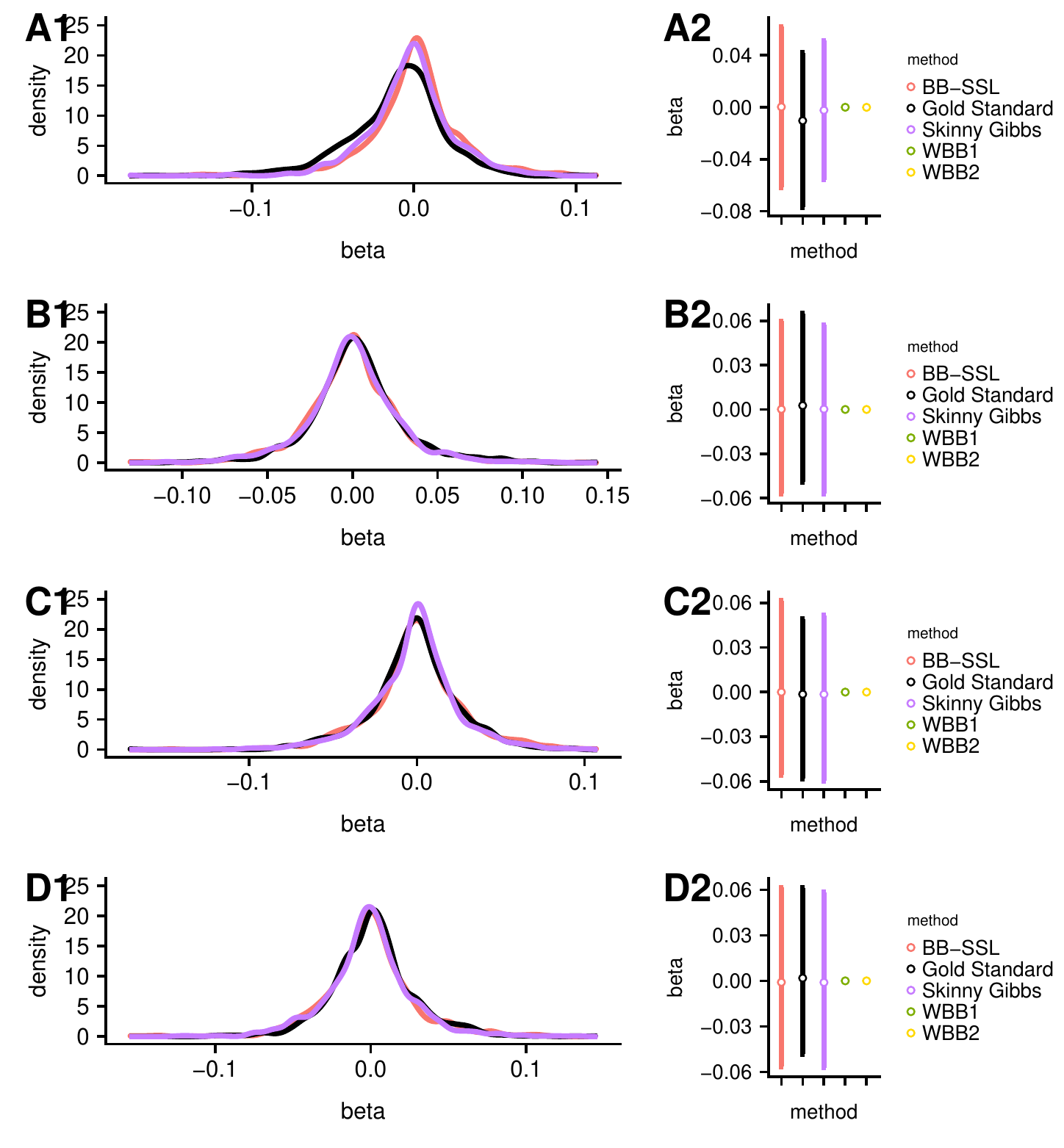}
		\caption{\small Inactive predictors, from top to bottom: $\beta_2,\beta_{12},\beta_{22},\beta_{32}$}
	\end{subfigure}
	\caption{\small Estimated posterior density (left panel) and credible intervals (right panel) of $\beta_i$'s in the high-dimensional independent case ($\rho=0$). We have $n=100, p=1000, \beta_{active}=(1,2,-2,3)',\lambda_0=50,\lambda_1=0.05$. Each method has 
		$5\,000$ sample points (after thinning for SSVS and Skinny Gibbs). BB-SSL is fitted using a single $\lambda_0$ and initialized at \texttt{SSLASSO} solution on the original $\X,\y$. Since WBB1 and WBB2 produce a point mass at zero, we exclude them from density comparisons.}
	\label{fig:high_ind_beta}
\end{sidewaysfigure}

\begin{sidewaysfigure}
	\begin{subfigure}{0.5\hsize}\centering
		\includegraphics[width=0.9\hsize, height=5in]{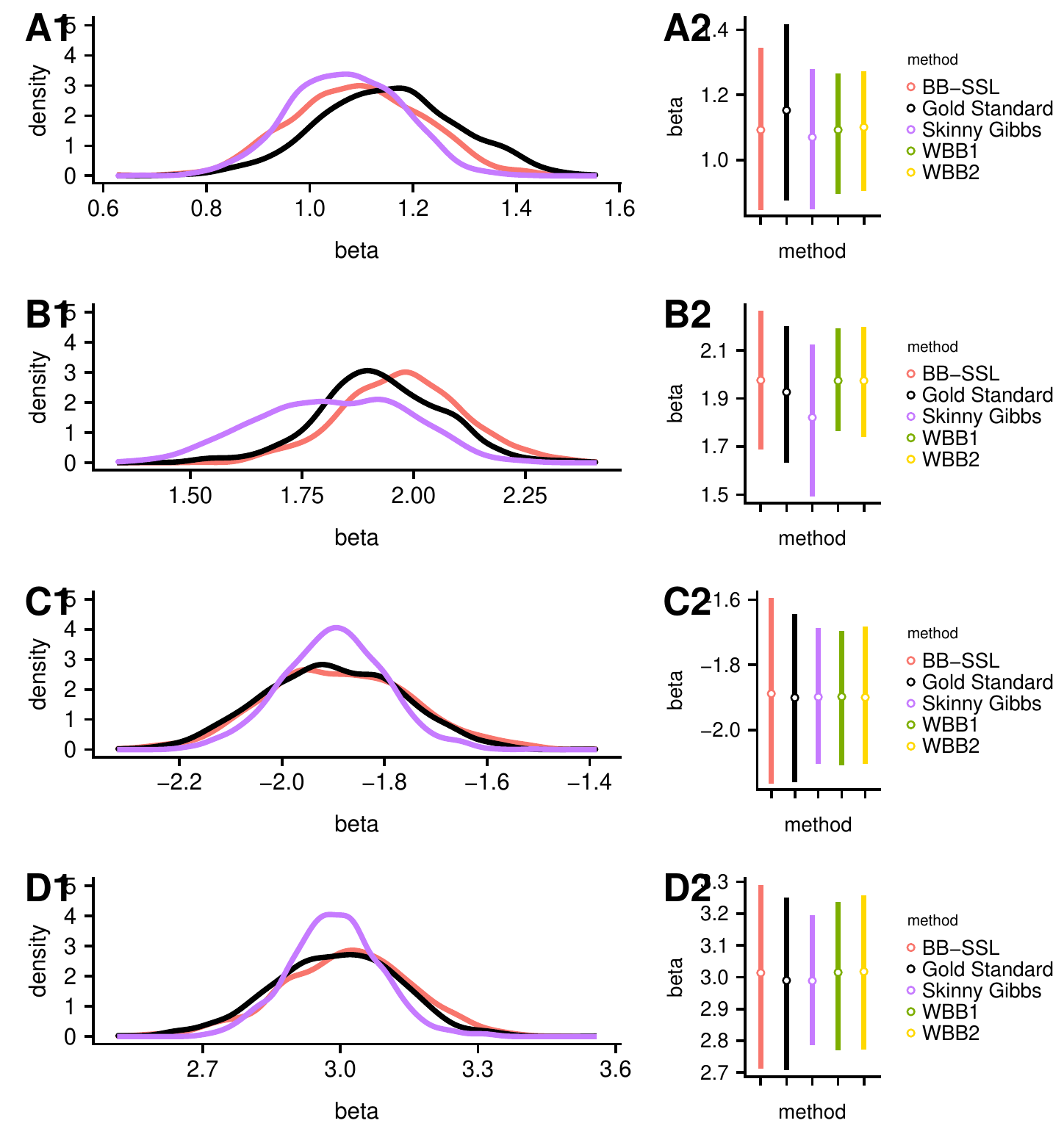}
		\caption{\scriptsize Active predictors, from top to bottom: $\beta_1,\beta_{11},\beta_{21},\beta_{31}$}
	\end{subfigure}%
	\begin{subfigure}{0.5\hsize}\centering
		\includegraphics[width=0.9\hsize, height=5in]{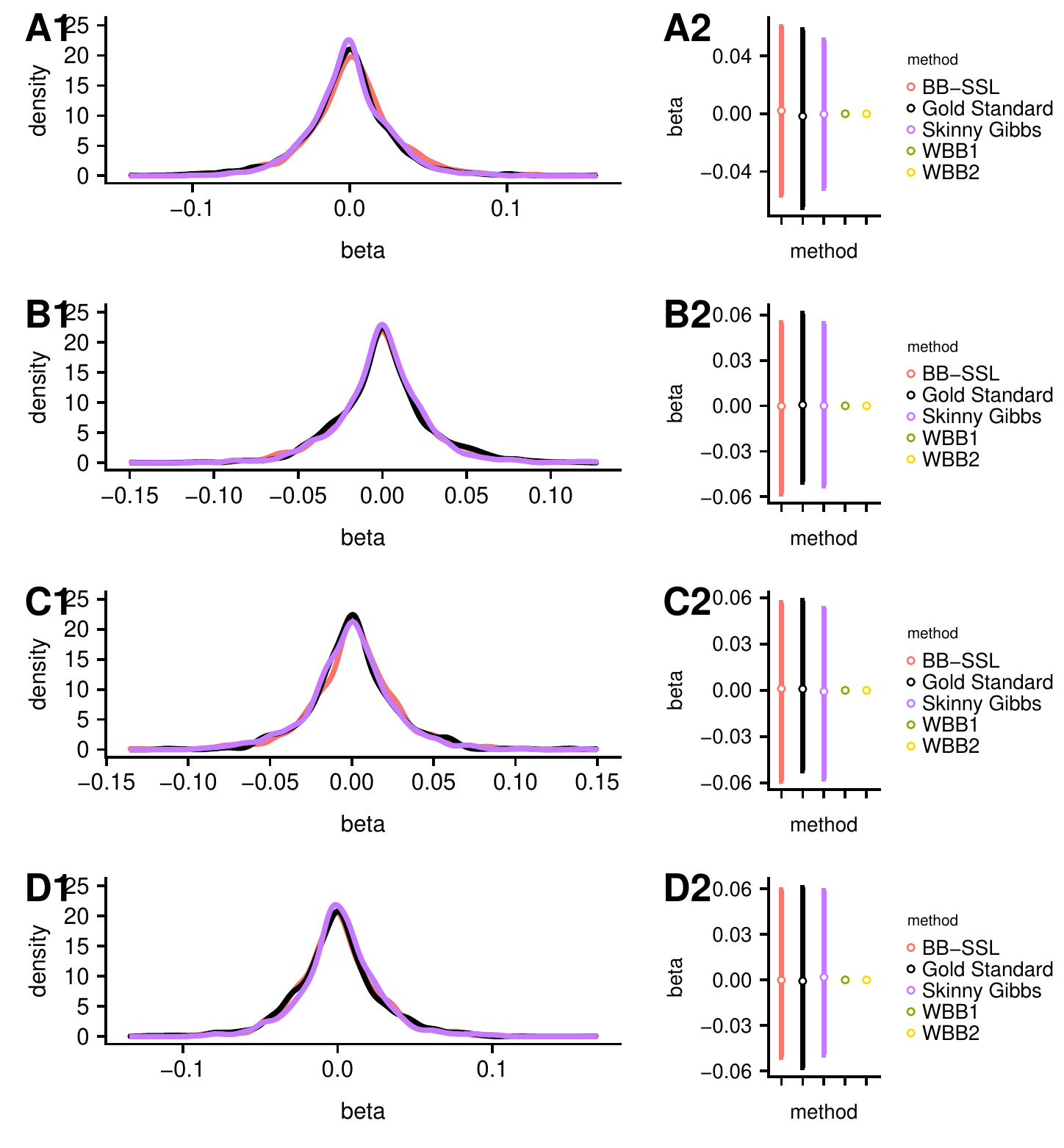}
		\caption{\scriptsize Inactive predictors, from top to bottom: $\beta_2,\beta_{12},\beta_{22},\beta_{32}$}
	\end{subfigure}
	\caption{\small Estimated posterior density (left panel) and credible intervals (right panel) of $\beta_i$'s in the high-dimensional block-wise correlated case ($\rho=0.6$). We have $n=100, p=1000, \beta_{active}=(1,2,-2,3)',\lambda_0=50,\lambda_1=0.05$. Each method has 
		$5\,000$ sample points (after thinning for SSVS and Skinny Gibbs). BB-SSL is fitted using a single $\lambda_0$ and initialized at \texttt{SSLASSO} solution on the original $\X,\y$. Since WBB1 and WBB2 produce a point mass at zero, we exclude them from density comparisons.}
	\label{fig:high_high_beta}
\end{sidewaysfigure}

\begin{sidewaysfigure}
	\begin{subfigure}{0.5\hsize}\centering
		\includegraphics[width=0.9\hsize, height=5in]{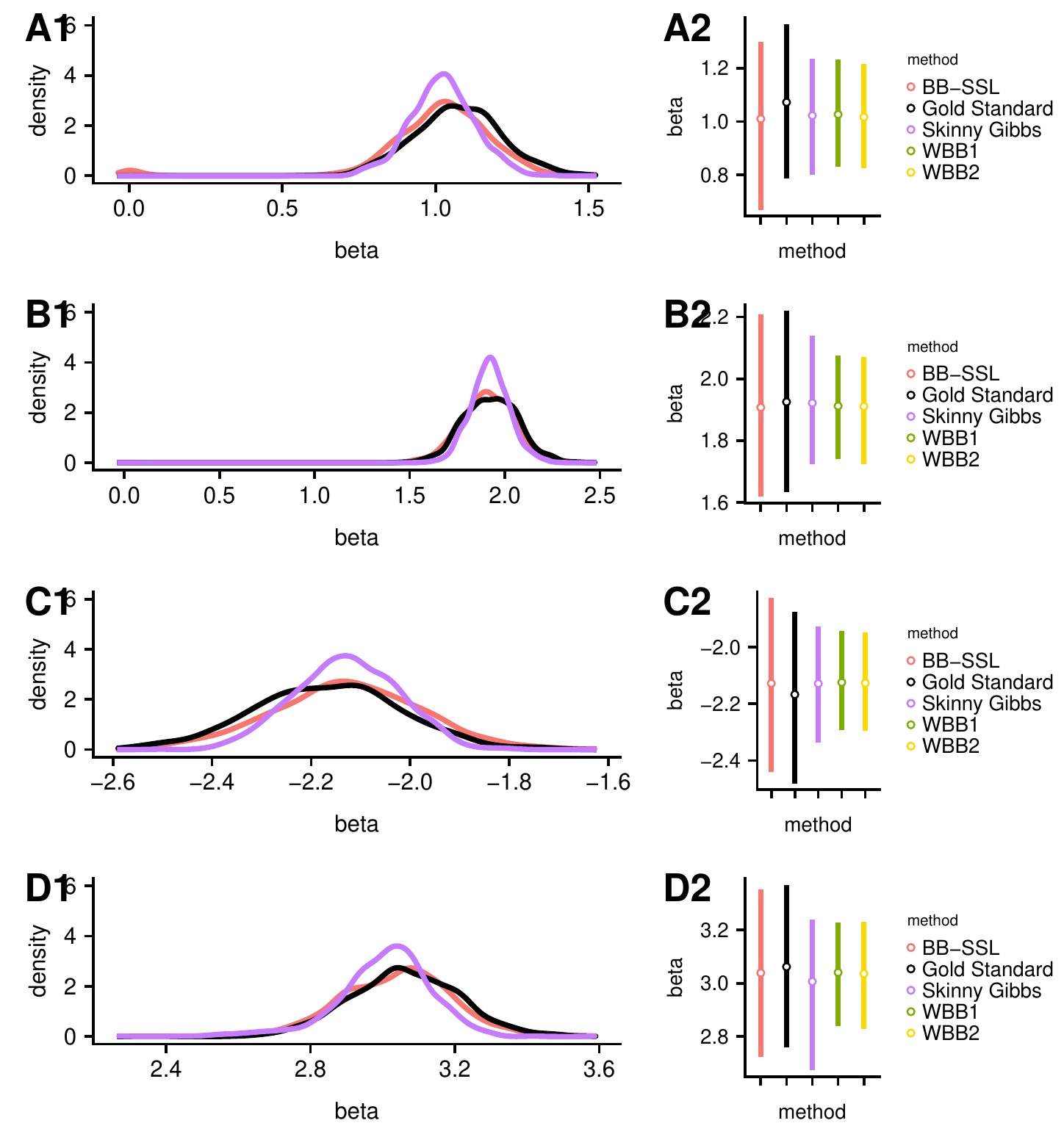}
		\caption{\small Active predictors, from top to bottom: $\beta_1,\beta_{11},\beta_{21},\beta_{31}$}
	\end{subfigure}%
	\begin{subfigure}{0.5\hsize}\centering
		\includegraphics[width=0.9\hsize, height=5in]{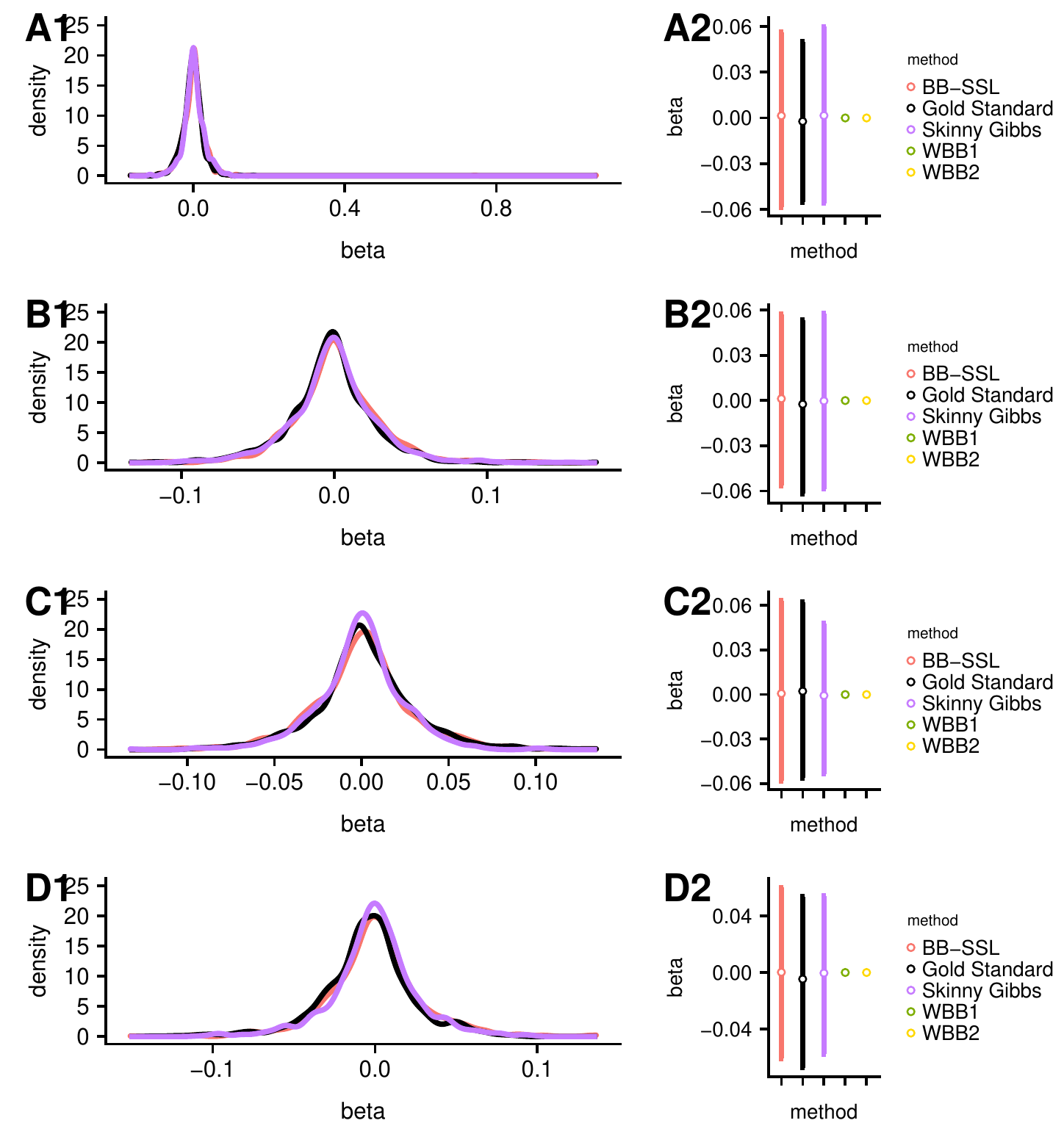}
		\caption{\small Inactive predictors, from top to bottom: $\beta_2,\beta_{12},\beta_{22},\beta_{32}$}
	\end{subfigure}
	\caption{\small  Estimated posterior density (left panel) and credible intervals (right panel) of $\beta_i$'s in the high-dimensional, block-wise correlated case ($\rho=0.9$). We have $n=100, p=1000, \beta_{active}=(1,2,-2,3)',\lambda_0=50,\lambda_1=0.05$. Each method has 
		$5\,000$ sample points (after thinning for SSVS and Skinny Gibbs). BB-SSL is fitted using a single $\lambda_0$ and initialized at \texttt{SSLASSO} solution on the original $\X,\y$. Since WBB1 and WBB2 produce a point mass at zero, we exclude them from density comparisons.}
	\label{fig:high_moderate_beta}
\end{sidewaysfigure}

\begin{figure}
	\centering
	\begin{subfigure}{.4\textwidth}
		\centering
		\includegraphics[width=.85\linewidth,height=1.65in]{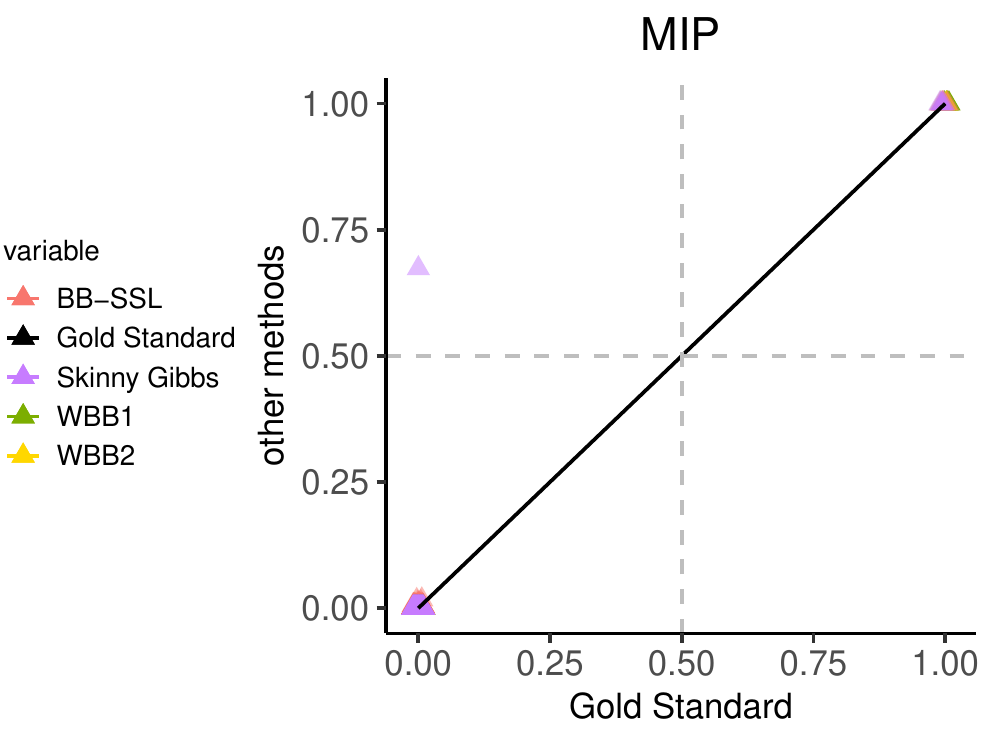}
		\caption{\small Block-wise $\rho=0.6$}
		\label{fig:high_dim_gamma_6}
	\end{subfigure}%
	\begin{subfigure}{.3\textwidth}
		\centering
		\includegraphics[width=.85\linewidth,height=1.65in]{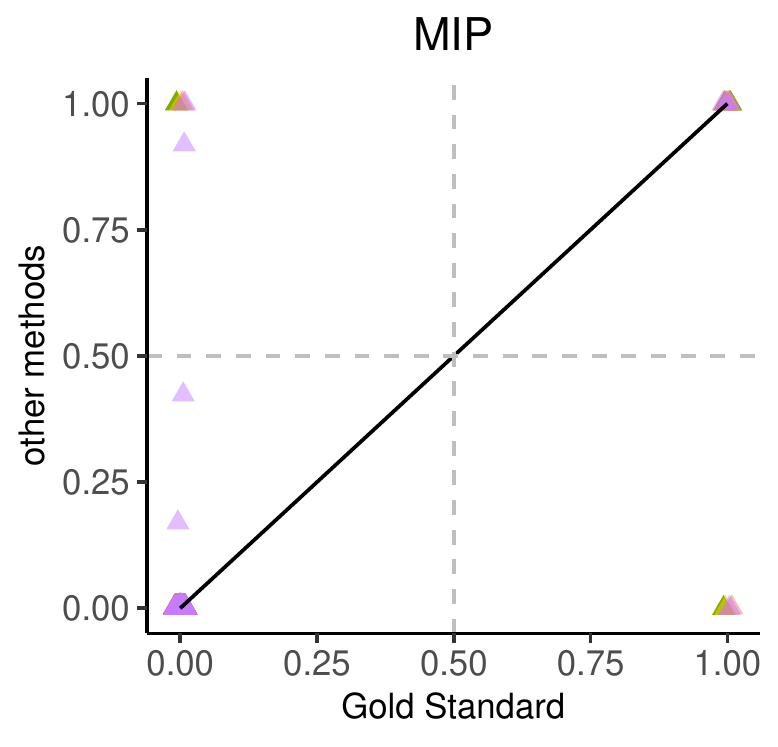}
		\caption{\small Block-wise $\rho=0.99$}
		\label{fig:high_dim_gamma_99}
	\end{subfigure}%
	\begin{subfigure}{.3\textwidth}
		\centering
		\includegraphics[width=.85\linewidth,height=1.65in]{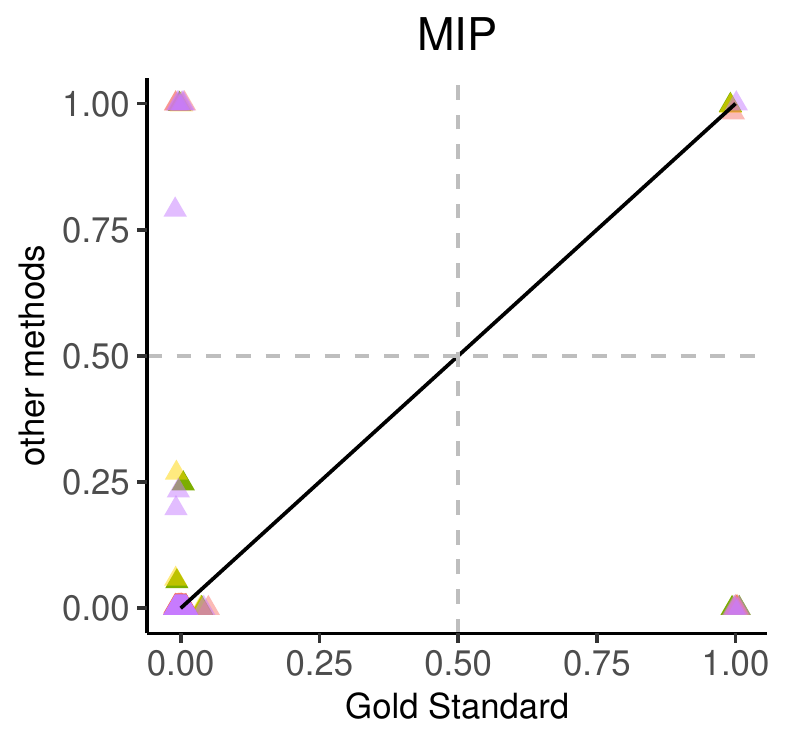}
		\caption{\small Equi-correlation $\rho=0.9$}
		\label{fig:high_dim_gamma_9_all}
	\end{subfigure}%
	\caption{\small Posterior means of $\gamma_i$'s (i.e. a marginal inclusion probabilities) in high-dimensional settings with $n=100, p=1000$. We set $\lambda_0=50,\lambda_1=0.05$. }
\end{figure}

In the more extreme case where $\rho=0.99$, SSVS suffers severely from local entrapment with the signal $\beta_{active}=(1,2,-2,3)'$, so we set a stronger signal with $\beta_{active}=(2,4,-4,6)'$. All the other settings are the same as before. We run two SSVS chains, one initialized at the truth and the other at origin. In Figure \ref{fig:high_dim_cor99_density}, we observe discrepancies in the posterior approximation for the two chains. This is why SSVS may not be a reliable gold standard to make comparisons with. We nevertheless observe that BB-SSL is close to the SSVS approximation obtained by initialization at the truth. Figure \ref{fig:high_dim_gamma_99} shows the MIP and no method performs perfect in this extreme setting. 

\begin{sidewaysfigure}
	\begin{subfigure}{0.5\hsize}\centering
		\includegraphics[width=0.9\hsize, height=5in]{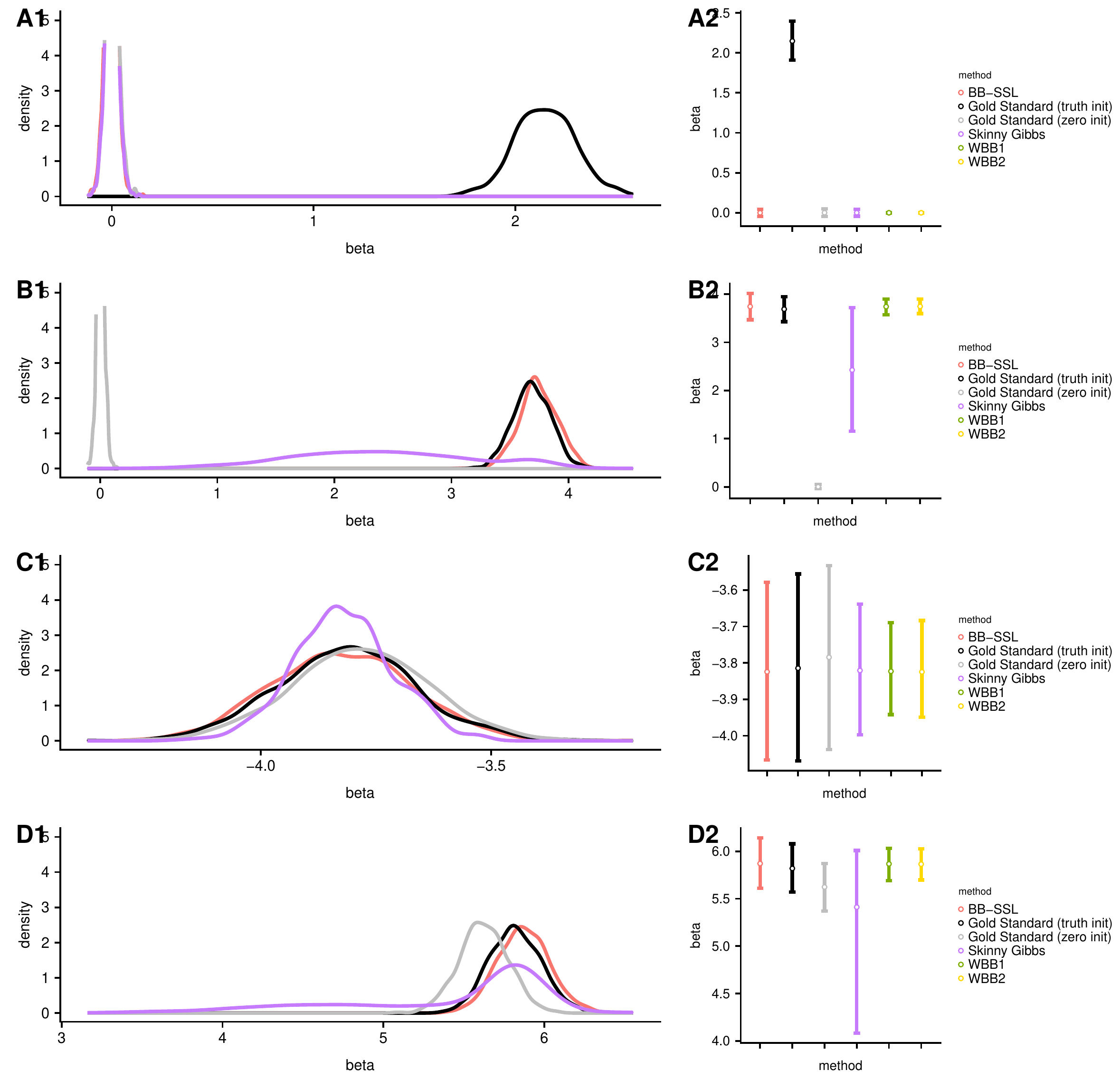}
		\caption{\scriptsize Active predictors, from top to bottom: $\beta_1,\beta_{11},\beta_{21},\beta_{31}$}
	\end{subfigure}%
	\begin{subfigure}{0.5\hsize}\centering
		\includegraphics[width=0.9\hsize, height=5in]{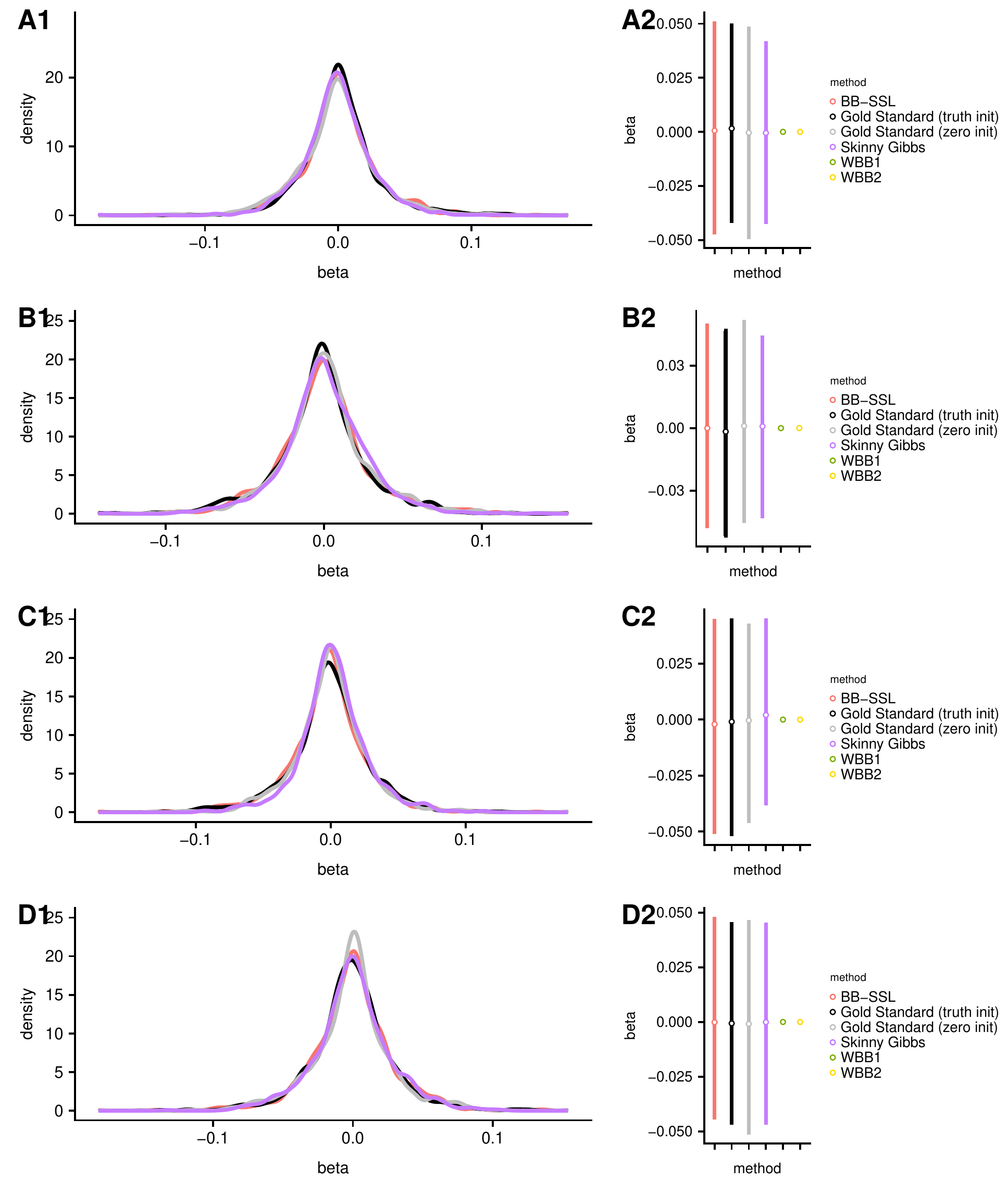}
		\caption{\scriptsize Inactive predictors, from top to bottom: $\beta_2,\beta_{12},\beta_{22},\beta_{32}$}
	\end{subfigure}
	\caption{\small Estimated posterior density (left panel) and credible intervals (right panel) of $\beta_i$'s in the high-dimensional block-wise correlated case ($\rho=0.99$). We have $n=100, p=1000, \beta_{active}=(2,4,-4,6)',\lambda_0=50,\lambda_1=0.05$. Each method has 
		$5\,000$ sample points (after thinning for SSVS and Skinny Gibbs). BB-SSL is fitted using a single $\lambda_0$ and initialized at \texttt{SSLASSO} solution on the original $\X,\y$. Since WBB1 and WBB2 produce a point mass at zero, we exclude them from density comparisons.}
	\label{fig:high_dim_cor99_density}
\end{sidewaysfigure}

\subsection{High Dimensional, Equi-correlation Structure}
When all covariates are equi-correlated with $\rho=0.9$, SSVS appears to suffer severely from a local trap when $\beta_0=2,\beta_2=3,\beta_3=-3$ and $\beta_4=4$. 
Thus, we also try a setting with larger signals $\beta_0=2,\beta_2=4,\beta_3=-4$ and $\beta_4=6$. Results for the smaller signal $\beta_{active}=(2,3,-3,4)'$ are summarized in Figure \ref{fig:r9} and results for the  larger signal $\beta_{active}=(2,4,-4,6)'$  are in Figure \ref{fig:r9_large}. We show results for both SSVS initializations (one at original and the other at the truth) and observe discrepancies in the posterior approximation. This is why SSVS may not be a reliable gold standard to make comparisons with. We nevertheless observe that BB-SSL is close to the SSVS approximation obtained by initialization at the truth. 

\begin{sidewaysfigure}
	\begin{subfigure}{0.5\hsize}\centering
		\includegraphics[width=\hsize, height=5in]{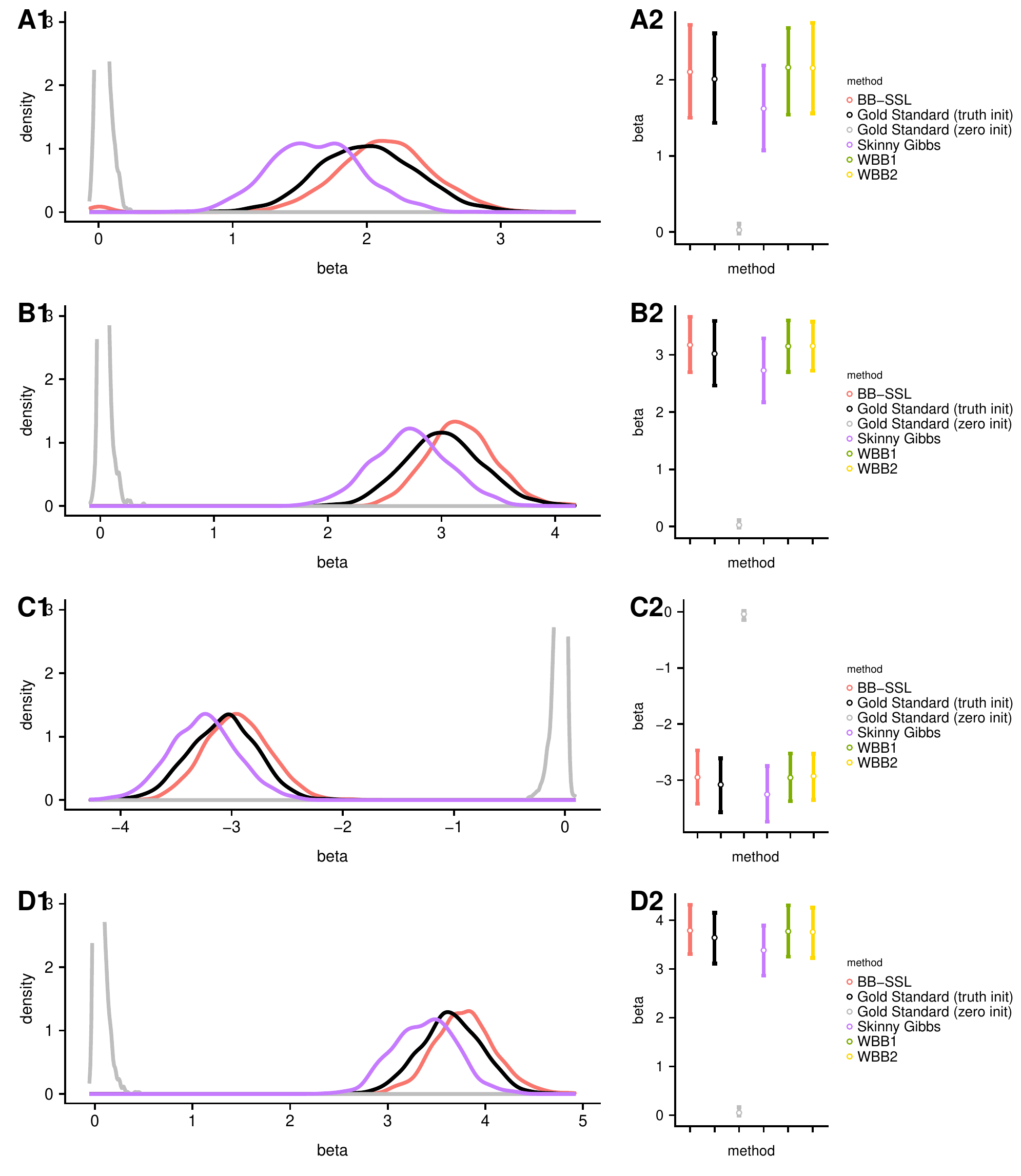}
		\caption{\small Active predictors, from top to bottom: $\beta_1,\beta_{2},\beta_{3},\beta_{4}$}
	\end{subfigure}%
	\begin{subfigure}{0.5\hsize}\centering
		\includegraphics[width=\hsize, height=5in]{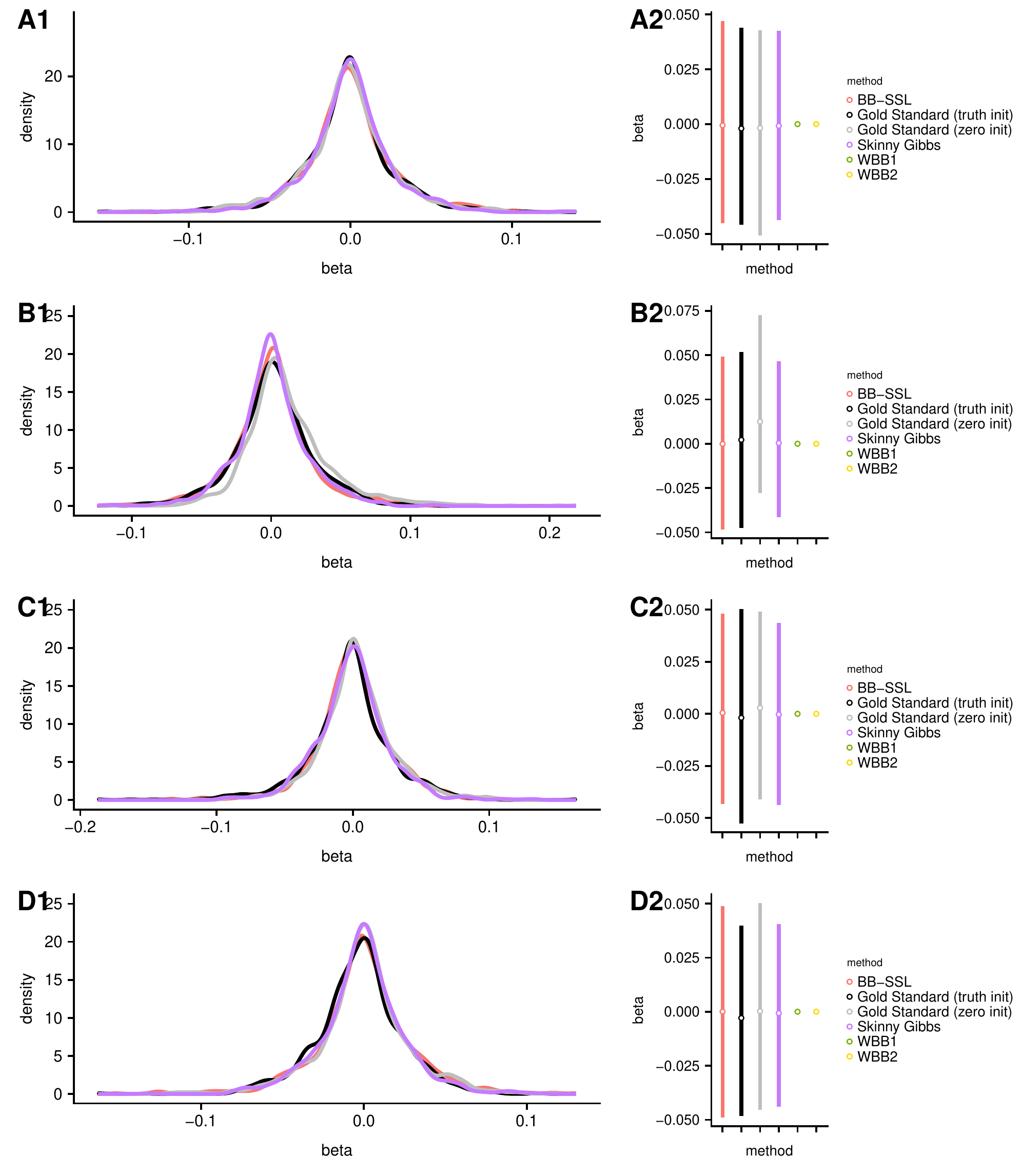}
		\caption{\small Inactive predictors, from top to bottom: $\beta_5,\beta_{6},\beta_{7},\beta_{8}$}
	\end{subfigure}
	\caption{\small  Estimated posterior density (left panel) and $90\%$ credible intervals (right panel) of $\beta_i$'s when all covariatess are correlated with $\rho=0.9$. }
	\label{fig:r9}
\end{sidewaysfigure}	

\begin{sidewaysfigure}
	\begin{subfigure}{0.5\hsize}\centering
		\includegraphics[width=\hsize, height=5in]{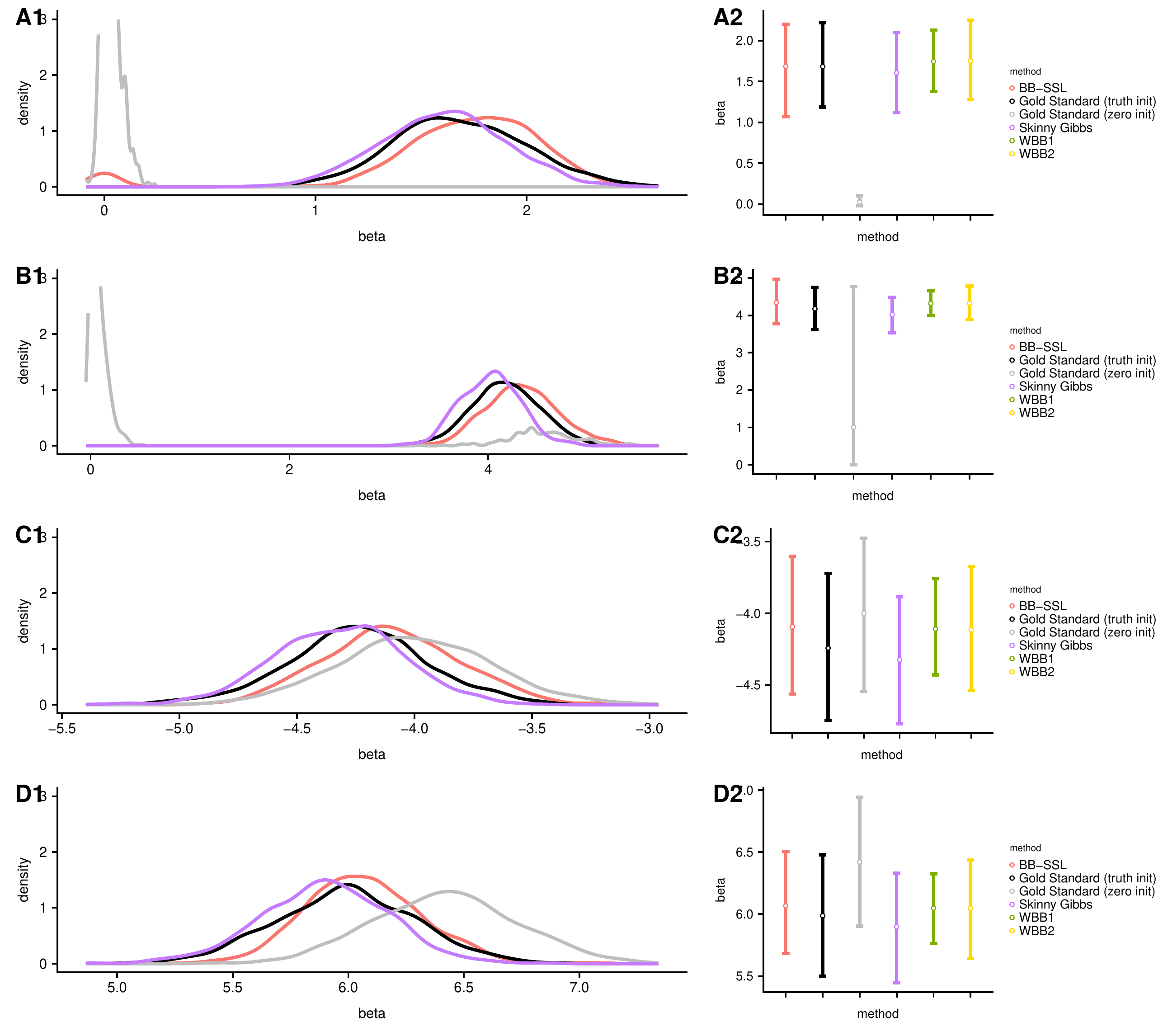}
		\caption{\small Active predictors, from top to bottom: $\beta_1,\beta_{2},\beta_{3},\beta_{4}$}
	\end{subfigure}%
	\begin{subfigure}{0.5\hsize}\centering
		\includegraphics[width=\hsize, height=5in]{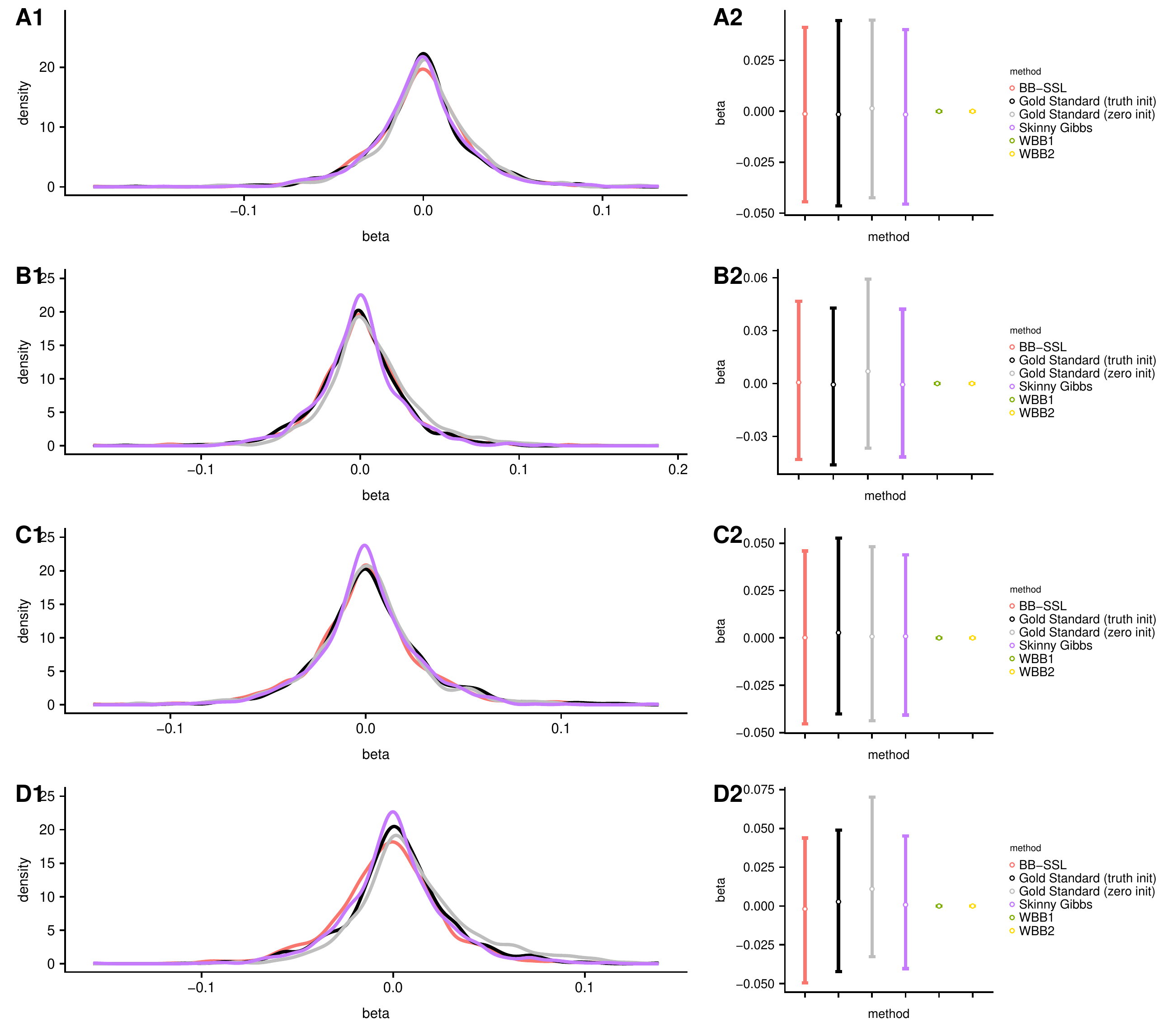}
		\caption{\small Inactive predictors, from top to bottom: $\beta_5,\beta_{6},\beta_{7},\beta_{8}$}
	\end{subfigure}
	\caption{\small  Estimated posterior density (left panel) and $90\%$ credible intervals (right panel) of $\beta_i$'s when all covariatess are correlated with $\rho=0.9$. }
	\label{fig:r9_large}
\end{sidewaysfigure}

\subsection{Influence of $\alpha$ On The Posterior}
In this section we investigate the influence of $\alpha$ on BB-SSL posterior under (a) high-dimensional, block-wise correlated ($\rho=0.6$) setting, as shown in Figure \ref{fig:appendix_highdim_6_alpha}, and (b) Durable Goods Marketing Data Set, as shown in Figure \ref{fig:appendix_causal_trace}.
In both datasets, as we increase $\alpha$, the change in posterior variance reduces for each unit increase in $\alpha$.

\begin{figure}\centering
	\begin{subfigure}{.5\textwidth}\centering
		\includegraphics[width=1\textwidth,height=0.4\textheight]{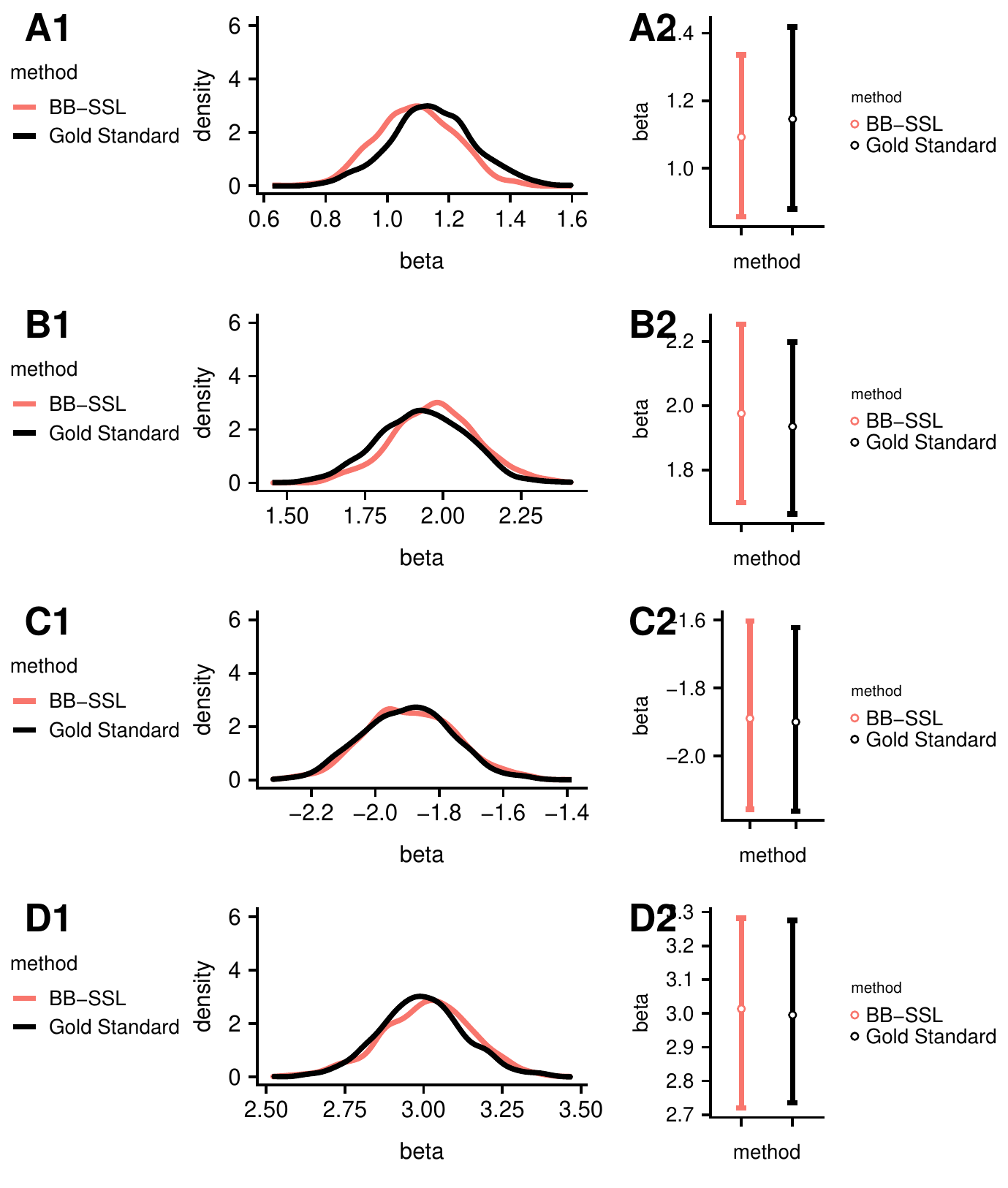}
		\caption{\scriptsize $\alpha=2$.}
	\end{subfigure}%
	\begin{subfigure}{.5\textwidth}\centering
		\includegraphics[width=1\textwidth,height=0.4\textheight]{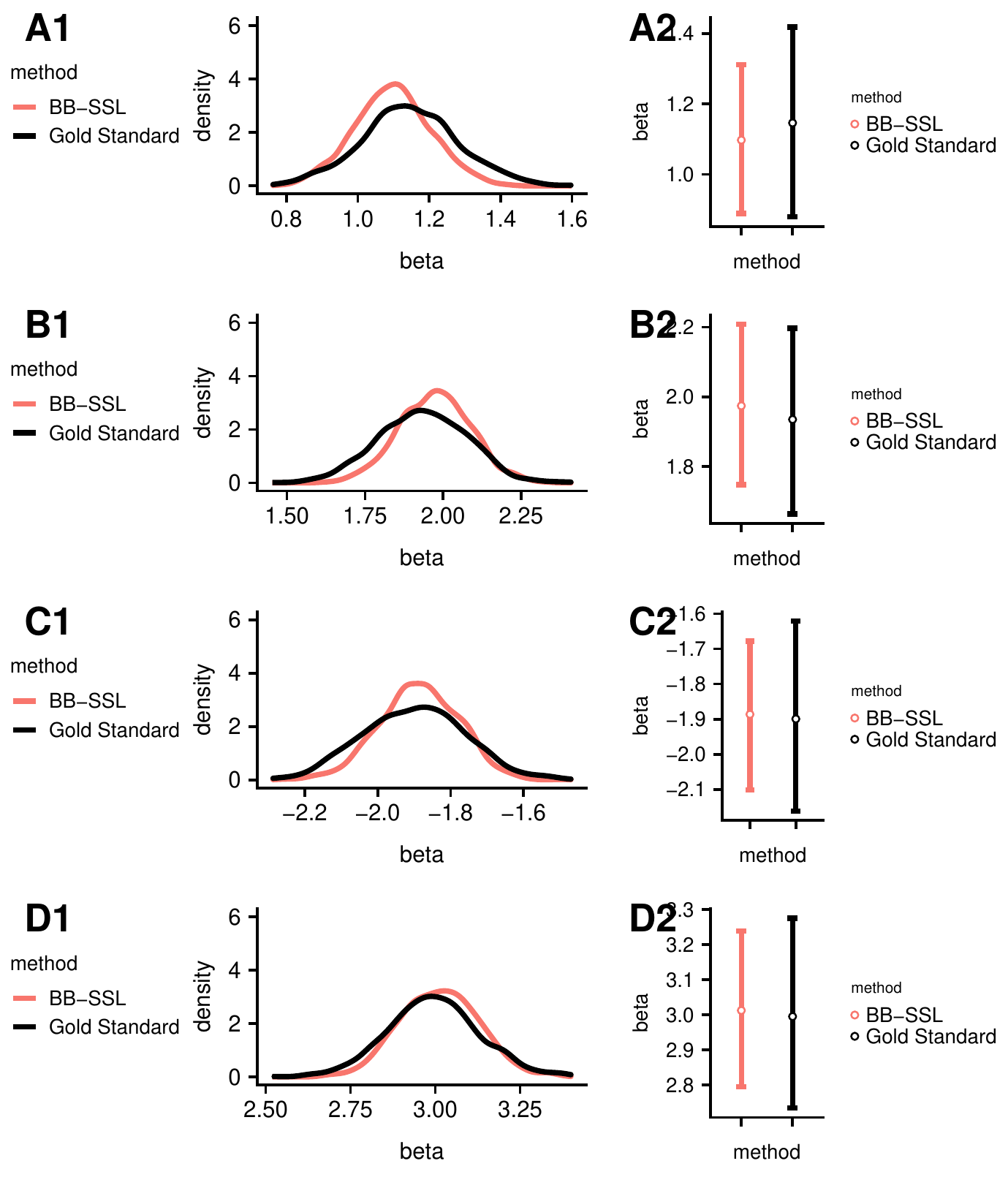}
		\caption{\scriptsize $\alpha=10$.}
	\end{subfigure}%
	
	\begin{subfigure}{.5\textwidth}\centering
		\includegraphics[width=1\textwidth,height=0.4\textheight]{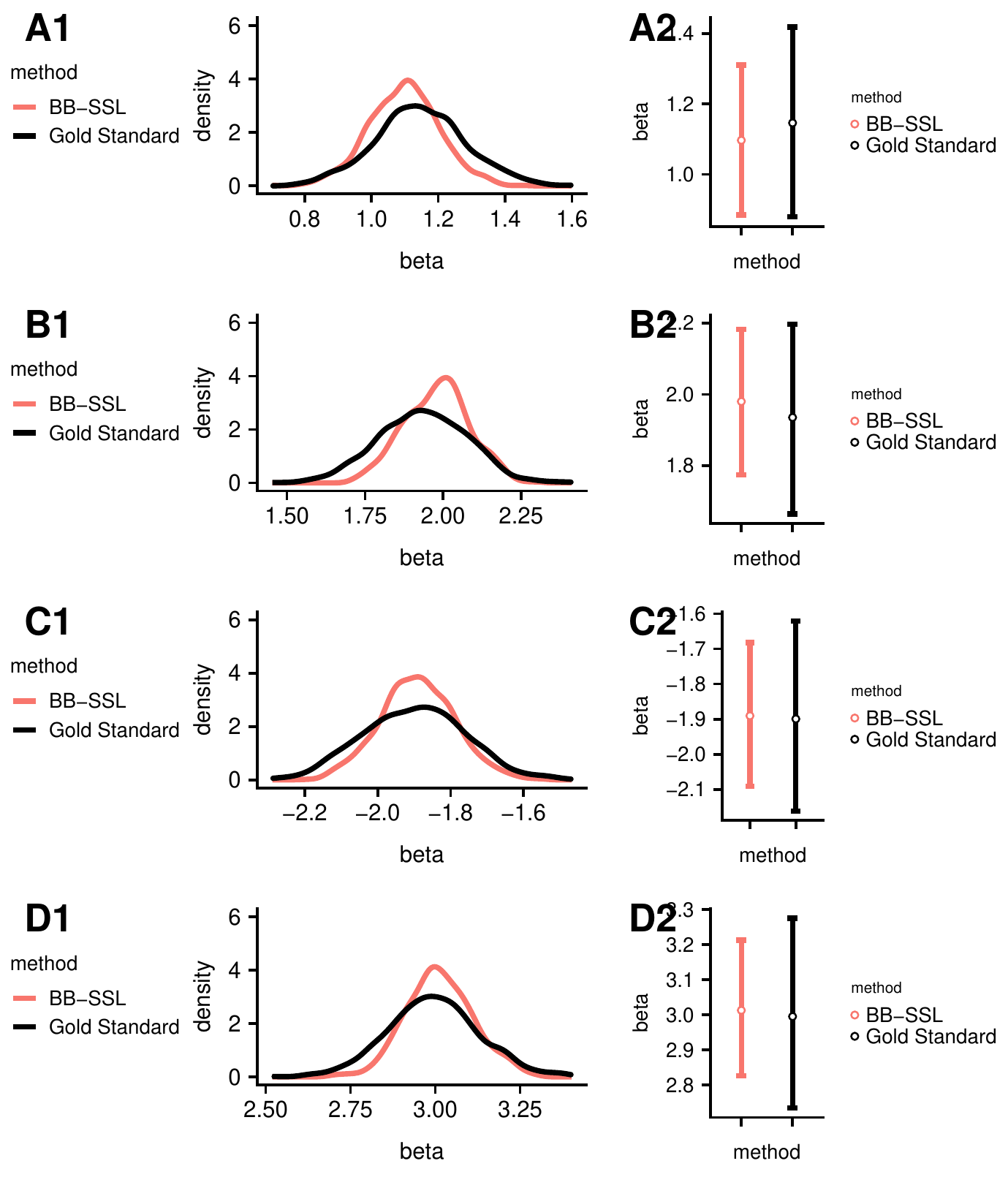}
		\caption{\scriptsize $\alpha=100$.}
	\end{subfigure}%
	\begin{subfigure}{.5\textwidth}
		\includegraphics[width=1\textwidth,height=0.4\textheight]{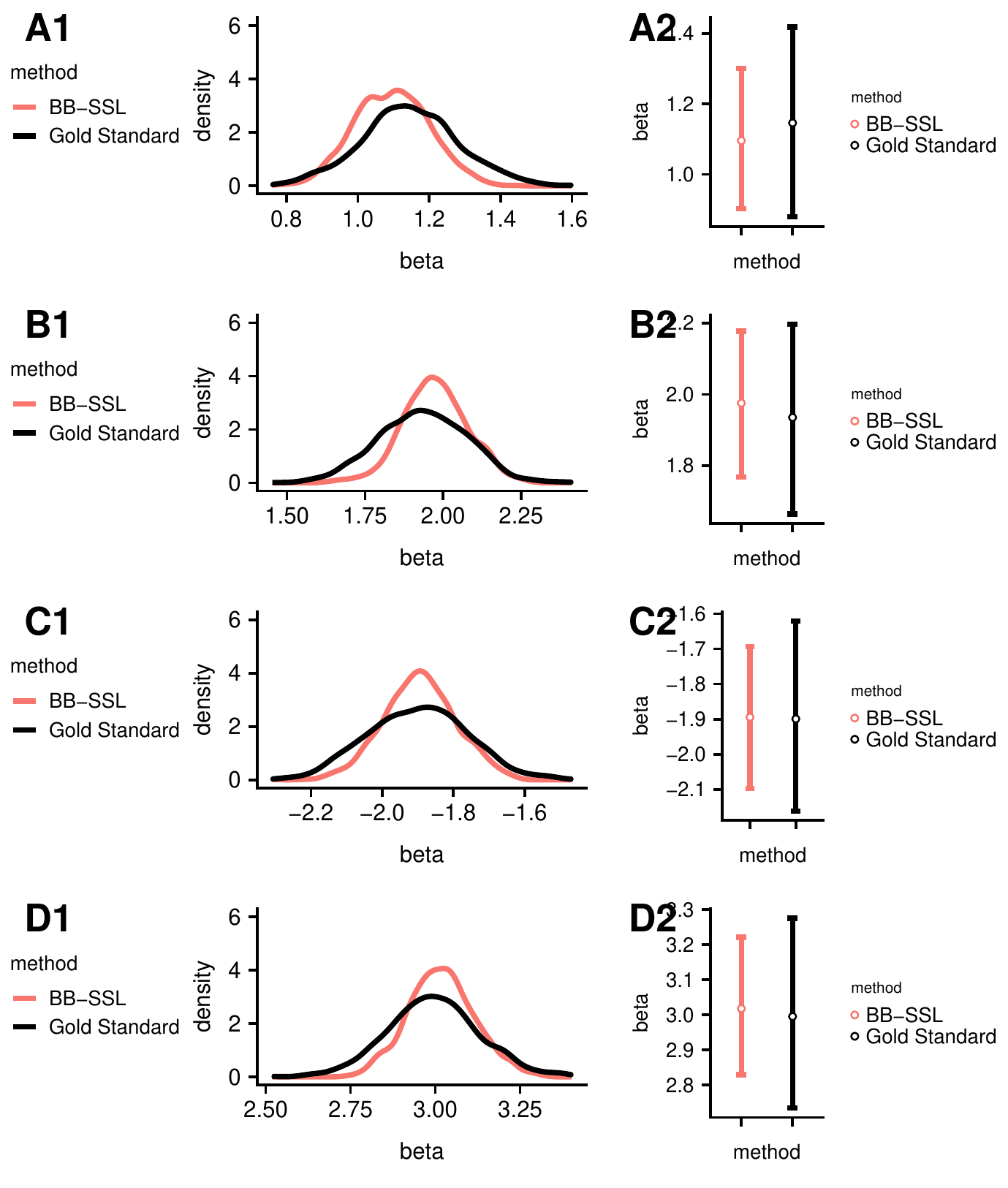}
		\caption{\scriptsize $\alpha=1000$.}
	\end{subfigure}
	
	\caption{\small Comparison of posterior density for active $\beta_i$'s when choosing different $\alpha$ in high-dimensional, block-wise correlated setting ($\rho=0.6$). \texttt{SSLASSO} is fitted with $\lambda_0$ being an equal difference sequence of length 10 starting at 0.05 and ending at 50. We set $\lambda_1=0.05$. Since WBB1 and WBB2 produce a point mass and do not fit into the $y$-axis, we exclude it in the density plot.}
	\label{fig:appendix_highdim_6_alpha}
\end{figure}

\begin{figure}\centering
	\includegraphics[width=.8\textwidth]{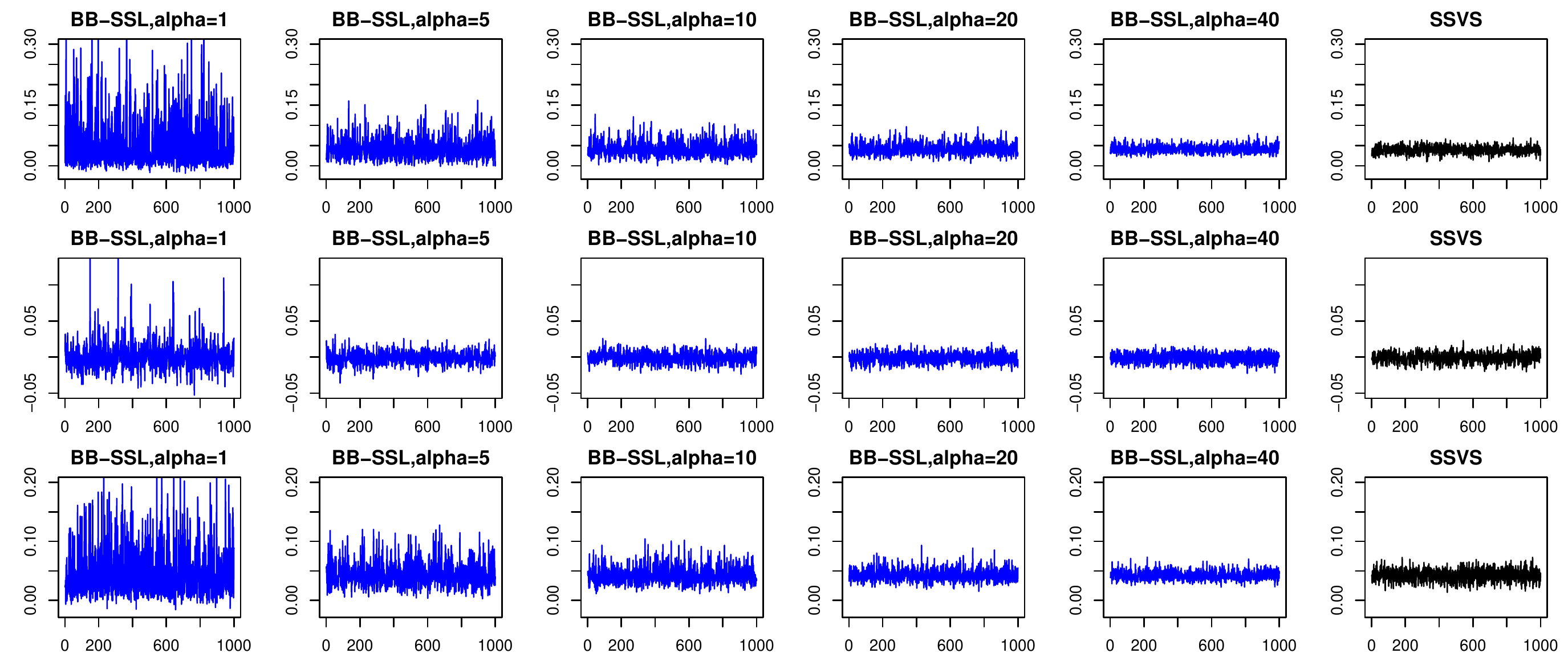}
	
	\includegraphics[width=.8\textwidth]{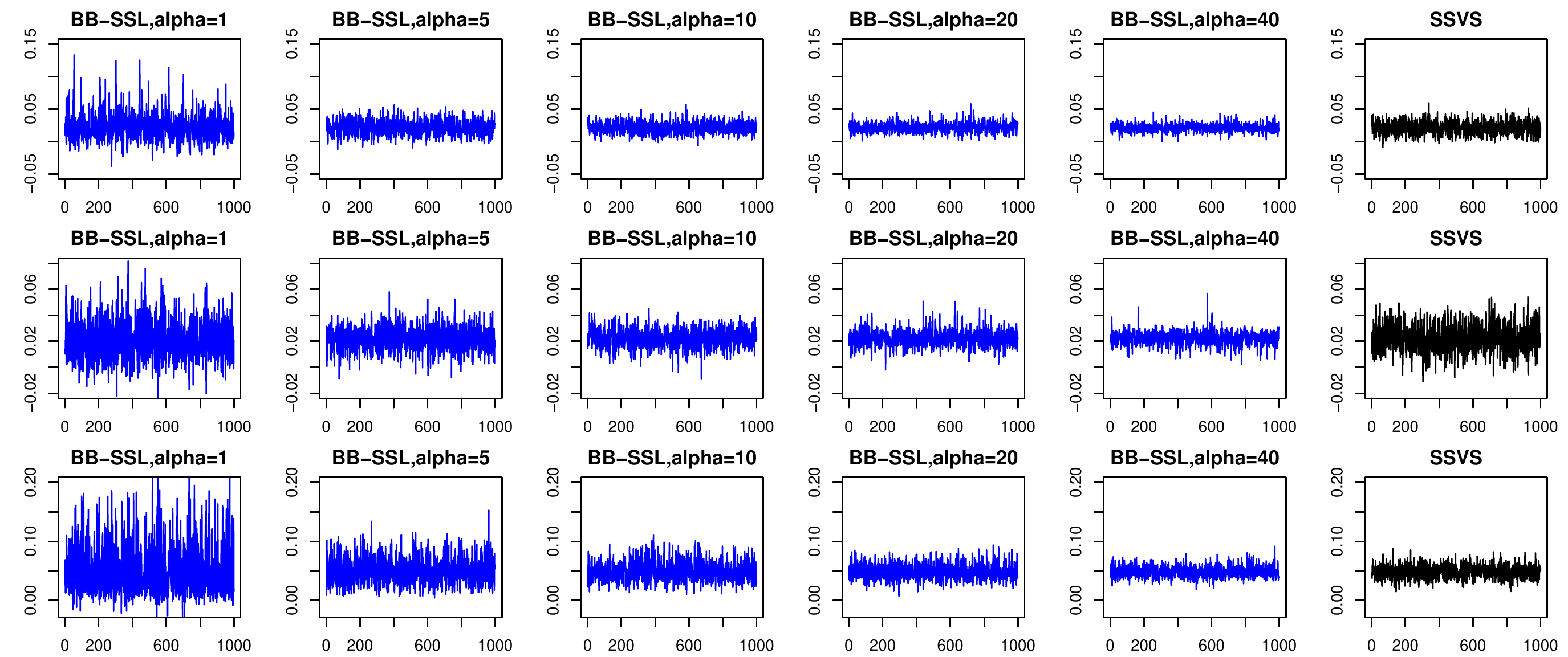}
	
	\includegraphics[width=.8\textwidth]{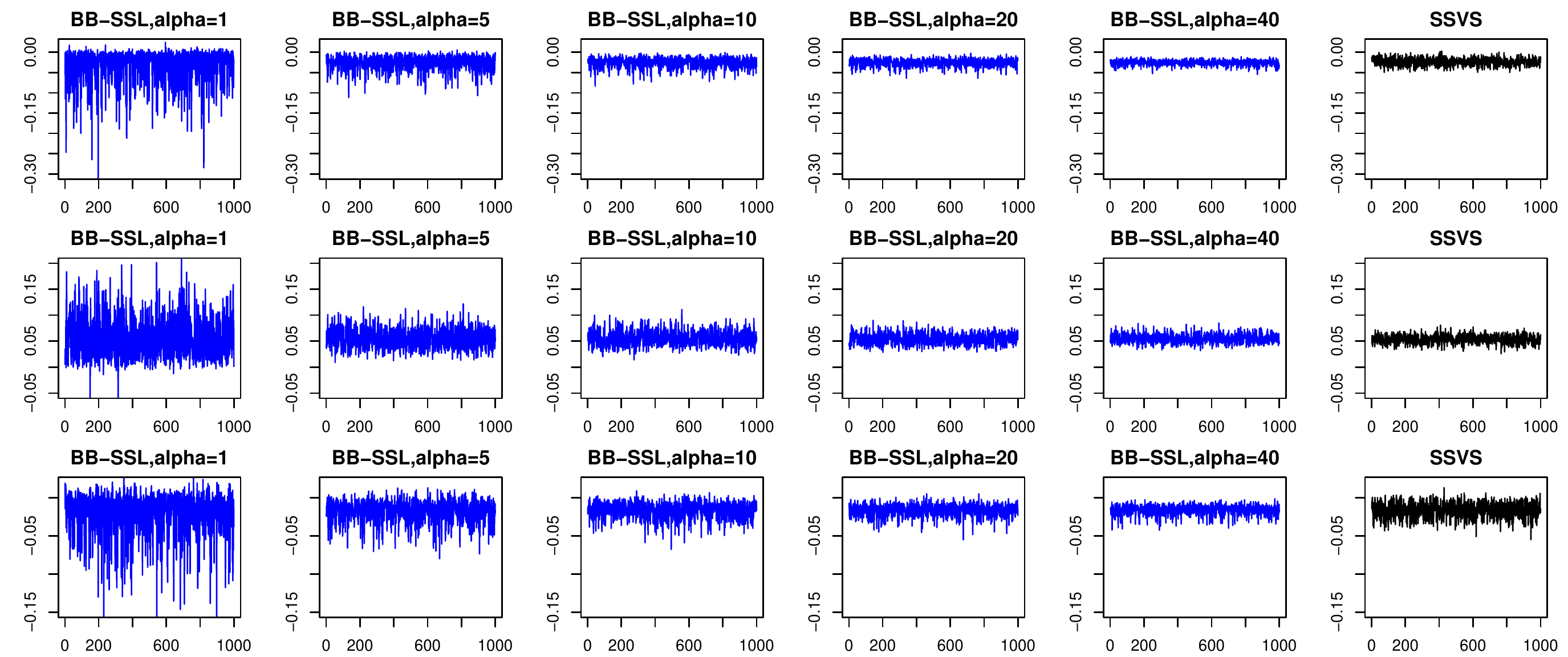}
	
	\includegraphics[width=.8\textwidth]{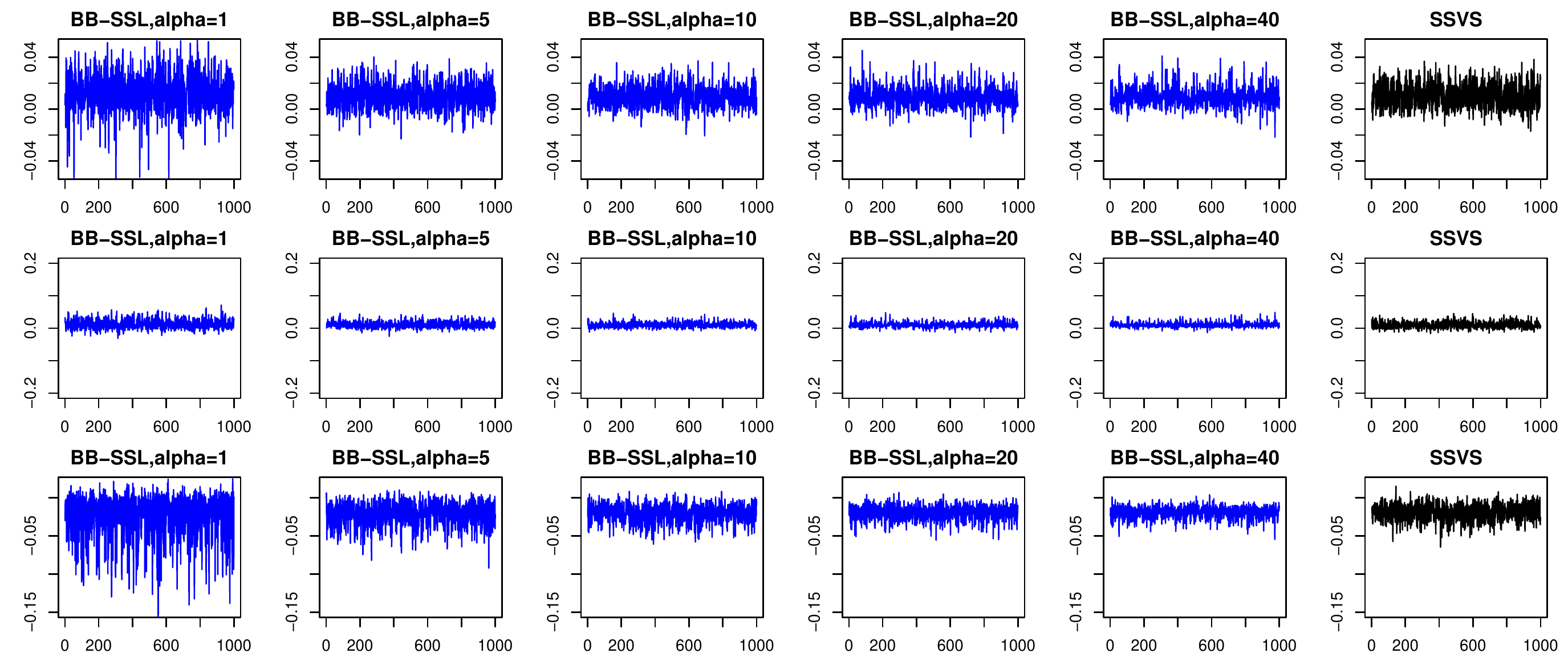}
	\caption{\small Trace plots of $\beta_i,i,1,2,\cdots,12$ under varying $\alpha$'s in model \eqref{eq:real_data_model} in the Durable Goods Marketing Data Set.}
	\label{fig:appendix_causal_trace}
\end{figure}

\end{document}